\newtheorem{thm}{Theorem}[section]
\newtheorem{lem}[thm]{Lemma}%
\newtheorem{prop}[thm]{Proposition}%
\theoremstyle{remark}
\theoremstyle{plain}
\numberwithin{equation}{section}
\def\NN{{\mathbb N}}
\def\QQ{{\mathbb Q}}
\def\RR{{\mathbb R}}
\def\ZZ{{\mathbb Z}}
\def\vecb{{\text{\boldmath$b$}}}
\def\vece{{\text{\boldmath$e$}}}
\def\vecq{{\text{\boldmath$q$}}}
\def\vecQ{{\text{\boldmath$Q$}}}
\def\vecR{{\text{\boldmath$R$}}}
\def\vecs{{\text{\boldmath$s$}}}
\def\vecS{{\text{\boldmath$S$}}}
\def\vecu{{\text{\boldmath$u$}}}
\def\vecU{{\text{\boldmath$U$}}}
\def\vecv{{\text{\boldmath$v$}}}
\def\vecV{{\text{\boldmath$V$}}}
\def\vecw{{\text{\boldmath$w$}}}
\def\vecW{{\text{\boldmath$W$}}}
\def\vecx{{\text{\boldmath$x$}}}
\def\vecX{{\text{\boldmath$X$}}}
\def\vecY{{\text{\boldmath$Y$}}}
\def\vecz{{\text{\boldmath$z$}}}
\def\vecZ{{\text{\boldmath$Z$}}}
\def\vecalf{{\text{\boldmath$\alpha$}}}
\def\veceta{{\text{\boldmath$\eta$}}}
\def\vecxi{{\text{\boldmath$\xi$}}}
\def\uvecalf{{\text{\underline{\boldmath$\alpha$}}}}
\def\uveceta{{\text{\underline{\boldmath$\eta$}}}}
\def\vecnull{{\text{\boldmath$0$}}}
\def\scrA{{\mathcal A}}
\def\scrB{{\mathcal B}}
\def\scrH{{\mathcal H}}
\def\scrI{{\mathcal I}}
\def\scrK{{\mathcal K}}
\def\scrL{{\mathcal L}}
\def\scrM{{\mathcal M}}
\def\scrN{{\mathcal N}}
\def\scrP{{\mathcal P}}
\def\scrS{{\mathcal S}}
\def\scrV{{\mathcal V}}
\def\txi{\tilde{\xi}}
\def\e{\mathrm{e}}
\def\id{\operatorname{id}}
\def\C{\operatorname{C{}}}
\def\L{\operatorname{L{}}}
\def\S{\operatorname{S{}}}
\def\SO{\operatorname{SO}}
\def\O{\operatorname{O{}}}
\def\T{\operatorname{T{}}}
\def\tr{\operatorname{tr}}
\def\meas{\operatorname{meas}}
\def\Var{\operatorname{Var}}
\def\Ran{\operatorname{Ran}}
\def\trans{\,^\mathrm{t}\!}
\def\xibar{\overline{\xi}}
\def\sigmabar{\overline{\sigma}}
\def\hatw{{\widehat{\vecw}}}
\def\UB{{\scrB_1^{d-1}}}
\def\US{{\S_1^{d-1}}}
\def\Prob{\operatorname{\mathbf{P}}}
\def\Exp{\operatorname{\mathbf{E}}}
\def\Var{\operatorname{\mathbf{Var}}}
\def\Cov{\operatorname{\mathbf{Cov}}}
\newcommand{\prob}[1]{\ensuremath{\mathbf{P}\big(#1\big)}}
\newcommand{\expect}[1]{\ensuremath{\mathbf{E}\big(#1\big)}}
\newcommand{\var}[1]{\ensuremath{\mathbf{Var}\big(#1\big)}}
\newcommand{\cov}[2]{\ensuremath{\mathbf{Cov}\big(#1,#2\big)}}
\newcommand{\condexpect}[2]{\ensuremath{\mathbf{E}\big(#1\bigm|#2\big)}}
\newcommand{\ind}[1]{\ensuremath{{\mathbbm{1}}_{\{#1\}}}}
\def \toprob {\,\,\buildrel\Prob\over\longrightarrow\,\,}
\def \toas {\,\,\buildrel\text{a.s.}\over\longrightarrow\,\,}
\def\vareps{\varepsilon}
\title[Superdiffusion in the periodic Lorentz gas]{Superdiffusion in the periodic Lorentz gas}
\author{Jens Marklof}
\author{B\'alint T\'oth}
\address{Jens Marklof, School of Mathematics, University of Bristol,
Bristol BS8 1TW, U.K.\newline
\rule[0ex]{0ex}{0ex} \hspace{8pt}{\tt j.marklof@bristol.ac.uk}}
\address{B\'alint T\'oth, School of Mathematics, University of Bristol,
Bristol BS8 1TW, U.K.; MTA-BME Stochastics Research Group, Budapest, Hungary; R\'enyi Institute, Budapest, Hungary\newline
\rule[0ex]{0ex}{0ex} \hspace{8pt}{\tt balint.toth@bristol.ac.uk,  balint@math.bme.hu}}
\date{24 March 2014; revised and expanded 16 November 2015}
\thanks{The research leading to these results has received funding from the European Research Council under the European Union's Seventh Framework Programme (FP/2007-2013) / ERC Grant Agreement n. 291147. 
J.M.\ is furthermore supported by a Royal Society Wolfson Research Merit Award. The research of B.T. is partially supported by the Hungarian National Science Foundation (OTKA) through grant K100473 and by the Leverhulme Trust through International Network Grant ``Laplacians, Random Walks, Quantum Spin Systems.'' Both authors thank the Isaac Newton Institute, Cambridge for its support and hospitality during the programmes ``Periodic and Ergodic Spectral Problems'' and ``Random Geometry.''}
\subjclass[2010]{37D50, 60F05, 60F17, 82C40}
\begin{document}

\begin{abstract}
We prove a superdiffusive central limit theorem for the displacement of a test particle in the periodic Lorentz gas in the limit of large times $t$ and low scatterer densities (Boltzmann-Grad limit). The normalization factor is $\sqrt{t\log t}$, where $t$ is measured in units of the mean collision time. This result holds in any dimension and for a general class of finite-range scattering potentials. We also establish the corresponding invariance principle, i.e., the weak convergence of the particle dynamics to Brownian motion.
\end{abstract}

\maketitle

\section{Introduction}

The periodic Lorentz gas is one of the iconic models of ``chaotic'' diffusion in deterministic systems. It describes the dynamics of a test-particle in an infinite periodic array of spherically symmetric scatterers. The main results characterizing the diffusive nature of the periodic Lorentz gas have to date been mainly restricted to the two-dimensional setting and hard-sphere scatterers. The first seminal result on this subject was the proof of a central limit theorem for the displacement of the test particle at large times $t$ for the finite-horizon Lorentz gas by Bunimovich and Sinai \cite{Bunimovich:1980ur}. For more general invariance principles see Melbourne and Nicol \cite{Melbourne:2009ju} and references therein. In the case of the infinite-horizon Lorentz gas, Bleher \cite{Bleher:1992ku} pointed out that the mean-square displacement grows like $t\log t$ when $t\to\infty$, as opposed to a linear growth in the finite-horizon case. The superdiffusive central limit theorem suggested in \cite{Bleher:1992ku} was first proved by Sz\'asz and Varj\'u \cite{Szasz:2007uo} for the discrete-time billiard map. Dolgopyat and Chernov \cite{Dolgopyat:2009bl} provided an alternative proof, and established the central limit theorem and invariance principle for the billiard flow. Analogous results hold for the stadium billiard (B\'alint and Gou\"ezel \cite{Balint:2006jv}) and billiards with cusps (B\'alint, Chernov and Dolgopyat \cite{Balint:2011jt}). The difficulty in extending the above findings to dimensions greater than two lies in the possibly exponential growth of the complexity of singularities (B\'alint and T\'oth \cite{Balint:2008kt,Balint:2012fg}, Chernov \cite{Chernov:1994hb}) and, in the case of infinite horizon, the subtle geometry of channels (Dettmann \cite{Dettmann:2012bj}, N\'andori, Sz\'asz and Varj\'u \cite{2012arXiv1210.2231N}).

In the present paper we prove unconditional superdiffusive central limit theorems and invariance principles for the periodic Lorentz gas in any dimension $d\geq 2$, valid in the limit of low scatterer density (Boltzmann-Grad limit) and for a general class of finite-range scattering potentials. That is, instead of fixing the radius $r$ of each scatterer and considering the long time limit as in the above cited papers, we consider here the limit $r\to 0$ and then the limit of long times, where time is measured in units of the mean collision time. It is an interesting open problem to consider the two limits $r\to 0$, $t\to\infty$ jointly.

The precise setting of our study is as follows. Let $\scrL\subset\RR^d$ be a fixed Euclidean lattice of covolume one (such as the cubic lattice $\scrL=\ZZ^d$), and define the scaled lattice $\scrL_r:=r^{(d-1)/d}\scrL$. At each point in $\scrL_r$ we center a sphere of radius $r$. We consider a test particle that moves along straight lines with unit speed until it hits a sphere, where it is scattered elastically. The above scaling of scattering radius vs.\ lattice spacing ensures that the mean free path length (i.e., the average distance between consecutive collisions) has the limit $\xibar=1/v_{d-1}$ as $r\to 0$, where $v_{d-1}=\pi^{\frac{d-1}{2}}/\Gamma(\frac{d+1}{2})$ denotes the volume of the unit ball in $\RR^{d-1}$. 

In the case of the classic Lorentz gas the scattering mechanism is given by specular reflection, but as in \cite{Marklof:2011ho} we will here also allow more general spherically symmetric scattering maps. The precise conditions will be stated in Section \ref{sec:two_a}. 

The position of our test particle at time $t$ is denoted by
\begin{equation}
\vecx_t=\vecx_t(\vecx_0,\vecv_0) \in \scrK_r := \RR^d \setminus (\scrL_r+r\scrB_1^d) ,
\end{equation}
where $\vecx_0$ and $\vecv_0$ are position and velocity at time $t=0$, and $\scrB_1^d$ is the open unit ball in $\RR^d$ centered at the origin.
We use the convention that for any boundary point $\vecx_0 \in \partial \scrK_r$ we choose the {\em outgoing} velocity $\vecv_0$, i.e.\ the velocity {\em after} the scattering. The corresponding phase space is denoted by $\T^1(\scrK_r)$. For notational reasons it is convenient to extend the dynamics to $\T^1(\RR^d):=\RR^d\times\US$ by setting $\vecx_t=\vecx_0$ for all initial conditions $\vecx_0\notin\scrK_r$.

We consider the time evolution of a test particle with random initial data $(\vecx_0,\vecv_0)\in\T^1(\RR^d)$, distributed according to a given Borel probability measure $\Lambda$ on $\T^1(\RR^d)$. 
The following superdiffusive central limit theorem, valid for small scattering radii and large times, asserts that the normalized particle displacement at time $t$, and measured in units of $\sqrt{t\log t}$, converges weakly to a Gaussian distribution.

\begin{thm}\label{thm:main1}
Let $d\geq 2$ and fix a Euclidean lattice $\scrL\subset\RR^d$ of covolume one. Assume $(\vecx_0,\vecv_0)$ is distributed according to an absolutely continuous Borel probability measure $\Lambda$ on $\T^1(\RR^d)$. 
Then, taking first $r\to 0$ and then $t\to\infty$, we have
\begin{equation}\label{eq:main1}
\frac{\vecx_t -\vecx_0}{\Sigma_d \sqrt{t\log t}}  \Rightarrow \scrN(0,I_d) ,
\end{equation}
where $\scrN(0,I_d)$ is a centered normal random variable in $\RR^d$ with identity covariance matrix, and
\begin{equation}
\Sigma_d^2:=\frac{2^{1-d} v_{d-1}}{d^2(d+1)\zeta(d)}.
\end{equation}
\end{thm}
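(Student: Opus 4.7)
The plan is to reduce the theorem, via the Boltzmann--Grad limit theorem of Marklof--Str\"ombergsson, to a superdiffusive central limit theorem for a limiting random flight. In that limit ($r\to 0$ with $\Lambda$ fixed), the process $(\vecx_t)_{t\in[0,T]}$ converges in distribution, for any fixed $T$, to a Markov random flight driven by a chain $(\xi_n,\vecv_n)$ of free-path-length/outgoing-velocity pairs with an explicit transition kernel on an enlarged phase space including impact-parameter data. The displacement up to time $T$ then takes the form
\begin{equation*}
\vecX_T - \vecX_0 = \sum_{n=1}^{N(T)} \xi_n \vecv_{n-1} + (\text{endpoint remainder}),
\end{equation*}
where $N(T)$ is the number of collisions by time $T$. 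Superdiffusion is traced to a heavy tail of the limiting free-path distribution, $\Prob(\xi>\xi_0)\sim C_d\,\xi_0^{-2}$ as $\xi_0\to\infty$, with $C_d$ computable from the Marklof--Str\"ombergsson formulas (involving $\zeta(d)$ and the covolume of $\scrL$). After the first limit the problem thus becomes a CLT for a Markov additive functional with infinite-variance increments.

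To handle this I would apply a truncation scheme. Fix $L=L(T)$ growing polynomially in $T$ and split each increment as $\xi_n\vecv_{n-1}=\xi_n\vecv_{n-1}\mathbf{1}_{\xi_n\leq L}+\xi_n\vecv_{n-1}\mathbf{1}_{\xi_n>L}$. For the truncated (bulk) sum a Markov-chain/martingale CLT applies: the per-step second-moment matrix is approximately $\Exp[\xi^2;\xi\leq L]\cdot d^{-1}I_d \sim 2C_d\,d^{-1}(\log L)\,I_d$, using that $\Exp[\xi^2;\xi\leq L]\sim 2C_d\log L$ and that spherical averaging of $\vecv\vecv^{\mathrm{t}}$ yields $d^{-1}I_d$. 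Summing over $N(T)\sim T/\xibar=Tv_{d-1}$ collisions and choosing $L=T^{1/2}$ (so $\log L\sim\tfrac12\log T$) produces the desired $T\log T$ variance scaling. The tail contribution is negligible after normalization by $\sqrt{T\log T}$: a one-big-jump estimate shows that the expected number of free paths exceeding $L$ is $O(T/L^2)=O(1)$, and their total size is $O(L)=O(T^{1/2})=o(\sqrt{T\log T})$.

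The explicit constant $\Sigma_d^2$ then assembles from $C_d$, the mean free path $\xibar=1/v_{d-1}$ entering the renewal time change $N(T)\sim Tv_{d-1}$, and the spherical-averaging factor $d^{-1}$. The invariance principle extension follows by running the same truncation argument on joint distributions at finitely many times together with a standard tightness estimate in $D([0,1],\RR^d)$. The limit is continuous (Brownian) rather than stable because in the $\sqrt{T\log T}$ normalization any single truncated-out jump shrinks to zero.

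The principal obstacle is the non-i.i.d.\ Markov structure of the limiting random flight: $\xi_n$ depends on a hidden impact-parameter state and is in general correlated with $\vecv_{n-1}$. A successful proof therefore requires strong mixing/spectral-gap estimates for the extended chain so as to run the martingale CLT on the bulk part, and a careful decorrelation argument to ensure that the asymptotic $2C_d\log L$ behaviour of the truncated second moment survives the dependence. Handling the renewal time change $N(T)$ at the level of fluctuations is a secondary technical subtlety, since the inter-arrival times have infinite second moment.
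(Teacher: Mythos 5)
Your overall architecture matches the paper's: pass first to the Boltzmann--Grad limit using Marklof--Str\"ombergsson (the paper's Theorem \ref{thm:MS}), truncate the free path lengths, and prove a CLT for the resulting Markov random flight. The constant assembles exactly as you describe from the tail $\Psi_0(x)\sim\Theta_d\,x^{-3}$ (so your $C_d=\Theta_d/2$), the mean free path $\xibar=1/v_{d-1}$, and the isotropy factor $d^{-1}$, and this reproduces $\Sigma_d^2=\Theta_d v_{d-1}/(2d)=2^{1-d}v_{d-1}/(d^2(d+1)\zeta(d))$. But from there you deviate from the paper in two specific technical choices, and it is worth recording what each buys.

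First, you truncate at a single $T$-dependent scale $L=T^{1/2}$, chosen at the typical maximum of the $N(T)$ free paths; this makes the Lindeberg condition essentially automatic (truncated increments are $\leq L \ll \sqrt{T\log T}$), but the tail sum then has a positive asymptotic expected number $\approx\Theta_d/2$ of exceedances and must be controlled by a ``no single large jump plus expectation bound in the window $(L,\varepsilon\sqrt{T\log T})$'' argument — your ``total size $O(L)$'' should be read as an $O_p$ statement after these two steps, not a deterministic bound. The paper instead truncates at the $j$-dependent, slightly larger scale $r_j=\sqrt{j(\log j)^\gamma}$ with $\gamma\in(1,2)$, so that $\Prob(\xi_j>r_j)$ is summable and, by Borel--Cantelli, only finitely many exceedances occur almost surely (Lemma \ref{mainlem0}); the price is that the Lindeberg condition is no longer automatic and Lemma \ref{mainlem3} has to control an extra $O(\log\log n)$ window.

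Second, you plan a martingale/Markov-additive CLT, whereas the paper conditions on the \emph{entire} impact-parameter trajectory $\uveceta$. Under this conditioning the $\xi_j$ become genuinely \emph{independent}, which lets the paper apply the Lindeberg CLT for independent triangular arrays rather than a martingale CLT. This is the real structural difference, and it shifts the burden in a way you understate: your ``per-step second-moment matrix is approximately $\Exp[\xi^2;\xi\leq L]\, d^{-1}I_d$'' is an unconditional expectation, but what a martingale CLT actually requires is that the \emph{random} predictable bracket $\sum_j\Var(\xi_j'\vecV_{j-1}\mid\scrF_{j-1})$ concentrates at $\sigma_d^2 I_d\, n\log n$. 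That is precisely where the difficulty lives, because the conditional moments $\mu_j,\alpha_j,\beta_j$ themselves have heavy (index $2$) tails — this is Propositions \ref{betatail2}--\ref{betatail3} in the paper — so one cannot simply replace the bracket by its mean; one must show its fluctuations are $o(n\log n)$ (Lemma \ref{mainlem2}) using the exponential mixing estimates of Section \ref{sec:exponential}, which in turn rest on the spectral-gap arguments of Section \ref{sec:spectral} (including the irreducibility hypothesis \eqref{irr2} needed to get a gap for the natural and adjoint representations of $\SO(d)$, which you would also need for the isotropy claim $\Exp[\vecv\otimes\vecv]\to d^{-1}I_d$). You flag the need for ``strong mixing/spectral-gap estimates'' and ``careful decorrelation,'' so you are aware of the issue; but it is worth being clear that this concentration-of-conditional-variance step, not the CLT per se, is the hard core of the proof, and your martingale route does not avoid it — it just repackages it. Your remarks on the renewal time change $N(T)$ and the functional version are likewise consistent with what the paper does in Sections \ref{sec:from} and \ref{plussec:finite}--\ref{plussec:221}.
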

Here $\zeta(d) := \sum_{n=1}^\infty n^{-d}$ denotes the Riemann zeta function. 
Recall that the weak convergence \eqref{eq:main1} holds if and only if
\begin{equation}\label{eq:main1explicit}
\lim_{t\to\infty}\lim_{r\to 0} \Exp f\bigg(\frac{\vecx_t -\vecx_0}{\Sigma_d \sqrt{t\log t}}  \bigg) =
\frac{1}{(2\pi)^{d/2}} \int_{\RR^d} f(\vecx)\, \e^{-\frac12 \|\vecx\|^2} d\vecx 
\end{equation}
for any bounded continuous $f:\RR^d\to\RR$.

Theorem \ref{thm:main1} will follow from its descrete-time analogue, Theorem \ref{thm:main2}. Let us denote by $\vecq_n=\vecq_n(\vecq_0,\vecv_0)\in\partial\scrK_r$ ($n=1,2,3,\ldots$) the location where the test particle with initial condition $(\vecq_0,\vecv_0)$ leaves the $n$th scatterer. It is natural in this setting to assume $\vecq_0\in\partial\scrK_r$. By the translational invariance of the lattice, we may in fact assume without loss of generality $\vecq_0\in r \US$. 
For given exit velocity $\vecv_0$, we write 
\begin{equation}
\vecq_0 = r (\vecs_0+\vecv_0 \sqrt{1-\|\vecs_0\|^2}) 
\end{equation}
and stipulate in the following that the random variable $\vecs_0$ is uniformly distributed in the unit disc orthogonal to $\vecv_0$. The uniform distribution is the natural invariant measure for the discrete time dynamics.

\begin{thm}\label{thm:main2}
Let $d\geq 2$ and $\scrL$ as above. Assume $\vecv_0$ is distributed according to an absolutely continuous Borel probability measure $\lambda$ on $\US$. Then, taking first $r\to 0$ and then $n\to\infty$, we have
\begin{equation}\label{eq:main2}
\frac{\vecq_n -\vecq_0}{\sigma_d \sqrt{n\log n}} 
\Rightarrow \scrN(0,I_d)  ,
\end{equation}
with
\begin{equation}
\sigma_d^2:=\frac{2^{1-d}}{d^2(d+1)\zeta(d)} =\xibar\, \Sigma_d^2.
\end{equation}
\end{thm}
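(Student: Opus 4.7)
The plan is to pass first to the Boltzmann-Grad limit $r\to 0$, which reduces the problem to a central limit theorem for a Markov chain with heavy-tailed increments, and then establish that limiting CLT via truncation combined with a martingale approximation. The heavy-tail exponent ($\Prob(|\vecxi^*|>T)$ of order $T^{-2}$ for the length) is exactly the critical one producing the $\sqrt{n\log n}$ scaling and a Gaussian limit in the domain of non-normal attraction.

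\emph{Step 1 (BG reduction).} By the main result of \cite{Marklof:2011ho}, for each fixed $N$ the joint law of the consecutive exit increments $(\vecxi_k := \vecq_k - \vecq_{k-1})_{k=1}^N$ under the initial distribution prescribed in the theorem converges, as $r\to 0$, to the joint law of $(\vecxi_1^*,\ldots,\vecxi_N^*)$, where $\vecxi_k^* = \xi_k^* \vecv_{k-1}$ are the increments of an explicit Markov chain whose transition kernel is expressible as an integral over $\SLdSL$ against Haar measure. Since for fixed $n$ the map $(\vecxi_1,\ldots,\vecxi_n)\mapsto (\sigma_d\sqrt{n\log n})^{-1}\sum_{k=1}^n\vecxi_k$ is continuous, taking $r\to 0$ first reduces Theorem \ref{thm:main2} to
\[
\frac{1}{\sigma_d\sqrt{n\log n}}\sum_{k=1}^n \vecxi_k^* \Rightarrow \scrN(0,I_d), \qquad n\to\infty.
\]

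\emph{Step 2 (heavy tails and identification of $\sigma_d$).} Applying Siegel's mean value theorem to the explicit transition kernel from \cite{Marklof:2011ho} yields the conditional tail $\Prob(\xi^*>T\mid\vecv)\sim c_d/T^2$ as $T\to\infty$, uniformly in $\vecv\in\US$, with $c_d$ an explicit constant involving $\zeta(d)$. Since $\vecxi^*=\xi^*\vecv$ and the stationary distribution of $\vecv$ is isotropic, $\Exp\vecxi^* = 0$, while as $T\to\infty$
\[
\Exp\big[(\vecxi^*)(\vecxi^*)^\top\one_{|\vecxi^*|\leq T}\big] = \Exp\big[\vecv\vecv^\top\,\Exp\big[(\xi^*)^2\one_{\xi^*\leq T}\bigm|\vecv\big]\big] \sim \frac{2c_d\log T}{d}\,I_d .
\]
Direct evaluation of $c_d$ from the kernel then yields $\sigma_d^2 = c_d/d = 2^{1-d}/(d^2(d+1)\zeta(d))$.

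\emph{Step 3 (CLT via truncation and martingale approximation).} With truncation level $b_n := \sqrt{n}\log n$ one has $n\Prob(\xi^*>b_n)=O((\log n)^{-2})\to 0$, so the truncated increments $\vecxi_k^{*,\leq} := \vecxi_k^* \one_{|\vecxi_k^*|\leq b_n}$ approximate $\vecxi_k^*$ with negligible error after normalization by $\sqrt{n\log n}$. For the bounded-increment Markov sum, a Kipnis-Varadhan-style martingale-coboundary decomposition based on the spectral gap of the transition operator reduces the problem to a martingale CLT whose Lindeberg condition is immediate from $b_n = o(\sqrt{n\log n})$, and whose asymptotic variance, computed from Step 2 with $\log b_n \sim \tfrac12 \log n$, matches the normalization exactly to give $\scrN(0,I_d)$. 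The hardest part of the whole argument lies here: establishing sufficiently quantitative mixing (or a spectral gap) for the limiting Markov chain on its non-compact state space involving $\SLdSL$. The necessary decay-of-correlations estimates should follow from effective Ratner/Shah-type equidistribution on moduli spaces of lattices, but the rates must be strong enough to accommodate the heavy-tailed jumps at the critical $\sqrt{n\log n}$ scaling, which is the main technical burden.
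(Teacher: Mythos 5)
Your Step 1 (Boltzmann--Grad reduction via \cite{Marklof:2011ho}) and Step 2 (tail exponent $T^{-2}$, matching of $\sigma_d$ to the transition-kernel tail) coincide in spirit with what the paper does; the tail asymptotics used there is exactly $\Psi_0(x)\sim\Theta_d x^{-3}$ with $\sigma_d^2=\Theta_d/(2d)$. The discrepancy is in Step 3, and there are real gaps there.

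First, a concrete arithmetic error: with your truncation $b_n=\sqrt{n}\log n$ you have
\begin{equation}
\frac{b_n}{\sqrt{n\log n}}=\sqrt{\log n}\;\longrightarrow\;\infty,
\end{equation}
so $b_n$ is \emph{not} $o(\sqrt{n\log n})$, and the Lindeberg condition is therefore \emph{not} immediate. This is not a fixable slip: any truncation level low enough to make Lindeberg trivial (i.e.\ $o(\sqrt{n\log n})$) is too low to kill the heavy-tail contribution, since then $n\,\Prob(\xi>M_n)$ does not tend to zero and large jumps survive. The truncation must sit \emph{above} $\sqrt{n\log n}$ (the paper uses $r_j=\sqrt{j(\log j)^\gamma}$, $\gamma\in(1,2)$), and verifying Lindeberg for the truncated sum is then a genuine estimate, accomplished in the paper as Lemma \ref{mainlem3} using conditioning on the angular chain together with the kernel bounds and Borel--Cantelli.

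Second, your Kipnis--Varadhan-style martingale approximation is not what the paper does, and it is not clear it can be made to work directly here. The observable you would feed into the Poisson equation is the $n$-dependent truncated increment $\vecxi_k^{*,\le b_n}$, whose $L^2$-norm grows like $\log n$, yielding a triangular array of martingale-coboundary decompositions rather than a single one; the coboundary and the conditional centering $m_j=\Exp(\xi_j\mid\uveceta)$ (which is not zero and has its own $\log\log n$ second-moment growth in $d=2$) must then be controlled separately, precisely as in Lemmas \ref{mainlem1} and \ref{mainlem3}. The paper sidesteps the martingale decomposition for the CLT altogether: conditioning on the directional sequence $\uveceta=(\veceta_n)$ makes the $\xi_j$'s \emph{conditionally independent}, so the problem reduces to a quenched Lindeberg CLT for independent (non-identically distributed) variables; the mixing input is then only needed to control the conditional variance of the array (Lemma \ref{mainlem2}) and the centering terms. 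A Gordin--Lifsic (martingale-plus-coboundary) argument does appear in the paper, but only as a maximal inequality to establish tightness of one sub-process in the invariance-principle section, not as the engine of the CLT.

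Third, your suggested source of quantitative mixing, effective Ratner or Shah equidistribution on moduli spaces of lattices, is not what is needed and not what the paper uses. After the Boltzmann--Grad limit, everything is phrased in terms of the concrete transition kernel $\Psi_0(\vecw,x,\vecz)$ on $\UB\times\RR_{>0}\times\UB$, and the required spectral gap for the $\veceta$-chain is obtained by a simple Doeblin minorization: $K_0(\vecw,\vecz)=\int_0^\infty\Psi_0(\vecw,x,\vecz)\,dx$ is bounded below by a strictly positive constant uniformly on $\UB\times\UB$ (Lemma \ref{K0upperlower}, coming from $\Psi_0(\vecw,x,\vecz)\geq(1-2^{d-1}v_{d-1}x)/\zeta(d)$), and the $\vecV$-twisted chain is handled by the Doeblin gap combined with finite-dimensional representation theory of $\SO(d)$ (Lemma \ref{repgap}, Proposition \ref{prop2}). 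No homogeneous dynamics is invoked at this stage.

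Summarizing: your outline correctly identifies the reduction to the limit Markov process and the heavy-tail origin of the $\sqrt{n\log n}$ normalization, but the claim that Lindeberg follows immediately after truncation is false in a way that cannot be repaired by adjusting $b_n$, and your proposed martingale-approximation route replaces the paper's much cleaner conditional-independence argument by a triangular-array Poisson-equation construction that would itself require all the estimates you are trying to avoid.
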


The above results generalise to functional central limit theorems, also known as invariance principles. 
Denote by $\C_0([0,1])$ the space of curves $[0,1]\to\RR^d$ starting at the origin. We fix a metric on $\C_0([0,1])$ by defining the distance between two curves $\vecX_1$ and $\vecX_2$ by $\sup_{t\in[0,1]} \|\vecX_1(t)-\vecX_2(t)\|$. The topology generated by open balls in this metric is called the {\em uniform topology.} A sequence $(\vecX_n)_n$ of random curves in $\C_0([0,1])$ {\em converges weakly} to $\vecX$ ($\vecX_n\Rightarrow \vecX$), if for any bounded continuous $f:\C_0([0,1])\to\RR$ we have $\lim_n \Exp f(\vecX_n)=\Exp f(\vecX)$.

The following theorem, which is the main result of this paper, states that for the same random initial data as in Theorem \ref{thm:main1}, the random curves 
\begin{equation}
[0,1]\to\RR^d, \qquad t\mapsto \vecX_{T,r}(t) := \frac{\vecx_{t T} -\vecx_0}{\Sigma_d \sqrt{T\log T}},
\end{equation}
converge weakly to the standard Brownian motion $t\mapsto \vecW(t)$ in $\RR^d$ with unit covariance matrix $I_d$. 

\begin{thm}\label{plusthm:main1}
Let $d\geq 2$ and fix a Euclidean lattice $\scrL\subset\RR^d$ of covolume one. Assume $(\vecx_0,\vecv_0)$ is distributed according to an absolutely continuous Borel probability measure $\Lambda$ on $\T^1(\RR^d)$. Then, taking first $r\to 0$ and then $T\to\infty$, we have
\begin{equation}\label{pluseq:main1}
 \vecX_{T,r}  \Rightarrow \vecW .
\end{equation}
\end{thm}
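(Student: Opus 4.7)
The plan is to deduce Theorem \ref{plusthm:main1} from a discrete-time invariance principle for the collision sequence $(\vecq_n)_n$, in parallel to the way Theorem \ref{thm:main1} follows from Theorem \ref{thm:main2}. Let
\begin{equation*}
\vecQ_{N,r}(t) := \frac{\vecq_{\lfloor tN\rfloor} - \vecq_0 + (tN-\lfloor tN\rfloor)\bigl(\vecq_{\lfloor tN\rfloor+1}-\vecq_{\lfloor tN\rfloor}\bigr)}{\sigma_d\sqrt{N\log N}}
\end{equation*}
be the polygonal interpolation of the rescaled collision sequence. The first goal is $\vecQ_{N,r}\Rightarrow \vecW$ in the uniform topology on $\C_0([0,1])$, taking $r\to 0$ followed by $N\to\infty$. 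Under the Boltzmann--Grad limit $r\to 0$, the joint law of the displacements $\vecxi_n=\vecq_n-\vecq_{n-1}$ converges to the limit random flight underlying Theorem \ref{thm:main2}, whose velocity sequence is a geometrically ergodic Markov chain on $\US$ and whose step lengths $\xi_n=\|\vecxi_n\|$ obey the heavy tail $\Prob(\xi>L)\sim cL^{-2}$; it is therefore enough to establish the functional CLT for this limit process.

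Weak convergence is proved by combining convergence of finite-dimensional distributions with tightness. For the finite-dimensional part, the increments $\vecQ_N(t_j)-\vecQ_N(t_{j-1})$ on disjoint intervals are each handled by Theorem \ref{thm:main2} applied to the corresponding partial sum, and asymptotic independence follows from the exponential mixing of the limit velocity chain, which decouples time-windows separated by more than $O(\log N)$ collisions. Tightness is the main technical step: because $\Exp\xi^2$ diverges only logarithmically the standard Kolmogorov moment criterion does not apply directly. The natural remedy is truncation at the critical level $L_N := \sqrt{N\log N}$. Decompose $\vecxi_n = \vecxi_n^- + \vecxi_n^+$ with $\vecxi_n^-:=\vecxi_n\one_{\xi_n\leq L_N}$. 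Then $\Prob(\exists\, n\leq N\colon \xi_n>L_N)=O(1/\log N)\to 0$, so the residual $\sum_n \vecxi_n^+$ vanishes in probability uniformly on $[0,1]$, while the truncated centered partial sums admit, via a martingale approximation based on the velocity chain, a martingale with predictable quadratic variation of order $N\log N$. Doob's $L^2$ maximal inequality then bounds the modulus of continuity of $\vecQ_N$ over intervals of length $\delta$ by $C\sqrt\delta$ in probability, giving tightness in $\C_0([0,1])$.

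Passing from the discrete to the continuous process is then routine. Let $N_t$ count collisions in $[0,t]$; between collisions $\|\vecx_t-\vecq_{N_t}\|\leq \xi_{N_t+1}$, and by the tail bound $\max_{n\leq cT}\xi_n=O_\Prob(\sqrt T)$, so with $N=\lfloor T/\xibar\rfloor$,
\begin{equation*}
\sup_{t\in[0,1]}\bigl\|\vecX_{T,r}(t)-\vecQ_{N,r}(N_{tT}/N)\bigr\| = O_\Prob\bigl(1/\sqrt{\log T}\bigr).
\end{equation*}
The renewal law of large numbers gives $N_{tT}/T\to t/\xibar$ uniformly in $t\in[0,1]$, and since $\sigma_d = \sqrt{\xibar}\,\Sigma_d$, the deterministic time-change $t\mapsto t/\xibar$ converts the $\vecW$-limit of $\vecQ_{N,r}$ into the $\vecW$-limit of $\vecX_{T,r}$. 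The main obstacle will be tightness: the critical logarithmic failure of $\Exp\xi^2$ forces a truncation that carefully balances the variance of the truncated sums against the residual tail probability, and the resulting bounds must be uniform in the scattering radius $r$ so that the two limits $r\to 0$ and $N\to\infty$ may be legitimately interchanged within the tightness proof.
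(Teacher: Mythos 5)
Your overall strategy---establish a discrete-time invariance principle for the collision sequence and then transfer it to continuous time via a time change---is the same as the paper's (Theorem \ref{plusthm:main2} followed by Section \ref{plussec:221}), and the argument you sketch for the transfer step is essentially correct. But the tightness step for the discrete-time process has a genuine gap. Doob's $L^2$ maximal inequality applied to the centered truncated sums gives, on each window of length $\delta$,
\begin{equation*}
\Prob\Bigl(\sup_{s\in[t,t+\delta]}\bigl\|\widetilde\vecQ_N(s)-\widetilde\vecQ_N(t)\bigr\|\geq \beta\sqrt{N\log N}\Bigr)\leq \frac{C\,\delta N\log N}{\beta^2\,N\log N}=\frac{C\delta}{\beta^2},
\end{equation*}
and summing over the $O(1/\delta)$ windows covering $[0,1]$ yields only $\Prob\bigl(\omega_{\widetilde\vecQ_N}(\delta)\geq \beta\bigr)\leq C/\beta^2$---a bound that does not go to zero as $\delta\to 0$, so Billingsley's tightness criterion (Theorem \ref{plusthm:Bilingsley}) is not met. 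A second-moment bound alone is insufficient; one needs either a $(2+\varepsilon)$-moment or a sharper-than-Chebyshev tail estimate, and neither comes free since $\Exp\xi^2=\infty$. The paper avoids this by conditioning on $\uveceta$ to make the truncated increments independent, applying Billingsley's maximal inequality for sums of independent random variables (Lemma \ref{pluslem:max}), and then feeding in the Gaussian tail supplied by the already-proved CLT (Theorem \ref{thm:main3}(ii)); the key point in \eqref{plustoto2} is that $\exp(-c\beta^2/\delta)=o(\delta)$ for small $\delta$, which is what makes the $\frac{1}{\delta}\Prob(\cdots)<\epsilon$ bound attainable. If you insist on a moment route you would need something like Rosenthal's inequality with $p>2$ applied to the truncated $\txi_{j,n}$ (using $\Exp(\txi_{j,n}^p)=O(r_n^{p-2})$), which is a different estimate from Doob.

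Two smaller points. Your concern about uniformity in the scattering radius $r$ is misplaced: the two limits in Theorem \ref{plusthm:main1} are taken sequentially, so for fixed $T$ one first invokes the Boltzmann--Grad limit $\vecX_{T,r}\Rightarrow\vecX_T$ (which is \cite[Theorem~1.2]{Marklof:2011ho}) and then proves tightness for the limit Markov process $\vecX_T$ alone; no uniformity in $r$ is required. Also, truncating at $L_N=\sqrt{N\log N}$ gives only $\Prob(\exists\,n\leq N\colon \xi_n>L_N)=O(1/\log N)$; the paper truncates instead at $r_n=\sqrt{n(\log n)^\gamma}$ with $\gamma\in(1,2)$, which makes the tail probabilities summable so that Borel--Cantelli (Proposition \ref{pluslemY0}) eliminates the large jumps almost surely. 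Finally, after centering you must also control the sum of conditional means $\sum m_{j,n}\vecV_{j-1}$, which is of lower variance but still requires a maximal inequality for a dependent sequence; the paper handles this via a Gordin--Lifsic martingale maximal inequality (Proposition \ref{plusGordin}), which is presumably what you intend by ``martingale approximation based on the velocity chain,'' but it would need to be carried out separately from the main truncated-centered sum.
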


As in the case of Theorem \ref{thm:main1}, we derive Theorem \ref{plusthm:main1} as a corollary of its discrete-time analogue, Theorem \ref{plusthm:main2}. By linearly interpolating between the position variables $\vecq_0,\vecq_1,\ldots,\vecq_n$, we obtain the piecewise linear curve
\begin{equation}\label{pewi}
[0,1]\to\RR^d, \qquad t\mapsto \vecq_n(t):= \vecq_{\lfloor nt\rfloor} + \{ nt\}\, \big( \vecq_{\lfloor nt\rfloor+1} - \vecq_{\lfloor nt\rfloor} \big)  ,
\end{equation}
where  $\{ x \}:= x-\lfloor x \rfloor$ denotes the fractional part of $x$. We rescale the curve by setting
\begin{equation}
\vecY_{n,r}(t):=\frac{\vecq_n(t)-\vecq_0}{\sigma_d \sqrt{n\log n}} .
\end{equation}
%The following theorem is then the discrete-time analogue of Theorem \ref{plusthm:main2}.
We then have the following generalization of Theorem \ref{thm:main2}.

\begin{thm}\label{plusthm:main2}
Let $d\geq 2$ and $\scrL$ a Euclidean lattice of covolume one. Assume $\vecv_0$ is distributed according to an absolutely continuous Borel probability measure $\lambda$ on $\US$. Then, taking first $r\to 0$ and then $n\to\infty$, we have
\begin{equation}\label{pluseq:main2}
\vecY_{n,r}\Rightarrow \vecW .
\end{equation}
\end{thm}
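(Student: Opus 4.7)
I would follow the scheme behind Theorem \ref{thm:main2}. That proof (presumably) uses the Boltzmann--Grad limit $r\to 0$ developed in the Marklof--Str\"ombergsson programme to replace the sequence of jumps $\vecxi_k:=\vecq_k-\vecq_{k-1}$ by a limiting Markov chain on an extended state space (velocity after the $k$-th collision together with an auxiliary impact-parameter variable), for which $\|\vecxi_k\|$ has a heavy tail $\Prob(\|\vecxi_k\|>R)\asymp R^{-2}$. In this limit, Theorem \ref{thm:main2} is a CLT for partial sums with logarithmically divergent truncated variance. The upgrade to the functional CLT $\vecY_{n,r}\Rightarrow\vecW$ in the uniform topology on $\C_0([0,1])$ is then the classical two-step procedure: convergence of finite-dimensional distributions, and tightness.

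\textbf{Finite-dimensional distributions.} Fix $0=t_0<\dots<t_m=1$ and $\vecalf_1,\dots,\vecalf_m\in\RR^d$. By the Cram\'er--Wold device it is enough to show
\[
\sum_{j=1}^m \vecalf_j\cdot\bigl(\vecY_{n,r}(t_j)-\vecY_{n,r}(t_{j-1})\bigr)\ \Rightarrow\ \scrN\!\bigg(0,\sum_{j=1}^m\|\vecalf_j\|^2(t_j-t_{j-1})\bigg).
\]
Writing the left-hand side as a double sum over blocks of lengths $n_j=\lfloor nt_j\rfloor-\lfloor nt_{j-1}\rfloor$, Theorem \ref{thm:main2} applied to each block, combined with $\sqrt{n_j\log n_j}/\sqrt{n\log n}\to\sqrt{t_j-t_{j-1}}$, delivers the correct Gaussian marginal. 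For the joint convergence I would insert between consecutive blocks a short ``equilibration segment'' of length $o(n)$: its rescaled contribution is $o(1)$, and by mixing of the velocity/impact-parameter chain it decouples the block from its past, so that Theorem \ref{thm:main2} applies afresh (from the stationary distribution) on each subsequent block. The resulting joint limit has covariance $\diag(t_j-t_{j-1})$, matching the finite-dimensional marginals of $\vecW$.

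\textbf{Tightness.} The standard criterion for the uniform topology reduces tightness to
\[
\lim_{\delta\to 0}\limsup_{n\to\infty}\limsup_{r\to 0}\Prob\bigg(\sup_{|s-t|<\delta}\|\vecY_{n,r}(t)-\vecY_{n,r}(s)\|>\vareps\bigg)=0\qquad(\vareps>0).
\]
I would handle the heavy tail by truncating the jumps at level $L_n=\sqrt{n}\,\omega_n$ with $\omega_n\to\infty$ slowly (e.g.\ $\omega_n=\log\log n$): the tail bound gives $\Prob\bigl(\max_{k\leq n}\|\vecxi_k\|>L_n\bigr)\lesssim n/L_n^2=\omega_n^{-2}\to 0$, so with high probability no jump is truncated. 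The truncated increments $\vecxi_k^{(L)}:=\vecxi_k\,\one\{\|\vecxi_k\|\leq L_n\}$ satisfy $\Exp\|\vecxi_k^{(L)}\|^2\lesssim\log L_n\sim\tfrac12\log n$, so any block of length $\delta n$ has variance $\lesssim\delta n\log n$. A Doob-type maximal inequality combined with a dyadic union bound over starting points then yields
\[
\Prob\bigg(\sup_{|I|\leq \delta n}\bigg\|\sum_{k\in I}\bigl(\vecxi_k^{(L)}-\Exp\vecxi_k^{(L)}\bigr)\bigg\|>\vareps\sigma_d\sqrt{n\log n}\bigg)\longrightarrow 0
\]
in the iterated limit, once $\delta$ is chosen small enough relative to $\vareps$. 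The centering shift has norm $O(\delta n/L_n)=o(\sqrt{n\log n})$ and is therefore negligible.

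\textbf{Main obstacle.} The crux is the decoupling between time blocks: the limiting Markov chain lives on a noncompact state space (the auxiliary variable encodes the geometry of the next free flight), and one needs sufficiently quantitative mixing to absorb the ``bridge'' between consecutive blocks into an $o(\sqrt{n\log n})$ error. The tail estimate $\Prob(\|\vecxi_k\|>R)\lesssim R^{-2}$ also has to hold uniformly in the chain's current state. Both inputs sit inside the proof of Theorem \ref{thm:main2}, and will have to be revisited rather than invoked as a black box.
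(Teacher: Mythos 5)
Your finite-dimensional distributions argument is a genuinely different route from the paper's, and it is basically sound: the paper does \emph{not} use equilibration blocks. Instead, it conditions on the entire sequence $\uveceta=(\veceta_j)$ of impact parameters, under which the $\xi_j$ become independent and the velocities $\vecV_{j-1}$ become deterministic; joint normality then falls out of a vector-valued Lindeberg argument (Section \ref{plussec:finite}, Lemma \ref{plusmainlem2}) because increments over disjoint blocks are \emph{conditionally independent} given $\uveceta$. Your block-and-equilibration scheme, supported by the exponential mixing of Proposition \ref{prop:exp_mix}, would also work and is perhaps more familiar, but it is a heavier mechanism than the paper needs, and the ``restart from stationarity'' step must be justified through the generalization to arbitrary absolutely continuous initial data (Section \ref{sec:general}).

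Your tightness argument, however, has a genuine gap around the centering. You truncate at $L_n=\sqrt n\,\omega_n$ and then assert that the shift $\sum_{k\in I}\Exp\,\vecxi_k^{(L)}$ is $O(\delta n/L_n)$ and hence negligible. That estimate is compatible only with an \emph{unconditional} centering (where rotational symmetry makes $\Exp\,\vecxi_k$ nearly vanish), but then the centered summands $\vecxi_k^{(L)}-\Exp\,\vecxi_k^{(L)}$ are neither independent nor martingale increments, and no Doob- or Ottaviani-type maximal inequality applies to them directly. If instead you center \emph{conditionally} by $m_{j,n}=\Exp(\xi_{j,n}\mid\uveceta)$ so that the block sums become conditional martingales, the removed process $\sum_{j} m_{j,n}\vecV_{j-1}$ is \emph{not} small in the naive sense: each $m_{j,n}\approx\xibar$, so a length-$\delta n$ block has a raw size of order $\delta n$, and the only reason it is $o(\sqrt{n\log n})$ is the directional cancellation of the velocities, which requires its own maximal inequality. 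This is precisely what Section \ref{plussec:tightness} supplies: the four-way decomposition $\vecY_n=\widehat\vecY_n+\widetilde\vecY_n+\overline\vecY_n+\check\vecY_n$, where $\widetilde\vecY_n$ (conditionally centered) is handled via the Billingsley/Ottaviani maximal inequality given $\uveceta$ (Lemma \ref{pluslem:max}), while $\overline\vecY_n(t)=(\sigma_d\sqrt{n\log n})^{-1}\sum_{j\le nt}m_{j,n}\vecV_{j-1}$ is controlled by a Gordin--Lifsic martingale-approximation maximal inequality for a Markov chain on triples of consecutive velocities (Propositions \ref{plusGordin} and \ref{pluslemY2}), which in turn needs a spectral gap verified by a Doeblin argument using the lower bound on $K_0$. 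You cannot fold these two mechanisms into a single ``Doob-type maximal inequality plus dyadic union bound''; the conditional-mean process $\overline\vecY_n$ is where the real work sits, and your sketch omits it.
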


The starting point of our analysis is the paper \cite{Marklof:2011ho}, which proves that, for every fixed $t>0$, the limit $r\to 0$ in \eqref{eq:main1} (resp.~\eqref{eq:main2}) exists and is given by a continuous-time (resp.~discrete-time) Markov process. The main objective of the present study is therefore to prove a superdiffusive central limit theorem, as well as an invariance principle, for each of these Markov processes. The central limit theorem is stated as Theorem \ref{thm:main3} in Section \ref{sec:two_b} after a brief survey of the relevant results from \cite{Marklof:2011ho}. The subsequent sections of the paper are devoted to the proof of Theorem \ref{thm:main3}. The invariance principles stated in Theorems \ref{plusthm:main1} and \ref{plusthm:main2} follow from the results in Sections \ref{plussec:finite}--\ref{plussec:221}.

\section{The scattering map}\label{sec:two_a}

\begin{figure}
\begin{center}
\begin{minipage}{0.8\textwidth}
\unitlength0.1\textwidth
\begin{picture}(10,5)(0,0)
\put(0,0){\includegraphics[width=\textwidth]{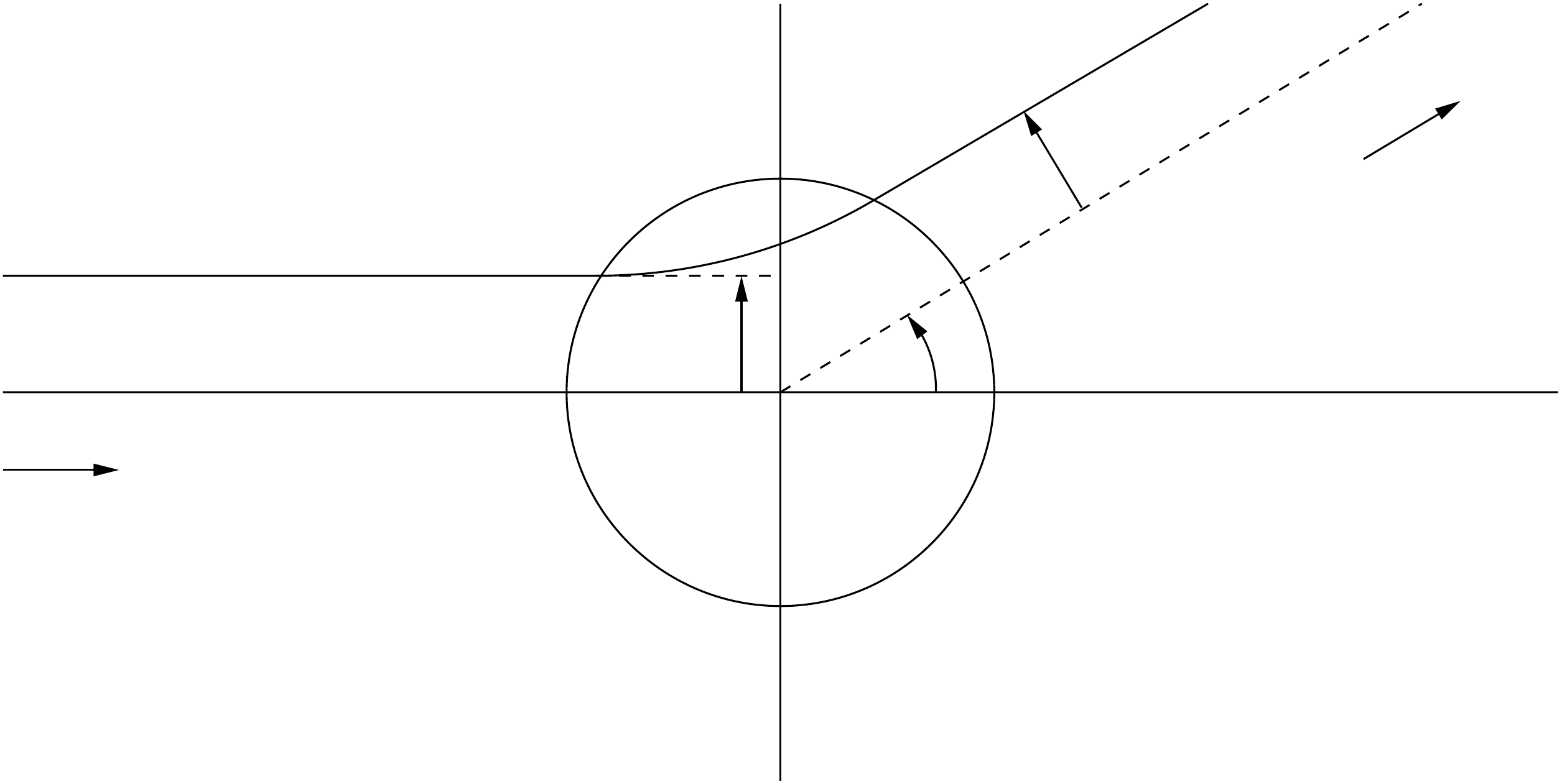}}
\put(0,1.75){$\vecv_{-}$} 
\put(4.4,2.75){$\vecb$}
\put(5.6,2.6){$\theta$} 
\put(6.8,4.1){$\vecs$}
\put(9.25,4){$\vecv_+$}
\end{picture}
\end{minipage}
\end{center}
\caption{The scattering map.} \label{fig1}
\end{figure}

We now specify the conditions on the scattering map that are assumed in Theorems \ref{thm:main1}--\ref{plusthm:main2}. These are the same as in \cite{Marklof:2011ho}, with the additional simplifying assumption that the scattering map preserves angular momentum, cf.~\cite[Remark 2.3]{Marklof:2011ho}. We describe the scattering map in units of $r$, i.e., the scatterer is represented as the open unit ball $\scrB_1^d$. Set
\begin{equation}
\scrS:=\{ (\vecv,\vecb)\in\US\times \scrB_1^d \mid \vecv\cdot \vecb=0\} ,
\end{equation}
and consider the scattering map
\begin{equation}\label{scatmap}
\Theta: \scrS\to\scrS,\qquad (\vecv_-,\vecb)\mapsto (\vecv_+,\vecs).
\end{equation}
The incoming data is denoted by $(\vecv_-,\vecb)\in\scrS$, where $\vecv_-$ is the velocity of the particle before the collision and $\vecb$ the impact parameter, i.e., the point of impact on $\US$ projected onto the plane $\{\vecb\in\RR^d \mid \vecv_-\cdot \vecb=0\}$.
The outgoing data is analogously defined as $(\vecv_+,\vecs)\in\scrS$, where $\vecv_+$ is the velocity of the particle after the collision and $\vecs$ the exit parameter, cf.~Figure \ref{fig1}. Since we assume the scattering map is spherically symmetric, it is sufficent to define $\Theta$ for $(\vecv_-,\vecb)=(\vece_1,w\vece_2)$ for $w\in[0,1)$, where $\vece_j$ denotes the unit vector in the $j$th coordinate direction.
Any spherically symmetric scattering map \eqref{scatmap} which preserves angular momentum is thus uniquely determined by
\begin{equation}
\Theta(\vece_1,w\vece_2) = \big(\vece_1 \cos\theta(w) +  \vece_2 \sin\theta(w), -\vece_1 w\sin\theta(w) +  \vece_2 w\cos\theta(w) \big)
\end{equation}
where $\theta(w)$ is called the {\em scattering angle}. 

\begin{figure}
\begin{center}
\begin{minipage}{0.8\textwidth}
\unitlength0.1\textwidth
\begin{picture}(10,6.3)(0,0)
\put(0.3,0){\includegraphics[width=0.9\textwidth]{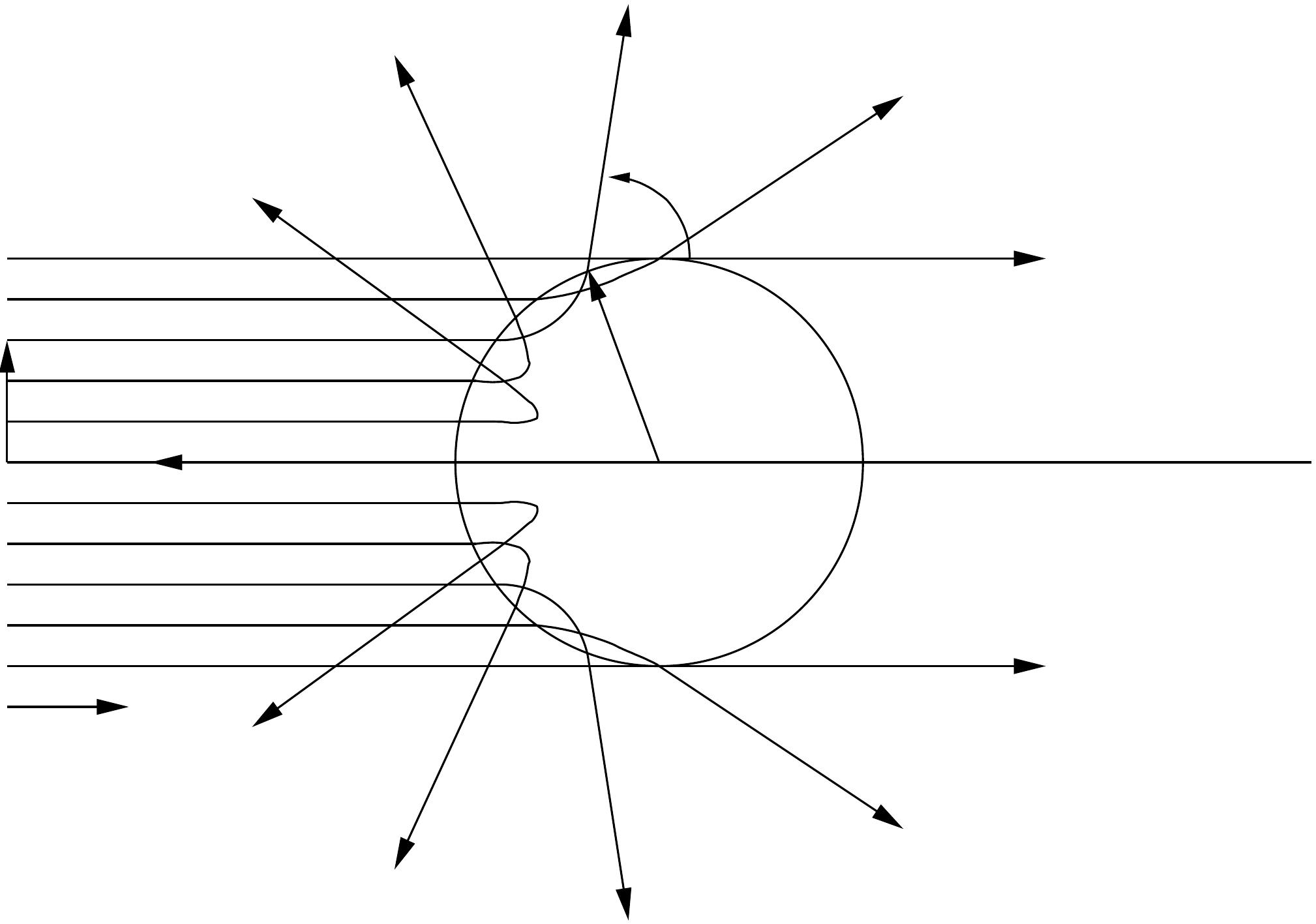}}
\put(0,3.5){$\vecb$} 
\put(0.5,1.2){$\vecv_-$} 
\put(4.6,4.65){$\theta$}
\put(4.6,5.8){$\vecv_+$} 
\end{picture}
\end{minipage}
\end{center}
\caption{Illustration of a scattering map satisfying Hypothesis (A).} \label{fig4}
\end{figure}

To satisfy the conditions of \cite{Marklof:2011ho}, we assume in the statements of Theorems  \ref{thm:main1} and \ref{thm:main2} that one of the following hypotheses is true (cf.~Fig.~\ref{fig4}):
\begin{itemize}
\item[(A)] {\em $\theta\in\C^1([0,1))$ is strictly decreasing with $\theta(0)=\pi$ and $\theta(w)>0$};
\item[(B)] {\em $\theta\in\C^1([0,1))$ is strictly increasing with $\theta(0)=-\pi$ and $\theta(w)<0$.}  
\end{itemize}

This assumption holds for a large class of scattering potentials, including muffin-tin Coulomb potentials, cf.~\cite{Marklof:2011ho}. In the case of hard-sphere scatterers we have $\theta(w)=\pi - 2 \arcsin(w)$ and hence Hypothesis (A) holds. 
For later use we define the minimal deflection angle by
\begin{equation}\label{Btheta}
B_\theta:= \inf_{w\in[0,1)} |\theta(w)|.
\end{equation}

Note that for more general impact parameters of the form
\begin{equation}
\vecb=\begin{pmatrix} 0 \\ \vecw  \end{pmatrix}, \qquad \vecw\in\UB\setminus\{\vecnull\}, 
\end{equation}
we have (by spherical symmetry)
\begin{equation}
\Theta\bigg(\begin{pmatrix} 1 \\ \vecnull  \end{pmatrix},\begin{pmatrix} 0 \\ \vecw  \end{pmatrix}\bigg)= \bigg(S(\vecw)\begin{pmatrix} 1 \\ \vecnull  \end{pmatrix}, S(\vecw)\begin{pmatrix} 0 \\ \vecw  \end{pmatrix} \bigg)
\end{equation}
with the matrix
\begin{equation}\label{SbLG}
S(\vecw) = E\big(\theta(w) \hatw\big), 
\end{equation}
where
\begin{equation}\label{SbLG222}
w:=\|\vecw\|>0, \qquad
\hatw:=w^{-1} \vecw \in\US,\qquad 
E(\vecx):= \exp\begin{pmatrix} 0 & -\trans\vecx \\ \vecx & 0_{d-1} \end{pmatrix}\in\SO(d) .
\end{equation}
More explicitly, 
\begin{equation}
S(\vecw) = \begin{pmatrix}
\cos\theta(w) & -\trans\hatw \sin\theta(w)  \\
\hatw \sin\theta(w) & 1_{d-1}-\hatw \otimes\hatw (1-\cos\theta(w))  
\end{pmatrix} .
\end{equation}
We extend the definition of $S(\vecw)$  to $\vecw=\vecnull$ by setting $S(\vecnull):=-I_d\in\SO(d)$ for $d$ even and 
$S(\vecnull):=\big(\begin{smallmatrix} -I_{d-1} & \\ & 1 \end{smallmatrix}\big) \in\SO(d)$ for $d$ odd. This choice ensures that $S(\vecnull)\vece_1=-\vece_1$.

For the case of general initial data $(\vecv_-,\vecb)\in\scrS$, assume $R(\vecv_-)\in\SO(d)$ and $\vecw\in\UB$ are chosen so that
\begin{equation}\label{TT}
\vecv_-=R(\vecv_-)\begin{pmatrix} 1 \\ \vecnull  \end{pmatrix} , \qquad \vecb=R(\vecv_-) \begin{pmatrix} 0 \\ \vecw  \end{pmatrix}.
\end{equation}
Then
\begin{equation}\label{TTT}
\Theta(\vecv_-,\vecb)= \bigg(R(\vecv_-) S(\vecw)\begin{pmatrix} 1 \\ \vecnull  \end{pmatrix}, R(\vecv_-) S(\vecw) \begin{pmatrix} 0 \\ \vecw  \end{pmatrix} \bigg) .
\end{equation}

We use an inductive argument to work out the velocity $\vecv_n$ after the $n$th collision, as well as the impact and exit parameters $\vecb_n$ and $\vecs_n$ of the $n$th collision.

\begin{lem}
Fix $\vecv_0$ and $R_0\in\SO(d)$ so that $\vecv_0=R_0\vece_1$, and denote by $(\vecv_n)_{n\in\NN}$, $(\vecb_n)_{n\in\NN}$, $(\vecs_n)_{n\in\NN}$ the sequence of velocities, impact and exit parameters of a given particle trajectory. Then there is a unique sequence $(\vecw_n)_{n\in\NN}$ in $\UB$ such that for all $n\in\NN$
\begin{equation}
\vecv_n=R_n \begin{pmatrix} 1 \\ \vecnull  \end{pmatrix} , \qquad \vecb_n=R_{n-1} \begin{pmatrix} 0 \\ \vecw_n  \end{pmatrix},
\qquad \vecs_n=R_n \begin{pmatrix} 0 \\ \vecw_n  \end{pmatrix},
\end{equation}
where
\begin{equation}
R_n := R_0 S(\vecw_1)\cdots S(\vecw_n).
\end{equation}
\end{lem}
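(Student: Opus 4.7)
The plan is to proceed by induction on $n$, iteratively invoking the single-collision formulas \eqref{TT}--\eqref{TTT}. The base case is immediate from the hypothesis $\vecv_0=R_0\vece_1$. For the inductive step, I would assume that $\vecv_{n-1}=R_{n-1}\vece_1$ has already been established, with $R_{n-1}=R_0 S(\vecw_1)\cdots S(\vecw_{n-1})\in\SO(d)$ and $\vecw_1,\ldots,\vecw_{n-1}\in\UB$ uniquely determined by the previous steps.

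Since $(\vecv_{n-1},\vecb_n)\in\scrS$, the vector $R_{n-1}^{-1}\vecb_n$ is orthogonal to $R_{n-1}^{-1}\vecv_{n-1}=\vece_1$, so it lies in $\{0\}\times\UB$ and can be written uniquely as $\trans(0,\vecw_n)$ for some $\vecw_n\in\UB$. This defines $\vecw_n$ and yields the claimed identity $\vecb_n=R_{n-1}\trans(0,\vecw_n)$. I would then apply \eqref{TT}--\eqref{TTT} with the choice $R(\vecv_{n-1}):=R_{n-1}$, which is permissible precisely because $\vecv_{n-1}=R_{n-1}\vece_1$, to obtain
\[
\vecv_n = R_{n-1}S(\vecw_n)\vece_1 = R_n\vece_1, \qquad \vecs_n = R_{n-1}S(\vecw_n)\trans(0,\vecw_n) = R_n\trans(0,\vecw_n),
\]
with $R_n:=R_{n-1}S(\vecw_n)\in\SO(d)$. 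This closes the induction and simultaneously yields uniqueness, since at each step $R_{n-1}$ is already fixed by $\vecw_1,\ldots,\vecw_{n-1}$ and $\vecw_n$ is then forced to equal the projection of $R_{n-1}^{-1}\vecb_n$ onto its last $d-1$ coordinates.

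I do not anticipate any genuine obstacle: the statement is essentially a bookkeeping device that propagates an orthonormal frame along the trajectory, so that each scattering can be absorbed into a single rotation factor $S(\vecw_n)$. The only mild point worth being careful about is the gauge ambiguity of the assignment $\vecv\mapsto R(\vecv)$ appearing in \eqref{TT}; this is resolved at the outset by declaring $R_0$ as part of the input and then propagating unambiguously via $R_n=R_{n-1}S(\vecw_n)$. The degenerate case $\vecw_n=\vecnull$ (head-on collision) causes no difficulty, since $S(\vecnull)$ has been extended to a well-defined element of $\SO(d)$ with $S(\vecnull)\vece_1=-\vece_1$.
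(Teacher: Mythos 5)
Your proposal is correct and follows the same inductive scheme as the paper: define $\vecw_n$ as the projection of $R_{n-1}^{-1}\vecb_n$ onto its last $d-1$ coordinates (possible since $\vecv_{n-1}\cdot\vecb_n=0$), then invoke \eqref{TT}--\eqref{TTT} with $R(\vecv_{n-1}):=R_{n-1}$ to propagate the frame. The paper's proof is essentially identical modulo your helpful extra remarks on the $R(\vecv)$ gauge and the degenerate case $\vecw_n=\vecnull$.
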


\begin{proof}
We proceed by induction. We have $\vecv_0\cdot\vecb_1=0$ and thus $\vece_1\cdot R_0^{-1} \vecb_1=0$. We define $\vecw_1\in\UB$ by
\begin{equation}
\begin{pmatrix} 0 \\ \vecw_1  \end{pmatrix} = R_0^{-1} \vecb_1 .
\end{equation}
Then the assumption \eqref{TT} is satisfied and \eqref{TTT} yields
\begin{equation}\label{TTT2}
(\vecv_1,\vecs_1)=\Theta(\vecv_0,\vecb_1)= \bigg(R_0 S(\vecw_1)\begin{pmatrix} 1 \\ \vecnull  \end{pmatrix}, R_0 S(\vecw_1) \begin{pmatrix} 0 \\ \vecw_1  \end{pmatrix} \bigg) .
\end{equation}
which proves the case $n=1$. Let us therefore assume the statement is true for $n= k-1$. 
By the induction hypothesis, we have $\vecv_{k-1}=R_{k-1}\vece_1$. Note that $\vecv_{k-1}\cdot\vecb_k=0$ implies $\vece_1\cdot R_{k-1}^{-1}\vecb_k=0$, and define $\vecw_k\in\UB$ by
\begin{equation}
\begin{pmatrix} 0 \\ \vecw_k  \end{pmatrix} = R_{k-1}^{-1} \vecb_k .
\end{equation}
Therefore \eqref{TT} holds with $\vecv_-=\vecv_{k-1}$, $\vecb=\vecb_k$, and we can apply \eqref{TTT}:
\begin{equation}\label{TTT3}
\begin{split}
(\vecv_k,\vecs_k) & =\Theta(\vecv_{k-1},\vecb_{k})\\
& = \bigg(R_{k-1} S(\vecw_{k})\begin{pmatrix} 1 \\ \vecnull  \end{pmatrix}, R_{k-1} S(\vecw_{k}) \begin{pmatrix} 0 \\ \vecw_{k}  \end{pmatrix} \bigg) \\
& = \bigg(R_{k} \begin{pmatrix} 1 \\ \vecnull  \end{pmatrix}, R_{k} \begin{pmatrix} 0 \\ \vecw_{k}  \end{pmatrix} \bigg) ,
\end{split}
\end{equation}
where $R_k:=R_{k-1} S(\vecw_{k})=R_0 S(\vecw_1)\cdots S(\vecw_k)$. This completes the proof.
\end{proof}

\section{The Boltzmann-Grad limit}\label{sec:two_b}

We now recall the results of \cite{Marklof:2010ib,Marklof:2011ho} that are relevant to our investigation. Define the Markov chain
\begin{equation}\label{Markov}
n\mapsto ( \xi_n ,\veceta_n) 
\end{equation}
on the state space $\RR_{> 0} \times \UB$ with transition probability 
\begin{equation}\label{transprob}
\Prob\big( (\xi_n,\veceta_n)\in\scrA \bigm| \xi_{n-1},\veceta_{n-1} \big) 
= \int_\scrA \Psi_0(\veceta_{n-1}, x,\vecz) \, d x\,d\vecz .
\end{equation}
We will discuss the transition kernel $\Psi_0( \vecw,x,\vecz)$ in detail in Section \ref{sec:transition}. At this point, it sufficies to note that it is independent of $\xi_{n-1}$ and symmetric, i.e.~$\Psi_0( \vecw,x,\vecz)=\Psi_0(\vecz,x,\vecw)$. It is also independent of the choice of the scattering angle $\theta$, the lattice $\scrL$ and the initial particle distribution $\Lambda$ \cite{Marklof:2010ib}. (Note that $\Psi_0$ is related to the kernel $\Phi_0$ studied in \cite{Marklof:2010ib,Marklof:2011ho,Marklof:2011di} by $\Psi_0( \vecw,x,\vecz) = \Phi_0( x,\vecw,-\vecz)$.) Let
\begin{equation}
\Psi_0(x, \vecz):=\frac{1}{v_{d-1}} \int_\UB   \Psi_0( \vecw,x,\vecz) \, d\vecw ,
\end{equation}
\begin{equation}\label{rel2}
\Psi(x, \vecz):= \frac{1}{\,\xibar\,} \int_ x^\infty  \Psi_0(x',\vecz) \, d x' ,
\end{equation}
with the mean free path length $\xibar=1/v_{d-1}$.
Both $\Psi_0(x, \vecz)$ and $\Psi(x, \vecz)$ define probability densities on $\RR_{>0}\times\UB$ with respect to $d x\,d\vecz$. The first fact follows from the symmetry of the transition kernel, and the second from the relation
\begin{equation}
\int_{\UB\times\RR_{>0}} \Psi(x,\vecz) \, d x\,d\vecz = \frac{1}{\,\xibar\,} \int_{\UB\times\RR_{>0}}  x \Psi_0(x, \vecz) \, d x\,d\vecz
%= \frac{1}{\,\xibar\,} \int_{\UB\times\RR_{>0}}  x \Psi_0(x) \, d x  
= 1 .
\end{equation}
Suppose in the following that the sequence of random variables 
\begin{equation}\label{xieta}
\big( (\xi_n,\veceta_n) \big)_{n=1}^\infty
\end{equation}
is given by the Markov chain \eqref{Markov}, where $(\xi_1,\veceta_1)$ has density either $\Psi(x, \vecz)$ (for the continuous time setting) or $\Psi_0(x, \vecz)$ (for the discrete time setting). The relation \eqref{rel2} between the two reflects the fact that the continuous time Markov process is a suspension flow over the discrete time process, where the particle moves with unit speed between consecutive collisions; see \cite[Sect.~6]{Marklof:2011ho} for more details.  

We assume in the following that $R$ is a function $\US\to\SO(d)$ which satisfies $\vecv=R(\vecv)\vece_1$ and which is smooth when restricted to $\US\setminus\{-\vece_1\}$. An example is
\begin{equation}
R(\vecv)=
E\Bigl(\frac{2\arcsin\bigl(\|\vecv-\vece_1\|/2\bigr)}
{\|\vecv_\perp\|}\, \vecv_\perp\Bigr) \qquad \text{for} \quad
\vecv\in\US\setminus\{\vece_1,-\vece_1\},
\end{equation}
where $\vecv_\perp:=(v_2,\ldots,v_d)\in\RR^{d-1}$, and $R(\vece_1)=I$, $R(-\vece_{1})=-I$.

For $n\in\NN$, define the following random variables: 
\begin{equation}\label{set1}
\tau_n := \sum_{j=1}^n \xi_j,\quad \tau_0:=0, \qquad \text{(time to the $n$th collision);}
\end{equation}
\begin{equation}
\nu_t := \max\{ n\in\ZZ_{\geq 0} : \tau_n \leq t \} \qquad \text{(number of collisions within time $t$);}
\end{equation}
\begin{equation}\label{vac}
\vecV_n := R(\vecv_0) S(\veceta_1) \cdots S(\veceta_n) \vece_1,\quad \vecV_0:=\vecv_0, \qquad \text{(velocity after the $n$th collision);}
\end{equation}
\begin{equation}
\vecQ_n := \sum_{j=1}^n \xi_j \vecV_{j-1}  \qquad \text{(discrete time displacement);}
\end{equation}
\begin{equation}\label{set2}
\vecX_t :=  \vecQ_{\nu_t} + (t-\tau_{\nu_t})\vecV_{\nu_t} \qquad \text{(continuous time displacement).}
\end{equation}

\begin{thm}[\cite{Marklof:2011ho}]\label{thm:MS}
(i) Under the hypotheses of Theorem \ref{thm:main1}, for any $t>0$,
\begin{equation}\label{Xt}
\vecx_t -\vecx_0 \Rightarrow \vecX_t
\end{equation}
as $r\to 0$, where the random variable $(\xi_1,\veceta_1)$ has density $\Psi(x, \vecz)$.

(ii) Under the hypotheses of Theorem \ref{thm:main2}, for any $n\in\NN$,
\begin{equation}\label{Qn}
\vecq_n -\vecq_0 \Rightarrow \vecQ_n
\end{equation}
as $r\to 0$, where the random variable $(\xi_1,\veceta_1)$ has density $\Psi_0(x, \vecz)$.
\end{thm}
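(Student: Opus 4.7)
The plan is to establish (ii) first, since (i) follows from it via a suspension-flow argument. The core idea is to translate the problem into one about the distribution of the lattice $\scrL_r$ seen from the moving trajectory, and then invoke equidistribution results on the space of affine lattices $\ASL(d,\ZZ)\backslash\ASL(d,\RR)$.

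For the one-step limit, fix the initial condition $(\vecq_0,\vecv_0)\in r\US\times\US$. The first free flight length $\xi_1^{(r)}$ and impact parameter $\vecw_1^{(r)}$ are encoded by the first point of $\scrL_r$ that the cylindrical tube of radius $r$ around the outgoing ray $\{\vecq_0+s\vecv_0 : s>0\}$ meets. After the natural rescaling (diagonal action of $\diag(r^{d-1},r^{-1},\dots,r^{-1})$), identifying the open tube with lattice points in $\scrL_r$ becomes a question about points of a random translate of $\scrL$ lying in a fixed region. I would apply the equidistribution theorem for translates of a fixed unstable horospherical section under the diagonal flow on $\ASL(d,\ZZ)\backslash\ASL(d,\RR)$ (a consequence of Ratner's measure classification, worked out explicitly in \cite{Marklof:2010ib}) to pass to the limit $r\to 0$, producing a density that, once unpacked, coincides with $\Psi_0(\vecw_0,x,\vecz)$ evaluated at the incoming parameter $\vecw_0=\vecnull$ (or the appropriate angular momentum for the stationary case).

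The main step is the iteration. After the $k$-th collision, the particle is at a point $\vecq_k^{(r)}\in r\US+\scrL_r$ with velocity $\vecv_k^{(r)}$, and the next piece of data $(\xi_{k+1}^{(r)},\vecw_{k+1}^{(r)})$ depends again on the first lattice point hit by the outgoing ray. This reduces to a horospherical equidistribution problem for an $r$-dependent family of starting points, but now the starting points depend measurably on the history $(\xi_1^{(r)},\vecw_1^{(r)},\dots,\xi_k^{(r)},\vecw_k^{(r)})$. The crucial technical point is to prove a \emph{joint} equidistribution statement: for any bounded continuous test function $f$ of the first $k$ collision data and any $g$ of the next flight parameters, the expectation factorises in the limit into the expectation against the already-proven marginal times the transition kernel $\Psi_0(\vecw_k,\cdot,\cdot)$. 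I expect this joint equidistribution, with uniformity in the history, to be the main obstacle; it is handled by lifting everything to $\ASL(d,\RR)$, showing that the conditional law of the position modulo $\scrL_r$ relative to the local frame equidistributes on the fundamental cell independently of the history, and then applying Ratner-type equidistribution uniformly over compact families. With this in hand, an induction on $n$ establishes that the joint law of $(\xi_j^{(r)},\vecw_j^{(r)})_{j=1}^n$ converges to the $n$-step law of the Markov chain \eqref{Markov}.

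Since $\vecq_n-\vecq_0=\sum_{j=1}^n \xi_j \vecv_{j-1}$ and the velocities $\vecv_j$ are continuous functions of $(\vecv_0,\vecw_1,\dots,\vecw_j)$ via \eqref{vac}, the continuous mapping theorem then yields \eqref{Qn}. For part (i), at any fixed $t>0$ the continuous-time displacement equals $\vecq_{\nu_t^{(r)}}^{(r)}-\vecq_0+(t-\tau_{\nu_t^{(r)}}^{(r)})\vecv_{\nu_t^{(r)}}^{(r)}$, with $\nu_t^{(r)}$ almost surely finite in the limit. The only subtlety is the initial distribution: to match the Lebesgue-type measure $\Lambda$ on $\T^1(\RR^d)$, the length $\xi_1$ of the flight currently in progress at time $0$ must be size-biased, producing the density $\Psi(x,\vecz)=\xibar^{-1}\int_x^\infty \Psi_0(x',\vecz)\,dx'$ in \eqref{rel2}; a short computation using the Palm formula for the point process of collision times confirms that the $r\to 0$ limit of $\Lambda$ indeed assigns this density to $(\xi_1,\veceta_1)$, and an application of (ii) together with continuity in $t$ of the suspension then gives \eqref{Xt}.
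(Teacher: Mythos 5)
The paper itself does not prove Theorem \ref{thm:MS}: it is stated as a direct citation to \cite{Marklof:2011ho} (their Theorems 1.1 and 1.2, specialised to a single time $t$ or $n$), and the contribution of the present paper starts from these limits as given. That said, your outline is a reasonable high-level summary of the argument in the cited reference: the rescaling by $\diag(r^{d-1},r^{-1},\dots,r^{-1})$, the translation into horospherical equidistribution on a space of lattices via the Ratner-type theorems developed in \cite{Marklof:2010ib}, and your identification of a uniform conditional equidistribution statement in the collision history as the main technical obstacle all match the structure of \cite{Marklof:2011ho}. One small correction to your sketch: for the discrete-time chain, each exit point $\vecq_k$ lies on the sphere of radius $r$ around a point of $\scrL_r$; after the rescaling the first-return data is governed by a fixed cylinder relative to a lattice centred at the origin, so the one-step transition kernel $\Psi_0$ is derived from equidistribution on $\SLdSL$, not $\ASLASL$ as you wrote. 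The affine space enters only for the distribution of the flight in progress at time $0$ in part (i), where $\vecx_0$ is a generic point of $\RR^d$, and this is what produces the effective size-biasing giving the density $\Psi$ in \eqref{rel2}, as you correctly describe.
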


The main part of this paper is devoted to the proof of the following superdiffusive central limit theorem for the processes $\vecX_t$ and $\vecQ_n$, which in turn implies Theorems \ref{thm:main1} and \ref{thm:main2}. We will only assume that the random variable $(\xi_1,\veceta_1)$ is such that the marginal distribution of $\veceta_1$ is absolutely continuous on $\UB$ with respect to Lebesgue measure; there is no further assumption on the distribution of $\xi_1$. 
This hypothesis is satisfied for $(\xi_1,\veceta_1)$ with density $\Psi_0(x, \vecz)$, since
\begin{equation}
\overline\Psi_0(\vecz):=\int_0^\infty \Psi_0(x, \vecz)\, dx = \frac{1}{v_{d-1}} \int_{\RR_{>0}\times\UB}   \Psi_0( \vecz,x,\vecw) \, d x\,d\vecw= \frac{1}{v_{d-1}}.
\end{equation}
That is, the marginal distribution of  $\veceta_1$ is uniform on $\UB$.
We will later see that $(\xi_1,\veceta_1)$ with density $\Psi(x, \vecz)$ also complies with the above hypothesis (cf.~Proposition \ref{prop:main3}). The processes $\vecX_t$ and $\vecQ_n$ are independent of $\vecx_0$ and $\vecq_0$, respectively, and we will in the following fix $\vecv_0\in\US$. Also, the required assumptions on the scattering angle $\theta$ are significantly weaker than in the previous theorems.

\begin{thm}\label{thm:main3}
Let $d\geq 2$,  $\vecv_0\in\US$ and assume that the marginal distribution of $\veceta_1$ is absolutely continuous. Assume $\theta:[0,1)\to [-\pi,\pi]$ is measurable, so that
\begin{equation}\label{irr2}
\meas\{ w\in[0,1) : \theta(w) \notin \pi \QQ \} >0 .
\end{equation}
Then (i)
\begin{equation}\label{eq:main3}
\frac{\vecX_t}{\Sigma_d \sqrt{t\log t}}  \Rightarrow \scrN(0,I_d),
\end{equation}
and (ii)
\begin{equation}\label{eq:main4}
\frac{\vecQ_n}{\sigma_d \sqrt{n\log n}}  \Rightarrow \scrN(0,I_d).
\end{equation}
\end{thm}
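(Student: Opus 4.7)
The plan is to prove the discrete-time statement (ii) first and deduce (i) from it. The core idea is to exploit two distinct time scales: the velocity chain $\vecV_n$ on $\US$ mixes exponentially fast, while the free paths $\xi_j$ have a heavy inverse-quadratic tail that supplies the logarithmic superdiffusive factor.

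First I would analyse the Markov chain $(R_n, \veceta_n)$ on $\SO(d) \times \UB$, where $R_n = R(\vecv_0) S(\veceta_1) \cdots S(\veceta_n)$ and $\vecV_n = R_n \vece_1$; the conditional density of $\veceta_n$ given $\veceta_{n-1}$ is $\int_0^\infty \Psi_0(\veceta_{n-1}, x, \vecz)\, dx$, which is absolutely continuous. Under the irrationality hypothesis \eqref{irr2}, the rotations $\{S(\vecw) : w \in [0,1)\}$ generate a dense subgroup of $\SO(d)$, and a Doeblin-type argument then gives a spectral gap for the transfer operator, with Haar $\times$ uniform as stationary measure on $\SO(d) \times \UB$. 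This yields exponential decay of correlations, in particular $\Exp[\vecV_j \cdot \vecV_k] = O(\rho^{|j-k|})$ for some $\rho < 1$.

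Next I would truncate the heavy tail of the free path. From the Boltzmann--Grad free-path statistics of \cite{Marklof:2010ib}, the marginal of $\xi_j$ satisfies $\Prob(\xi_j > x) \sim a_d \, x^{-2}$ as $x \to \infty$, with an explicit constant $a_d$. Setting $L_n = \sqrt{n}$ and $\xi_j' := \xi_j \ind{\xi_j \leq L_n}$, $\xi_j'' := \xi_j - \xi_j'$, one finds
\begin{equation*}
\Exp[(\xi_j')^2] = a_d \log n + O(1), \qquad \Exp\bigl[\#\{j \leq n : \xi_j'' > 0\}\bigr] = O(1).
\end{equation*}
The tail sum $\sum_j \xi_j'' \vecV_{j-1}$ involves a bounded-in-expectation number of terms whose directions are approximately uniform on $\US$, and a Chebyshev bound shows it is $o_\Prob(\sqrt{n \log n})$. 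For the truncated sum $\vecQ_n' := \sum_j \xi_j' \vecV_{j-1}$ I would apply a big-block/small-block decomposition: partition $\{1, \ldots, n\}$ into alternating big blocks of length $\ell_n = \lfloor \log^2 n \rfloor$ and small separating blocks of length $m_n = \lfloor \log n \rfloor$, so that exponential mixing renders sums over distinct big blocks approximately independent, while the contribution of the small blocks is $o(n \log n)$ in variance. A Lindeberg check using $\Exp[(\xi_j')^2] = O(\log n)$ and the stationary covariance $\Exp[\vecV_j \otimes \vecV_j] = d^{-1} I_d$ then yields the multivariate Gaussian limit with $\sigma_d^2 = a_d / d$, matching the claimed constant $2^{1-d}/(d^2(d+1)\zeta(d))$.

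The main obstacle will be controlling the dependence between $\xi_j$ and the surrounding velocity directions. The transition density $\Psi_0(\veceta_{j-1}, \xi_j, \veceta_j)$ couples the length of the $j$th free path to both the incoming and outgoing direction change, so the tail estimate on $\xi_j$ must be strengthened to hold uniformly (in an averaged sense) in the conditioning on $\veceta_{j-1}$, and the covariance $\Exp[\xi_j' \xi_k' \, \vecV_{j-1} \cdot \vecV_{k-1}]$ must be shown to be summable off the diagonal in $|j-k|$ despite the conditioning—this is where the exponential mixing is essential. Once (ii) is in hand, statement (i) follows by combining it with the renewal estimate $\nu_t / t \to \xibar^{-1}$ almost surely and the bound $\|(t - \tau_{\nu_t}) \vecV_{\nu_t}\| \leq \xi_{\nu_t + 1}$, which rescaled by $\sqrt{t \log t}$ vanishes in probability by the same tail estimate on $\xi$.
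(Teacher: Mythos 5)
Your big-block/small-block scheme with a fixed truncation level $L_n=\sqrt n$ (rather than the paper's $j$-dependent level $r_j=\sqrt{j(\log j)^\gamma}$) and a blocking CLT is a genuinely different architecture from the paper's device of conditioning on $\uveceta$ and then applying Lindeberg to the now-independent $\xi_j$. That part could be made to work. The genuine gap is in the mixing input. You claim that a Doeblin argument yields a spectral gap for the chain $(R_n,\veceta_n)$ on $\SO(d)\times\UB$ with Haar $\times$ uniform invariant measure. Under hypothesis \eqref{irr2} this is false. For instance \eqref{irr2} allows $\theta\equiv\alpha$ a.e.\ with $\alpha\notin\pi\QQ$. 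Then in $d=2$ the averaged twist in the character $\chi_n$ has norm $\big|\int_0^1\cos(n\theta(w))\,dw\big|=|\cos(n\alpha)|$, whose supremum over $n$ is $1$, so there is no spectral gap on $\L^2(\SO(2)\times(-1,1))$. Equally, every $m$-step law of $S(\veceta_1)\cdots S(\veceta_m)$ is atomic on a countable subgroup of $\SO(2)$, so no Doeblin minorization exists on $\SO(d)\times\UB$ in any number of steps.

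The paper avoids exactly this by proving a spectral gap for the twisted transfer operator $PU$ only in each fixed nontrivial irreducible representation (Lemma \ref{repgap} via a Stromberg-type density argument, Proposition \ref{prop2}), not uniformly over all of them, and then applying it to the finitely many representations that actually appear — the trivial one, the standard $\rho_1$, and the adjoint $\rho_2$ on traceless symmetric matrices — to get Propositions \ref{prop:exp_mix} and \ref{prop:exp_exp}. These are precisely the tools needed for the two facts you flag as obstacles: off-diagonal decay of $\Cov\big(\xi_i'(\vece\cdot\vecV_{i-1}),\xi_j'(\vece\cdot\vecV_{j-1})\big)$ and smallness of the centering term $\sum_j m_j\vecV_{j-1}$. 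The latter is not automatic by symmetry, since $m_j=K_{1,r_j}(\veceta_{j-1},\veceta_j)/K_0(\veceta_{j-1},\veceta_j)$ depends on $\veceta_{j-1}$, which also enters $\vecV_{j-1}$; the paper isolates this in Lemma \ref{mainlem1} using the $\rho_1$ gap. If you replace your Doeblin claim on the product space by this representation-by-representation gap, the rest of your plan — tail bound for the excised part (your ``Chebyshev'' step should really be a count-of-exceedances plus a maximum bound, since $\xi_j''$ has infinite second moment), blocking, triangular-array Lindeberg, and the $(ii)\Rightarrow(i)$ renewal step — can be carried through.
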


In view of Theorem \ref{thm:MS},  Theorem \ref{thm:main3} implies Theorems \ref{thm:main1} and \ref{thm:main2}. \\

Statement (ii) in Theorem \ref{thm:MS} generalizes to the convergence of the random curve \eqref{thm:MS} obtained by linearly interpolating $\vecq_n$  \cite[Theorem 1.1]{Marklof:2011ho}. That is, under the conditions of Theorem \ref{plusthm:main2}, for $r\to 0$ and arbitrary fixed $n$,
\begin{equation}\label{pluseq:main234}
\vecY_{n,r}\Rightarrow \vecY_n 
\end{equation}
where the rescaled discrete-time limiting process is defined by
\begin{equation}
\vecY_n(t):= \frac{\vecQ_n(t)}{\sigma_d\sqrt{n\log n}} ,
\end{equation}
and 
\begin{equation}
\vecQ_n(t):=\vecQ_{\lfloor nt\rfloor} + \{ nt\}\, \xi_{\lfloor nt\rfloor +1}\vecV_{\lfloor nt\rfloor}
\end{equation}
denotes the linear interpolation of the discrete time displacements $\vecQ_0,\vecQ_1,\vecQ_2,\ldots$ We will prove in Section \ref{plussec:finite} that the $\vecY_n$ converges to $\vecW$ in finite-dimensional distribution. The last missing ingredient in the proof of Theorem \ref{plusthm:main2} is then the tightness of the probability measures associated with the sequence of random curves $(\vecY_n)_{n=1}^\infty$ in $\C_0([0,1])$, which is established in Section \ref{plussec:tightness}. Theorem \ref{plusthm:main1} follows from Theorem \ref{plusthm:main2} via estimates presented in Section \ref{plussec:221}. \\

It is interesting to compare the above results with the case of a random, rather than periodic, scatterer configuration, where the scatterers are placed at the points of a fixed realisation of a Poisson process in $\RR^d$. In the case of fixed scattering radius there is, to the best of our knowledge, no proof of a central limit theorem even in dimension $d=2$. In the Boltzmann-Grad limit, however, the work of Gallavotti \cite{Gallavotti1969}, Spohn \cite{Spohn:1978tt} and Boldrighini, Bunimovich and Sinai \cite{Boldrighini:1983jm} shows that we have an analogue of Theorem \ref{thm:MS}, where the limit random flight process $\vecX_t$ is governed by the linear Boltzmann equation. In this setting, \eqref{xieta} is a sequence of independent random variables, where $\xi_n$ has density $\Psi_0(x)=v_{d-1} \exp(-v_{d-1} x)$ and $\veceta_n$ is uniformly distributed in $\UB$. Routine techniques \cite{Papanicolaou:1975tn} show that in this case the central limit theorem holds for $\vecX_t$ with a standard $\sqrt t$ normalisation, and  for $\vecQ_n$ with a $\sqrt n$ normalisation.

\section{Outline of the proof of Theorem \ref{thm:main3}}\label{sec:outline}

We will now outline the central arguments in the proof of Theorem \ref{thm:main3} (ii) for discrete time by reducing the statement to four main lemmas, whose proof is given in Section \ref{sec:proof}. The continuous-time case (i) follows from (ii) via technical estimates supplied in Section \ref{sec:from}. We will assume from now on that $(\xi_1,\veceta_1)$ has density $\Psi_0(x, \vecz)$, and discuss the generalisation to more general distributions in Section \ref{sec:general}. We note that for $\veceta_0$ uniformly distributed in $\UB$,
\begin{equation}\label{eli}
\Psi_0(x, \vecz) = \Exp \Psi_0(\veceta_0,x,\vecz) ,
\end{equation}
and it is therefore equivalent to consider instead of \eqref{xieta} the Markov chain 
\begin{equation}\label{xieta1}
\big( (\xi_n,\veceta_n) \big)_{n=0}^\infty
\end{equation}
with the same transition probability \eqref{transprob}, $\veceta_0$ uniformly distributed in $\UB$ and $\xi_0=0$.
The sequence
\begin{equation}
\uveceta = \big(\veceta_n\big)_{n=0}^\infty ,
\end{equation}
with $\veceta_0$ as defined above, is itself generated by a Markov chain on the state space $\UB$ with transition probability
\begin{equation}\label{Markov1}
\Prob\big( \veceta_n\in\scrA \bigm| \veceta_{n-1} \big) 
= \int_\scrA K_0(\veceta_{n-1},\vecz) \, d\vecz
\end{equation}
where
\begin{equation}\label{K0def}
K_0(\vecw,\vecz) := \int_0^\infty \Psi_0( \vecw,x,\vecz) d x .
\end{equation}

The objective is to prove a central limit theorem of sums of the random variables $\xi_n \vecV_{n-1}$. The first observation is that these are of course not independent. If we, however, condition on the sequence $\uveceta$, then the $\vecV_n$ are deterministic, and $(\xi_n)_{n=1}^\infty$ is a sequence of independent (but not identically distributed) random variables, 
\begin{equation}
\Prob\big(\xi_n\in (x,x+dx) \bigm| \uveceta\big) = \frac{\Psi_0(\veceta_{n-1},x,\veceta_n)\, dx}{K_0(\veceta_{n-1},\veceta_n)}.
\end{equation}
The plan is now to apply the Lindeberg central limit theorem to the sum of independent random variables, $\vecQ_n=\sum_{j=1}^n \xi_j \vecV_{j-1}$, conditioned on $\uveceta$. 

To this end we first truncate $\vecQ_n$ by defining the random variable
\begin{equation}
\vecQ_n' := \sum_{j=1}^n \xi_j' \vecV_{j-1}  
\end{equation}
with
\begin{equation}\label{xiprime}
\xi_j' := \xi_j \ind{\xi_j^2 \leq   j (\log j)^\gamma}
\end{equation}
for some fixed $\gamma\in(1,2)$. The following lemma tells us that it is sufficient to prove Theorem \ref{thm:main3} (ii) for $\vecQ_n'$ instead of $\vecQ_n$.

\begin{lem}\label{mainlem0}
We have
\begin{equation}
\sup_{n\in\NN}\|\vecQ_n-\vecQ_n'\| <\infty
\end{equation}
almost surely.
\end{lem}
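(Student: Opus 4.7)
The approach is a direct application of the first Borel--Cantelli lemma. The first step is to write
\begin{equation*}
\vecQ_n-\vecQ_n' = \sum_{j=1}^n \xi_j\,\ind{\xi_j^2>j(\log j)^\gamma}\,\vecV_{j-1},
\end{equation*}
and, using $\|\vecV_{j-1}\|=1$, to observe that
\begin{equation*}
\sup_{n\in\NN}\|\vecQ_n-\vecQ_n'\| \;\le\; \sum_{j=1}^\infty \xi_j\,\ind{\xi_j^2>j(\log j)^\gamma}.
\end{equation*}
It therefore suffices to show that, almost surely, the events $A_j:=\{\xi_j^2>j(\log j)^\gamma\}$ occur for only finitely many $j$: then only finitely many summands on the right are non-zero, so the supremum is finite.

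For this, I would invoke the first Borel--Cantelli lemma, which does not require independence of the $\xi_j$. To control $\Prob(A_j)$ I would first identify the marginal law of $\xi_j$. Since $\Psi_0$ is symmetric in its two angular arguments, the uniform measure $v_{d-1}^{-1}d\vecz$ on $\UB$ is invariant for the Markov chain $\uveceta$ with transition kernel $K_0$. Because the marginal of $\veceta_1$ under $\Psi_0(x,\vecz)$ is uniform on $\UB$, the chain is stationary and $\veceta_n$ is uniform on $\UB$ for every $n\ge 1$. Consequently the marginal density of each $\xi_j$ is the $j$-independent free-path-length density
\begin{equation*}
\overline\Psi_0(x) := \frac{1}{v_{d-1}}\int_\UB \Psi_0(\vecw,x,\vecz)\,d\vecw\,d\vecz.
\end{equation*}

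The crucial analytic input is the tail estimate $\overline\Psi_0(x)=O(x^{-3})$ as $x\to\infty$, which is an established feature of the Boltzmann--Grad free-path-length distribution obtained in \cite{Marklof:2010ib,Marklof:2011ho}. This gives $\Prob(\xi_j>T)=O(T^{-2})$ uniformly in $j$, whence
\begin{equation*}
\Prob(A_j) = \Prob\!\left(\xi_j>\sqrt{j(\log j)^\gamma}\right) = O\!\left(\frac{1}{j(\log j)^\gamma}\right).
\end{equation*}
Since $\gamma>1$, the series $\sum_{j\ge 2}\,j^{-1}(\log j)^{-\gamma}$ converges (Cauchy condensation), so $\sum_j \Prob(A_j)<\infty$ and Borel--Cantelli yields the desired finiteness a.s.

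The only point requiring any subtlety is the uniform tail bound on $\xi_j$; but this is a direct consequence of prior work and, importantly, it uses only the marginal law of $\xi_j$, sidestepping the conditional dependence of $\xi_j$ on $\veceta_{j-1},\veceta_j$. Once the tail bound is in hand, the remainder is the routine Borel--Cantelli argument above, so no further probabilistic machinery is needed.
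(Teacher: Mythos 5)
Your proposal is correct and takes essentially the same route as the paper: a direct first Borel--Cantelli argument relying on the $x^{-3}$ tail of the free-path-length density \eqref{PHIBARXILARGETHMRES1}, which gives $\Prob(\xi_j^2>j(\log j)^\gamma)=O(j^{-1}(\log j)^{-\gamma})$, summable for $\gamma>1$. The only difference is that you spell out the stationarity of $\uveceta$ (hence that each $\xi_j$ has the same marginal law $\Psi_0(x)$) which the paper uses implicitly via its setup around \eqref{eli}.
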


To prove the central limit theorem for $\vecQ_n'$, we center $\xi_j'$ by setting 
\begin{equation}
\txi_j = \xi_j' -m_j,  
\end{equation}
with the conditional expectation
\begin{equation}\label{conExp}
m_j := \Exp\big( \xi_j' \bigm| \uveceta\big) = \frac{K_{1,r_j}(\veceta_{j-1},\veceta_j)}{K_0(\veceta_{j-1},\veceta_j)}
\end{equation}
where $r_j:=\sqrt{j(\log j)^\gamma}$ and
\begin{equation}\label{K1rdef}
K_{1,r}(\vecw,\vecz) := \int_0^r  x \Psi_0( \vecw,x,\vecz) d x .
\end{equation}
Let
\begin{equation}\label{2sums}
\widetilde\vecQ_n := \sum_{j=1}^n \txi_j \vecV_{j-1} .
\end{equation}

The following lemma shows that $\vecQ_n'$ and $\widetilde\vecQ_n$ are close relative to $\sqrt{n\log n}$.

\begin{lem}\label{mainlem1}
The sequence of random variables 
\begin{equation}
\frac{\vecQ_n'-\widetilde\vecQ_n}{\sqrt{n \log\log n}} 
\end{equation}
is tight if $d=2$, and  
\begin{equation}
\frac{\vecQ_n'-\widetilde\vecQ_n}{\sqrt{n}} 
\end{equation}
is tight if $d\geq 3$.
\end{lem}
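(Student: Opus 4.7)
My plan would be to start from the identity $\vecQ_n' - \widetilde\vecQ_n = \sum_{j=1}^n m_j \vecV_{j-1}$ (immediate from $\txi_j = \xi_j' - m_j$), decompose $m_j$ into three pieces, and control each contribution separately. Two inputs are essential: the Boltzmann--Grad tail asymptotic $\Psi_0(\vecw, x, \vecz) \sim c(\vecw, \vecz)/x^3$ as $x \to \infty$ (available from \cite{Marklof:2010ib}), and the mixing properties of the joint Markov chain $(\veceta_j, \vecV_j)$ on $\UB \times \US$.

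First I would use the $1/x^3$ tail to show that $K_1(\vecw, \vecz) := \lim_{r \to \infty} K_{1, r}(\vecw, \vecz) < \infty$ and $K_1 - K_{1, r_j} = O(1/r_j)$; setting $m_j^\infty := K_1(\veceta_{j-1}, \veceta_j)/K_0(\veceta_{j-1}, \veceta_j)$, the truncation remainder $\sum(m_j - m_j^\infty)\vecV_{j-1}$ has norm at most $C \sum r_j^{-1} = O(\sqrt n (\log n)^{-\gamma/2}) = o(\sqrt n)$ by the triangle inequality. Next I would introduce $\bar m(\vecw) := \int K_1(\vecw, \vecz)\,d\vecz$ (the conditional mean of $\xi_j$ given $\veceta_{j-1} = \vecw$) and observe that $N_j := m_j^\infty - \bar m(\veceta_{j-1})$ is a bounded martingale difference for $\scrF_j := \sigma(\veceta_0, \ldots, \veceta_j)$; since $\vecV_{j-1}$ is $\scrF_{j-1}$-measurable, orthogonality of martingale increments yields
\begin{equation*}
\Exp\Big\|\sum_{j=1}^n N_j \vecV_{j-1}\Big\|^2 = \sum_{j=1}^n \Exp\big[N_j^2\, \|\vecV_{j-1}\|^2\big] = O(n),
\end{equation*}
which is tight at rate $\sqrt n$.

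The remaining centered drift $\sum_{j=1}^n \bar m(\veceta_{j-1})\vecV_{j-1}$ has summand $F(\vecw, \vecv) := \bar m(\vecw)\vecv$ with mean zero under the invariant measure of $(\veceta_j, \vecV_j)$, by isotropy of $\vecV_j$ inherited from the irrationality hypothesis \eqref{irr2}. For $d \geq 3$, I would exploit the spectral gap of the velocity chain $\vecV_j = S(\veceta_j)\vecV_{j-1}$ on $\US$ (absolute continuity of $\veceta$ together with \eqref{irr2} yield exponential mixing in $\SO(d)$), solve the Poisson equation $(I - P)G = F$ with a bounded $G$, and telescope to convert the drift into a bounded-increment martingale, giving $\Exp\|\sum F\|^2 = O(n)$ and tightness at $\sqrt n$. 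For $d = 2$ the velocity chain reduces to $\vecV_j = \e^{\i \Phi_j}\vecV_0$ on $S^1$ with $\Phi_j = \Theta_1 + \cdots + \Theta_j$ and $\Theta_j = \theta(|\veceta_j|)\sgn(\veceta_j)$; a Fourier decomposition on $S^1$ isolates a single nonzero frequency, and combining the spectral gap of the $\veceta$-chain (from positivity of $K_0$) with a secondary martingale decomposition separating $\veceta_{j-1}$ from the earlier partial sum $\Phi_{j-2}$, the surviving iid-like sum is controlled by the Hartman--Wintner law of the iterated logarithm, giving the asserted $O(\sqrt{n\log\log n})$ rate.

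The hardest step will be this $d = 2$ case of the centered drift: the absence of a proper spectral gap on the joint product $\UB \times S^1$ rules out a direct Poisson-equation argument of the kind available for $d \geq 3$, and one must instead extract an almost-iid structure of the rotation angles and invoke the LIL. The resulting rate is weaker than the $\sqrt n$ obtained for $d \geq 3$, but it still satisfies $\sqrt{n \log\log n} = o(\sqrt{n \log n})$, which is all that the CLT normalization in Theorem \ref{thm:main3} demands.
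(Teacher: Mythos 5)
Your decomposition $m_j = (m_j - m_j^\infty) + N_j + \bar m(\veceta_{j-1})$ is a genuinely different route from the paper's proof, and unfortunately it has a conceptual flaw in the $d=2$ case. By replacing the truncated conditional mean $m_j$ with the untruncated $m_j^\infty = \mu_j$ in your martingale piece, you throw away precisely the variance control that makes the lemma work in two dimensions: Proposition~\ref{prop:tail2} gives $\Prob(\mu_j>u)\sim\text{const}/(u^2\log u)$, so $\Exp(\mu_j^2)=\infty$, and by the law of total variance $\Exp(N_j^2)=\Exp(\mu_j^2)-\Exp(\bar m(\veceta_{j-1})^2)=\infty$ as well (the subtracted term is finite because $\bar m(\vecw)$ grows only logarithmically near $\partial\UB$). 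The displayed orthogonality identity $\Exp\|\sum N_j\vecV_{j-1}\|^2=\sum\Exp[N_j^2\|\vecV_{j-1}\|^2]$ then reads $\infty=\infty$ and provides nothing. In contrast, $\Exp(m_j^2)=O(\log\log j)$ by Proposition~\ref{prop:mbound} precisely because the truncation at $r_j$ is retained.

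Two further, smaller, issues: the asserted pointwise bound $|m_j-m_j^\infty|\le C/r_j$ is false, since $K_1(\vecw,\vecz)-K_{1,r}(\vecw,\vecz)=\int_r^\infty x\Psi_0(\vecw,x,\vecz)\,dx$ blows up as $\vecw,\vecz$ approach $\partial\UB$ (the support length $x_0(\vecw,\vecz)$ is unbounded); only the average $\Exp|m_j-m_j^\infty|=O(1/r_j)$ holds, via~\eqref{PHIBARXILARGETHMRES1}, which still suffices to control your first piece at order $o(\sqrt n)$ by Markov's inequality rather than by the triangle inequality. Likewise, there is no \emph{bounded} solution $G$ to the Poisson equation $(I-P)G=F$ with $F(\vecw,\vecv)=\bar m(\vecw)\vecv$, because $\bar m$ itself is not bounded (it diverges logarithmically at $\partial\UB$ for all $d\ge2$); one has to work with $G\in\L^2$, which does follow from the spectral gap and still yields the $O(n)$ estimate.

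The paper bypasses all of this with a single step. Setting $\zeta_j:=(\vece\cdot\vecV_{j-1})\,m_j$ (with the truncation kept), one has $\Exp(\zeta_j^2)\le\Exp(m_j^2)=O(\log\log j)$ for $d=2$ and $O(1)$ for $d\ge3$, and the exponential‐mixing bound of Proposition~\ref{prop:exp_mix} (with $m=1$, $p=1$, $g=K_{1,r_j}/K_0$) gives $\cov{\zeta_i}{\zeta_j}\le\sqrt{\Exp\zeta_i^2}\sqrt{\Exp\zeta_j^2}\,\omega^{|i-j|}$ directly. Summing yields $\Exp[(\sum_{j\le n}\zeta_j)^2]=O(n\log\log n)$ (resp.\ $O(n)$), and Chebyshev gives tightness. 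There is no need for a truncation/martingale/drift split, a Poisson equation, a Fourier reduction on $S^1$, or the law of the iterated logarithm — and the $d=2$ argument is identical in structure to the $d\ge3$ one, differing only by the $\log\log$ factor that comes from $\Exp(m_j^2)$.
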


It is therefore sufficient to prove Theorem \ref{thm:main3} (ii) for $\widetilde\vecQ_n$ in place of $\vecQ_n$. This will be achieved by applying the Lindeberg central limit theorem to the conditional sum as aluded to above. 
We begin by estimating the conditional variance. Set
\begin{equation}\label{cona}
a_j^2:= \Var\big(\txi_j \bigm| \uveceta\big)= \frac{K_{2,r_j}(\veceta_{j-1},\veceta_j)}{K_0(\veceta_{j-1},\veceta_j)} - m_j^2 ,
\end{equation}
with
\begin{equation}
K_{2,r}(\vecw,\vecz) := \int_0^r  x^2 \Psi_0( \vecw,x,\vecz) d x .
\end{equation}

\begin{lem}\label{mainlem2}
There is a constant $\sigma_d>0$ such that, for $n\to\infty$,
\begin{equation}\label{eq:mainlem2}
\frac{\Exp\big( \widetilde\vecQ_n \otimes\widetilde\vecQ_n\bigm| \uveceta \big)}{n\log n}  = \frac{\sum_{j=1}^n  a_j^2 \vecV_{j-1} \otimes\vecV_{j-1}}{n\log n}\toprob  \sigma_d^2 \,  I_d .
\end{equation}
\end{lem}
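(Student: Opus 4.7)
The plan is to expand $a_j^2$ asymptotically using the heavy tail of the Boltzmann--Grad kernel $\Psi_0$, and then recognise $\sum_j a_j^2 \vecV_{j-1}\otimes\vecV_{j-1}$ as a logarithmically weighted ergodic sum for the enhanced Markov chain $(\veceta_n,\vecV_n)$ on $\UB\times\US$.

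The first step is a cubic tail estimate for $\Psi_0$: there is a function $\psi_\infty(\vecw,\vecz)$, obtainable from the Marklof--Str\"ombergsson results \cite{Marklof:2010ib,Marklof:2011di}, such that $\Psi_0(\vecw,x,\vecz)=\psi_\infty(\vecw,\vecz)\,x^{-3}+o(x^{-3})$ as $x\to\infty$. Hence $K_{2,r}(\vecw,\vecz)=\psi_\infty(\vecw,\vecz)\log r+O(1)$. Combined with uniform boundedness of $m_j$ (which follows from $\xibar<\infty$) and a lower bound on $K_0$ off a small set, this gives, for $r_j=\sqrt{j(\log j)^\gamma}$,
\[
a_j^2 \;=\; \frac{\psi_\infty(\veceta_{j-1},\veceta_j)}{2\,K_0(\veceta_{j-1},\veceta_j)}\,\log j \;+\; o(\log j) ,
\]
with uniform error, so that
\[
\sum_{j=1}^n a_j^2\,\vecV_{j-1}\otimes\vecV_{j-1} \;=\; \tfrac{1}{2}\sum_{j=1}^n \log j \cdot F(\veceta_{j-1},\veceta_j,\vecV_{j-1}) \;+\; o(n\log n),
\]
where $F(\vecw,\vecz,\vecv):=\psi_\infty(\vecw,\vecz)\,(\vecv\otimes\vecv)/K_0(\vecw,\vecz)$.

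Second, the joint process $(\veceta_n,\vecV_n)$ is itself a Markov chain on $\UB\times\US$. Its $\veceta$-marginal is stationary with uniform density $1/v_{d-1}$ (noted in the excerpt), and under the irrationality assumption \eqref{irr2} the rotations $S(\vecw)$ from \eqref{SbLG} generate a dense subgroup of $\SO(d)$; standard equidistribution results for random walks on compact Lie groups then imply the joint chain is ergodic with invariant measure $\pi=\text{unif}(\UB)\times\text{unif}(\US)$, and that $\vecV_n$ is asymptotically independent of $\veceta_n$. Since $\log j$ is slowly varying with $(n\log n)^{-1}\sum_{j=1}^n\log j\to 1$, Birkhoff's ergodic theorem together with Abel summation yields
\[
\frac{1}{n\log n}\sum_{j=1}^n a_j^2\,\vecV_{j-1}\otimes\vecV_{j-1} \;\toprob\; \tfrac{1}{2}\,\Exp_\pi F .
\]
By rotational invariance of $\pi$ in the $\vecV$-component, $\Exp_\pi F=\beta I_d$ for some $\beta$; taking the trace and using $\|\vecV\|=1$ gives $\beta=\tfrac{1}{d}\Exp_\pi[\psi_\infty/K_0]$, so the limit is $\sigma_d^2 I_d$ with $\sigma_d^2=\tfrac{1}{2d}\Exp_\pi[\psi_\infty/K_0]$. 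Matching this with the stated value $\sigma_d^2=2^{1-d}/(d^2(d+1)\zeta(d))$ is a direct computation from the explicit expressions for $\Psi_0$ and its tail constant in \cite{Marklof:2010ib}.

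The main obstacles I anticipate are: (i) obtaining the cubic tail for $\Psi_0$ with sufficient uniformity in $(\vecw,\vecz)$, especially near the boundary of $\UB$ where $K_0$ may degenerate and the $o(x^{-3})$ error could fail to be uniform; and (ii) the weighted ergodic theorem itself, which for convergence in probability needs either a quantitative mixing estimate for $(\veceta_n,\vecV_n)$ or a Kronecker-type lemma adapted to the slowly varying weights. A careful two-scale truncation, restricting $\uveceta$-trajectories to those avoiding the bad set for $K_0$ up to time $n$ while simultaneously using the built-in cutoff at $r_j$ to control oversized $\xi_j$, should absorb both issues into errors of size $o(n\log n)$.
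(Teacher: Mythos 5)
Your proposal contains a structural error at the very first step that undermines the rest of the argument. You assert a pointwise cubic tail $\Psi_0(\vecw,x,\vecz)=\psi_\infty(\vecw,\vecz)x^{-3}+o(x^{-3})$ and deduce $K_{2,r}(\vecw,\vecz)\sim\psi_\infty(\vecw,\vecz)\log r$ and hence $a_j^2 \approx \tfrac12 (\psi_\infty/K_0)(\veceta_{j-1},\veceta_j)\,\log j$. This is false: for almost every fixed $(\vecw,\vecz)\in\UB\times\UB$ the kernel $\Psi_0(\vecw,\cdot,\vecz)$ has \emph{compact} support in $x$ (bounded by $x_0(\vecw,\vecz)$, see \eqref{PHI0SUPPORTBOUNDS}), so $K_2(\vecw,\vecz)=\int_0^\infty x^2\Psi_0(\vecw,x,\vecz)\,dx$ is finite (Lemma \ref{K2upperlower}) and $K_{2,r_j}(\vecw,\vecz)$ saturates rather than grows like $\log r_j$. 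The cubic tail \eqref{PHIBARXILARGETHMRES1} is a property of the \emph{averaged} density $\Psi_0(x)$, and the logarithm in $\Exp(a_j^2)\sim\tfrac{\Theta_d}{2}\log j$ arises from the heavy $u^{-2}$ tail of the random variable $a_j$ over the randomness of $\uveceta$ (Propositions \ref{betatail2}, \ref{betatail3}) interacting with the truncation at $r_j$, not from a pointwise $\log j$ growth. Consequently the reduction you write down, $\sum_j a_j^2 \vecV_{j-1}\otimes\vecV_{j-1} \approx \tfrac12\sum_j \log j\cdot F(\veceta_{j-1},\veceta_j,\vecV_{j-1})$, is not a valid representation of the sum, and the candidate formula $\sigma_d^2=\tfrac{1}{2d}\Exp_\pi[\psi_\infty/K_0]$ is meaningless.

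Once the pointwise expansion is abandoned you cannot apply a weighted Birkhoff theorem to a bounded observable; the summands $a_j^2(\vece\cdot\vecV_{j-1})^2$ have infinite variance over $\uveceta$, and one must control them directly. This is what the paper does: it sets $\zeta_j=(\vece\cdot\vecV_{j-1})^2 a_j^2$, computes $\Exp(\zeta_j)=\tfrac{\Theta_d}{2d}\log j + O(\log\log j)$ \emph{after} averaging over $\uveceta$ (using \eqref{PHIBARXILARGETHMRES1} through Proposition \ref{prop:aasy} together with the decorrelation of $(\vece\cdot\vecV_{j-1})^2$ in Proposition \ref{prop:exp_exp}), bounds $\Exp(a_j^4)=O(r_j^2)$ (Proposition \ref{prop:amom}), invokes the exponential mixing bound $\cov{\zeta_i}{\zeta_j}\ll\sqrt{\Exp(a_i^4)}\sqrt{\Exp(a_j^4)}\,\omega^{|i-j|}$ from Proposition \ref{prop:exp_mix}, and concludes $\var{\sum\zeta_j}=O(n^2(\log n)^\gamma)=o((n\log n)^2)$ since $\gamma<2$, so Chebyshev gives convergence in probability. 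Your instinct that quantitative mixing is the essential ingredient is correct, and the $\gamma\in(1,2)$ truncation window is exactly the ``two-scale truncation'' you gesture at; but the argument must be a second-moment/Chebyshev bound on heavy-tailed summands rather than an ergodic theorem for a $\log j$-weighted bounded observable, and the normalisation constant $\sigma_d^2=\Theta_d/(2d)$ comes out of the averaged tail asymptotics rather than a formula of the kind you propose.
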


By taking the trace in \eqref{eq:mainlem2}, we have in particular
\begin{equation}\label{eq:mainlem2b}
\frac{A_n^2}{n\log n}\toprob  d\,\sigma_d^2 
\end{equation}
for
\begin{equation}\label{conA}
A_n^2 :=\sum_{j=1}^n a_j^2 = \Exp\big( \|\widetilde\vecQ_n\|^2 \bigm| \uveceta \big).
\end{equation}
Recall that convergence in probability $x_n\toprob x$ is defined as $\lim_{n\to\infty} \Prob(|x_n-x|>\epsilon)=0$ for any $\epsilon>0$.

The next lemma verifies the Lindeberg conditions for random $\uveceta$.

\begin{lem}\label{mainlem3}
For any fixed $\vareps>0$,
\begin{equation}
\label{Lindeberg_condition}
A_n^{-2}
\sum_{j=1}^n
\condexpect{\txi_j^2 \ind{\txi_j^2 > \vareps^2 A_n^2}}{\uveceta}
\toprob 0
\end{equation}
as $n\to\infty$.
\end{lem}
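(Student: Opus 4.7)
My plan is to verify the Lindeberg condition by exploiting the $x^{-3}$ tail decay of the transition kernel $\Psi_0(\vecw,x,\vecz)$ at infinity, which is the analytic signature of the Boltzmann--Grad limit and the source of the $\sqrt{t\log t}$ scale (cf.~\cite{Marklof:2010ib,Marklof:2011ho}). First I note that the conditional mean $m_j$ is uniformly bounded: since $\int_0^\infty x\,\Psi_0(\vecw,x,\vecz)\,dx<\infty$ and $K_0(\vecw,\vecz)\ge c_0>0$, there is a constant $M$ with $|m_j|\le M$ almost surely, independent of $n,j$. By Lemma~\ref{mainlem2}, $A_n\to\infty$ in probability, so the event $\{\vareps A_n\ge 2M\}$ has probability tending to $1$; on this event $\{\txi_j^2>\vareps^2 A_n^2\}\subseteq\{|\xi_j'|>\vareps A_n/2\}$, and the bound $\txi_j^2\le 2(\xi_j')^2+2M^2$ reduces the Lindeberg sum to
\begin{equation*}
2\sum_{j=1}^n \condexpect{(\xi_j')^2\ind{|\xi_j'|>\vareps A_n/2}}{\uveceta}\;+\;2M^2\sum_{j=1}^n\condprob{|\xi_j'|>\vareps A_n/2}{\uveceta}.
\end{equation*}

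The core estimate is
\begin{equation*}
\condexpect{(\xi_j')^2\ind{|\xi_j'|>L}}{\uveceta}=\frac{1}{K_0(\veceta_{j-1},\veceta_j)}\int_L^{r_j}x^2\,\Psi_0(\veceta_{j-1},x,\veceta_j)\,dx\;\lesssim\;\log^+\!\bigl(r_j/L\bigr),
\end{equation*}
which follows from the uniform tail bound $\Psi_0(\vecw,x,\vecz)\le C_0 x^{-3}$ for $x\ge x_0$ together with $K_0\ge c_0>0$. Setting $L=\vareps A_n/2$, Lemma~\ref{mainlem2} gives $L\asymp\sqrt{n\log n}$ in probability, while $r_n=\sqrt{n(\log n)^\gamma}$ with $\gamma\in(1,2)$; terms with $r_j<L$ contribute zero, and for the rest
\begin{equation*}
\log^+\!\bigl(r_j/L\bigr)\le\log^+\!\bigl(r_n/L\bigr)=\tfrac{\gamma-1}{2}\log\log n+O(1)
\end{equation*}
in probability. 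Summing over $j\le n$ yields a bound of order $n\log\log n$, which after division by $A_n^2\asymp n\log n$ becomes $O(\log\log n/\log n)\toprob 0$. For the second piece, Chebyshev gives $\sum_j\condprob{|\xi_j'|>L}{\uveceta}\le L^{-2}\sum_j(a_j^2+m_j^2)\le L^{-2}(A_n^2+nM^2)$; multiplied by $M^2$ and divided by $A_n^2$, this contributes $O(M^2/A_n^2)=O(1/(n\log n))\toprob 0$.

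The main obstacle I anticipate is making the pointwise bound $\Psi_0(\vecw,x,\vecz)\le C_0 x^{-3}$ with a uniform constant, together with the positivity $K_0(\vecw,\vecz)\ge c_0>0$, rigorous on all of $\UB\times\UB$; the limiting kernel is known explicitly in \cite{Marklof:2010ib,Marklof:2011ho}, but any potential degeneration near $\partial\UB$ needs to be ruled out. Should a uniform lower bound on $K_0$ prove elusive, a robust alternative is to compare the tail directly with $a_j^2$: since both $\int_L^{r_j}x^2\Psi_0\,dx$ and $K_{2,r_j}$ are governed by the same $x^{-3}$ asymptotic, their ratio behaves like $\log(r_j/L)/\log r_j$, so the Lindeberg sum is bounded by $\bigl(\max_j\log^+\!(r_j/L)/\log r_j\bigr)\cdot A_n^2$; the same conclusion $O(\log\log n/\log n)\toprob 0$ follows after division by $A_n^2$, bypassing pointwise bounds on $\Psi_0$ and $K_0$ altogether.
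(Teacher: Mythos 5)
Your overall strategy---use the $x^{-3}$ tail of the free path distribution to force the truncated second moment to grow only like $\log\log n$, and then divide by $A_n^2\asymp n\log n$---is the same as the paper's, and your back-of-the-envelope $O(\log\log n/\log n)$ conclusion is correct. But two of the intermediate claims you rely on are false, and both failures are exactly the structural features that the paper's proof is organized around.

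First, $m_j$ is \emph{not} uniformly bounded. You argue from ``$\int_0^\infty x\,\Psi_0(\vecw,x,\vecz)\,dx<\infty$ and $K_0\ge c_0>0$'' to a constant $M$ with $|m_j|\le M$ a.s., but finiteness of $K_1(\vecw,\vecz)=\int_0^\infty x\,\Psi_0(\vecw,x,\vecz)\,dx$ for each fixed $(\vecw,\vecz)$ does not give uniform boundedness over $\UB\times\UB$. In fact the paper shows the opposite: Lemma~\ref{K1upperlower} gives $K_1(\vecw,\vecz)\gg\min(t^{-1},(\pi-\varphi)^{-2})$, and Propositions~\ref{prop:tail2}, \ref{prop:tail3} give power-law tails $\Prob(\mu_j>u)\asymp u^{-2}/\log u$ ($d=2$) resp.\ $\asymp u^{-(1+d/2)}$ ($d\ge3$), so $\mu_j$, and hence $m_j$ (bounded only by $r_j$), is unbounded with positive probability. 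This destroys your inclusion $\{\txi_j^2>\vareps^2 A_n^2\}\subseteq\{|\xi_j'|>\vareps A_n/2\}$: the negative excursion $\txi_j<-\vareps A_n$ forces $m_j>\vareps A_n$ (because $\xi_j'\ge0$), not $|\xi_j'|$ large. This is precisely why the paper's proof splits into a ``lower tail'' and an ``upper tail'': the lower-tail term is controlled via $\Prob(m_j>\vareps\sqrt{j\log j})$ (Proposition~\ref{prop:tail_m}) and Borel--Cantelli, a step your argument has no substitute for once the uniform bound is gone.

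Second, your ``core estimate'' $\condexpect{(\xi_j')^2\ind{|\xi_j'|>L}}{\uveceta}\lesssim\log^+(r_j/L)$ is not valid pointwise in $\uveceta$, because the two-point kernel $\Psi_0(\vecw,x,\vecz)$ does \emph{not} decay like $x^{-3}$ pointwise. The uniform pointwise decay is only $x^{-2+2/d}$ (see \eqref{PHI0XILARGETHMRES} and \eqref{CYLINDER2PTSMAINTHMRES}), and $K_2(\vecw,\vecz)$ is unbounded on $\UB\times\UB$ (Lemma~\ref{K2upperlower}). The $x^{-3}$ law holds only for the \emph{marginal} $\Psi_0(x)$ obtained after averaging over $(\vecw,\vecz)$, because the support in $(\vecw,\vecz)$ shrinks with $x$. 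The paper therefore estimates the \emph{unconditional} $\Exp(\zeta_{n,j})$ (which averages out $\uveceta$ and lets \eqref{PHIBARXILARGETHMRES1} kick in), obtaining $\Exp\sum_j\zeta_{n,j}=O(n\log\log n)$ and then convergence in probability by Markov. Your argument, as written, performs the $\log^+(r_j/L)$ step at the conditional level, which is not available; and your proposed fallback (``compare the tail directly with $a_j^2$'') runs into the same problem, since for a typical fixed $(\vecw,\vecz)$ the quantity $a_j^2$ saturates at $K_2/K_0-(K_1/K_0)^2$ and shows no logarithmic growth at all---the $\log n$ in $A_n^2$ is an averaged phenomenon, not a pointwise one. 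The order of operations (take $\Exp$ first, then Markov) is therefore not a technicality but essential.

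Both gaps are repairable, and the repairs reproduce the paper's proof: handle $\{m_j>\vareps A_n/2\}$ separately using the tail of $m_j$ plus Borel--Cantelli, and replace the conditional $\log^+$ bound by the unconditional one using $\Psi_0(x)\sim\Theta_d x^{-3}$.
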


Given these lemmas, let us now conclude the proof of the fact that 
\begin{equation}\label{eq:main4b}
\vecY_n:=\frac{\widetilde\vecQ_n}{\sigma_d \sqrt{n\log n}}  \Rightarrow \scrN(0,I_d).
\end{equation}
By Chebyshev's inequality we have, for any $K>0$,
\begin{equation}
\Prob\big( \|\vecY_n\| > K \bigm| \uveceta \big) \leq \frac{1}{K^2} \Exp\big( \|\vecY_n\|^2 \bigm| \uveceta \big) ,
\end{equation}
and thus, for any $\kappa>0$,
\begin{equation}\label{Chebyshev}
\begin{split}
\Prob\big( \|\vecY_n\| > K \big) 
& \leq \frac{\kappa^2}{K^2} + \Prob\big(\Exp\big( \|\vecY_n\|^2 \bigm| \uveceta \big) >\kappa^2\big) \\
& = \frac{\kappa^2}{K^2} + \Prob\big(A_n^2 > \kappa^2 \sigma_d^2 \; n\log n \big) .
\end{split}
\end{equation}
By \eqref{eq:mainlem2b}, the second term on the right hand side of \eqref{Chebyshev} converges to $0$ as $n\to\infty$, if we choose $\kappa=d$, say.
So \eqref{Chebyshev} implies that the sequence of random variables $\vecY_n$ is tight. By the Helly-Prokhorov theorem, there is an infinite subset $S_1\subset \NN$ so that $\vecY_n$ converges in distribution along $n\in S_1$ to some limit $\vecY$. Assume for a contradiction that $\vecY$ is {\em not} distributed according to $\scrN(0,I_d)$. The Borel-Cantelli lemma implies that there is an infinite subset $S_2\subset S_1$, so that in the statements of Lemmas \ref{mainlem2} and \ref{mainlem3} we have almost-sure convergence along $n\in S_2$:
\begin{equation}\label{eq:mainlem222}
\frac{\Exp\big( \widetilde\vecQ_n \otimes\widetilde\vecQ_n\bigm| \uveceta \big)}{n\log n}  \toas  \sigma_d^2\,  I_d ,
\end{equation}
\begin{equation}\label{eq:mainlem2bbb}
\frac{A_n^2}{n\log n}\toas  d\, \sigma_d^2 ,
\end{equation}
and
\begin{equation}
\label{Lindeberg_condition22}
A_n^{-2}
\sum_{j=1}^n
\condexpect{\txi_j^2 \ind{\txi_j^2 > \vareps^2 A_n^2}}{\uveceta} 
\toas 0 .
\end{equation}
The hypotheses of the Lindeberg central limit theorem are met, and we infer that $\vecY_n\Rightarrow \scrN(0,I_d)$ for $n\in S_2$. (We use the Lindeberg theorem for {\em triangular arrays} of independent random variables, since we have verified the Lindeberg conditions only along a subsequence.) This, however, contradicts our assumption that $\vecY$ is not normal, and hence $\scrN(0,I_d)$ is indeed the unique limit point of any converging subsequence. This in turn implies that every sequence converges, and therefore completes the proof of \eqref{eq:main4b}. In view of Lemmas \ref{mainlem0} and \ref{mainlem1}, this implies Theorem \ref{thm:main3} (ii) (still under the assumption that $(\xi_1,\veceta_1)$ has density $\Psi_0(x,\vecz)$).

Let us briefly describe the further contents of this paper. In Section \ref{sec:transition} we recall the basic properties of the transition kernel $\Psi_0(\vecw,x,\vecz)$ from  \cite{Marklof:2011di}. Section \ref{sec:moments} establishes key estimates for the moments $K_{p,r}(\vecw,\vecz)$, $m_j$ and $a_j$ introduced above. In Sections \ref{sec:spectral} and \ref{sec:exponential} we prove spectral gap estimates and exponential mixing for the discrete time Markov process defined in \eqref{Markov1}. The estimates from Sections \ref{sec:moments}--\ref{sec:exponential} are the main input in the proof of Lemmas \ref{mainlem0}--\ref{mainlem3}, which is given in Section \ref{sec:proof}. In Section \ref{sec:general} we show that the discrete-time statement in Theorem \ref{thm:main3} (ii) holds for more general initial distributions than $\Psi_0(x,\vecz)$. It holds in particular for $\Psi(x,\vecz)$, which appears in the continuous-time variant. Section \ref{sec:from} explains how to pass from discrete to continuous time, thus completing the proof of Theorem \ref{thm:main3} (i).

\section{The transition kernel}\label{sec:transition}

In dimension $d=2$ we have the following explicit formula for the transition kernel. For $w,z\in(-1,1)$, 
\begin{equation}\label{Xp}	
\Psi_0( w,x,z)=\frac{6}{\pi^2}
\begin{cases}
\displaystyle
\ind{0\le  x<\frac{1}{1+w}}+\ind{\frac{1}{1+w}\le  x < \frac{1}{1+z}}\frac{x^{-1}-1-z}{w-z}
&\text{ if }
0<w, \ -w< z < w,
\\[15pt]
\displaystyle
\ind{0\le  x<\frac{1}{1+z}}+\ind{\frac{1}{1+z}\le  x < \frac{1}{1+w}}\frac{x^{-1}-1-w}{z-w}
&\text{ if }
0<z, \ -z< w < z,
\\[15pt]
\displaystyle
\ind{0\le  x<\frac{1}{1-w}}+\ind{\frac{1}{1-w}\le  x < \frac{1}{1-z}}\frac{x^{-1}-1+z}{z-w}
&\text{ if }
w<0, \ w< z <-w,
\\[15pt]
\displaystyle
\ind{0\le  x<\frac{1}{1-z}}+\ind{\frac{1}{1-z}\le  x < \frac{1}{1-w}}\frac{x^{-1}-1+w}{w-z}
&\text{ if }
z<0, \ z< w <-z.
\end{cases}
\end{equation}
This formula has been derived, independently and with different methods, by Marklof and Str\"ombergsson \cite{Marklof:2008dr}, Caglioti and Golse \cite{Caglioti:2008um,Caglioti:2010ir} and by Bykovskii and Ustinov \cite{Bykovskiui:2009db}.

In dimension $d\geq 3$ we have no such explicit formulas for the transition kernel. We recall from \cite{Marklof:2010ib,Marklof:2011ho,Marklof:2011di} the following properties.
If $d\geq 3$, the function
\begin{equation}
	\Psi_0: \scrB_1^{d-1}\times \RR_{>0}\times  \scrB_1^{d-1} \to [0,1]
\end{equation}
is continuous. $\Psi_0( \vecw,x,\vecz)$ depends only on 
$ x$, $w:=\|\vecw\|$, $z:=\|\vecz\|$ and the angle $\varphi:=\varphi(\vecw,\vecz)\in[0,\pi]$ between the vectors $\vecw,\vecz\in\UB$. 
Note that in dimension $d=2$ the angle $\varphi$ can only take the values $0$ and $\pi$. For statements that are specific to dimension $d=2$, we will often use $w\in(-1,1)$ instead of $\vecw$, and $|w|$ instead of $w=\|\vecw\|$. We recall once more that $\Psi_0( \vecw,x,\vecz) = \Phi_0( x,\vecw,-\vecz)$ in the notation of \cite{Marklof:2010ib,Marklof:2011ho,Marklof:2011di}, and so in particular the angle $\varphi$ between $\vecw,\vecz$ becomes $\pi-\varphi$.

Our proofs will exploit the following estimates on the transition kernel \cite{Marklof:2011di,Strombergsson:2011ex}.
All bounds are uniform in $x>0$ and $\vecw,\vecz\in\scrB_1^{d-1}$.
We have by \cite[Thm.\ 1.1]{Marklof:2011di},
\begin{equation}\label{PHI0ZEROSMALLTHMRES}
\frac{1-2^{d-1}v_{d-1}  x}{\zeta(d)}\leq\Psi_0( \vecw,x,\vecz)\leq\frac{1}{\zeta(d)} .
\end{equation}
Furthermore, by \cite[Thm.\ 1.7]{Marklof:2011di}, there exists a continuous and uniformly bounded function
$F_{0,d}:\RR_{>0}\times\RR_{>0}\times\RR_{\geq0}\to\RR_{\geq0}$
such that
\begin{equation}\label{PHI0XILARGETHMRES}
\Psi_0( \vecw,x,\vecz)=
 x^{-2+\frac 2d}F_{0,d}\Bigl( x^{\frac 2d}(1-z), x^{\frac 2d}(1-w),
 x^{\frac1d}(\pi-\varphi)\Bigr)
+O(E),
\end{equation}
where the error term is
\begin{equation}\label{PHI0XILARGETHMEDEF}
E=\begin{cases}
 x^{-2} &\text{if }\:d=2,
\\
 x^{-2}\log(2+\min( x,(\pi-\varphi)^{-1}))&\text{if }\:d=3,
\\
\min\bigl( x^{-2}, x^{-3+\frac2{d-1}}(\pi-\varphi)^{2-d+\frac2{d-1}}\bigr)
&\text{if }\:d\geq4.\end{cases}
\end{equation}

It is noted in \cite{Marklof:2011di} that $F_{0,d}(t_1,t_2,\alpha)$ is uniformly
bounded from below for $t_1,t_2,\alpha$ near zero. That is,
there is a small constant $c>0$ which only depends on $d$ such that
\begin{equation}\label{F0DBOUNDBELOW1}
\max(t_1,t_2,\alpha)<c
\:\Longrightarrow\: F_{0,d}(t_1,t_2,\alpha)>c.
\end{equation}
Furthermore, the support of $F_{0,d}$ is contained in $(0,c']\times(0,c']\times\RR_{\geq0}$ for some $c'>0$, and for any fixed $t_1,t_2>0$, the function $F_{0,d}(t_1,t_2,\cdot)$ has compact support.

In dimension $d\geq 3$, the following upper bound will prove useful \cite[Thm.\ 1.8]{Strombergsson:2011ex}:
\begin{equation}\label{CYLINDER2PTSMAINTHMRES}
\Psi_0( \vecw,x,\vecz)\ll\begin{cases}
 x^{-2}\min\Bigl\{1,
( x\varphi^{d-2})^{-1+\frac2{d-1}}\Bigr\}
&\text{if }\:\varphi\in[0,\frac\pi2]\\
 x^{-2+\frac2d}\min\Bigl\{1,
( x(\pi-\varphi)^{d})^{-1+\frac2{d(d-1)}}\Bigr\}
&\text{if }\:\varphi\in[\frac\pi2,\pi] .
\end{cases}
\end{equation}

The notation $f\ll g$ is here defined as $f=O(g)$, i.e., there exists a constant $C>0$ such that $|f|\leq C |g|$. The notation $f\asymp g$ used below means that $g\ll f \ll g$, i.e., there exist a constant $C\geq 1$ such that $C^{-1} |g|\leq |f|\leq C |g|$.

The support of $\Psi_0( \vecw,x,\vecz)$ is described by a continuous function 
$ x_0:\UB\times\UB\to\RR_{>0}$. We have
$\Psi_0( \vecw,x,\vecz)>0$ holds if and only if 
$ x< x_0(\vecw,\vecz)$. 
Set \begin{equation}
t:=t(\vecw,\vecz):=\max( 1-w,1-z)\in(0,1].
\end{equation}
If $d\geq 3$, then \cite[Prop.\ 1.9]{Strombergsson:2011ex} tells us that
\begin{equation}\label{PHI0SUPPORTBOUNDS}
 x_0(\vecw,\vecz)\asymp
\begin{cases}
\max(t^{-\frac{d-2}2},t^{-\frac{d-1}2}\varphi)
&\text{if }\varphi\in[0,\frac\pi2]\\
\min(t^{-\frac d2},t^{-\frac{d-1}2}(\pi-\varphi)^{-1}) 
&\text{if }\varphi\in[\frac\pi2,\pi].
\end{cases}
\end{equation}
(If $\varphi=\pi$ then the right hand side of \eqref{PHI0SUPPORTBOUNDS} 
should be interpreted as $t^{-\frac d2}$.)

For the distribution of free path length between consecutive collisions,
\begin{equation}
\Psi_0( x)=\frac{1}{v_{d-1}} \int_\UB\int_\UB \Psi_0( \vecw,x,\vecz)\, d\vecw\,d\vecz ,
\end{equation}
we have the following tail estimate  \cite[Theorem 1.14]{Marklof:2011di} ($\Psi_0$ is denoted $\overline{\Phi}_0$ in \cite{Marklof:2011di}):
For $x\to\infty$,
\begin{equation}\label{PHIBARXILARGETHMRES1}
\Psi_0(x)=
\Theta_d \, x^{-3}
+O\big(x^{-3-\frac 2d}\big)
\times \begin{cases}
1&\text{if }\:d=2 
\\
\log x &\text{if }\:d=3
\\
1&\text{if }\:d\geq4
\end{cases}
\end{equation}
with
\begin{equation}
\Theta_d:=\frac{2^{2-d}}{d(d+1)\zeta(d)}.
\end{equation}
This asymptotic estimate sharpens earlier upper and lower bounds by Bourgain, Golse and Wennberg \cite{Bourgain:1998gu,Golse:2000ka}. Note that the variances in Theorems \ref{thm:main1} and \ref{thm:main2} are related to the above tail via
\begin{equation}
\Sigma_d^2 = \frac{\Theta_d}{2d \xibar}, \qquad \sigma_d^2 = \frac{\Theta_d}{2d}.
\end{equation}

\section{Moment estimates}\label{sec:moments}

We now provide key estimates of the random variables introduced in the previous section. 
For $p=0,1,2$ and $r>0$, set
\begin{equation}
K_{p,r}(\vecw,\vecz) := \int_0^r  x^p \Psi_0( \vecw,x,\vecz) d x 
\end{equation}
and
\begin{equation}\label{Kpdef}
K_p(\vecw,\vecz) := \int_0^\infty  x^p \Psi_0( \vecw,x,\vecz) d x .
\end{equation}
Note that $v_{d-1}^{-1} K_0(\vecw,\vecz) \, d\vecw\,d\vecz$ defines a probability measure on $\UB\times\UB$.
We furthermore define the random variables (recall Section \ref{sec:outline})
\begin{equation}\label{muj}
m_j:=\Exp\big( \xi_j' \bigm| \uveceta\big) =\frac{K_{1,r_j}(\veceta_{j-1},\veceta_j)}{K_0(\veceta_{j-1},\veceta_j)},
\qquad
\mu_j:=\Exp\big( \xi_j \bigm| \uveceta\big) =\frac{K_1(\veceta_{j-1},\veceta_j)}{K_0(\veceta_{j-1},\veceta_j)},
\end{equation}
\begin{equation}\label{betaj}
b_j^2:=\Exp\big( {\xi_j'}^2  \bigm| \uveceta\big) = \frac{K_{2,r_j}(\veceta_{j-1},\veceta_j)}{K_0(\veceta_{j-1},\veceta_j)},
\qquad
\beta_j^2:=\Exp\big( {\xi_j}^2  \bigm| \uveceta\big) = \frac{K_{2}(\veceta_{j-1},\veceta_j)}{K_0(\veceta_{j-1},\veceta_j)},
\end{equation}
\begin{equation}\label{alphaj}
 a_j^2:=\Var\big( \xi_j'  \bigm| \uveceta\big) = b_j^2 - m_j^2 ,
\qquad
 \alpha_j^2:=\Var\big( \xi_j  \bigm| \uveceta\big) = \beta_j^2 - \mu_j^2 ,
\end{equation}
and
\begin{equation}
A_n^2 :=\sum_{j=1}^n a_j^2,
\end{equation}
with $r_j=\sqrt{j(\log j)^\gamma}$ for some fixed $\gamma\in(1,2)$.

\begin{lem}\label{six-one}
Let $d=2$. Then, for $w,z\in(-1,1)$,
\begin{equation}
\label{K0}
K_0(w,z)
=
\frac{6}{\pi^2} 
\begin{cases}
\displaystyle
\frac{1}{w-z}\ln \frac{1+w}{1+z}
&\text{ if }
w+z\ge0,
\\[15pt]
\displaystyle
\frac{1}{z-w}\ln \frac{1-w}{1-z}
&\text{ if }
w+z\le0,
\end{cases}
\end{equation}

\begin{equation}
\label{K1}
K_1(w,z)
=
\frac{3}{\pi^2} 
\begin{cases}
\displaystyle
\frac{1}{(1+w)(1+z)}
&\text{ if }
w+z\ge0,
\\[15pt]
\displaystyle
\frac{1}{(1-w)(1-z)}
&\text{ if }
w+z\le0,
\end{cases}
\end{equation}
\begin{equation}
\label{K2}
K_2(w,z)
=
\frac{1}{\pi^2} 
\begin{cases}
\displaystyle
\frac{2+w+z}{(1+w)^2(1+z)^2}
&\text{ if }
w+z\ge0,
\\[15pt]
\displaystyle
\frac{2-w-z}{(1-w)^2(1-z)^2}
&\text{ if }
w+z\le0.
\end{cases}
\end{equation}
\end{lem}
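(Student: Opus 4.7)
The plan is to compute the integrals $K_p(w,z) = \int_0^\infty x^p\,\Psi_0(w,x,z)\,dx$ for $p=0,1,2$ directly from the explicit formula (5.1), exploiting two symmetries to reduce the four sub-cases of that formula to a single computation. The first symmetry is $\Psi_0(w,x,z)=\Psi_0(z,x,w)$ (already noted in Section~3), which maps case~1 of (5.1) to case~2 and case~3 to case~4; one checks that the right-hand sides of \eqref{K0}--\eqref{K2} in the regime $w+z\ge 0$ are manifestly symmetric under $w\leftrightarrow z$, so this symmetry is automatically respected. The second is the parity $\Psi_0(w,x,z)=\Psi_0(-w,x,-z)$, which by inspection of (5.1) sends case~1 to case~3 and case~2 to case~4, and which turns the $w+z\ge 0$ formulas in \eqref{K0}--\eqref{K2} into the $w+z\le 0$ formulas under $(w,z)\mapsto(-w,-z)$. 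Consistency on the boundary $w+z=0$ follows by continuity. Thus it suffices to prove the three identities in the regime of case~1, namely $0<w$ and $-w<z<w$.

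In case~1 the kernel equals $\tfrac{6}{\pi^2}$ on $[0,\tfrac{1}{1+w})$ and $\tfrac{6}{\pi^2}\cdot\tfrac{x^{-1}-1-z}{w-z}$ on $[\tfrac{1}{1+w},\tfrac{1}{1+z})$, and vanishes outside. For each $p\in\{0,1,2\}$, the first piece contributes $\tfrac{6}{\pi^2}\cdot\tfrac{1}{(p+1)(1+w)^{p+1}}$, and the second piece is a combination of elementary antiderivatives of $x^{p-1}$ and $x^p$ evaluated at the two endpoints. For $p=0$ the $x^{-1}$ term produces the logarithm $\ln\tfrac{1+w}{1+z}$, and after simplification the non-logarithmic contributions collapse against the first piece to leave exactly $\tfrac{6}{\pi^2}\cdot\tfrac{1}{w-z}\ln\tfrac{1+w}{1+z}$. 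For $p=1$, the integrated endpoint contributions combine over the common denominator $2(1+z)(1+w)^2$ into the perfect square $(w-z)^2/[2(1+z)(1+w)^2]$, cancelling the $1/(w-z)$ and leading to $\tfrac{3}{\pi^2}\cdot\tfrac{1}{(1+w)(1+z)}$.

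The case $p=2$ is the only mildly delicate one: after clearing denominators the numerator has the form $a^3-3ab^2+2b^3$ with $a=1+w$, $b=1+z$, and one must recognise the factorisation $a^3-3ab^2+2b^3=(a-b)^2(a+2b)$ in order to cancel the $(w-z)$ in the denominator cleanly and obtain $\tfrac{1}{\pi^2}\cdot\tfrac{2+w+z}{(1+w)^2(1+z)^2}$. This is essentially the only algebraic step that is not completely mechanical, and it is the sole obstacle I anticipate in executing the proof; everything else is routine integration of piecewise-rational functions and an application of the two symmetries above to transport the identity from case~1 to the remaining three cases.
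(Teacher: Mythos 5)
Your proposal is correct and follows the same route as the paper, whose proof is the single sentence that the identities ``follow from the explicit formula \eqref{Xp} by direct computation''; you have simply carried out that computation, with the sensible added step of using the two symmetries $\Psi_0(w,x,z)=\Psi_0(z,x,w)$ and $\Psi_0(w,x,z)=\Psi_0(-w,x,-z)$ to reduce to case~1. The computations and the key factorisation $a^3-3ab^2+2b^3=(a-b)^2(a+2b)$ all check out (there is one further small simplification in the $p=2$ case, namely $2(1+z)^2+(w-z)(3+w+2z)=(1+w)(2+w+z)$, but it is routine).
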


\begin{proof}
These follow from the explicit formula \eqref{Xp} by direct computation.
\end{proof}

\begin{lem}\label{K0upperlower}
Let $d\geq 3$. Then
\begin{equation}\label{K0lower}
\inf_{\vecw,\vecz\in\UB} K_0(\vecw,\vecz) \geq \frac{1}{2^d v_{d-1} \zeta(d)} >0 ,
\end{equation}
\begin{equation} \label{K0upper}
\sup_{\vecw,\vecz\in\UB} K_0(\vecw,\vecz) < \infty.
\end{equation}
\end{lem}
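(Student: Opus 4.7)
The plan is to establish the lower and upper bounds separately, each by invoking one of the pointwise bounds on $\Psi_0(\vecw,x,\vecz)$ recalled from \cite{Marklof:2011di,Strombergsson:2011ex} in Section \ref{sec:transition}.

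For the lower bound in \eqref{K0lower}, the strategy is to discard most of the mass of $\Psi_0$ and only integrate against the explicit lower bound furnished by \eqref{PHI0ZEROSMALLTHMRES}. Setting $T := (2^{d-1} v_{d-1})^{-1}$, the function $x \mapsto \zeta(d)^{-1}(1 - 2^{d-1} v_{d-1} x)$ is nonnegative on $[0,T]$, so restricting the defining integral of $K_0(\vecw,\vecz)$ to $[0,T]$ and applying \eqref{PHI0ZEROSMALLTHMRES} yields
\begin{equation*}
K_0(\vecw,\vecz) \geq \frac{1}{\zeta(d)} \int_0^T (1 - 2^{d-1} v_{d-1} x)\, dx = \frac{T}{2\zeta(d)} = \frac{1}{2^d v_{d-1}\zeta(d)},
\end{equation*}
uniformly in $\vecw,\vecz\in\UB$. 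This is precisely the claimed bound, and the computation contains no obstacle.

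For the upper bound in \eqref{K0upper}, the plan is to split the integral at $x=1$ and use two different estimates on the two pieces. On $[0,1]$ the uniform upper bound $\Psi_0(\vecw,x,\vecz) \leq \zeta(d)^{-1}$ from \eqref{PHI0ZEROSMALLTHMRES} contributes at most $\zeta(d)^{-1}$, independent of $\vecw,\vecz$. On $[1,\infty)$ I will invoke \eqref{CYLINDER2PTSMAINTHMRES}: for $\varphi\in[0,\frac\pi 2]$ one has $\Psi_0 \ll x^{-2}$, while for $\varphi\in[\frac\pi 2,\pi]$ one has $\Psi_0 \ll x^{-2+2/d}$. Since $x^{-2}\leq x^{-2+2/d}$ whenever $x\geq 1$, the weaker bound $\Psi_0(\vecw,x,\vecz) \ll x^{-2+2/d}$ holds uniformly in $\vecw,\vecz$ on $[1,\infty)$. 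Because $d\geq 3$ gives $-2+2/d < -1$, the integral
\begin{equation*}
\int_1^\infty x^{-2+2/d}\, dx = \frac{d}{d-2}
\end{equation*}
is finite. Combining the two pieces yields a finite upper bound on $K_0(\vecw,\vecz)$ that is independent of $\vecw,\vecz$.

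No step of the argument requires any serious work: both bounds reduce to integrating an explicit function of $x$, and the inputs \eqref{PHI0ZEROSMALLTHMRES} and \eqref{CYLINDER2PTSMAINTHMRES} already do all of the heavy lifting. The only minor point worth noting is that the integrability at infinity in the upper bound depends crucially on $d\geq 3$; in the complementary case $d=2$, the explicit formulas of Lemma \ref{six-one} replace \eqref{CYLINDER2PTSMAINTHMRES} and show that $K_0(w,z)$ has a logarithmic singularity as $|w|,|z|\to 1$, so that no uniform upper bound of the form \eqref{K0upper} can hold there.
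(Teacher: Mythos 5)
Your proposal is correct and takes essentially the same route as the paper: both bounds reduce to the pointwise estimates \eqref{PHI0ZEROSMALLTHMRES} and \eqref{CYLINDER2PTSMAINTHMRES}, with the lower bound obtained by integrating $\zeta(d)^{-1}(1-2^{d-1}v_{d-1}x)$ up to $(2^{d-1}v_{d-1})^{-1}$ exactly as in the paper. Your upper-bound argument is in fact a touch more careful than the paper's: by explicitly splitting at $x=1$ and using $\Psi_0\leq\zeta(d)^{-1}$ on $[0,1]$, you handle the non-integrability of $x^{-2+2/d}$ near the origin, which the paper's one-line appeal to \eqref{andreasbound} quietly leaves to the reader.
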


\begin{proof}
We have
\begin{equation}
K_0(\vecw,\vecz)\geq \int_0^y \Psi_0( \vecw,x,\vecz) d x
\end{equation}
for any $y\geq 0$. Theorem 1.1 in \cite{Marklof:2011di} states that for $ x>0$ and $\vecw,\vecz\in\UB$,
\begin{equation}
\Psi_0( \vecw,x,\vecz) \geq \frac{1-2^{d-1}  v_{d-1}  x}{\zeta(d)} ,
\end{equation}
and the lower bound follows with the choice $y=(2^{d-1}  v_{d-1})^{-1}$. The upper bound follows from \eqref{CYLINDER2PTSMAINTHMRES} which tells us that
\begin{equation}\label{andreasbound}
\Psi_0( \vecw,x,\vecz) = O( x^{-2+\frac2d}) ,
\end{equation}
where the implied constant is independent of $x,\vecw,\vecz$. 
\end{proof}

\begin{lem}\label{K1upperlower}
Let $d\geq 3$. For $\vecw,\vecz\in\UB$,
\begin{equation} \label{K1lower}
K_1(\vecw,\vecz) \gg \min( t^{-1} , (\pi-\varphi)^{-2} ) ,
\end{equation}
and
\begin{equation} \label{K1upper}
K_1(\vecw,\vecz) \ll
\begin{cases}
1 + \log \max( t^{-\frac{d-2}{2}}, t^{-\frac{d-1}{2}} \varphi) & \text{if $\varphi\in [0,\frac{\pi}{2}]$,} \\
\min( t^{-1} , t^{-1+\frac1d} (\pi-\varphi)^{-\frac2d}) & \text{if $\varphi\in [\frac{\pi}{2},\pi]$.}
\end{cases}
\end{equation}
\end{lem}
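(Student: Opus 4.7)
The plan is to split the argument into an upper bound in each of the two angular ranges and a lower bound that I handle uniformly. The upper bound rests on the compact support of $\Psi_0$ in $[0,x_0(\vecw,\vecz)]$ given by \eqref{PHI0SUPPORTBOUNDS} together with the uniform pointwise bounds \eqref{PHI0ZEROSMALLTHMRES} and \eqref{CYLINDER2PTSMAINTHMRES}. Writing
\[
K_1(\vecw,\vecz) = \int_0^1 x\,\Psi_0(\vecw,x,\vecz)\,dx + \int_1^{x_0(\vecw,\vecz)} x\,\Psi_0(\vecw,x,\vecz)\,dx,
\]
the first piece is $\leq 1/(2\zeta(d))$ by \eqref{PHI0ZEROSMALLTHMRES}. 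For the tail integral when $\varphi\in[0,\pi/2]$, the first branch of \eqref{CYLINDER2PTSMAINTHMRES} yields $\Psi_0\ll x^{-2}$, whence $\int_1^{x_0} x^{-1}\,dx=\log x_0$; inserting the corresponding size of $x_0$ from \eqref{PHI0SUPPORTBOUNDS} produces the claimed bound. For $\varphi\in[\pi/2,\pi]$, the second branch of \eqref{CYLINDER2PTSMAINTHMRES} gives $\Psi_0\ll x^{-2+2/d}$, and $\int_1^{x_0} x^{-1+2/d}\,dx\ll x_0^{2/d}$; substituting the $\varphi\geq\pi/2$ branch of \eqref{PHI0SUPPORTBOUNDS} gives $x_0^{2/d}\asymp\min(t^{-1},t^{-1+1/d}(\pi-\varphi)^{-2/d})$, and the additive $O(1)$ is absorbed since both quantities in this minimum are $\gtrsim 1$ (because $t\leq 1$ and $\pi-\varphi\leq\pi/2$).

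For the lower bound I first observe that \eqref{PHI0ZEROSMALLTHMRES} alone yields $K_1\gg 1$ unconditionally, so I only need to treat the case in which $M:=\min(t^{-1},(\pi-\varphi)^{-2})$ is larger than some fixed constant depending on $d$. In this regime I appeal to the asymptotic \eqref{PHI0XILARGETHMRES} together with the uniform positivity \eqref{F0DBOUNDBELOW1}: letting $c>0$ be the constant from \eqref{F0DBOUNDBELOW1}, one has $F_{0,d}(\cdot)>c$ as long as each of $x^{2/d}(1-w)$, $x^{2/d}(1-z)$ and $x^{1/d}(\pi-\varphi)$ is $<c$, which is guaranteed for $x\leq X_2:=\min\bigl((c/t)^{d/2},(c/(\pi-\varphi))^d\bigr)$. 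I then choose a large constant $X_1=X_1(d)$ so that the error term $E$ of \eqref{PHI0XILARGETHMEDEF} satisfies $E\leq (c/2)\,x^{-2+2/d}$ for $x\geq X_1$; this is possible because $E\ll x^{-2}\log(2+x)$ when $d=3$ and $E\ll x^{-2}$ when $d\geq 4$, both beaten by $x^{-2+2/d}$ as $x\to\infty$. On $[X_1,X_2]$ one therefore has $\Psi_0(\vecw,x,\vecz)\gg x^{-2+2/d}$, so
\[
K_1(\vecw,\vecz)\geq \int_{X_1}^{X_2} x\cdot x^{-2+2/d}\,dx \asymp X_2^{2/d}\asymp M,
\]
provided $X_2\geq 2X_1$; otherwise $M$ is bounded by a constant depending only on $d$ and the trivial estimate $K_1\gg 1\gg M$ closes the argument.

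The only delicate point is the error control in the lower bound: the $O(E)$ term in \eqref{PHI0XILARGETHMRES} is dominated by the leading $x^{-2+2/d}$ only once $x$ exceeds a dimension-dependent threshold (with an additional logarithmic loss to absorb when $d=3$). This is precisely why the integration is restricted to $[X_1,X_2]$ with $X_1$ a fixed constant, and why one must fall back on the uniform bound $K_1\gg 1$ from \eqref{PHI0ZEROSMALLTHMRES} whenever $M$ is not large enough to make the window $[X_1,X_2]$ substantial. Beyond this, the proof reduces to a routine substitution of the support description \eqref{PHI0SUPPORTBOUNDS} into the cylinder bound \eqref{CYLINDER2PTSMAINTHMRES}.
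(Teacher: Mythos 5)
Your argument is correct and follows essentially the same route as the paper's proof: the upper bound comes from integrating the pointwise bound \eqref{CYLINDER2PTSMAINTHMRES} up to the support cutoff $x_0$ given by \eqref{PHI0SUPPORTBOUNDS}, and the lower bound comes from integrating the asymptotic \eqref{PHI0XILARGETHMRES} over the window where $F_{0,d}$ is uniformly positive by \eqref{F0DBOUNDBELOW1}, with the error term $E$ dominated by the main term. Your version is slightly more explicit than the paper's (spelling out the absorption of the $O(1)$ term when $\varphi\geq\pi/2$, and the fall-back to $K_1\gg 1$ when $M=\min(t^{-1},(\pi-\varphi)^{-2})$ is bounded), but these are just fleshed-out details of the same argument, not a different method.
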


\begin{proof}
As to the lower bound \eqref{K1lower}, we note that by \eqref{PHI0XILARGETHMRES}
\begin{equation}
K_1(\vecw,\vecz) \geq \int\limits_{\substack{ x>0\\  x^{\frac 2d} t(\vecw,\vecz) < c \\  x^{\frac1d}(\pi-\varphi)<c}}
 x \Bigl\{  x^{-2+\frac 2d}F_{0,d}\Bigl( x^{\frac 2d}(1-z), x^{\frac 2d}(1-w),
 x^{\frac1d}(\pi-\varphi)\Bigr)
- |O(E)| \Bigr\} d x ,
\end{equation}
which, in view of \eqref{PHI0XILARGETHMEDEF} and \eqref{F0DBOUNDBELOW1}, implies
\begin{equation}
\begin{split}
K_1(\vecw,\vecz) & >  c \int\limits_{\substack{ x>0\\  x^{\frac 2d} t(\vecw,\vecz) < c \\  x^{\frac1d}(\pi-\varphi)<c}}
\Bigl\{  x^{-1+\frac 2d}  - |O( (1+x)^{-1} \log(2+ x) )| \Bigr\} d x \\
& \gg \min\big( t(\vecw,\vecz)^{-1}, (\pi-\varphi)^{-2} \big).
\end{split}
\end{equation}

The upper bound \eqref{K1upper} follows from \eqref{CYLINDER2PTSMAINTHMRES} and \eqref{PHI0SUPPORTBOUNDS}: for $\varphi\in [0,\frac{\pi}{2}]$, 
\begin{equation}
K_1(\vecw,\vecz) \ll 1+\int_1^{ x_0(\vecw,\vecz)}  x^{-1} d x \ll 1+ \log \max( t^{-\frac{d-2}{2}}, t^{-\frac{d-1}{2}} \varphi)  ,
\end{equation}
and for $\varphi\in [\frac{\pi}{2},\pi]$, we have
\begin{equation}
K_1(\vecw,\vecz) \ll \int_0^{ x_0(\vecw,\vecz)}  x^{-1+\frac2d} d x \ll \min( t^{-1} , t^{-1+\frac1d} (\pi-\varphi)^{-\frac2d}) .
\end{equation}
\end{proof}

\begin{lem}\label{K2upperlower}
Let $d\geq 3$. Then, for $\vecw,\vecz\in\UB$,
\begin{equation} \label{K2lower}
K_2(\vecw,\vecz) \gg \min( t^{-(1+\frac{d}{2})} , (\pi-\varphi)^{-(d+2)} ) ,
\end{equation}
and
\begin{equation} \label{K2upper}
K_2(\vecw,\vecz) \ll
\begin{cases}
\max( t^{-\frac{d-2}{2}}, t^{-\frac{d-1}{2}} \varphi) & \text{if $\varphi\in [0,\frac{\pi}{2}]$,} \\
\min( t^{-(1+\frac{d}{2})} , t^{-\frac{d+1}{2}+\frac1d} (\pi-\varphi)^{-(1+\frac2d)}) & \text{if $\varphi\in [\frac{\pi}{2},\pi]$.}
\end{cases}
\end{equation}
\end{lem}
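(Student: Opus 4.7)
The proof should closely parallel the strategy used for Lemma \ref{K1upperlower}, with the key arithmetic change coming from the extra factor of $x$ in the integrand: where the $K_1$ calculation integrated $x \cdot x^{-2+2/d} = x^{-1+2/d}$, we now integrate $x^2 \cdot x^{-2+2/d} = x^{2/d}$, which shifts all the final exponents.

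For the lower bound \eqref{K2lower}, the plan is to insert the asymptotic expansion \eqref{PHI0XILARGETHMRES} and restrict the integration to the region
\begin{equation}
\scrR := \{ x > 0 : x^{2/d} t(\vecw,\vecz) < c,\ x^{1/d}(\pi-\varphi) < c \},
\end{equation}
where the positive lower bound $F_{0,d} > c$ from \eqref{F0DBOUNDBELOW1} applies. Since $\scrR$ is an interval $(0, x_{\max})$ with $x_{\max} \asymp \min(t^{-d/2}, (\pi-\varphi)^{-d})$, the main term contributes
\begin{equation}
\int_0^{x_{\max}} x^{2/d} \, dx \;=\; \tfrac{d}{d+2}\, x_{\max}^{1+2/d} \;\asymp\; \min\bigl(t^{-(d+2)/2},\,(\pi-\varphi)^{-(d+2)}\bigr),
\end{equation}
which is exactly the claimed lower bound. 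One still has to verify that the error contribution $\int_\scrR x^2 |E|\,dx$ from \eqref{PHI0XILARGETHMEDEF} is dominated by this main term; this is analogous to the corresponding check in the $K_1$ proof, since $E$ decays faster in $x$ by a factor $x^{-2/d}\log$-type, and the cases $d=2,3,\geq 4$ can be handled separately exactly as there.

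For the upper bound \eqref{K2upper}, I would split according to the regime of $\varphi$ and apply Strömbergsson's bound \eqref{CYLINDER2PTSMAINTHMRES} together with the support description \eqref{PHI0SUPPORTBOUNDS}. When $\varphi\in[0,\pi/2]$ we use $\Psi_0(\vecw,x,\vecz)\ll x^{-2}$, which gives
\begin{equation}
K_2(\vecw,\vecz) \ll \int_0^{x_0(\vecw,\vecz)} dx \;\asymp\; x_0(\vecw,\vecz) \;\asymp\; \max\bigl(t^{-(d-2)/2},\, t^{-(d-1)/2}\varphi\bigr).
\end{equation}
When $\varphi\in[\pi/2,\pi]$, the weaker pointwise bound $\Psi_0\ll x^{-2+2/d}$ yields
\begin{equation}
K_2(\vecw,\vecz) \ll \int_0^{x_0(\vecw,\vecz)} x^{2/d}\, dx \;\asymp\; x_0(\vecw,\vecz)^{1+2/d},
\end{equation}
and substituting $x_0 \asymp \min(t^{-d/2}, t^{-(d-1)/2}(\pi-\varphi)^{-1})$ together with the identity $(d-1)(d+2)/(2d) = (d+1)/2 - 1/d$ produces precisely $\min(t^{-(1+d/2)}, t^{-(d+1)/2+1/d}(\pi-\varphi)^{-(1+2/d)})$.

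The only real obstacle is the bookkeeping of error terms in the lower bound: one has to confirm that the contribution of $\int_\scrR x^2 \cdot E\, dx$ remains negligible across all three error regimes of \eqref{PHI0XILARGETHMEDEF}. This is mechanical but case-dependent, and should be handled by observing that in each case the error's $x$-power is strictly smaller than $x^{2/d}$ (up to a logarithm when $d=3$), so the error integral is bounded by a constant times the main-term integral times a small factor in the relevant asymptotic limits $t\to 0$ or $\pi-\varphi\to 0$. Away from those limits the lower bound is trivially $O(1)$, which can be absorbed into a uniform positive lower bound coming from, e.g., restricting to a fixed compact subregion of $\scrR$.
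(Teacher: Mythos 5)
Your proposal reproduces the paper's proof essentially step for step: the same restriction to the region where $F_{0,d}>c$ for the lower bound, and the same split by $\varphi$-regime using \eqref{CYLINDER2PTSMAINTHMRES} and \eqref{PHI0SUPPORTBOUNDS} for the upper bound, including the algebraic identity turning $x_0^{1+2/d}$ into the stated exponents. The error-term check you flag is handled in the paper by observing that $x^2 E \ll \log(2+x)$ uniformly for $d\geq 3$, which is exactly the mechanical case-by-case verification you describe.
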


\begin{proof}
The lower bound \eqref{K2lower} follows from
\begin{equation}
K_2(\vecw,\vecz) \geq \int\limits_{\substack{ x>0\\  x^{\frac 2d} t(\vecw,\vecz) < c \\  x^{\frac1d}(\pi-\varphi)<c}}
 x^2 \Bigl\{  x^{-2+\frac 2d}F_{0,d}\Bigl( x^{\frac 2d}(1-z), x^{\frac 2d}(1-w),
 x^{\frac1d}(\pi-\varphi)\Bigr)
- |O(E)| \Bigr\} d x ,
\end{equation}
and \eqref{PHI0XILARGETHMEDEF} and \eqref{F0DBOUNDBELOW1}. Hence
\begin{equation}
\begin{split}
K_2(\vecw,\vecz) & >  c \int\limits_{\substack{ x>0\\  x^{\frac 2d} t(\vecw,\vecz) < c \\  x^{\frac1d}(\pi-\varphi)<c}}
\Bigl\{  x^{\frac 2d}  - |O(\log(2+ x) )| \Bigr\} d x \\
& \gg \min\big( t(\vecw,\vecz)^{-(1+\frac{d}{2})}, (\pi-\varphi)^{-(d+2)} \big).
\end{split}
\end{equation}

The upper bound \eqref{K2upper} follows from \eqref{CYLINDER2PTSMAINTHMRES} and \eqref{PHI0SUPPORTBOUNDS}: for $\varphi\in [0,\frac{\pi}{2}]$, 
\begin{equation}\label{up1}
K_2(\vecw,\vecz) \ll \int_0^{ x_0(\vecw,\vecz)} d x \ll \max( t^{-\frac{d-2}{2}}, t^{-\frac{d-1}{2}} \varphi)  ,
\end{equation}
and for $\varphi\in [\frac{\pi}{2},\pi]$, we have
\begin{equation}\label{up2}
K_2(\vecw,\vecz) \ll \int_0^{ x_0(\vecw,\vecz)}  x^{\frac2d} d x \ll \min( t^{-(1+\frac{d}{2})} , t^{-\frac{d+1}{2}+\frac1d} (\pi-\varphi)^{-(1+\frac2d)}) .
\end{equation}
\end{proof}

\begin{prop}\label{prop:tail2}
Let $d=2$. For $u \to\infty$,
\begin{equation}\label{masy}
\Prob(\mu_j> u) \sim \frac{3}{4\pi^2} \, \frac{1}{u^2\log u} .
\end{equation}
\end{prop}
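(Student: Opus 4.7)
The plan is to evaluate $\Prob(\mu_j>u)$ directly using the explicit $d=2$ formulas of Lemma~\ref{six-one}. Since the Markov kernel $K_0(w,z)$ on $\UB=(-1,1)$ is symmetric in $(w,z)$ and satisfies $\int K_0(w,z)\,dz=1$, the uniform measure (density $\tfrac12$) is stationary, and $\veceta_0$ is uniform by assumption; hence $(\veceta_{j-1},\veceta_j)$ has joint density $\tfrac12 K_0(w,z)$ on $(-1,1)^2$. Both $K_0(w,z)$ and $\mu_j=K_1/K_0$ are invariant under $(w,z)\mapsto(z,w)$ and under $(w,z)\mapsto(-w,-z)$, so I may restrict to the region $\{w\le z,\ w+z\ge 0\}$ and compensate with a factor of $4$:
\begin{equation*}
\Prob(\mu_j>u)=2\int_{\{w\le z,\,w+z\ge 0\}}\ind{\mu_j(w,z)>u}\,K_0(w,z)\,dw\,dz.
\end{equation*}

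I then change variables to $a:=1+w$, $b:=1+z$, splitting the restricted region into the piece $\scrR_1=\{0<a\le 1,\ 2-a\le b<2\}$ and the piece $\scrR_2=\{1\le a\le b<2\}$. A short calculation from Lemma~\ref{six-one} gives
\begin{equation*}
\mu_j(a,b)=\frac{b-a}{2ab\log(b/a)},\qquad K_0(w,z)=\frac{6}{\pi^2}\cdot\frac{\log(b/a)}{b-a}.
\end{equation*}
On $\scrR_2$ both $a,b\in[1,2)$ are bounded away from $0$, and $\mu_j$ is uniformly bounded by some constant, so this piece contributes nothing to $\{\mu_j>u\}$ once $u$ exceeds that bound. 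On $\scrR_1$ we have $b\in[2-a,2]$, so $b\asymp 1$ and $b-a\asymp 1$, while $\log(b/a)\sim|\log a|$ as $a\to 0$; hence $\mu_j(a,b)\sim\tfrac{1}{2a|\log a|}$ uniformly in $b\in[2-a,2]$. Defining $a_u$ as the unique root in $(0,1/e)$ of $a_u|\log a_u|=\tfrac{1}{2u}$, one has $a_u\sim\tfrac{1}{2u\log u}$ and $|\log a_u|\sim\log u$.

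The last step is the asymptotic evaluation of the integral. Using $K_0\sim\tfrac{3}{\pi^2}|\log a|$ on $\scrR_1$ (with uniform relative error $o(1)$ as $a\to 0$), the inner integration over $b\in[2-a,2]$ produces a factor of $a$:
\begin{equation*}
\Prob(\mu_j>u)\sim 2\int_0^{a_u}\int_{2-a}^2\frac{3}{\pi^2}|\log a|\,db\,da=\frac{6}{\pi^2}\int_0^{a_u}a|\log a|\,da\sim\frac{3}{\pi^2}\,a_u^2|\log a_u|,
\end{equation*}
by the elementary asymptotic $\int_0^A a|\log a|\,da\sim\tfrac{A^2}{2}|\log A|$ as $A\to 0$. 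Since $a_u^2|\log a_u|=a_u\cdot(a_u|\log a_u|)=\tfrac{a_u}{2u}\sim\tfrac{1}{4u^2\log u}$, this yields the desired $\tfrac{3}{4\pi^2 u^2\log u}$.

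The main technical point where I would spend care is justifying the replacement of $\{\mu_j>u\}\cap\scrR_1$ by the clean product set $\{a<a_u\}\cap\scrR_1$, i.e.\ showing that the variation of $\mu_j(a,b)$ with $b\in[2-a,2]$ is only of lower order. Writing $\mu_j=g(b/a)/(2a)$ with $g(x):=(x-1)/(x\log x)$, one checks that $g$ is monotone decreasing on $(1,\infty)$, so $\mu_j$ is monotone in $b$ for fixed $a$, and the ratio of its values at the endpoints $b=2-a$ and $b=2$ is $1+O(a/|\log a|)$ as $a\to 0$. Bracketing the level set between $\{a<a_{u(1-\epsilon)}\}\cap\scrR_1$ and $\{a<a_{u(1+\epsilon)}\}\cap\scrR_1$ for any fixed $\epsilon>0$ and $u$ large, performing the integral for each bracket to leading order, and then letting $\epsilon\to 0$ delivers the exact constant $\tfrac{3}{4\pi^2}$.
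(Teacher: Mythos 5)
Your argument is correct, and your formulas all check out: with $a=1+w$, $b=1+z$ one does get $K_0=\tfrac{6}{\pi^2}\tfrac{\ln(b/a)}{b-a}$, $\mu_j=\tfrac{b-a}{2ab\ln(b/a)}=g(b/a)/(2a)$ with $g(x)=(x-1)/(x\ln x)$ decreasing on $(1,\infty)$, the contribution from $\scrR_2$ vanishes once $u>\tfrac12$, and the leading-order integral over $\scrR_1$ evaluates to $\tfrac{3}{4\pi^2 u^2\log u}$. The route differs from the paper's in a way worth noting. The paper substitutes $x=w-z$, $y=(1+z)/(w-z)$, which is engineered so that the ratio $K_0/K_1$ factors as $xf(y)$; the level set then becomes a sublevel set of a \emph{product}, the $x$-integral is done exactly, and the answer drops out of an antiderivative in $y$ with explicit boundary terms, requiring no sandwich argument. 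Your substitution $(a,b)=(1+w,1+z)$ is the more geometrically transparent one: it makes visible that the entire heavy tail of $\mu_j$ comes from the corner $a\to 0$ (one impact parameter grazing $\partial\UB$), with the orthogonal direction $b\in[2-a,2]$ contributing only a factor $a$. The price is that the level set is not exactly a vertical strip, so you need the bracketing step you flag; the monotonicity of $g$ and the $1+O(a/|\log a|)$ endpoint comparison you give are exactly what is needed to close that gap, and the $\epsilon$-squeeze then delivers the constant. Both proofs rest on the same explicit $d=2$ kernel and the same fourfold symmetry reduction, so the difference is one of computational strategy: the paper trades geometric transparency for an exact closed-form evaluation, while yours trades a slightly longer justification for a clearer picture of where the tail comes from.
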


\begin{proof}
By the invariance of the integrand under $(w,z)\mapsto -(w,z)$ and $(w,z)\mapsto(z,w)$ we have
\begin{equation}
\begin{split}
\Prob(\mu_j>u)
&=\frac{1}{2} \int_{(-1,1)^2} \ind{K_1(w,z)>u K_0(w,z)} K_0(w,z)\, dw\, dz \\
&=2 \int_{|z|<w<1} \ind{K_1(w,z)>u K_0(w,z)} K_0(w,z)\, dw\, dz .
\end{split}
\end{equation}
(The factor $\frac12=v_1^{-1}$ corresponds to the normalization in the remark following \eqref{Kpdef}.)
In this range of integration, we have explicitly
\begin{equation}
K_0(w,z)=\frac{6}{\pi^2} \;\frac{1}{w-z}\ln \frac{1+w}{1+z}
\end{equation}
and
\begin{equation}
\frac{K_0(w,z)}{K_1(w,z)}
=
2\; \frac{(1+w)(1+z)}{w-z}\ln \frac{1+w}{1+z}.
\end{equation}
Using the variable substitution
\begin{equation}\label{wxyz}
x=w-z,\qquad y = \frac{1+z}{w-z},
\end{equation}
we have, with the shorthand $f(y)=2 y(1+y) \ln(1+y^{-1})$,
\begin{equation}
\begin{split}
\Prob(\mu_j>u)
= &\frac{12}{\pi^2} \int_{0}^\infty \int_{\frac{2}{1+2y}}^{\frac{2}{1+y}} \ind{x f(y)<u^{-1}} \ln(1+y^{-1})\; dx\, dy \\
=&\frac{12}{\pi^2} \int_{0}^\infty \ind{\frac{2 f(y)}{1+y}<u^{-1}} \left( \frac{2}{1+y}-\frac{2}{1+2y}\right)  \ln(1+y^{-1}) \; dy\\ 
& +\frac{12}{\pi^2} \int_{0}^\infty \ind{\frac{2 f(y)}{1+2y}<u^{-1}<\frac{2 f(y)}{1+y}} \left( \frac{1}{uf(y)}-\frac{2}{1+2y}\right)  \ln(1+y^{-1}) \; dy .
\end{split}
\end{equation}
The first term equals
\begin{multline}
\frac{12}{\pi^2} \int_{0}^\infty \ind{4y\ln(1+y^{-1})<u^{-1}} \left(2y+O(y^2)\right)  \ln(1+y^{-1}) \; dy\\ 
 = \frac{12}{\pi^2} \left(y^2 \ln(1+y^{-1}) + O(y^2) \right)  \bigg|_{0_+}^{4y\ln(1+y^{-1})=u^{-1}} = \frac{3}{4\pi^2}\; \frac{1}{u^2 \ln u} (1+o(1)).
\end{multline}
The second term is bounded above by
\begin{equation}
\begin{split}
& \frac{12}{\pi^2} \int_{0}^\infty \ind{\frac{2 f(y)}{1+2y}<u^{-1}<\frac{2 f(y)}{1+y}} \left( \frac{2}{1+y}-\frac{2}{1+2y}\right) \ln(1+y^{-1}) \; dy \\
& = \frac{12}{\pi^2} \left(y^2 \ln(1+y^{-1}) + O(y^2) \right)  \bigg|_{\frac{2 f(y)}{1+y}=u^{-1}}^{\frac{2 f(y)}{1+2y}=u^{-1}} 
= o\bigg( \frac{1}{u^2 \ln u} \bigg)  .
\end{split}
\end{equation}
This proves \eqref{masy}. 
\end{proof}

\begin{prop}\label{prop:tail3}
Let $d\geq 3$. There are constants $c_2> c_1>0$, such that for $u\geq 1$,
\begin{equation}\label{00mupper}
c_1 u^{-(1+\frac{d}{2})} \leq \Prob(\mu_j> u) \leq c_2 u^{-(1+\frac{d}{2})} .
\end{equation}
\end{prop}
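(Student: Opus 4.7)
Since $(\veceta_{j-1}, \veceta_j)$ has density $v_{d-1}^{-1} K_0(\vecw, \vecz)$ on $\UB \times \UB$ and by Lemma \ref{K0upperlower} the factor $K_0$ is uniformly bounded above and away from zero, the event $\{\mu_j > u\}$ differs from $\{K_1(\veceta_{j-1}, \veceta_j) > u\}$ only by rescaling $u$ by a bounded positive constant. The task therefore reduces to estimating, up to multiplicative constants, the Lebesgue measure of $\{(\vecw, \vecz) \in \UB \times \UB : K_1(\vecw,\vecz) > u\}$. I would parametrize $\UB \times \UB$ in spherical coordinates $(w, z, \varphi, \hat w, \omega)$, where $w = \|\vecw\|$, $z = \|\vecz\|$, $\varphi \in [0, \pi]$ is the angle between $\vecw$ and $\vecz$, and $\hat w \in S^{d-2}$, $\omega \in S^{d-3}$ are angular variables on which $K_1$ does not depend; the volume element is $w^{d-2} z^{d-2} \sin^{d-3}(\varphi)\, dw\, dz\, d\varphi\, d\hat w\, d\omega$. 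Writing $t = \max(1-w, 1-z)$ and $\eta = \pi - \varphi$, the basic building-block estimate is that $\{t < T,\ \eta < \delta\}$ has volume $\asymp T^2 \delta^{d-2}$ for $T, \delta$ small.

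\emph{Lower bound.} By \eqref{K1lower}, $K_1 \gg \min(t^{-1}, \eta^{-2})$, so the set $\{t < 1/(Cu),\ \eta < 1/\sqrt{Cu}\}$ is contained in $\{K_1 > u\}$ for some constant $C > 0$. Its volume is $\asymp u^{-2} \cdot u^{-(d-2)/2} = u^{-(1+d/2)}$, giving the lower bound.

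\emph{Upper bound.} This is where I expect the main obstacle: the upper bound \eqref{K1upper} for $\varphi \in [\pi/2, \pi]$ is not tight enough to close the gap with the lower bound by direct integration. The plan is first to sharpen it to $K_1(\vecw, \vecz) \ll \min(t^{-1}, \eta^{-2})$ for $d \geq 4$, and to $K_1 \ll \eta^{-2}(1 + \log_+(\eta^2/t))$ in the regime $t < \eta^2$ for $d = 3$, by splitting $K_1 = \int_0^{x_0} x \Psi_0\, dx$ at the threshold $x = \eta^{-d}$ and using the decay from \eqref{CYLINDER2PTSMAINTHMRES} on the tail $x \geq \eta^{-d}$: direct calculation with the exponent $-1 + 2/(d(d-1))$ shows the tail integral is $\ll \eta^{-2}$ for $d \geq 4$ and acquires the extra logarithm only when $d = 3$. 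The complementary region $\varphi \in [0, \pi/2]$ is already handled by \eqref{K1upper}, where $K_1 \ll \log t^{-1}$ forces $t$ to be exponentially small in $u$ and hence contributes $\ll e^{-cu}$ to the measure.

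With the sharpened bound, $\{K_1 > u\} \cap \{\varphi \in [\pi/2, \pi]\}$ is contained in the union of Case A ($t \geq \eta^2$, forcing $t < C/u$ and $\eta \leq C\sqrt{t}$) and Case B ($t < \eta^2$, forcing $\eta < C/\sqrt{u}$ for $d \geq 4$, respectively $t < \eta^2 e^{-cu \eta^2}$ for $d = 3$). Integrating the volume element: Case A contributes $\int_0^{C/u} T \cdot T^{(d-2)/2}\, dT \asymp u^{-(1+d/2)}$; Case B with $d \geq 4$ contributes $\int_0^{C/\sqrt{u}} \eta^4 \cdot \eta^{d-3}\, d\eta \asymp u^{-(1+d/2)}$; and Case B with $d = 3$ contributes $\int_0^{\pi/2} \eta^4 e^{-2cu\eta^2}\, d\eta$, which after the substitution $\eta = s/\sqrt{u}$ evaluates to $\asymp u^{-5/2}$ via a convergent Gaussian-moment integral. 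Summing these yields the matching upper bound $\Prob(\mu_j > u) \ll u^{-(1+d/2)}$.
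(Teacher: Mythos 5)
Your overall strategy coincides with the paper's: both reduce $\{\mu_j>u\}$ to $\{K_1>c\,u\}$ via the two-sided bounds on $K_0$ from Lemma~\ref{K0upperlower}, pass to spherical coordinates, and estimate the resulting measure in the variables $t=\max(1-w,1-z)$ and $\eta=\pi-\varphi$. Your lower bound is the same as the paper's (both insert \eqref{K1lower} and integrate over $\{t\ll u^{-1},\ \eta\ll u^{-1/2}\}$), and your treatment of the $\varphi\in[0,\pi/2]$ region is also the paper's.

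The interesting point is your upper bound. You are correct that \eqref{K1upper} as stated, namely $K_1\ll\min(t^{-1},t^{-1+\frac1d}\eta^{-\frac2d})$ on $\varphi\in[\pi/2,\pi]$, is not sharp enough to close the computation by direct integration when $d\geq 4$, and this observation in fact identifies a real slip in the paper's own proof: the paper's second displayed integral evaluates to
\begin{equation*}
u^{-\frac{2d}{d-1}}\int_{u^{-1/2}}^{\pi/2}\phi^{d-3-\frac{4}{d-1}}\,d\phi,
\end{equation*}
and the exponent $d-3-\tfrac{4}{d-1}$ is $\leq -1$ only for $d=3$; for $d\geq 4$ the $\phi$-integral is $O(1)$, so this term is $\asymp u^{-\frac{2d}{d-1}}$, which is larger than $u^{-(1+\frac d2)}$ since $\tfrac{2d}{d-1}<1+\tfrac d2$ for $d\geq 4$. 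Thus the paper's claimed bound ``$\ll u^{-(1+\frac d2)}$'' does not follow from the displayed line except when $d=3$. Your repair is the right one: by integrating $\int_0^{x_0}x\,\Psi_0\,dx$ against the full decay of \eqref{CYLINDER2PTSMAINTHMRES}, splitting the $x$-integral at $\eta^{-d}$, one obtains $K_1\ll\min(t^{-1},\eta^{-2})$ for $d\geq 4$ and $K_1\ll\eta^{-2}(1+\log_+(\eta^2/t))$ in the regime $t<\eta^2$ for $d=3$, and your subsequent case-by-case integration (including the Gaussian-moment estimate for $d=3$) then yields the correct order $u^{-(1+\frac d2)}$ in all dimensions. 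So your proposal is correct, and it supplies the sharpening of \eqref{K1upper} needed to make the upper bound go through for $d\geq 4$.
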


\begin{proof}
The upper bounds in \eqref{K1upper} for $K_1$, and the upper and lower bounds for  $K_0$ in Lemma \ref{K0upperlower} imply 
\begin{equation}\label{00st1}
\int_{K_1> u K_0} K_0 \, d\vecw\,d\vecz 
\ll \int\limits_{\substack{\log \max( t^{-\frac{d-2}{2}}, t^{-\frac{d-1}{2}} \varphi) \gg u\\ 0 \leq \varphi\leq \frac{\pi}{2}}} d\vecw\,d\vecz
+  \int\limits_{\substack{\min( t^{-1} , t^{-1+\frac1d} (\pi-\varphi)^{-\frac2d}) \gg u\\ \frac{\pi}{2} \leq \varphi\leq \pi}} d\vecw\,d\vecz  .
\end{equation}
Using spherical coordinates, we see that the second term is, up to a multiplicative constant, equal to
\begin{equation}
\int\limits_{\substack{ 0<z<w<1, \; \frac{\pi}{2} \leq \varphi\leq \pi \\ \min( (1-z)^{-1} , (1-z)^{-1+\frac1d} (\pi-\varphi)^{-\frac2d})\gg u}} w^{d-2} dw\, z^{d-2} dz \, (\sin\varphi)^{d-3} d\varphi .
\end{equation}
We now substitute $x=1-w$, $y=1-z$, $\phi=\pi-\varphi$, and integrate $x$ over $(0,y)$, and then over $y$. This yields (again up to a multiplicative constant)
\begin{equation}
\begin{split}
& \int_0^{\pi/2}
\min\big(u^{-1}, u^{-\frac{d}{d-1}} \phi^{-\frac{2}{d-1}} \big)^2 (\sin\phi)^{d-3} d\phi \\
& = u^{-2} \int_0^{u^{-1/2}} (\sin\phi)^{d-3} d\phi +
u^{-\frac{2d}{d-1}}  \int_{u^{-1/2}}^{\pi/2} \phi^{-\frac{4}{d-1}}  (\sin\phi)^{d-3} d\phi \\
& \ll u^{-(1+\frac{d}{2})} .
\end{split}
\end{equation}
The first term in \eqref{00st1} can be dealt with similarly, and yields a lower order contribution. This proves the upper bound 
\begin{equation}
\Prob(\mu_j> u) \ll u^{-(1+\frac{d}{2})} .
\end{equation}
As to the lower bound in \eqref{00mupper}, we use \eqref{K1lower} (and the same variable substitutions as above):
\begin{equation}\label{00st11}
\begin{split}
\int_{K_1> u K_0} K_0 \, d\vecw\,d\vecz 
& \gg \int\limits_{\substack{\min( t^{-1} ,(\pi-\varphi)^{-2}) \gg u\\ \frac{\pi}{2} \leq \varphi\leq \pi}} d\vecw\,d\vecz  \\
& \gg u^{-2} \int_0^{u^{-1/2}} (\sin\phi)^{d-3} d\varphi \\ 
& \gg u^{-(1+\frac{d}{2})} .
\end{split}
\end{equation}
\end{proof}

%A little more work shows that we in fact have the asymptotics 
%\begin{equation}
%\Prob(\mu_j> u)\sim \iota_d\, u^{-(1+\frac{d}{2})}
%\end{equation}
%for some constant $\iota_d>0$, akin to the two-dimensional case. We will not include the proof of this claim here, as the upper bound in \eqref{00mupper} is sufficient for our purposes.

\begin{prop}\label{prop:tail_m}
\begin{equation}
\Prob(m_j> u) = 
\begin{cases}
O\big(u^{-2} (\log u)^{-1}\big) \ind{u\leq r_j} & (d=2) \\
O\big(u^{-(1+\frac{d}{2})}\big) \ind{u\leq r_j} & (d\geq 3) .
\end{cases}
\end{equation}
\end{prop}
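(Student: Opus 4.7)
The plan is to reduce the proposition to the tail estimates already established for the untruncated conditional mean $\mu_j$ in Propositions \ref{prop:tail2} and \ref{prop:tail3}, via two elementary almost-sure pointwise bounds on $m_j$.

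First I would observe the deterministic bound $m_j \leq r_j$. Indeed, using the definition of $K_{1,r_j}$,
\begin{equation*}
K_{1,r_j}(\vecw,\vecz) = \int_0^{r_j} x\, \Psi_0(\vecw,x,\vecz)\, dx \leq r_j \int_0^{r_j} \Psi_0(\vecw,x,\vecz)\, dx \leq r_j\, K_0(\vecw,\vecz),
\end{equation*}
so dividing by $K_0(\vecw,\vecz)$ gives $m_j \leq r_j$ almost surely. This immediately accounts for the indicator factor $\ind{u\leq r_j}$, since $\Prob(m_j > u) = 0$ whenever $u > r_j$. Second, since $\Psi_0 \geq 0$, we have $K_{1,r_j}(\vecw,\vecz) \leq K_1(\vecw,\vecz)$, and hence $m_j \leq \mu_j$ almost surely; this gives the domination $\Prob(m_j > u) \leq \Prob(\mu_j > u)$ for every $u$.

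The claimed upper bounds then follow directly by invoking the tail estimates for $\mu_j$: for $d=2$, Proposition \ref{prop:tail2} yields $\Prob(\mu_j > u) \ll u^{-2}(\log u)^{-1}$ for all sufficiently large $u$; for $d\geq 3$, Proposition \ref{prop:tail3} gives $\Prob(\mu_j > u) \ll u^{-(1+d/2)}$ for $u\geq 1$. Values of $u$ in any bounded range where the stated $O$-expression is bounded below by a positive constant are handled trivially from $\Prob(m_j>u)\leq 1$ by absorbing into the implied constant.

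I do not expect a genuine obstacle here, since the proposition is essentially a corollary of the two moment-tail results proved immediately beforehand. The only mild subtlety is ensuring the bound holds uniformly in $u$ rather than only asymptotically; this is dispatched by the indicator for $u>r_j$ on one side and by the triviality $\Prob(\cdot)\leq 1$ for small $u$ on the other.
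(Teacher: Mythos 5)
Your proposal is correct and follows exactly the paper's argument: the paper's own proof is the one-line observation that this follows from Propositions \ref{prop:tail2} and \ref{prop:tail3} since $m_j\leq\mu_j$ and $m_j\leq r_j$. You have simply spelled out the trivial pointwise comparisons and the handling of small $u$, which the paper leaves implicit.
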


\begin{proof}
This follows from Propositions \ref{prop:tail2} and \ref{prop:tail3}, since $m_j\leq \mu_j$ and $m_j\leq r_j$.
\end{proof}

\begin{prop}\label{prop:m1bound}
\begin{equation}\label{m1}
\Exp(m_j) =  \xibar + O( r_j^{-1}) .
\end{equation}
\end{prop}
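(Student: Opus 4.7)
My plan is to reduce $\Exp(m_j)$ to a deterministic integral by exploiting the stationarity of the uniform density under the Markov chain $(\veceta_n)$, and then to apply the tail asymptotic \eqref{PHIBARXILARGETHMRES1} of $\Psi_0(x)$ to control the truncation error.

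First, I would observe that the symmetry $\Psi_0(\vecw, x, \vecz) = \Psi_0(\vecz, x, \vecw)$ noted in Section \ref{sec:two_b} implies that $K_0$ is a symmetric kernel, hence the uniform density $v_{d-1}^{-1}$ on $\UB$ is preserved by the transition \eqref{Markov1} (using $\int_\UB K_0(\vecw,\vecz)\, d\vecw = \int_\UB K_0(\vecz,\vecw)\, d\vecw = 1$). Since $\veceta_0$ is uniform by construction, so is $\veceta_{j-1}$ for every $j \geq 1$, and the joint density of $(\veceta_{j-1}, \veceta_j)$ with respect to $d\vecw\, d\vecz$ equals $v_{d-1}^{-1} K_0(\vecw, \vecz)$. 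Substituting this together with the definition of $m_j$ in \eqref{muj}, the factor $K_0$ cancels and I get
\begin{equation*}
\Exp(m_j) = \frac{1}{v_{d-1}} \iint_{\UB \times \UB} K_{1,r_j}(\vecw, \vecz)\, d\vecw\, d\vecz.
\end{equation*}

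Next, Fubini's theorem together with the identity $\Psi_0(x) = v_{d-1}^{-1} \iint \Psi_0(\vecw, x, \vecz)\, d\vecw\, d\vecz$ transforms the right-hand side into $\int_0^{r_j} x\, \Psi_0(x)\, dx$. Since the full first moment $\int_0^\infty x\, \Psi_0(x)\, dx$ is exactly the mean free path length $\xibar$, the claim reduces to showing $\int_{r_j}^\infty x\, \Psi_0(x)\, dx = O(r_j^{-1})$. By the tail asymptotic \eqref{PHIBARXILARGETHMRES1}, $x \Psi_0(x) = \Theta_d x^{-2} + O\bigl(x^{-2-2/d}(\log x)^{\ind{d=3}}\bigr)$, and integrating from $r_j$ to $\infty$ gives $\Theta_d r_j^{-1}$ plus a strictly lower-order term, hence $O(r_j^{-1})$, as required.

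There is no real obstacle here. The only subtlety is checking that the uniform density is indeed stationary, which is immediate from the symmetry of $K_0$ together with the fact that $K_0(\vecw, \cdot)$ is a probability density for each $\vecw$. Otherwise the lemma is essentially a direct calculation that reduces to the already-known first-moment tail estimate for the free-path distribution $\Psi_0(x)$.
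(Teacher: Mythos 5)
Your proposal is correct and follows essentially the same route as the paper: reduce $\Exp(m_j)$ to $v_{d-1}^{-1}\iint K_{1,r_j}\,d\vecw\, d\vecz = \int_0^{r_j} x\,\Psi_0(x)\,dx$, then bound the tail $\int_{r_j}^\infty x\,\Psi_0(x)\,dx$ via \eqref{PHIBARXILARGETHMRES1}. The only difference is that you spell out the stationarity of the uniform density (via symmetry of $K_0$), which the paper takes as understood from Section~\ref{sec:outline}.
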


\begin{proof}
We have
\begin{equation}
\begin{split}
\Exp(m_j) & = \frac{1}{v_{d-1}} \int_\UB \int_\UB K_{1,r_j}(\vecw,\vecz)\,d\vecw\,d\vecz \\
& = \int_0^{r_j} x \Psi_0(x)\, dx \\
& = \xibar - \int_{r_j}^\infty x \Psi_0(x)\, dx,
\end{split}
\end{equation}
and \eqref{m1} follows from  the asymptotics \eqref{PHIBARXILARGETHMRES1}. 
\end{proof}

\begin{prop}\label{prop:mbound}
\begin{equation}
\Exp(m_j^2) =  
\begin{cases}
O(\log\log j) & (d=2)\\
O(1) & (d\geq 3).
\end{cases}
\end{equation}
\end{prop}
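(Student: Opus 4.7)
The plan is to compute $\Exp(m_j^2)$ via the layer-cake representation
\begin{equation*}
\Exp(m_j^2) = \int_0^\infty 2u \, \Prob(m_j>u)\, du
\end{equation*}
and then insert the tail bound from Proposition \ref{prop:tail_m}. For $u$ below some fixed threshold we use the trivial bound $\Prob(m_j>u)\leq 1$, which gives a contribution of $O(1)$; the interesting part is $u$ large.

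For $d\geq 3$, Proposition \ref{prop:tail_m} gives $\Prob(m_j>u)=O(u^{-(1+d/2)})$, and the integral $\int_1^\infty u^{-d/2}\, du$ converges because $d/2>1$. Hence $\Exp(m_j^2)=O(1)$ uniformly in $j$.

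For $d=2$, the tail bound reads $\Prob(m_j>u)=O(u^{-2}(\log u)^{-1})$ for $u\leq r_j$, and is zero for $u>r_j$ since $m_j\leq r_j$ by definition of the truncation \eqref{xiprime}. Splitting the integral at $u=e$, the contribution from $[0,e]$ is bounded by $e^2$, and on $[e,r_j]$ we get
\begin{equation*}
\int_e^{r_j} 2u \cdot O\!\left(\frac{1}{u^2 \log u}\right) du = O\!\left(\int_e^{r_j}\frac{du}{u\log u}\right) = O(\log\log r_j).
\end{equation*}
Since $r_j=\sqrt{j(\log j)^\gamma}$, we have $\log\log r_j = \log\log j + O(1)$, yielding $\Exp(m_j^2)=O(\log\log j)$.

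There is no real obstacle here: the proof is a one-line layer-cake computation once Proposition \ref{prop:tail_m} is in hand. The only small subtlety is that in the $d=2$ case the cutoff $r_j$ in the tail bound is essential — without it the integrand $u^{-1}(\log u)^{-1}$ is not integrable at infinity — and this is precisely what the truncation \eqref{xiprime} of $\xi_j$ to $\xi_j'$ is designed to enforce. In dimension $d\geq 3$ the cutoff is not needed for this estimate because the untruncated tail $u^{-(1+d/2)}$ is already summable against $u\,du$.
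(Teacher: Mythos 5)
Your proof is correct and is exactly the argument the paper leaves implicit when it says Proposition \ref{prop:mbound} is ``a direct corollary'' of Proposition \ref{prop:tail_m}: the layer-cake integration of the tail bound is the standard way to pass from $\Prob(m_j>u)$ to $\Exp(m_j^2)$, and your handling of the cutoff at $r_j$ in the $d=2$ case, together with the observation that $\log\log r_j = \log\log j + O(1)$, is precisely the needed bookkeeping.
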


\begin{proof}
This is a direct corollary of Proposition \ref{prop:tail_m}.
\end{proof}

\begin{prop}\label{prop:aasy}
For $d\geq 2$ and $j\to\infty$,
\begin{equation}\label{eq:aasy}
\Exp(a_j^2) = \frac{\Theta_d}{2}\, \log j +  
O(\log\log j) ,
\end{equation}
\begin{equation}\label{eq:basy}
\Exp(b_j^2) = \frac{\Theta_d}{2}\, \log j +  
O(\log\log j) .
\end{equation}
\end{prop}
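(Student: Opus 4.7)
The strategy is to reduce the calculation of $\Exp(b_j^2)$ to a one-dimensional integral against the free-path density $\Psi_0(x)$, and then to invoke the tail asymptotic \eqref{PHIBARXILARGETHMRES1}. The estimate for $\Exp(a_j^2)$ will then follow from $a_j^2 = b_j^2 - m_j^2$ together with Proposition \ref{prop:mbound}.

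The first step is to identify the joint law of $(\veceta_{j-1},\veceta_j)$. Since $\veceta_0$ is uniform on $\UB$ with density $v_{d-1}^{-1}$, and the transition density is $K_0$, the joint density of $(\veceta_{j-1},\veceta_j)$ equals $v_{d-1}^{-1}K_0(\vecw,\vecz)$ provided the marginal of $\veceta_{j-1}$ is uniform. This follows by induction from the identity
\begin{equation}
\int_\UB K_0(\vecw,\vecz)\,d\vecw = v_{d-1}\,\overline\Psi_0(\vecz) = 1,
\end{equation}
which is a consequence of the normalization $\overline\Psi_0\equiv v_{d-1}^{-1}$ noted just before Theorem \ref{thm:main3}. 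Hence the uniform distribution is stationary, and $(\veceta_{j-1},\veceta_j)\sim v_{d-1}^{-1}K_0(\vecw,\vecz)\,d\vecw\,d\vecz$ for every $j\geq 1$.

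The second step is the explicit computation. Using the definition \eqref{betaj} of $b_j^2$, the cancellation of $K_0$ against the joint density gives
\begin{equation}
\Exp(b_j^2) = v_{d-1}^{-1}\int_\UB\int_\UB K_{2,r_j}(\vecw,\vecz)\,d\vecw\,d\vecz
= \int_0^{r_j} x^2\,\Psi_0(x)\,dx,
\end{equation}
where we used Fubini and the definition of $\Psi_0(x)$ as the $\vecw,\vecz$-average of $\Psi_0(\vecw,x,\vecz)$. Now I would split the integral as $\int_0^M + \int_M^{r_j}$ for some fixed large $M$. The first piece is bounded. On $[M,r_j]$ the tail asymptotic \eqref{PHIBARXILARGETHMRES1} gives
\begin{equation}
\int_M^{r_j} x^2 \Psi_0(x)\,dx = \Theta_d\int_M^{r_j}\frac{dx}{x} + O\!\left(\int_M^{r_j} x^{-1-2/d}(1+\ind{d=3}\log x)\,dx\right)
= \Theta_d\log r_j + O(1).
\end{equation}
Since $r_j = \sqrt{j(\log j)^\gamma}$, we have $\log r_j = \tfrac12\log j + \tfrac{\gamma}{2}\log\log j$, so $\Exp(b_j^2) = \tfrac{\Theta_d}{2}\log j + O(\log\log j)$, which is \eqref{eq:basy}.

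Finally, for \eqref{eq:aasy}, use $a_j^2 = b_j^2 - m_j^2$ and take expectations. By Proposition \ref{prop:mbound}, $\Exp(m_j^2) = O(\log\log j)$ when $d=2$ and $O(1)$ when $d\geq 3$; in either case this is absorbed into the error term, giving the stated estimate. The only nontrivial input is the sharp tail asymptotic \eqref{PHIBARXILARGETHMRES1}; everything else is bookkeeping, and the fact that the error terms in the asymptotic for $\Psi_0(x)$ decay strictly faster than $x^{-3}$ ensures that the logarithmic divergence comes solely from the leading $\Theta_d x^{-3}$ term.
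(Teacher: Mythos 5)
Your proof is correct and follows essentially the same route as the paper: reduce $\Exp(b_j^2)$ to $\int_0^{r_j}x^2\Psi_0(x)\,dx$ by averaging $K_{2,r_j}$ over the stationary two-step law, invoke the tail asymptotic \eqref{PHIBARXILARGETHMRES1} with $\log r_j = \tfrac12\log j + \tfrac{\gamma}{2}\log\log j$, and then absorb $\Exp(m_j^2)$ via Proposition \ref{prop:mbound}. The paper states this more tersely but the underlying steps are the same.
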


\begin{proof}
We have
\begin{equation}
\Exp(b_j^2) = \frac{1}{v_{d-1}} \int_\UB \int_\UB K_{2,r_j}(\vecw,\vecz)\,d\vecw\,d\vecz
= \int_0^{r_j} x^2 \Psi_0(x)\, dx,
\end{equation}
and hence \eqref{eq:basy} follows from the asymptotics \eqref{PHIBARXILARGETHMRES1}. Relation \eqref{eq:aasy} is a consequence of  Proposition \ref{prop:mbound}.
\end{proof}

\begin{prop}\label{prop:amom}
For $d\geq 2$, 
\begin{equation}
\Exp(a_j^{4}) \leq  \Exp(b_j^{4}) = O\big(r_j^{2}\big) .
\end{equation}
\end{prop}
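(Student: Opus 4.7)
The plan is to peel the two inequalities apart, reduce the fourth conditional moment to an unconditional fourth moment of $\xi_j'$ via Jensen, and then read the $O(r_j^2)$ bound directly off the free-path-length tail asymptotics \eqref{PHIBARXILARGETHMRES1}.

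\textbf{Step 1 (the easy inequality).} From \eqref{alphaj} we have $a_j^2 = b_j^2 - m_j^2 \leq b_j^2$, hence $a_j^4 \leq b_j^4$ pointwise, and it suffices to bound $\Exp(b_j^4)$.

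\textbf{Step 2 (Jensen, conditionally on $\uveceta$).} Since $b_j^2 = \Exp({\xi_j'}^2 \mid \uveceta)$, the conditional Cauchy-Schwarz inequality gives
\begin{equation*}
b_j^4 = \bigl(\Exp({\xi_j'}^2 \mid \uveceta)\bigr)^2 \leq \Exp({\xi_j'}^4 \mid \uveceta).
\end{equation*}
Taking unconditional expectations and using the tower property,
\begin{equation*}
\Exp(b_j^4) \leq \Exp({\xi_j'}^4) = \int_0^{r_j} x^4 \Psi_0(x)\, dx,
\end{equation*}
where I use that the uniform distribution on $\UB$ is stationary for the Markov chain \eqref{Markov1}, so the marginal distribution of $\xi_j$ has density $\Psi_0(x)$ for every $j$.

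\textbf{Step 3 (tail asymptotics of the free path length).} By \eqref{PHIBARXILARGETHMRES1}, there exists $C>0$ with $\Psi_0(x) = \Theta_d x^{-3} + O\bigl(x^{-3-2/d} L(x)\bigr)$ for $x\geq 1$, where $L(x)=\log x$ for $d=3$ and $L(x)=1$ otherwise; moreover $\Psi_0(x)$ is bounded on $[0,1]$ by \eqref{PHI0ZEROSMALLTHMRES}. Splitting the integral at $x=1$, the bounded piece contributes $O(1)$, the main term yields
\begin{equation*}
\Theta_d \int_1^{r_j} x\, dx = \tfrac{\Theta_d}{2}\, r_j^2 + O(1),
\end{equation*}
and the error term contributes $O\bigl(r_j^{2-2/d}\, L(r_j)\bigr)$, which is $o(r_j^2)$ for every $d\geq 2$. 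Combining with Steps 1 and 2 gives $\Exp(a_j^4) \leq \Exp(b_j^4) = O(r_j^2)$, as claimed.

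\textbf{Where the subtlety lies.} There is no real technical obstacle; the only thing to watch out for is that the naive bound ${\xi_j'}^4 \leq r_j^2\, {\xi_j'}^2$ combined with Proposition \ref{prop:aasy} would yield $\Exp({\xi_j'}^4) = O(r_j^2 \log j)$, which is too weak. One must instead integrate $x^4 \Psi_0(x)$ directly and exploit the full $x^{-3}$ decay of the marginal free-path-length density. Once this is noticed the calculation is immediate.
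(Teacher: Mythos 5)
Your proof is correct and follows essentially the same route as the paper's. The conditional Cauchy--Schwarz step $b_j^4 = \bigl(\Exp(\xi_j'^2\mid\uveceta)\bigr)^2 \leq \Exp(\xi_j'^4\mid\uveceta)$ is exactly the integral-level Cauchy--Schwarz inequality $K_{2,r_j}(\vecw,\vecz)^2 \leq K_{4,r_j}(\vecw,\vecz)\,K_0(\vecw,\vecz)$ that the paper applies under the double integral $\frac{1}{v_{d-1}}\int\int \frac{K_{2,r_j}^2}{K_0}$, and your concluding reduction to $\int_0^{r_j} x^4\Psi_0(x)\,dx$ together with \eqref{PHIBARXILARGETHMRES1} matches the paper's final step. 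Your remark about the naive bound $\xi_j'^4 \leq r_j^2\,\xi_j'^2$ losing a factor of $\log j$ is a good catch: with only $O(r_j^2\log j)$ the variance estimate in the proof of Lemma \ref{mainlem2} would require $\gamma<1$, which is incompatible with the truncation choice $\gamma\in(1,2)$.
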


\begin{proof}
We have
\begin{equation}
\begin{split}
\Exp(b_j^{4}) & = \frac{1}{v_{d-1}} \int_\UB \int_\UB  \frac{K_{2,r_j}(\vecw,\vecz)^2}{K_0(\vecw,\vecz)}\,d\vecw\,d\vecz \\
& \leq 2 \int_0^{r_j} x^{4} \Psi_0(x)\, dx ,
\end{split} 
\end{equation}
and the claim follows from \eqref{PHIBARXILARGETHMRES1}.
\end{proof}

\begin{prop}\label{betatail2}
Let $d=2$. For $u\to\infty$,
\begin{equation}\label{aasy}
\Prob(\alpha_n> u) \sim \frac{1}{2\pi^2 u^2} .
\end{equation}
\begin{equation}\label{basy}
\Prob(\beta_n> u) \sim \frac{1}{2\pi^2 u^2} .
\end{equation}
\end{prop}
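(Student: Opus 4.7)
The plan is to reduce $\Prob(\beta_n > u)$ to an explicit two-dimensional integral and extract its asymptotics, closely paralleling the proof of Proposition \ref{prop:tail2}. Using the symmetries $K_p(w,z) = K_p(z,w) = K_p(-w,-z)$ of Lemma \ref{six-one}, I would first rewrite $\Prob(\beta_n > u) = 2\int_{|z|<w<1}\!\ind{K_2 > u^2 K_0}\, K_0\, dw\, dz$ and then apply the substitution $x=w-z$, $y=(1+z)/(w-z)$ used in Proposition \ref{prop:tail2}. On the region $w+z\geq 0$ the formulas of Lemma \ref{six-one} yield
\begin{equation*}
\beta_n^2 = \frac{2y+1}{6 x^2 y^2 (y+1)^2 \log(1+y^{-1})}, \qquad K_0 = \frac{6\log(1+y^{-1})}{\pi^2 x},
\end{equation*}
so the event $\beta_n > u$ translates into the threshold $x < x^\ast(y) := u^{-1}\sqrt{(2y+1)/[6 y^2(y+1)^2 \log(1+y^{-1})]}$, over the range $y>0$, $2/(2y+1)<x<2/(y+1)$ already identified in the proof of Proposition \ref{prop:tail2}.

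For large $u$ the function $x^\ast(y)$ exceeds $2/(y+1)$ only in a small initial interval $(0, y_B)$ and falls below $2/(2y+1)$ for $y > y_A$, with $y_A$ slightly larger than $y_B$, producing the decomposition
\begin{equation*}
\Prob(\beta_n>u) = \frac{12}{\pi^2}\int_0^{y_B}\!\log(1+y^{-1})\,\frac{2y}{(y+1)(2y+1)}\,dy \; + \; \frac{12}{\pi^2}\int_{y_B}^{y_A}\!\log(1+y^{-1})\Big(x^\ast(y)-\frac{2}{2y+1}\Big)dy.
\end{equation*}
The defining equation $24 u^2 y_B^2\log(1+y_B^{-1}) = 2y_B+1$ gives $y_B \sim (u\sqrt{24\log u})^{-1}$ and $y_B^2\log(1/y_B) \sim 1/(24 u^2)$; replacing the first integrand by its leading asymptotics $2y\log(1/y)$ (with integrable error of order $y^2\log(1/y)$, contributing remainder $o(u^{-2})$) and integrating by parts evaluates the first integral to $y_B^2\log(1/y_B)(1+o(1)) \sim 1/(24 u^2)$, whence the main contribution is $(2\pi^2 u^2)^{-1}$. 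Matching the two threshold equations to next order gives $y_A-y_B = O(1/(u^2\log u))$, so the second integral is at most $\log u \cdot 2y_B \cdot (y_A-y_B) = O(u^{-3}(\log u)^{-1/2})$, proving \eqref{basy}.

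Statement \eqref{aasy} will follow from \eqref{basy} combined with Proposition \ref{prop:tail2}: the upper bound is immediate from $\alpha_n\leq\beta_n$, and for the lower bound I would choose $\delta_u := (\log u)^{-1/4}$ and use
\begin{equation*}
\Prob(\alpha_n > u) \geq \Prob(\beta_n > u\sqrt{1+\delta_u^2}) - \Prob(\mu_n > u\delta_u).
\end{equation*}
By \eqref{basy} the first term is $(2\pi^2 u^2)^{-1}(1+o(1))$, while Proposition \ref{prop:tail2} bounds the second by $O((u\delta_u)^{-2}(\log u)^{-1}) = O(u^{-2}(\log u)^{-1/2})$. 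The main technical hurdle throughout is the uniform control of the small-$y$ expansions up to the critical scale $y\sim (u\sqrt{\log u})^{-1}$, and verifying that the transitional interval $[y_B,y_A]$ contributes only a lower-order term.
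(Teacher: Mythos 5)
Your proposal is correct and takes essentially the same route as the paper: the same variable substitution $x=w-z$, $y=(1+z)/(w-z)$ with $f(y)=6y^2(1+y)^2(1+2y)^{-1}\ln(1+y^{-1})$, the same identification of the boundary region $y\sim(u\sqrt{\log u})^{-1}$ as the leading contribution to \eqref{basy}, and the same deduction of \eqref{aasy} from $\beta_n^2-\alpha_n^2=\mu_n^2$ and Proposition~\ref{prop:tail2}. Your explicit bounds on the transitional interval $[y_B,y_A]$ and on the lower tail of $\alpha_n$ merely make precise what the paper leaves implicit.
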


\begin{proof}
We follow the same strategy as in the proof of Proposition \ref{prop:tail2}. For 
\begin{equation}
f(y)=\frac{6y^2(1+y)^2}{1+2y} \,\ln(1+y^{-1}),
\end{equation}
the variable substitution \eqref{wxyz} yields
\begin{equation}
\begin{split}
\Prob(\beta_n>u)
= &\frac{12}{\pi^2} \int_{0}^\infty \int_{\frac{2}{1+2y}}^{\frac{2}{1+y}} \ind{x^2 f(y)<u^{-2}} \ln(1+y^{-1})\; dx\, dy \\
=&\frac{12}{\pi^2} \int_{0}^\infty \ind{\frac{4 f(y)}{(1+y)^2}<u^{-2}} \left( \frac{2}{1+y}-\frac{2}{1+2y}\right)  \ln(1+y^{-1}) \; dy\\ 
& +\frac{12}{\pi^2} \int_{0}^\infty \ind{\frac{4 f(y)}{(1+2y)^2}<u^{-2}<\frac{4 f(y)}{(1+y)^2}} \left( \frac{1}{uf(y)}-\frac{2}{1+2y}\right)  \ln(1+y^{-1}) \; dy .
\end{split}
\end{equation}
The leading order contribution comes from the first term, which evaluates to
\begin{multline}
\frac{12}{\pi^2} \int_{0}^\infty \ind{\frac{24 y^2}{1+2y} \ln(1+y^{-1})<u^{-2}} \left(2y+O(y^2)\right)  \ln(1+y^{-1}) \; dy\\ 
 = \frac{12}{\pi^2} \left(y^2 \ln(1+y^{-1}) + O(y^2) \right)  \bigg|_{0_+}^{\frac{24 y^2}{1+2y} \ln(1+y^{-1})=u^{-2}} = \frac{1}{2\pi^2 u^2} (1+o(1)).
\end{multline} 
This proves \eqref{basy}. To see that \eqref{aasy} has the same asymptotics, recall that $\beta_n^2-\alpha_n^2=\mu_n^2$ and \eqref{masy}.
\end{proof}

\begin{prop}\label{betatail3}
Let $d\geq 3$. There are constants $c_2> c_1>0$, such that for $u\geq 1$,
\begin{equation}\label{00bupper}
c_1 u^{-2} \leq \Prob(\alpha_n> u)\leq \Prob(\beta_n> u) \leq c_2 u^{-2} .
\end{equation}
\end{prop}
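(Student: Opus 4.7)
The proof follows the template of Proposition~\ref{prop:tail3}, with $K_2$ and Lemma~\ref{K2upperlower} replacing $K_1$ and Lemma~\ref{K1upperlower}. The inequality $\Prob(\alpha_n > u) \le \Prob(\beta_n > u)$ is immediate from $\alpha_n^2 = \beta_n^2 - \mu_n^2 \le \beta_n^2$, so the genuine content lies in the upper bound on $\Prob(\beta_n > u)$ and the lower bound on $\Prob(\alpha_n > u)$.

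For the upper bound, Lemma~\ref{K0upperlower} reduces $\Prob(\beta_n>u)=v_{d-1}^{-1}\int\!\!\int_{K_2>u^2K_0}K_0\,d\vecw\,d\vecz$ to estimating $\meas\{(\vecw,\vecz)\in\UB\times\UB:K_2(\vecw,\vecz)>cu^2\}$, since $K_0\asymp 1$. The forward-scattering contribution from $\varphi\in[0,\pi/2]$ is of strictly lower order via Lemma~\ref{K2upperlower}, so I focus on $\varphi\in[\pi/2,\pi]$. In coordinates $(x,y,\phi)=(1-w,1-z,\pi-\varphi)$ the invariant measure factors as $\phi^{d-3}\,dx\,dy\,d\phi$ for small $(x,y,\phi)$, and $\meas\{\max(x,y)\in ds\}\asymp 2s\,ds$. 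I would split at the crossover $\phi=\sqrt{t}$ with $t=\max(x,y)$. In the regime $\phi<\sqrt{t}$ the bound $K_2\ll t^{-(1+d/2)}$ forces $t<c\,u^{-4/(d+2)}$, and the resulting level set has measure $\int_0^{c\,u^{-4/(d+2)}} 2s\cdot s^{(d-2)/2}\,ds\asymp u^{-2}$. In the regime $\phi>\sqrt{t}$ I would invoke the pointwise estimate \eqref{CYLINDER2PTSMAINTHMRES} and split the $x$-integral in $K_2(\vecw,\vecz)=\int_0^{x_0} x^2\Psi_0\,dx$ at $x=(\pi-\varphi)^{-d}$, obtaining the sharper bound $K_2\ll t^{-1}\phi^{-d}$; combined with $t<\phi^2$ and $t<c\,u^{-2}\phi^{-d}$, whose crossover lies at $\phi=c\,u^{-2/(d+2)}$, this again contributes $\asymp u^{-2}$.

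The lower bound on $\Prob(\beta_n>u)$ is immediate from the estimate $K_2\gg\min(t^{-(1+d/2)},(\pi-\varphi)^{-(d+2)})$ in Lemma~\ref{K2upperlower}: the simultaneous constraints $t<c'u^{-4/(d+2)}$ and $\phi<c'u^{-2/(d+2)}$ define a subset of $\{K_2>cu^2\}$ of measure $\asymp(u^{-4/(d+2)})^2(u^{-2/(d+2)})^{d-2}=u^{-2}$. To transfer this lower bound from $\beta_n$ to $\alpha_n$, I would combine $\beta_n^2=\alpha_n^2+\mu_n^2$ with Proposition~\ref{prop:tail3}, which gives $\Prob(\mu_n>u)=O(u^{-(1+d/2)})=o(u^{-2})$ for $d\ge 3$: the inclusion $\{\beta_n>u\sqrt{2}\}\subseteq\{\alpha_n>u\}\cup\{\mu_n>u\}$ then yields $\Prob(\alpha_n>u)\ge\Prob(\beta_n>u\sqrt{2})-\Prob(\mu_n>u)\gg u^{-2}$. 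The main obstacle is the upper bound in dimensions $d\ge 4$, where the second term inside the minimum from Lemma~\ref{K2upperlower}, namely $t^{-(d+1)/2+1/d}\phi^{-(1+2/d)}$, is too lossy in the regime $\phi\gg\sqrt{t}$ and would produce a suboptimal exponent; the refined pointwise estimate \eqref{CYLINDER2PTSMAINTHMRES} is essential to recover the sharp scaling $u^{-2}$ uniformly in $d$.
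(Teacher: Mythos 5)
Your reading of the paper is essentially correct, and in particular you have put your finger on a genuine gap in the paper's own argument for the upper bound.

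The lower bound on $\Prob(\beta_n>u)$, and the step from $\beta_n$ to $\alpha_n$ via $\beta_n^2=\alpha_n^2+\mu_n^2$ together with Proposition~\ref{prop:tail3}, are exactly what the paper does; no issue there.

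For the upper bound the paper writes, with $\gamma=\frac{2d+4}{d(d+1)-2}=\frac{2}{d-1}$,
\begin{equation*}
u^{-2\beta}\int_{u^{(\alpha-\beta)/\gamma}}^{\pi/2}\phi^{-2\gamma}(\sin\phi)^{d-3}\,d\phi
\;\ll\; u^{-2\beta}\,u^{(d-2-2\gamma)\frac{\alpha-\beta}{\gamma}}\;=\;u^{-2},
\end{equation*}
which is valid only when the lower endpoint of integration is what drives the size of the integral, i.e.\ when $d-3-2\gamma<-1$, equivalently $(d-1)(d-2)<4$. This holds for $d=3$ but \emph{fails} for every $d\ge4$: the integral is $O(1)$, and the resulting bound is only $u^{-2\beta}$, with $2\beta=\frac{8d}{(d+2)(d-1)}<2$ for $d\ge4$. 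So Lemma~\ref{K2upperlower} alone is not sharp enough, exactly as you observed, and your substitute bound $K_2\ll t^{-1}\phi^{-d}$ in the regime $\phi>\sqrt t$, derived by splitting the $x$-integral in \eqref{CYLINDER2PTSMAINTHMRES} at $x=\phi^{-d}$, is correct (the exponent bookkeeping gives $da+\tfrac{2}{d-1}=d$ and $x_0^{2/(d-1)}\asymp t^{-1}\phi^{-2/(d-1)}$), and yields the desired $\asymp u^{-2}$ uniformly in $d\ge 3$.

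One caveat you should address: you assert that the forward-scattering contribution $\varphi\in[0,\pi/2]$ is of strictly lower order \emph{via Lemma~\ref{K2upperlower}}. That is only true for $d\le5$. For $\varphi\in(\sqrt t,\pi/2]$ the lemma gives $K_2\ll t^{-(d-1)/2}\varphi$, and repeating your level-set computation produces a contribution of order $u^{-8/(d-1)}$, which exceeds $u^{-2}$ as soon as $d\ge6$. The fix is the same as in the backward regime: use \eqref{CYLINDER2PTSMAINTHMRES} directly to sharpen to $K_2\ll t^{-1}\varphi^{4-d}$ for $\varphi\in(\sqrt t,\pi/2]$, after which the forward contribution is $O(u^{-2(d+2)/(d-2)})$, genuinely of lower order for all $d\ge3$. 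With that supplement your argument proves the upper bound for all $d\ge 3$, whereas the paper's displayed computation only does so for $d=3$.
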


\begin{proof}
We exploit the bounds in Lemma \ref{K2upperlower}. For the upper bound,
\begin{equation}\label{00st122}
\int_{K_2> u^2 K_0} K_0 \, d\vecw\,d\vecz 
\ll \int\limits_{\substack{\max( t^{-\frac{d-2}{2}}, t^{-\frac{d-1}{2}} \varphi) \gg u^2 \\ 0 \leq \varphi\leq \frac{\pi}{2}}} d\vecw\,d\vecz
+  \int\limits_{\substack{\min( t^{-(1+\frac{d}{2})} , t^{-\frac{d+1}{2}+\frac1d} (\pi-\varphi)^{-(1+\frac2d)})  \gg u^2\\ \frac{\pi}{2} \leq \varphi\leq \pi}} d\vecw\,d\vecz  .
\end{equation}
Set
\begin{equation}
\alpha:=\frac{4}{d+2},\qquad \beta:= \frac{4d}{d(d+1)-2}, \qquad \gamma:=\frac{2d+4}{d(d+1)-2},
\end{equation}
and note that
\begin{equation}
\frac{\alpha-\beta}{\gamma}= -\frac{2}{d+2}.
\end{equation}
Using polar co-ordinates as before, the second term in \eqref{00st122} evaluates to
\begin{equation}
\begin{split}
& \int_0^{\pi/2}
\min\big(u^{-\alpha}, u^{-\beta} \phi^{-\gamma} \big)^2 (\sin\phi)^{d-3} d\phi \\
& = u^{-2\alpha} \int_0^{u^{\frac{\alpha-\beta}{\gamma}}} (\sin\phi)^{d-3} d\phi +
u^{-2\beta}  \int_{u^{\frac{\alpha-\beta}{\gamma}}}^{\pi/2} \phi^{-2\gamma}  (\sin\phi)^{d-3} d\phi \\
& \ll u^{-2\alpha}  u^{(d-2) \frac{\alpha-\beta}{\gamma}} + u^{-2\beta}  u^{(d-2-2\gamma)\frac{\alpha-\beta}{\gamma}} = 2 u^{-2}.
\end{split}
\end{equation}
A similar calculation shows that the first term in \eqref{00st122} produces a lower order contribution.
This establishes the upper bound in \eqref{00bupper}. For the lower bound for $\Prob(\beta_n> u)$,
\begin{equation}\label{00st1122}
\begin{split}
\int_{K_2> u^2 K_0} K_0 \, d\vecw\,d\vecz 
& \gg \int\limits_{\substack{\min( t^{-(1+\frac{d}{2})} , (\pi-\varphi)^{-(d+2)} ) \gg u^2\\ \frac{\pi}{2} \leq \varphi\leq \pi}} d\vecw\,d\vecz  \\
& \gg u^{-2\alpha} \int_0^{u^{-\frac{2}{d+2}}} (\sin\phi)^{d-3} d\phi \\ 
& \gg u^{-2\alpha} u^{-2\frac{d-2}{d+2}} = u^{-2} .
\end{split}
\end{equation}
The lower bound for $\Prob(\alpha_n> u)$ follows by combining the lower bound for $\Prob(\beta_n> u)$ with \eqref{00mupper}. 
\end{proof}

\section{Spectral gaps}\label{sec:spectral}

Let $V$ be a finite-dimensional real vector space with inner product $\langle\,\cdot\, , \,\cdot\,\rangle_V$ and norm $\| x \|:=\langle x,x\rangle^{1/2}$.
Denote by $\scrH=\L^2(\UB,V, v_{d-1}^{-1} d\vecw)$ the Hilbert space of square-integrable functions $\UB\to V$ with inner product 
\begin{equation}
\langle f,g \rangle := \frac{1}{ v_{d-1}} \int_\UB \langle f(\vecw), g(\vecw)\rangle_V \, d\vecw,
\end{equation}
and norm $\| f \| := \langle f,f \rangle^{1/2}$. We will also denote by $\|\,\cdot\,\|$ the corresponding operator norm on $\scrH\to\scrH$. In the following, $(\rho,V)$ will denote a representation of $\SO(d)$ with group homomorphism $\rho: \SO(d)\to \O(V)$. 

Define the following operators on $\scrH$:
\begin{equation}
Pf(\vecw):= \int_\UB K_0(\vecw,\vecz) f(\vecz) \,d\vecz,
\end{equation}
\begin{equation}
\Pi f(\vecw):= \frac{1}{ v_{d-1}} \int_\UB  f(\vecz) \,d\vecz,
\end{equation}
\begin{equation}
Uf(\vecw):= \rho(S(\vecw)) f(\vecw)  .
\end{equation}
We have
\begin{equation}\label{commute}
\Pi P = P \Pi = \Pi .
\end{equation}
Denote by $\scrH_0:=\Pi\scrH$ the subspace of constant functions, and by $\scrH_1=(I-\Pi)\scrH$ its orthogonal complement. (This means that all components of $f\in\scrH_1$ have zero mean.) Note that for $f\in\scrH_0$ we have $Pf=f$, and for  $f\in\scrH_1$ we have $Pf\in\scrH_1$.

\begin{prop}\label{prop:Pgap}
The operator $P$ has the spectral gap $1-\omega_0$ with
\begin{equation}\label{nu0}
\omega_0 := \| P-\Pi\| \leq 1-\frac{1}{2^d\zeta(d)}.
\end{equation}
\end{prop}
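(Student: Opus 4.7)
My plan is to prove the spectral gap by a Doeblin-style decomposition of the kernel. The single key input is a uniform positive lower bound on $K_0$: using the inequality $\Psi_0(\vecw,x,\vecz)\geq (1-2^{d-1}v_{d-1}x)/\zeta(d)$ recorded in \eqref{PHI0ZEROSMALLTHMRES} and integrating over $x\in[0,(2^{d-1}v_{d-1})^{-1}]$ (exactly as in Lemma \ref{K0upperlower}, which extends verbatim to $d=2$), one obtains
\begin{equation*}
K_0(\vecw,\vecz)\geq \kappa:=\frac{1}{2^d v_{d-1}\zeta(d)} \qquad (\vecw,\vecz\in\UB).
\end{equation*}
Setting $c:=\kappa v_{d-1}=\frac{1}{2^d\zeta(d)}$, the kernel splits as $K_0(\vecw,\vecz)=\kappa+k'(\vecw,\vecz)$ with $k'\geq 0$ and $k'$ symmetric. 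Since $\int_\UB K_0(\vecw,\vecz)\,d\vecz=1$ (the stationary marginal of the Markov chain is uniform, equivalently $\overline\Psi_0\equiv v_{d-1}^{-1}$), we get $\int_\UB k'(\vecw,\vecz)\,d\vecz=1-c$ for every $\vecw$.

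Translating to operators, the constant piece $\kappa$ produces precisely $c\Pi$, so
\begin{equation*}
P=c\,\Pi+Q,\qquad Qf(\vecw):=\int_\UB k'(\vecw,\vecz)\,f(\vecz)\,d\vecz.
\end{equation*}
The next step is to bound $\|Q\|_{\scrH\to\scrH}\leq 1-c$. For $f\in\scrH$, apply the Cauchy–Schwarz inequality with respect to the sub-probability measure $k'(\vecw,\cdot)\,d\vecz$ of total mass $1-c$:
\begin{equation*}
\|Qf(\vecw)\|_V^2\leq (1-c)\int_\UB k'(\vecw,\vecz)\,\|f(\vecz)\|_V^2\,d\vecz.
\end{equation*}
Integrating against $v_{d-1}^{-1}d\vecw$ and using Fubini together with the symmetry $k'(\vecw,\vecz)=k'(\vecz,\vecw)$ (which gives $\int k'(\vecw,\vecz)\,d\vecw=1-c$ as well) yields $\|Qf\|^2\leq (1-c)^2\|f\|^2$.

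Finally, I decompose $f=\Pi f+(I-\Pi)f$. Since $P\Pi=\Pi$ we have $(P-\Pi)\Pi f=0$, while on $\scrH_1=(I-\Pi)\scrH$ we use $\Pi g=0$ to get $(P-\Pi)g=Pg=Qg$. Therefore
\begin{equation*}
\|(P-\Pi)f\|=\|Q(I-\Pi)f\|\leq (1-c)\,\|(I-\Pi)f\|\leq (1-c)\,\|f\|,
\end{equation*}
which is exactly $\omega_0=\|P-\Pi\|\leq 1-\frac{1}{2^d\zeta(d)}$. The only step requiring any care is checking that $Q$ is a contraction with norm $1-c$, and for this the symmetry of $k'$ is essential — otherwise one would only get $\|Q\|_{L^\infty\to L^\infty}\leq 1-c$ and would need interpolation. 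No representation-theoretic input is needed here; the operator $U$ defined in the section plays no role in this proposition and will enter only in subsequent statements.
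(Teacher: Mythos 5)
Your proof is correct and follows essentially the same Doeblin minorization used in the paper: split off a constant part of the kernel $K_0$ using the uniform lower bound from \eqref{PHI0ZEROSMALLTHMRES} (which, as you rightly note, holds for all $d\geq 2$, so \eqref{K0lower} applies even though Lemma \ref{K0upperlower} is stated for $d\geq 3$), and contract the remainder on $\scrH_1$. The only cosmetic difference is that you prove the $\L^2$ contraction of $Q$ explicitly via Cauchy--Schwarz and the symmetry of the kernel, whereas the paper writes $P=\Pi+(1-J)(I-\Pi)Q(I-\Pi)$ with $Q$ a stochastic operator and quotes $\|Q\|\leq 1$; the underlying mechanism and the resulting bound are the same.
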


\begin{proof}
This follows from the standard Doeblin argument. Note that, since $K_0(\vecw,\vecz)$ is the kernel of a stochastic transition operator with respect to $d\vecw$ on $\UB$, we have 
\begin{equation}
J:= v_{d-1} \inf K_0(\vecw,\vecz)\leq 1. 
\end{equation}
If $J=1$, we have $P=\Pi$ and thus $\omega_0=0$. Assume therefore $0\leq J<1$.
Then
\begin{equation}
Q:= (1-J)^{-1} ( P - J \Pi)
\end{equation}
is itself a stochastic transition operator, with the same stationary measure. Using \eqref{commute}, we can write
\begin{equation}
P=\Pi + (1-J) (I-\Pi)Q(I-\Pi) ,
\end{equation}
and so
\begin{equation}
\|P-\Pi\| \leq (1-J) \| Q \| = 1-J .
\end{equation}
The claim of the proposition now follows from \eqref{K0lower}.
\end{proof}

\begin{lem}\label{repgap}
Let $\theta:[0,1)\to \RR$ be measurable, so that
\begin{equation}\label{irr}
\meas\{ w\in[0,1) : \theta(w) \notin \pi \QQ \} >0 ,
\end{equation}
and let $(\rho,V)$ be a non-trivial irreducible representation of $\SO(d)$. Then
\begin{equation}
\delta_\rho:=\bigg\| \frac{1}{ v_{d-1}} \int_\UB \rho(S(\vecw)) \, d\vecw \bigg\| < 1.
\end{equation}
\end{lem}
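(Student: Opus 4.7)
The plan is to argue by contradiction. Denote by $T$ the averaged operator. Since each $\rho(S(\vecw))$ is orthogonal, the triangle inequality yields $\delta_\rho = \|T\| \leq 1$. Suppose for contradiction that $\|T\| = 1$; by finite-dimensionality there exists a unit vector $v \in V$ with $\|Tv\| = 1$. Setting $u := Tv$ one computes
\begin{equation*}
1 = \|u\|_V^2 = \frac{1}{v_{d-1}}\int_\UB \langle u, \rho(S(\vecw))v\rangle_V\, d\vecw,
\end{equation*}
and since each integrand is at most $\|u\|_V\|v\|_V = 1$, equality in Cauchy--Schwarz forces $\rho(S(\vecw))v = u$ for almost every $\vecw \in \UB$.

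The central task is then to show that the closed stabilizer $N := \{g \in \SO(d) : \rho(g)v = v\}$ equals all of $\SO(d)$. Writing $\vecw = w\hatw$ with $w \in [0,1)$ and $\hatw \in \S^{d-2}$, Fubini gives that for a.e.\ $w$, the identity $\rho(E(\theta(w)\hatw))v = u$ holds for a.e.\ $\hatw \in \S^{d-2}$. For $d \geq 3$ the continuity of $\hatw \mapsto \rho(E(\theta(w)\hatw))v$ together with the connectedness of $\S^{d-2}$ promotes this to an identity valid for \emph{every} $\hatw$. By hypothesis \eqref{irr} one picks such a $w_0$ with $\theta(w_0) \notin \pi\QQ$; substituting $\hatw$ and $-\hatw$ and using $E(-\theta\hatw)^{-1} = E(\theta\hatw)$ yields $E(2\theta(w_0)\hatw) \in N$ for every $\hatw \in \S^{d-2}$. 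Because $2\theta(w_0)/\pi$ is irrational, the powers of $E(2\theta(w_0)\hatw)$ are dense in the one-parameter subgroup $\{E(t\hatw) : t \in \RR\}$, so by closedness $N$ contains this entire $\SO(2)$-subgroup for every $\hatw \in \S^{d-2}$. A routine Lie-algebra computation, in which the brackets of the infinitesimal generators of these subgroups produce rotation generators in all 2-planes orthogonal to $\vece_1$, shows that together they span $\mathfrak{so}(d)$; connectedness of $\SO(d)$ then gives $N = \SO(d)$. The case $d = 2$ is similar but simpler: $\S^0 = \{\pm 1\}$ is discrete so no continuity step is needed, and the a.e.\ identity directly gives $E(2\theta(w)) \in N$ for a.e.\ $w$; picking $w$ with $\theta(w) \notin \pi\QQ$ and invoking density of its powers in $\SO(2)$ again forces $N = \SO(2)$.

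Either way we conclude that $\rho(g)v = v$ for every $g \in \SO(d)$, so $\RR v$ is a non-zero $\rho$-invariant subspace on which $\rho$ acts trivially. This contradicts the assumed non-trivial irreducibility of $(\rho, V)$. The main obstacle is the passage from the almost-everywhere identity on $\UB$ to an identity holding for every $\hatw \in \S^{d-2}$ (in the case $d \geq 3$), which relies essentially on the connectedness of the sphere together with continuity of the representation; the remaining Lie-theoretic bookkeeping is standard.
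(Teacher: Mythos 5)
Your argument is correct, and it takes a genuinely different route from the paper's. The paper reduces the statement to a classical fact about random walks on compact groups: it observes that condition \eqref{irr} forces the support of the push-forward measure $\Lambda$ on $\SO(d)$ to generate a dense subgroup, and then cites Stromberg's theorem (1960), which says that for any probability measure on a compact group whose support generates a dense subgroup, the spectral radius of its Fourier transform at a non-trivial irreducible representation is strictly less than $1$. You instead re-prove this fact from scratch in the present setting: starting from the equality case of Cauchy--Schwarz you extract a unit vector $v$ and a target vector $u$ with $\rho(S(\vecw))v=u$ a.e., turn the symmetry $\hatw\leftrightarrow-\hatw$ into membership of $E(2\theta(w_0)\hatw)$ in the stabilizer of $v$, and then use the density of the cyclic subgroup (from $\theta(w_0)\notin\pi\QQ$) together with a Lie-algebra span computation to drive the stabilizer to all of $\SO(d)$, contradicting non-trivial irreducibility. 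Your proof is longer but self-contained; the paper's is shorter but delegates the core of the argument to the literature. One small remark on your write-up: in the step promoting the a.e.\ identity on $\S^{d-2}$ to an everywhere identity, what you actually need is not connectedness but the fact that the surface measure on $\S^{d-2}$ has full support (so a continuous function equal to a constant off a null set is equal to that constant everywhere); your appeal to connectedness is a harmless slip since the measure-theoretic point carries the argument regardless. Also, when you pick $w_0$, you implicitly intersect the full-measure set delivered by Fubini with the positive-measure set from \eqref{irr}, which is fine but worth stating.
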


\begin{proof}
For any fixed $\vece\in\US$, let $\Lambda_\vece$ be the push-forward of Lebesgue measure on $[0,1)$ under the map
\begin{equation}
[0,1)\to \SO(d), \qquad w\mapsto S(w\vece) .
\end{equation}
The group generated by the support of $\Lambda_\vece$ is, by assumption \eqref{irr}, dense in 
the subgroup 
\begin{equation}
\big\{ E(\phi\vece) : \phi\in[0,2\pi) \big\} \simeq \SO(2) ,
\end{equation}
with $E(\vecx)$ as in \eqref{SbLG222}.
Next, let $\Lambda$ be the push-forward of Lebesgue measure on $\UB$ under the map 
\begin{equation}
\UB\to \SO(d), \qquad \vecw\mapsto S(\vecw). 
\end{equation}
The above observation, together with the fact that 
\begin{equation}
\big\{E(\vecx) : \vecx\in\scrB_{2\pi}^{d-1} \big\} 
\end{equation}
generates $\SO(d)$, implies that the group generated by the support of $\Lambda$ is dense in $\SO(d)$.
The claim now follows from well known arguments \cite{Stromberg:1960vj}.
\end{proof}

\begin{prop}\label{prop2}
Let $\theta$ and $(\rho,V)$ be as in Lemma \ref{repgap}. Then the operator $PU$ has spectral radius
\begin{equation}\label{srad}
\omega_\rho := \lim_{n\to\infty} \| (PU)^n \|^{1/n} <1.
\end{equation}
\end{prop}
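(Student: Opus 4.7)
The plan is to combine compactness of $PU$ with a rigidity argument that forces any fixed vector on the unit circle to collapse to a common eigenvector of all $\rho(S(\vecw))$, which is then ruled out by Lemma \ref{repgap}. First I observe that the kernel $K_0(\vecw,\vecz)$ is bounded (by Lemma \ref{K0upperlower} for $d\geq 3$, and by the explicit formula \eqref{K0} for $d=2$), so $P$ is a Hilbert--Schmidt integral operator on $\scrH$ and hence compact. Since $\rho(S(\vecw))\in\O(V)$ pointwise, $U$ is an isometry, so $PU$ is compact and $\|PU\|\leq 1$.

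Next I argue by contradiction: assume $\omega_\rho=1$. Compactness of $PU$ yields an eigenvalue $\lambda\in\CC$ with $|\lambda|=1$ and some $f\in\scrH\setminus\{0\}$ with $PUf=\lambda f$. Taking norms gives
\begin{equation}
\|f\| \;=\; |\lambda|\,\|f\| \;=\; \|PUf\| \;\leq\; \|Uf\| \;=\; \|f\|,
\end{equation}
so equality holds throughout. The critical step is to deduce from $\|PUf\|=\|Uf\|$ that $Uf$ is constant. I plan to do this by applying Cauchy--Schwarz to $Pg(\vecw)=\int K_0(\vecw,\vecz) g(\vecz)\,d\vecz$ with $g=Uf$: componentwise in $V$ one has $\|Pg(\vecw)\|_V^2 \leq \int K_0(\vecw,\vecz)\|g(\vecz)\|_V^2\,d\vecz$, and integrating in $\vecw$ using $\int K_0(\vecw,\vecz)\,d\vecw=1$ (stochasticity in the reversed direction, by symmetry of $\Psi_0$) gives $\|Pg\|\leq\|g\|$. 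Equality in Cauchy--Schwarz, together with the strict positivity of $K_0$ on $\UB\times\UB$ provided by \eqref{PHI0ZEROSMALLTHMRES}, forces $g(\vecw)$ to be a.e.\ equal to a constant vector $c\in V$.

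Once $Uf\equiv c$ is established, the proof finishes quickly. On the one hand $f(\vecw)=\rho(S(\vecw))^{-1}c$; on the other, since $P$ preserves constants, $PUf=Pc=c$, and combined with $PUf=\lambda f$ this gives $f\equiv c/\lambda$. Equating the two expressions yields $\rho(S(\vecw))c=\lambda c$ for a.e.\ $\vecw\in\UB$. Averaging over $\vecw$ produces $Mc=\lambda c$ with $M=v_{d-1}^{-1}\int_\UB\rho(S(\vecw))\,d\vecw$, so that
\begin{equation}
\|c\| \;=\; |\lambda|\,\|c\| \;=\; \|Mc\| \;\leq\; \|M\|\,\|c\| \;=\; \delta_\rho\,\|c\|,
\end{equation}
contradicting $\delta_\rho<1$ from Lemma \ref{repgap} unless $c=0$, which in turn gives $f=0$.

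The main subtlety I would have to be careful with is the vector-valued equality case in Cauchy--Schwarz: strict positivity of $K_0$ (not merely positivity on a full-measure set) is what allows us to conclude that $Uf$ is constant \emph{everywhere}, and I would cite \eqref{PHI0ZEROSMALLTHMRES} (or equivalently \eqref{K0lower}) for this. The rest of the argument is purely soft, and is really a quantitative form of Schur's lemma packaged through $M$.
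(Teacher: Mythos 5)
Your argument is correct and takes a genuinely different route from the paper's. The paper proves the stronger quantitative assertion $\|PUP\|<1$ by decomposing $f=\alpha f_0+f_1$ with $f_0$ constant and $f_1$ mean-zero, applying the spectral gap of $P$ (Proposition~\ref{prop:Pgap}) to get $\|Pf_1\|\le\omega_0\|f_1\|$, feeding $\delta_\rho<1$ from Lemma~\ref{repgap} into the constant direction, and checking via the discriminant of the quadratic form $B(\alpha)$ that the Rayleigh quotient stays strictly below $1$. You instead exploit compactness of $PU$, which the paper never invokes: after complexifying, a spectral radius equal to $1$ would be realized by an eigenfunction $f$, equality in the Jensen step $\|PUf\|\le\|Uf\|$ together with $K_0>0$ forces $Uf$ to be a.e.\ constant, and the resulting identity $\rho(S(\vecw))c=\lambda c$ contradicts $\delta_\rho<1$ after averaging over $\vecw$. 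One factual slip: $K_0$ is \emph{not} bounded when $d=2$; the closed form \eqref{K0} has a logarithmic singularity as $(w,z)\to(\pm1,\mp1)$ (e.g.\ $K_0(w,z)\sim\tfrac{3}{\pi^2}\log\tfrac{1}{1+z}$ near $(1,-1)$). It is, however, still in $\L^2(\UB\times\UB)$, which is all the Hilbert--Schmidt argument needs; also note that the strict positivity $\inf K_0>0$ holds for all $d\ge2$, since the proof of \eqref{K0lower} uses only \eqref{PHI0ZEROSMALLTHMRES} and is not special to $d\ge3$. The trade-off is that the paper's route gives, in principle, explicit control of $\omega_\rho$ in terms of $\omega_0$ and $\delta_\rho$ and does not rely on $K_0\in\L^2$, while yours is softer, bypasses Proposition~\ref{prop:Pgap}, and yields only $\omega_\rho<1$. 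Since the later applications in Propositions~\ref{prop:exp_mix} and \ref{prop:exp_exp} use only $\omega_\rho<1$, your argument suffices for the paper's purposes.
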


\begin{proof}
It is sufficient to prove $\|PUP\|<1$, that is
\begin{equation}\label{sup1}
\sup_{f\in\scrH : f\neq 0} \frac{\|PUPf \|^2}{\|f\|^2} < 1.
\end{equation}
We may restrict to functions of the form $f=\alpha f_0+f_1$, where $\alpha>0$ and $f_0\in\scrH_0$, $f_1\in\scrH_1$ with $\|f_0\|=\|f_1\|=1$. Note that in this case $\|f\|^2=\alpha^2+1$, and hence the supremum \eqref{sup1} equals
\begin{equation}\label{sup2}
\sup_{\alpha>0} \frac{1}{\alpha^2+1} \sup_{f_0,f_1} \|PUP(\alpha f_0+f_1) \|^2 .
\end{equation}
Now,
\begin{equation}
\begin{split}
\sup_{f_1} \| PUP(\alpha f_0+f_1) \|^2 & = \sup_{f_1} \| PU(\alpha f_0+Pf_1)\|^2 \\
& \leq \sup_{f_1} \| PU(\alpha f_0+\omega_0 f_1) \|^2
\end{split}
\end{equation}
since 
\begin{equation}
\{ Pf_1 : f_1 \in\scrH_1 , \; \| f_1 \| =1 \} \subseteq \{ f_1 \in\scrH_1 , \; \| f_1 \| \leq \omega_0 \} .
\end{equation}
We have
\begin{equation}
U(\alpha f_0+\omega_0 f_1) = \alpha y_0 + \omega_0 y_1 + \alpha \tilde y_0 + \omega_0 \tilde y_1  ,
\end{equation}
where 
\begin{equation}
y_0 : = \Pi U f_0 \in\scrH_0 , \qquad \tilde y_0 := (I-\Pi) U f_0  \in\scrH_1,
\end{equation}
\begin{equation}
y_1 := \Pi U f_1 \in\scrH_0, \qquad \tilde y_1 :=(I- \Pi) U f_1 \in\scrH_1.
\end{equation}
Therefore,
\begin{equation}\label{22nd}
\begin{split}
& \| PU(\alpha f_0+\omega_0 f_1) \|^2 \\
& =  
\| \alpha y_0 + \omega_0 y_1\|^2 + \| P( \alpha \tilde y_0 + \omega_0 \tilde y_1) \|^2 \\
& \leq   
\| \alpha y_0 + \omega_0 y_1\|^2 + \omega_0^2 \| \alpha \tilde y_0 + \omega_0 \tilde y_1 \|^2 \\
& = \alpha^2 ( \omega_0^2 + (1-\omega_0^2) \| y_0 \|^2) + 2\alpha\omega_0 (1-\omega_0^2) \langle y_0,y_1\rangle + \omega_0^2 (\omega_0^2+ (1-\omega_0^2) \|y_1\|^2 ) .
\end{split}
\end{equation}
In the last equality we have used the relations
\begin{equation}\label{tilde1}
\langle \tilde y_0, \tilde y_1 \rangle = - \langle  y_0,y_1 \rangle ,
\end{equation}
which follows from $\langle Uf_0,Uf_1 \rangle =\langle f_0,f_1 \rangle=0$,
and
\begin{equation}\label{tilde2}
\| \tilde y_0 \|^2 = 1-\| y_0 \|^2 , \qquad \| \tilde y_1 \|^2 = 1-\| y_1 \|^2  .
\end{equation}
Since $f_0$ is a constant function with $\| f_0 \|=1$, we have by Lemma \ref{repgap} $\|y_0\|\leq \delta_\rho<1$. Furthermore $\|y_1\|\leq 1$. This shows
\begin{equation}\label{sup3}
\sup_{f\in\scrH : f\neq 0} \frac{\|PUPf \|^2}{\|f\|^2} 
\leq \sup_{\alpha>0}\; \sup_{\substack{\|y_0\|\leq \delta_\rho\\ \|y_1\|\leq 1}} B(\alpha),
\end{equation}
with
\begin{equation}
B(\alpha) := \frac{\alpha^2 ( \omega_0^2 + (1-\omega_0^2) \| y_0 \|^2) + 2\alpha\omega_0 (1-\omega_0^2) \langle y_0,y_1\rangle + \omega_0^2 (\omega_0^2+(1-\omega_0^2) \|y_1\|^2)}{\alpha^2+1} .
\end{equation}
The final step in the proof of Proposition \ref{prop2} is now to show that 
\begin{equation}\label{sup4}
\sup_{\alpha>0} \sup_{\substack{\|y_0\|\leq \delta_\rho\\ \|y_1\|\leq 1}} B(\alpha) < 1.
\end{equation}
To achieve this, first note that
\begin{equation}
\sup_{\substack{\|y_0\|\leq \delta_\rho\\ \|y_1\|\leq 1}} B(0) =\sup_{\substack{\|y_1\|\leq 1}}  \omega_0^2 (\omega_0^2+(1-\omega_0^2) \|y_1\|^2 )\leq \omega_0^2 < 1 
\end{equation}
and
\begin{equation}
\lim_{\alpha\to\infty} \sup_{\substack{\|y_0\|\leq \delta_\rho\\ \|y_1\|\leq 1}} B(\alpha) = \sup_{\substack{\|y_0\|\leq \delta_\rho}} \omega_0^2 + (1-\omega_0^2) \| y_0 \|^2 < 1.
\end{equation}
To prove, \eqref{sup4}, it is therfore sufficient that the quadratic equation 
\begin{equation}\label{qeq}
B(\alpha)=1
\end{equation}
has no positive real solution for all $\|y_0\|\leq \delta_\rho$, $\|y_1\|\leq 1$.
This in turn holds, if the discriminant of Eq.~\eqref{qeq} is strictly negative, i.e.
\begin{equation}
\sup_{\substack{\|y_0\|\leq \delta_\rho\\ \|y_1\|\leq 1}} \big[ - 4 (1-\omega_0^2)^2 \big\{  1-\|y_0\|^2 + \omega_0^2 \big[ (1 - \|y_0\|^2)(1- \|y_1\|^2)  -   \langle y_0,y_1\rangle^2 \big] \big\}  \big] < 0.
\end{equation}
Because $1-\omega_0^2>0$ and $1-\|y_0\|^2\geq 1-\delta_\rho^2>0$, it remains to be shown that
\begin{equation}\label{finale}
\langle y_0,y_1\rangle^2\leq (1 - \|y_0\|^2)(1- \|y_1\|^2)  .
\end{equation}
To this end, apply eqs.~\eqref{tilde1}, \eqref{tilde2} and the Cauchy-Schwarz inequality,
\begin{equation}
|\langle y_0,y_1\rangle|=|\langle \tilde y_0,\tilde y_1\rangle|\leq \|\tilde y_0\| \|\tilde y_1\| = \sqrt{(1 - \|y_0\|^2)}\sqrt{(1- \|y_1\|^2)}  ,
\end{equation}
which completes the proof.
\end{proof}

\section{Exponential mixing}\label{sec:exponential}

We will now apply the spectral estimates of the previous section to obtain exponential mixing rates.

Denote by $(\rho_1,V_1)=(\id,\RR^d)$ the natural representation of $\SO(d)$, and by $(\rho_2,V_2)$ the adjoint representation of $\SO(d)$ on the vector space $V_2$ of real symmetric traceless $d\times d$ matrices defined by
\begin{equation}
\rho_2(R) : M \mapsto R M \trans R .
\end{equation}
The inner product on $V_1$ is the standard Euclidean inner product
\begin{equation}
\langle \vecx_1,\vecx_2 \rangle_{V_1} := \vecx_1\cdot \vecx_2,
\end{equation}
and on $V_2$ the Hilbert-Schmidt inner product
\begin{equation}
\langle M_1,M_2 \rangle_{V_2} := \tr( M_1 M_2).
\end{equation}

\begin{prop}\label{prop:exp_mix}
Fix any $\omega\in[\omega_0,1)\cap (\omega_{\rho_1},1)\cap(\omega_{\rho_2},1)$, $m\in\NN$ and $p=0,1,2$. Then there is a constant $C_m>0$ such that, for all $n_1,n_2\in\NN$, $\vecv_0,\vece\in\US$ and all measurable $f,g: (\UB)^{m+1} \to \RR$ with
\begin{equation}
\Exp\big( f(\veceta_0,\ldots,\veceta_m)^2 \big)<\infty, \qquad \Exp\big( g(\veceta_0,\ldots,\veceta_m)^2 \big)<\infty ,
\end{equation}
we have
\begin{multline}\label{zone0}
\big| \Cov\big( (\vece\cdot\vecV_{n_1})^p f(\veceta_{n_1},\ldots,\veceta_{n_1+m}), (\vece\cdot\vecV_{n_2})^p g(\veceta_{n_2},\ldots,\veceta_{n_2+m})\big)\big| \\ \leq C_m\, \omega^{|n_1-n_2|-m} 
\sqrt{\Exp\big( f(\veceta_0,\ldots,\veceta_m)^2 \big)}\sqrt{\Exp\big( g(\veceta_0,\ldots,\veceta_m)^2 \big)}.
\end{multline}
\end{prop}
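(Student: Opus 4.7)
The plan is to condition on $\scrF_{n_1+m}:=\sigma(\veceta_0,\ldots,\veceta_{n_1+m})$ and express the conditional expectation of the ``future'' factor via the operators of Section \ref{sec:spectral}. Assume without loss of generality that $n_2\geq n_1$ and set $s:=n_2-n_1-m$; if $s<0$ then $\omega^s\geq 1$ and the bound \eqref{zone0} follows trivially from Cauchy--Schwarz and $|\vece\cdot\vecV_n|\leq 1$, so I treat $s\geq 0$. Writing $X:=(\vece\cdot\vecV_{n_1})^p f(\veceta_{n_1},\ldots,\veceta_{n_1+m})$ (which is $\scrF_{n_1+m}$-measurable) and $Y:=(\vece\cdot\vecV_{n_2})^p g(\veceta_{n_2},\ldots,\veceta_{n_2+m})$, the identity $\Cov(X,Y)=\Exp\bigl[X\bigl(\Exp[Y\mid\scrF_{n_1+m}]-\Exp Y\bigr)\bigr]$ combined with Cauchy--Schwarz reduces the proposition to the estimate
\[
\bigl\|\Exp[Y\mid\scrF_{n_1+m}]-\Exp Y\bigr\|_{L^2}\ll \omega^s\sqrt{\Exp(g^2)}.
\]

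The factorisation $\vecV_{n_2}=R_{n_1+m}S(\veceta_{n_1+m+1})\cdots S(\veceta_{n_2})\vece_1$ together with the Markov property and stationarity of $\uveceta$ allow this conditional expectation to be written explicitly in terms of the operators of Section \ref{sec:spectral}. For $p=0$ it equals $(P^s\bar g_0)(\veceta_{n_1+m})$ with $\bar g_0(\vecw):=\Exp[g(\veceta_0,\ldots,\veceta_m)\mid\veceta_0=\vecw]$, so using $\Pi P=P\Pi=\Pi$ and hence $P^s-\Pi=(P-\Pi)^s$, Proposition \ref{prop:Pgap} yields the bound with $\omega=\omega_0$. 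For $p=1$ it equals $\trans\vece R_{n_1+m}\bigl((PU_1)^s\bar\vecg_1\bigr)(\veceta_{n_1+m})$ with $\bar\vecg_1:=\bar g_0\vece_1\in\scrH_{\rho_1}$, and Proposition \ref{prop2} applied to the standard representation $\rho_1$ gives $\|(PU_1)^s\bar\vecg_1\|\ll\omega^s\sqrt{\Exp(g^2)}$ for any $\omega>\omega_{\rho_1}$. For $p=2$ I first use the decomposition $(\vece\cdot\vecV_n)^2=\tfrac{1}{d}+\langle M_2,\rho_2(R_n)M_1\rangle_{V_2}$ with traceless $M_1:=\vece_1\trans{\vece_1}-\tfrac{1}{d}I_d$ and $M_2:=\vece\trans\vece-\tfrac{1}{d}I_d$; the scalar piece is handled by $P^s$ exactly as in $p=0$, while the traceless piece is governed by $(PU_2)^s$ acting on the $V_2$-valued function $\bar g_0 M_1$, which Proposition \ref{prop2} applied to the adjoint representation $\rho_2$ controls by $\omega^s$.

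In each case Jensen's inequality gives $\|\bar g_0\|_{L^2}\leq\sqrt{\Exp(g^2)}$ (and analogous estimates for the vector- and matrix-valued auxiliary functions), so the conditional fluctuation is bounded in $L^2$ by $C_\omega\omega^s\sqrt{\Exp(g^2)}$ for any $\omega\in[\omega_0,1)\cap(\omega_{\rho_1},1)\cap(\omega_{\rho_2},1)$. Absorbing the factor $\omega^{-m}$ into the constant $C_m$ then rewrites the estimate in the form \eqref{zone0}. The main technical obstacle is the $p=2$ case: the scalar (``trace'') part of $(\vece\cdot\vecV_n)^2$ does not decay through the adjoint representation and contributes a constant which is cancelled by $\Exp Y$ only modulo the $\omega_0^s$ residue from $P^s-\Pi$; cleanly extracting this scalar part, matching it against $\Exp Y$, and isolating the traceless tensor part so that Proposition \ref{prop2} applies to $\rho_2$ is the key bookkeeping step.
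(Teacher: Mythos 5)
Your proposal is correct and takes essentially the same route as the paper: both reduce the two-sided covariance to a one-sided estimate via the Markov property, express the relevant conditional expectation through the operator powers $(P-\Pi)^s$, $(PU_1)^s$, $(PU_2)^s$, invoke Propositions~\ref{prop:Pgap} and~\ref{prop2}, and for $p=2$ split $(\vece\cdot\vecV_n)^2$ into the trace part $1/d$ (handled like $p=0$) plus a traceless part governed by the adjoint representation. The only difference is cosmetic: you condition on the full filtration $\scrF_{n_1+m}$ and bound $\Var(\Exp[Y\mid\scrF_{n_1+m}])$ before applying Cauchy--Schwarz, whereas the paper conditions on $\vecV_{n_1}$ to reduce to the case $n_1=0$ and writes the residual covariance directly as an inner product $\langle\tilde f,(PU)^{n-m}\tilde g\rangle$; both yield the same $\omega^s$ decay.
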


(Note that we have fixed $\vecV_0=\vecv_0$, which breaks stationarity.)

\begin{proof}
Assume without loss of generality that $n_1\leq n_2$. We have
\begin{multline}
\Cov\big( (\vece\cdot\vecV_{n_1})^p f(\veceta_{n_1},\ldots,\veceta_{n_1+m}), (\vece\cdot\vecV_{n_2})^p g(\veceta_{n_2},\ldots,\veceta_{n_2+m})\big) \\
= \Exp \big[ \Cov\big( (\vece\cdot\vecV_{n_1})^p f(\veceta_{n_1},\ldots,\veceta_{n_1+m}), (\vece\cdot\vecV_{n_2})^p g(\veceta_{n_2},\ldots,\veceta_{n_2+m})\mid \vecV_{n_1} \big) \big] .
\end{multline}
It is therefore sufficient to prove \eqref{zone0} conditioned on $\vecV_{n_1}$ with a constant $C_m$ independent on $\vecV_{n_1}$, or equivalently, to show that form any $n\in\ZZ_{\geq 0}$, $\vecv_0,\vece\in\US$,  
\begin{multline}\label{zone}
\big| \Cov\big( f(\veceta_{0},\ldots,\veceta_{m}), (\vece\cdot\vecV_{n})^p g(\veceta_{n},\ldots,\veceta_{n+m})\big)\big| \\ \leq C_m\, \omega^{n-m} 
\sqrt{\Exp\big( f(\veceta_0,\ldots,\veceta_m)^2 \big)}\sqrt{\Exp\big( g(\veceta_0,\ldots,\veceta_m)^2 \big)}.
\end{multline}
The case $n\leq m$ follows from the Cauchy-Schwarz inequality. We assume therefore in the following that $n>m$.

{\em Case A: $p=0$.} Define the functions $\UB\to\RR$
\begin{equation}
\tilde f(\vecw) := \Exp\big( f(\veceta_0,\ldots,\veceta_m) \bigm| \veceta_m=\vecw\big),
\end{equation}
\begin{equation}
\tilde g(\vecw) := \Exp\big( g(\veceta_0,\ldots,\veceta_m) \bigm| \veceta_0=\vecw\big) .
\end{equation}
Then
\begin{equation}
\begin{split}
\Cov\big( f(\veceta_0,\ldots,\veceta_m), g(\veceta_n,\ldots,\veceta_{n+m})\big) & =
\Cov\big( \tilde f(\veceta_m), \tilde g(\veceta_n)\big) \\ & =
\langle (I-\Pi) \tilde f, (P^{n-m}-\Pi) \tilde g \rangle \\ & = \langle (I-\Pi) \tilde f, (P-\Pi)^{n-m} \tilde g \rangle,
\end{split}
\end{equation}
and so
\begin{equation}
\begin{split}
\big| \Cov\big( f(\veceta_0,\ldots,\veceta_m), g(\veceta_n,\ldots,\veceta_{n+m})\big) \big| 
& \leq \| \tilde f \| \| (P-\Pi)^{n-m} \tilde g \| \\
& \leq \omega_0^{n-m} \| \tilde f \| \| \tilde g \| 
\end{split}
\end{equation}
in view of Proposition \ref{prop:Pgap}. Finally,
\begin{equation}
\| \tilde f \|^2 \leq \Exp\big( f(\veceta_0,\ldots,\veceta_m)^2 \big),
\end{equation}
since $\tilde f$ is obtained from $f$ via orthogonal projection, thus decreasing the $\L^2$-norm  (likewise for $\tilde g$). This proves \eqref{zone} for $p=0$.

{\em Case B: $p=1$.} Set $\tilde\vece= R(\vecv_0)^{-1} \vece$, and
\begin{equation}
 \overline f:=\Exp f(\veceta_0,\ldots,\veceta_m) .
\end{equation}
Now
\begin{equation}
\begin{split}
& \Cov\big(  f(\veceta_0,\ldots,\veceta_m), (\vece\cdot\vecV_{n}) g(\veceta_n,\ldots,\veceta_{n+m})\big) \\ 
& = \Cov\big( f(\veceta_0,\ldots,\veceta_m), (\tilde\vece\cdot S(\veceta_1)\cdots S(\veceta_{n})\vece_1) g(\veceta_n,\ldots,\veceta_{n+m})\big) \\
& =  \Exp\big( [f(\veceta_0,\ldots,\veceta_m)-\overline f] (\tilde\vece\cdot S(\veceta_1)\cdots S(\veceta_{n})\vece_1) g(\veceta_n,\ldots,\veceta_{n+m})\big) .
\end{split}
\end{equation}
By using the vector-valued functions $\UB\to\RR^d$
\begin{equation}
\tilde f(\vecw) := \Exp\big( [ f(\veceta_0,\ldots,\veceta_m) -\overline f ] \trans S(\veceta_m)\cdots \trans S(\veceta_1) \bigm| \veceta_m=\vecw\big) \, \tilde\vece,
\end{equation}
\begin{equation}
\tilde g(\vecw) := \Exp\big( g(\veceta_0,\ldots,\veceta_m) \bigm| \veceta_0=\vecw\big) \, \vece_1, 
\end{equation}
we find
\begin{equation}
\begin{split}
& \Cov\big(  f(\veceta_0,\ldots,\veceta_m), (\vece\cdot\vecV_{n}) g(\veceta_n,\ldots,\veceta_{n+m})\big) \\ 
& =  \Exp\big( \tilde f(\veceta_m) \cdot S(\veceta_{m+1})\cdots S(\veceta_{n}) \tilde g(\veceta_n) \big) \\
& =  \langle \tilde f , (PU)^{n-m} \tilde g \rangle .
\end{split}
\end{equation}
We conclude from Proposition \ref{prop2} applied to the natural representation $(\rho_1,V_1)$:
\begin{equation}
\big|  \Cov\big( f(\veceta_0,\ldots,\veceta_m), (\vece\cdot\vecV_{n}) g(\veceta_n,\ldots,\veceta_{n+m})\big) \big| \leq c\, \omega^{n-m} \| \tilde f \| \| \tilde g \| 
\end{equation}
for some $c>0$.
Finally, because orthogonal projection decreases the $\L^2$-norm, 
\begin{equation}
\begin{split}
 \| \tilde f \|^2 & \leq \Exp\big( \| [ f(\veceta_0,\ldots,\veceta_m) -\overline f ] \trans S(\veceta_m)\cdots \trans S(\veceta_1) \tilde\vece \|_{V_1}^2 \big)  \\
 & =  \Exp\big( [ f(\veceta_0,\ldots,\veceta_m) -\overline f ]^2 \big) \\
 & \leq \Exp\big( f(\veceta_0,\ldots,\veceta_m)^2 \big). 
 \end{split}
\end{equation}

{\em Case C: $p=2$.}
The vector $\tilde\vece$ and the expectation $\overline f$ are defined as in Case B. We have
\begin{equation}\label{varav}
\begin{split}
& \Cov\big(  f(\veceta_0,\ldots,\veceta_m), (\vece\cdot\vecV_{n})^2 g(\veceta_n,\ldots,\veceta_{n+m})\big) \\ 
& = \Cov\big(  f(\veceta_0,\ldots,\veceta_m), (\tilde\vece\cdot S(\veceta_1)\cdots S(\veceta_n)\vece_1)^2 g(\veceta_n,\ldots,\veceta_{n+m})\big) \\
& =  \Exp\big( [f(\veceta_0,\ldots,\veceta_m)-\overline f] (\tilde\vece\cdot S(\veceta_1)\cdots S(\veceta_n)\vece_1)^2 g(\veceta_n,\ldots,\veceta_{n+m})\big) .
\end{split}
\end{equation}
We write
\begin{equation}\label{symsq}
\begin{split}
(\tilde\vece\cdot S(\veceta_1)\cdots S(\veceta_n)\vece_1)^2 
%& =   \tr \big[ \tilde\vece\trans\tilde\vece S(\veceta_1)\cdots S(\veceta_n) \vece_1\trans\vece_1 \trans(S(\veceta_1)\cdots S(\veceta_n)) \big] \\
& =   \frac1d+\tr \big[ \tilde E S(\veceta_1)\cdots S(\veceta_n) E_1 \trans(S(\veceta_1)\cdots S(\veceta_n)) \big] \\
& =   \frac1d+ \big\langle \tilde E ,\rho\big(S(\veceta_1)\cdots S(\veceta_n)\big) E_1 \big\rangle_{V_2} ,
\end{split}
\end{equation}
where $V_2$ is the vector space of symmetric traceless $d\times d$ matrices, and
\begin{equation}
E_1:=\vece_1\otimes\vece_1-\frac1d I_d \in V_2, \qquad \tilde E:=\tilde\vece\otimes\tilde\vece-\frac1d I_d \in V_2.
\end{equation}
The constant term $\frac1d$ in \eqref{symsq} contributes to \eqref{varav} the term 
\begin{multline}
\frac1d   \big| \Exp\big( [f(\veceta_0,\ldots,\veceta_m)-\overline f]  g(\veceta_n,\ldots,\veceta_{n+m})\big) \big| \\ \leq \frac1d \; \omega_0^{n-m} 
\sqrt{\Exp\big( f(\veceta_0,\ldots,\veceta_m)^2 \big)}\sqrt{\Exp\big( g(\veceta_0,\ldots,\veceta_m)^2 \big)}.
\end{multline}
This follows from our discussion in Case A ($p=0$). 
The non-constant term in \eqref{symsq} is handled in analogy with Case B. 
Define functions $\UB\to V_2$
\begin{equation}
\tilde f(\vecw) := \Exp\big( [ f(\veceta_0,\ldots,\veceta_m) -\overline f ] \rho(\trans S(\veceta_m)\cdots \trans S(\veceta_1)) \bigm| \veceta_m=\vecw\big) \, \tilde E,
\end{equation}
\begin{equation}
\tilde g(\vecw) := \Exp\big( g(\veceta_0,\ldots,\veceta_m) \bigm| \veceta_0=\vecw\big) \, E_1, 
\end{equation}
so that the non-constant contribution to \eqref{varav} becomes
\begin{equation}
\begin{split}
&  \Exp\big( [f(\veceta_0,\ldots,\veceta_m)-\overline f] (\tilde\vece\cdot S(\veceta_1)\cdots S(\veceta_n)\vece_1)^2 g(\veceta_n,\ldots,\veceta_{n+m})\big) \\
& =  \Exp\big( [f(\veceta_0,\ldots,\veceta_m)-\overline f] \big\langle \tilde E ,\rho\big(S(\veceta_1)\cdots S(\veceta_n)\big) E_1 \big\rangle_{V_2} \,g(\veceta_n,\ldots,\veceta_{n+m})\big) \\
& =  \langle \tilde f , (PU)^{n-m} \tilde g \rangle .
\end{split}
\end{equation}
We now apply Proposition \ref{prop2} with the adjoint representation $(\rho_2,V_2)$. This yields the desired bound.
\end{proof}

We will also require the following estimate. 

\begin{prop}\label{prop:exp_exp}
Fix $\omega$ and $m$ as in Proposition \ref{prop:exp_mix}. Then there is a constant $\widetilde C>0$ such that, for all $n\in\NN$, $\vecv_0,\vece\in\US$ and all measurable $g: (\UB)^{m+1} \to \RR$ with
\begin{equation}
\Exp\big( g(\veceta_0,\ldots,\veceta_m)^2 \big)<\infty, 
\end{equation}
we have
\begin{equation}\label{eq:exp}
\big| \Exp\big( (\vece\cdot\vecV_{n})  g(\veceta_{n},\ldots,\veceta_{n+m}) \big) \big| 
\leq \widetilde C\, \omega^n \sqrt{\Exp\big( g(\veceta_0,\ldots,\veceta_m)^2 \big)} .
\end{equation}
\begin{equation}\label{eq:exp2}
\big| \Exp\big( [ (\vece\cdot\vecV_{n})^2- d^{-1}] g(\veceta_{n},\ldots,\veceta_{n+m}) \big) \big| 
\leq \widetilde C\, \omega^n \sqrt{\Exp\big( g(\veceta_0,\ldots,\veceta_m)^2 \big)} .
\end{equation}
\end{prop}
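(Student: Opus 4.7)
The plan is to follow the operator-theoretic scheme from Cases B and C of Proposition \ref{prop:exp_mix}: in both estimates I will express the expectation as an inner product in $\scrH$ of the form $\langle\cdot,\,U(PU)^{n-1}\cdot\rangle$, with $U$ built from an appropriate non-trivial irreducible representation of $\SO(d)$, and then invoke Proposition \ref{prop2} to extract exponential decay in $n$. The point is that here there is no $f$-factor at times $0,\ldots,m$, so I do not need the ``backward'' conditional expectation step that appeared in the proof of Proposition \ref{prop:exp_mix}: the non-trivial representation enters directly through the factor $\vece\cdot\vecV_n$ (respectively $(\vece\cdot\vecV_n)^2-\tfrac{1}{d}$).

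For \eqref{eq:exp}, I would set $\tilde\vece:=R(\vecv_0)^{-1}\vece$, so that $\vece\cdot\vecV_n=\tilde\vece\cdot S(\veceta_1)\cdots S(\veceta_n)\vece_1$, and define the vector-valued function
\[
\tilde g:\UB\to\RR^d,\qquad \tilde g(\vecw):=\Exp\big(g(\veceta_0,\ldots,\veceta_m)\bigm|\veceta_0=\vecw\big)\,\vece_1.
\]
Conditioning on $\veceta_n$ and using the Markov property, the left-hand side of \eqref{eq:exp} becomes $\Exp[\tilde\vece\cdot S(\veceta_1)\cdots S(\veceta_n)\tilde g(\veceta_n)]$. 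Since $\veceta_0$ is uniform and the uniform measure is $K_0$-invariant (by symmetry of $K_0$), integrating out $\veceta_0$ and then $\veceta_n,\veceta_{n-1},\ldots,\veceta_2$ in turn using the operators $P$ and $U$ associated to the natural representation $(\rho_1,V_1)$ converts this into
\[
\big\langle \tilde\vece,\,U(PU)^{n-1}\tilde g\big\rangle_\scrH.
\]
Because $U$ acts pointwise by orthogonal matrices, $\|U\|=1$; by Proposition \ref{prop2} applied to $(\rho_1,V_1)$, $\|(PU)^{n-1}\|\leq C\omega^{n-1}$, and since conditional expectation is a contraction in $L^2$, $\|\tilde g\|\leq\sqrt{\Exp g^2}$. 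Cauchy-Schwarz then yields \eqref{eq:exp}, with the case $n=0$ being trivial since $\vece\cdot\vecV_0=\vece\cdot\vecv_0$ has modulus at most $1$.

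For \eqref{eq:exp2} I would use the identity \eqref{symsq},
\[
(\vece\cdot\vecV_n)^2 = \tfrac{1}{d} + \big\langle \tilde E,\,\rho_2(S(\veceta_1)\cdots S(\veceta_n))E_1\big\rangle_{V_2},
\]
with $E_1:=\vece_1\otimes\vece_1-\tfrac{1}{d} I_d$ and $\tilde E:=\tilde\vece\otimes\tilde\vece-\tfrac{1}{d} I_d$ in $V_2$. The constant $\tfrac{1}{d}$ is precisely what is subtracted on the left of \eqref{eq:exp2}, so only the second term contributes. Replicating the previous argument with $\tilde g'(\vecw):=\Exp(g\mid\veceta_0=\vecw)\,E_1$ and with $U$ now built from the adjoint representation $(\rho_2,V_2)$, I obtain
\[
\Exp\big([(\vece\cdot\vecV_n)^2-d^{-1}]\,g(\veceta_n,\ldots,\veceta_{n+m})\big) = \big\langle \tilde E,\,U(PU)^{n-1}\tilde g'\big\rangle_\scrH,
\]
and Proposition \ref{prop2} applied to $(\rho_2,V_2)$ closes the estimate.

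There is no serious obstacle: the only point requiring attention is that Proposition \ref{prop2} is applied to both $(\rho_1,V_1)$ and $(\rho_2,V_2)$ at the same rate $\omega$, and this is automatic from the hypothesis that $\omega\in[\omega_0,1)\cap(\omega_{\rho_1},1)\cap(\omega_{\rho_2},1)$. Non-triviality and irreducibility of $(\rho_2,V_2)$ was already invoked in Case C of Proposition \ref{prop:exp_mix} and needs no further verification. In fact, both estimates are essentially the ``$f\equiv 1$'' specialization of the mixing statement, and are strictly easier than Proposition \ref{prop:exp_mix}.
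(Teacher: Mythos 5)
Your proof is correct and takes essentially the same approach as the paper, which simply reuses the functions $\tilde g$ constructed in Cases B and C of the proof of Proposition~\ref{prop:exp_mix} and writes the expectation as $\langle 1,(PU)^n\tilde g\rangle$ (where $1$ is the constant function with value $\tilde\vece$ or $\tilde E$), then invokes Proposition~\ref{prop2}. Your variant $\langle\tilde\vece,\,U(PU)^{n-1}\tilde g\rangle$ is the same quantity, since the constant function is $P$-invariant, and the rest (natural and adjoint representations, $\L^2$-contractivity of conditional expectation, Cauchy--Schwarz) matches the paper's argument exactly.
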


\begin{proof}
With $\tilde g$ defined as in the previous proof, Case B, we have
\begin{equation}
\Exp\big( (\vece\cdot\vecV_{n})  g(\veceta_{n},\ldots,\veceta_{n+m}) \big)  
= \langle 1 , (PU)^n \tilde g \rangle.
\end{equation}
The bound \eqref{eq:exp} now follows from Proposition \ref{prop2}. Relation \eqref{eq:exp2} follows similarly from Case C of the previous proof.\end{proof}

\section{Proof of the main lemmas}\label{sec:proof}

We now turn to the proofs of the four main lemmas in Section \ref{sec:outline}. 

\begin{proof}[Proof of Lemma \ref{mainlem0}]
We have
\begin{equation}
\| \vecQ_n-\vecQ_n' \| \leq  \sum_{j=1}^n \zeta_j  
\end{equation}
with 
\begin{equation}
\zeta_j:=\xi_j \ind{\xi_j^2 >   j (\log j)^\gamma} .
\end{equation}
By \eqref{PHIBARXILARGETHMRES1}, we have
\begin{equation}
\prob{\zeta_j\not=0} = O\big(j^{-1} (\log j)^{-\gamma}\big)
\end{equation}
This is summable (since $\gamma>1$) and so, by the Borel-Cantelli lemma, $\zeta_j\not=0$ only for finitely many $j$. This proves Lemma \ref{mainlem0}.
\end{proof}

\begin{proof}[Proof of Lemma \ref{mainlem1}] 
Set $\zeta_j:=(\vece\cdot  \vecV_{j-1})\, m_j$. We need to show that, for every $\vece\in\US$, the sequence of random variables 
\begin{equation}\label{eR}
\begin{cases}
\frac{\sum_{j=1}^n  \zeta_j}{\sqrt{n \log\log n}} & (d=2)\\[10pt]
\frac{\sum_{j=1}^n  \zeta_j}{\sqrt{n}} & (d\geq 3)
\end{cases}
\end{equation}
is tight. Now choose in Proposition \ref{prop:exp_exp} $m=1$ and $g(\vecw,\vecz)=K_{1,r_j}(\vecw,\vecz)/K_0(\vecw,\vecz)$, and use Proposition \ref{prop:mbound} to bound $\Exp(g(\veceta_0,\veceta_1)^2)=\Exp(m_j^2)=O(\log\log j)$ for $d=2$, and $=O(1)$ for $d\geq 3$. This proves $\Exp(\zeta_j) = O( \omega^j)$. Therefore
\begin{equation}
\var{\zeta_j^2} = \expect{\zeta_j^2} + O(\omega^{2j}) .
\end{equation}
Proposition \ref{prop:mbound} yields
\begin{equation}
\expect{\zeta_j^2}  \leq \expect{ m_j^2 }\\
=  
\begin{cases}
O(\log\log j) &(d=2)\\
O(1) & (d\geq 3).
\end{cases}
\end{equation}
Due to Proposition \ref{prop:exp_mix}, we also have
\begin{equation}
\cov{\zeta_i} {\zeta_j} \le \sqrt{\expect{\zeta_i^2}}\sqrt{\expect{\zeta_j^2}} \omega^{|i-j|}.
\end{equation}
Hence
\begin{equation}
\label{sumXtight}
\Exp\big[\big( \sum_{j=1}^n \zeta_j\big)^2\big]
= 
\begin{cases}
O(n\log\log n)  &(d=2)\\
O(n) & (d\geq 3),
\end{cases}
\end{equation}
which establishes the tightness of \eqref{eR} and thus Lemma \ref{mainlem1}.
\end{proof}

\begin{proof}[Proof of Lemma \ref{mainlem2}] 
Let $\zeta_j:= (\vece\cdot \vecV_{j-1})^2 a_j^2$. It is sufficient to prove that for any  unit vector $\vece\in\US$
\begin{equation}\label{eq:mainlem2b2}
\frac{\sum_{j=1}^n  \zeta_j }{n\log n}\toprob  \sigma_d^2  .
\end{equation}
Using \eqref{eq:exp2} in Proposition \ref{prop:exp_exp}, and Proposition \ref{prop:aasy},
\begin{equation}
\expect{\zeta_j}
= \frac{\Theta_d}{2d}\, \log j +
O(\log\log j) ,
\end{equation}
and, by Proposition \ref{prop:amom},
\begin{equation}
\expect{a_j^4}
= O\big(j(\log j)^\gamma\big).
\end{equation}
Furthermore, due to Proposition \ref{prop:exp_mix} $(p=2)$, we have
\begin{equation}
\cov{\zeta_i}{\zeta_j}  \le
\sqrt{\expect{a_i^4}}
\sqrt{\expect{a_j^4}} \; \omega^{|i-j|}  .
\end{equation}
Hence,
\begin{equation}
\expect{\sum_{j=1}^n \zeta_j} = \frac{\Theta_d}{2d}\, n\log n +O(n \log\log n)
\end{equation}
and
\begin{equation}
\var{\sum_{j=1}^n \zeta_j} = O(n^2 (\log n)^\gamma) = o((n\log n)^2)
\end{equation}
since $\gamma<2$.
This proves Lemma \ref{mainlem2}.
\end{proof}

\begin{proof}[Proof of Lemma \ref{mainlem3}] 
In view of the asymptotic relation for $A_n$ in \eqref{eq:mainlem2b} we have to prove that for any $\vareps>0$
\begin{equation}\label{2be}
\frac{\sum_{j=1}^n\condexpect{\tilde \xi_j^2 \ind{\tilde \xi_j^2 > \vareps^2 n\log n}}{\uveceta}}{n\log n}
\toprob 0.
\end{equation}
The {\em lower tail} $\tilde \xi_j < - \vareps \sqrt{n\log n}$ is estimated by
\begin{equation}
\begin{split}
\condexpect{(\xi_j'-m_j)^2\ind{\xi_j'-m_j < -\vareps\sqrt{n\log n} }}{\uveceta}
& \le m_j^2 \ind{m_j>\vareps\sqrt{n\log n}} \\
& \le m_j^2 \ind{m_j>\vareps\sqrt{j\log j}}.
\end{split}
\end{equation}
Proposition \ref{prop:tail_m} yields
\begin{equation}
\prob{m_j>\vareps\sqrt{j\log j}}
=
\begin{cases}
\displaystyle
O((\vareps^2 j (\log j)^2)^{-1}) & (d=2) \\
O((\epsilon^2 j\log j)^{-(\frac12+\frac{d}{4})})  & (d\geq 3) .
\end{cases}
\end{equation}
Since this is summable, we have, by the Borel-Cantelli lemma,
\begin{equation}
\label{lowertail}
\frac{\sum_{j=1}^n
\condexpect{\txi_j^2\ind{\txi_j<-\vareps\sqrt{n\log n} }}{\uveceta}}{n\log n}
\toas
0.
\end{equation}

For  the {\em upper tail} $\tilde \xi_j > \vareps \sqrt{n\log n}$ we have 
\begin{equation}
\begin{split}
\zeta_{n,j} & :=\condexpect{(\xi_j'-m_j)^2\ind{\xi_j-m_j > \vareps\sqrt{n\log n} }}{\uveceta} \\
& \le \condexpect{\xi_j'^2\ind{\xi_j'>\vareps\sqrt{n\log n} }}{\uveceta} \\
& =\condexpect{\xi_j^2\ind{\vareps^2 n\log n <\xi_j^2 \leq j(\log j)^\gamma}}{\uveceta} .
\end{split}
\end{equation}
On the other hand, in view of \eqref{PHIBARXILARGETHMRES1}, we have for $n\to\infty$,
\begin{equation}
\expect{\zeta_{n,j}} \ll \log \sqrt{\frac{ n(\log n)^\gamma}{\vareps^2 n\log n}}
\sim
\frac{\gamma-1}{2} \log \log n ,
\end{equation}
and therefore
\begin{equation}
\label{uppertail}
\frac{\expect{\sum_{j=1}^n \zeta_{n,j}}}{n\log n}
\to0.
\end{equation}
From \eqref{lowertail} and \eqref{uppertail}, the assertion of Lemma \ref{mainlem3} follows.
\end{proof}

\section{General initial data}\label{sec:general}

Up to now we have assumed that $(\xi_1,\veceta_1)$ has density $\Psi_0(x,\vecz)$. We now extend the above results to more general initial data $(\xi_1,\veceta_1)$, where the only assumption is that the marginal distribution of $\veceta_1$ is absolutely continuous with respect to Lebesgue measure on $\UB$.

\begin{proof}[Proof of Theorem \ref{thm:main3} (ii) for general initial data]
Since
\begin{equation}
\frac{\xi_1\vecV_0}{\sqrt{n\log n}} \toprob \vecnull,
\end{equation}
it is sufficient to show that 
\begin{equation}
\frac{\sum_{j=2}^n \xi_j \vecV_{j-1}}{\sigma_d \sqrt{n\log n}}  \Rightarrow \scrN(0,I_d)
\end{equation}
where $\veceta_1$ has (by assumption) an absolutely continuous distribution and $\xi_1=0$. By an obvious re-labelling, this is equivalent to showing that 
\begin{equation}
\frac{\sum_{j=1}^n \xi_j \vecV_{j-1}}{\sigma_d \sqrt{n\log n}}  \Rightarrow \scrN(0,I_d)
\end{equation}
where $\veceta_0$ has an absolutely continuous distribution. In view of the remarks following Eq.~\eqref{eli}, the only difference from the proof of Theorem \ref{thm:main3} is now that $\veceta_0$ is distributed according to an absolutely continuous probability measure, rather than Lebesgue measure. Because {\em tightness}, {\em almost sure convergence} and {\em convergence in probability} continue to hold when passing from a measure to a measure which is absolutely continuous with respect to the first, the Lemmas in Section \ref{sec:outline} remain valid also in the present setting. The proof of Theorem \ref{thm:main3} for general initial data therefore follows from these lemmas in the same way as for the density $\Psi_0(x,\vecz)$, as described at the end of Section \ref{sec:outline}.
\end{proof}

The following proposition shows that, if $(\xi_1,\veceta_1)$ has density $\Psi(x,\vecz)$ (which appears in the continuous-time setting of the Boltzmann-Grad limit, Theorem \ref{thm:MS} (i)), then the marginal distribution of $\veceta_1$ is absolutely continuous. Let
\begin{equation}
\overline\Psi(\vecz):=\int_0^\infty \Psi(x,\vecz)\,dx   
\end{equation}

\begin{prop}\label{prop:main3}
\begin{equation}
\overline\Psi\in\L^1(\UB,d\vecz) .
\end{equation}
\end{prop}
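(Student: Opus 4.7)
The plan is to reduce the $L^1$-norm of $\overline\Psi$ to the first moment $\Exp(\mu_j)$ of the conditional mean, whose finiteness is already a direct consequence of the tail estimates in Propositions \ref{prop:tail2} and \ref{prop:tail3}.

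First, I would apply Fubini's theorem to the defining relation \eqref{rel2}:
\begin{equation*}
\overline\Psi(\vecz) = \int_0^\infty \Psi(x,\vecz)\,dx = \frac{1}{\xibar}\int_0^\infty \!\!\int_x^\infty \Psi_0(x',\vecz)\,dx'\,dx = \frac{1}{\xibar}\int_0^\infty x \Psi_0(x,\vecz)\,dx.
\end{equation*}
Next, unpacking the definition of $\Psi_0(x,\vecz)$ and using \eqref{Kpdef}, another application of Fubini gives
\begin{equation*}
\int_0^\infty x \Psi_0(x,\vecz)\,dx = \frac{1}{v_{d-1}}\int_\UB K_1(\vecw,\vecz)\,d\vecw.
\end{equation*}
Integrating over $\vecz\in\UB$ and recalling that $v_{d-1}^{-1} K_0(\vecw,\vecz)\,d\vecw\,d\vecz$ is a probability measure on $\UB\times\UB$, I would then write
\begin{equation*}
\int_\UB \overline\Psi(\vecz)\,d\vecz = \frac{1}{\xibar v_{d-1}}\int_\UB\!\int_\UB K_1(\vecw,\vecz)\,d\vecw\,d\vecz = \frac{1}{\xibar}\,\Exp(\mu_j),
\end{equation*}
where $\mu_j = K_1(\veceta_{j-1},\veceta_j)/K_0(\veceta_{j-1},\veceta_j)$ as in \eqref{muj}.

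Finally, finiteness of $\Exp(\mu_j) = \int_0^\infty \Prob(\mu_j>u)\,du$ follows immediately from the tail bounds: Proposition \ref{prop:tail2} gives $\Prob(\mu_j>u) = O(u^{-2}(\log u)^{-1})$ for $d=2$, while Proposition \ref{prop:tail3} gives $\Prob(\mu_j>u) = O(u^{-(1+d/2)})$ for $d\geq 3$. In both cases the tail is integrable at infinity (and trivially bounded by $1$ on $[0,1]$), so $\Exp(\mu_j)<\infty$ and the proposition follows. There is no real obstacle here; the only thing to verify carefully is that the two Fubini interchanges are justified, which is immediate since the integrand $\Psi_0(\vecw,x,\vecz)$ is nonnegative throughout.
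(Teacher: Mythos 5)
Your argument is correct, and it takes a genuinely different route from the paper. The paper proves much more than $\L^1$-membership: for $d\geq 3$ it invokes \cite[Cor.~1.2, Thm.~1.11]{Marklof:2011di} to show $\Psi(x,\vecz)=O(x^{-2+2/d})$ uniformly in $\vecz$, giving a pointwise uniform bound on $\overline\Psi$; and for $d=2$ it uses the explicit formula from \cite{Marklof:2008dr} to derive $\overline\Psi(z)=\log\frac{1}{1-|z|}+O(1)$, which is integrable at $z=\pm1$. Your approach instead Fubini-reduces $\int_\UB\overline\Psi\,d\vecz$ to $\xibar^{-1}\Exp(\mu_j)$ and appeals to the tail estimates from Propositions \ref{prop:tail2} and \ref{prop:tail3}, which is valid since those propositions appear earlier in the paper. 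One remark: the last step is heavier than necessary, because $\Exp(\mu_j)=v_{d-1}^{-1}\int_\UB\int_\UB K_1(\vecw,\vecz)\,d\vecw\,d\vecz=\int_0^\infty x\,\Psi_0(x)\,dx=\xibar$ exactly (indeed the paper points this out in the display following \eqref{rel2}, which records that $\Psi(x,\vecz)$ is a probability density on $\RR_{>0}\times\UB$; by Tonelli that already gives $\int_\UB\overline\Psi\,d\vecz=1$). So $\L^1$-membership is in fact immediate; what your proof and the paper's both buy beyond this bare claim is quantitative control on $\overline\Psi$, with the paper getting the sharper pointwise information.
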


\begin{proof}
The function $\Psi(x,\vecz)$ is continuous, and in view of Corollary 1.2 and Theorem 1.11 in \cite{Marklof:2011di}, uniformly bounded. The latter theorem produces a precise asymptotics of $\Psi(x,\vecz)$, which implies
\begin{equation}
\Psi(x,\vecz)= O(x^{-2+\frac2d}),
\end{equation}
uniformly for all $x>0$, $\vecz\in\UB$.
Thus, for $d\geq 3$, $\overline\Psi(\vecz)$
is uniformly bounded, and hence $\overline\Psi\in\L^1(\UB,d\vecz)$ as required.

In dimension $d=2$, there is an explicit formula for $\Psi(x,z)$, cf.~\cite[Eq.~(30)]{Marklof:2008dr}, which yields (see the last displayed equation of that paper) for $x\to\infty$ and $z\in(-1,1)$,
\begin{equation}\label{Psi2d}
\Psi(x,z) = \frac{3}{2\pi^2} (1-u)^2 x^{-1} + O(x^{-2})
\end{equation}
if $u:=x(1-|z|)\in[0,1)$, and
\begin{equation}
\Psi(x,z) = 0
\end{equation}
if $x(1-|z|) \notin[0,1)$.
The implied constant in \eqref{Psi2d} is independent of $x$ and $u$. The above asymptotics (and the fact that $\Psi(x,z)$ is uniformly bounded) imply
\begin{equation}
\overline\Psi(z):=\int_0^\infty \Psi(x,z)\,dx  = \log\frac{1}{1-|z|} + O(1),
\end{equation}
which holds uniformly for all $z\in(-1,1)$. We conclude that $\overline\Psi\in\L^1((-1,1),dz)$.
\end{proof}

\section{From discrete to continuous time}\label{sec:from}

The following proposition, together with Theorem \ref{thm:main3} (ii), immediately implies Theorem \ref{thm:main3} (i). 
Let us denote by
\begin{equation}
n_t := \big\lfloor \xibar^{-1} t \big\rfloor
\end{equation}
the (integer part of the) expected number of collisions within time $t$.

\begin{prop}
\label{prop:dt_ct_are_close}
For any $\vareps>0$ 
\begin{equation}
\label{dt_ct_are_close}
\frac{\|\vecX_t-\vecQ_{n_t}\|}{t^{5/12 + \vareps}} \toprob 0, 
\end{equation}
as $t\to\infty$.
\end{prop}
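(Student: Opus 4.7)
The plan is to decompose by the triangle inequality
\begin{equation*}
\|\vecX_t-\vecQ_{n_t}\|\le\|\vecQ_{\nu_t}-\vecQ_{n_t}\|+(t-\tau_{\nu_t})
\end{equation*}
and to control the two pieces separately. The ``overshoot'' $(t-\tau_{\nu_t})$ is the age of the renewal-type process $(\tau_n)$. Using the $x^{-3}$ tail of the free-path density $\Psi_0(x)$ from \eqref{PHIBARXILARGETHMRES1} together with the exponential mixing of Proposition \ref{prop:exp_mix} (needed because $\xi_{n+1}$ and $\tau_n$ are not independent but are coupled only weakly through $\uveceta$), a standard renewal-type computation will yield $\Prob(t-\tau_{\nu_t}>y)\ll 1/y$ uniformly in $t$. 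Hence $(t-\tau_{\nu_t})=O_P(t^\varepsilon)$ for every $\varepsilon>0$, which is far smaller than $t^{5/12+\varepsilon}$.

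The next step is to establish $|\nu_t-n_t|\le M_t:=t^{1/2+\varepsilon_1}$ with probability tending to one, for any prescribed $\varepsilon_1>0$. I would apply the scalar analogue of the truncation-and-conditional-Lindeberg scheme of Section \ref{sec:outline} to $\tau_n=\sum_{j=1}^n\xi_j$. Lemma \ref{mainlem0} reduces this to the truncated sum $\sum\xi_j'$. Writing $\xi_j'=(\xi_j'-m_j)+m_j$, Doob's $L^2$-inequality applied conditional on $\uveceta$, combined with the conditional variance estimate $\Exp\sum_{j=1}^n a_j^2=O(n\log n)$ from Proposition \ref{prop:aasy}, gives $\sum(\xi_j'-m_j)=O_P(\sqrt{n\log n})$, while Propositions \ref{prop:m1bound}--\ref{prop:mbound} together with the mixing estimate of Proposition \ref{prop:exp_mix} yield $\sum(m_j-\xibar)=O_P(\sqrt n\,(\log n)^C)$. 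Thus $\tau_n=n\xibar+O_P(\sqrt{n\log n})$, which inverts to the claimed bound on $|\nu_t-n_t|$.

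On the event $\{|\nu_t-n_t|\le M_t\}$, the main term satisfies $\|\vecQ_{\nu_t}-\vecQ_{n_t}\|\le\max_{|k|\le M_t}\|\vecQ_{n_t+k}-\vecQ_{n_t}\|$. Using the decomposition $\vecQ_m=(\vecQ_m-\vecQ_m')+\widetilde\vecQ_m+\sum_{j=1}^m m_j\vecV_{j-1}$ from Section \ref{sec:outline}, the truncation residual is almost surely bounded (Lemma \ref{mainlem0}); the conditional martingale $\widetilde\vecQ_m$ admits Doob's $L^2$ maximal inequality conditional on $\uveceta$, which, combined with Lemma \ref{mainlem2}, gives $\Exp\max_{|k|\le M_t}\|\widetilde\vecQ_{n_t+k}-\widetilde\vecQ_{n_t}\|^2=O(M_t\log M_t)$; and the drift $\sum m_j\vecV_{j-1}$ is handled by combining the exponential mixing of Propositions \ref{prop:exp_mix}--\ref{prop:exp_exp} (which yield variance $O(M_t)$ for the partial sums, up to polylog factors) with a Menshov--Rademacher-type dyadic chaining that upgrades this to a maximal bound. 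Collecting, $\|\vecQ_{\nu_t}-\vecQ_{n_t}\|=O_P(\sqrt{M_t\log M_t})=O_P(t^{1/4+\varepsilon_2})$, which is $o(t^{5/12+\varepsilon})$ when $\varepsilon_1$ is chosen small.

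The main technical obstacle is upgrading the \emph{pointwise} second-moment estimates of Sections \ref{sec:moments}--\ref{sec:exponential} to genuine \emph{maximal} inequalities for the partial-sum process: this is immediate for the conditional-martingale piece via Doob but needs a dyadic Menshov--Rademacher chaining for the non-martingale drift piece, at a cost of $(\log M_t)^C$ factors that do not affect the final exponent.
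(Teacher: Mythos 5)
Your proposal is correct in spirit but follows a genuinely different and more refined route than the paper's. The paper's proof is deliberately crude: it first establishes the tail bound $\Prob(\|\vecQ_n\|>u)=O(n\log u/u^2)$ (Lemma~\ref{lem:displacement_bound}) by splitting $\vecQ_n$ into a conditional martingale plus the drift $\sum\mu_j\vecV_{j-1}$ and using Chebyshev, and then upgrades this to a maximal bound over $m\le n$ by a \emph{block argument}: control $\vecQ_m$ at the $n^{2/3}$ endpoints of blocks of length $n^{1/3}$ and absorb the intra-block fluctuation by $n^{1/3}\max_m\xi_m$ (Lemma~\ref{lem:fluctuation_bounds}). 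This union-bound scheme is elementary and costs an exponent: it gives $\max_{m\le n}\|\vecQ_m\|\ll n^{5/6+\vareps}$ rather than the ``true'' order $\sqrt{n\log n}$, and this is precisely where the final $5/12$ comes from after substituting $n\sim t^{1/2+\vareps}$. Your approach instead aims for the optimal $O_P\bigl(\sqrt{M_t\log M_t}\bigr)\sim t^{1/4+\vareps}$ by decomposing $\vecQ_m$ as truncation residual $+$ conditional martingale $+$ drift, applying Doob's $L^2$ maximal inequality conditional on $\uveceta$ to the martingale piece (this is fine: conditionally on $\uveceta$ the increments $\txi_j\vecV_{j-1}$ are independent and centred, and the running sum of second moments is $\scrA_m^2$, which you control in mean using Proposition~\ref{prop:aasy} and then via Markov), and a Menshov--Rademacher chaining or a Gordin-type martingale approximation for the drift. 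The paper sidesteps this last difficulty entirely; you correctly identify it as the technical obstacle, and indeed the paper solves exactly this maximal-inequality problem for the drift term later, via Proposition~\ref{plusGordin}, in the tightness argument of Section~\ref{plussec:tightness}, so your route is in fact internally consistent with the rest of the paper. Two smaller remarks. First, your renewal-type bound $\Prob(t-\tau_{\nu_t}>y)\ll 1/y$ is heuristically right given the $x^{-3}$ tail of $\Psi_0$, but the flight times are only conditionally independent given $\uveceta$, so a cleaner route (and the one the paper takes) is to bound the overshoot by $\xi_{\nu_t+1}\le\max_{|m|\le M_t}\xi_{n_t+m}$ on the good set $|\nu_t-n_t|\le M_t$ and apply a union bound via \eqref{max_flight_bound}; also, $\Prob(\text{overshoot}>y)\ll 1/y$ actually gives $O_P(1)$, not merely $O_P(t^{\vareps})$. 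Second, the paper uses the WLOG assumption that $\vecv_0$ is uniform on $\US$ to translate the window $[n_t-M_t,n_t+M_t]$ to $[1,2M_t]$ (eq.~\eqref{split_in_three2}); your in-window Doob argument for the martingale piece does not need this, but if you handle the drift via the Gordin construction you will still want the stationary chain, so the assumption resurfaces.
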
 

By the same argument as in the proof of Theorem \ref{thm:main3} (see Section \ref{sec:general}), it is sufficient to prove Proposition \ref{prop:dt_ct_are_close} in the case when $(\xi_1,\veceta_1)$ has density $\Psi_0(x,\vecz)$. We will assume this from now on. Furthermore, note that the left hand side of \eqref{dt_ct_are_close} is independent of the choice of $\vecv_0$. We may therefore assume without loss of generality that $\vecv_0$ is a random variable uniformly distributed in $\US$ (this will only be used in the justification of rel.~\eqref{split_in_three2} below). The proof of Proposition \ref{prop:dt_ct_are_close}, which is given at the end of this section, exploits the following three lemmas.

\begin{lem}\label{lem:flight_time_clt_discrete_time}
\begin{equation}
\frac{\tau_n-n\xibar}{\sigma_d \sqrt{d\, n\log n}}
\Rightarrow
\scrN(0,1), 
\end{equation}
as $n\to\infty$.
\end{lem}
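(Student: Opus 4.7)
The statement is the scalar analogue of Theorem \ref{thm:main3}(ii) applied to the sum $\tau_n=\sum_{j=1}^n\xi_j$ rather than $\vecQ_n=\sum_{j=1}^n \xi_j\vecV_{j-1}$; the plan is therefore to re-run the truncate--centre--Lindeberg argument of Section \ref{sec:outline}, accounting for the fact that $\xi_j$ has nonzero mean $\xibar$. As in Section \ref{sec:general}, it suffices to treat the case where $(\xi_1,\veceta_1)$ has density $\Psi_0(x,\vecz)$; general initial data are then handled by the absolute-continuity argument of that section. Define $\xi_j'$ as in \eqref{xiprime}, $\tilde\xi_j=\xi_j'-m_j$ with $m_j$ as in \eqref{muj}, and set $\tau_n':=\sum_{j=1}^n\xi_j'$ and $\tilde\tau_n:=\sum_{j=1}^n\tilde\xi_j$. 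Then
\begin{equation}
\tau_n-n\xibar=(\tau_n-\tau_n')+\Bigl(\sum_{j=1}^n m_j-n\xibar\Bigr)+\tilde\tau_n.
\end{equation}

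The first two terms must be shown to be negligible relative to $\sqrt{n\log n}$. For the first, the proof of Lemma \ref{mainlem0} shows verbatim that $|\tau_n-\tau_n'|$ remains bounded almost surely (the argument only uses the Borel--Cantelli lemma applied to $\Prob(\xi_j^2>j(\log j)^\gamma)=O(j^{-1}(\log j)^{-\gamma})$, which follows from \eqref{PHIBARXILARGETHMRES1} and summability for $\gamma>1$). For the second, Proposition \ref{prop:m1bound} gives $\Exp(m_j)=\xibar+O(r_j^{-1})$, so the bias satisfies $\sum_{j=1}^n(\Exp m_j-\xibar)=O\bigl(\sqrt{n}\,(\log n)^{-\gamma/2}\bigr)=o(\sqrt{n\log n})$. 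For the stochastic part $\sum_{j=1}^n(m_j-\Exp m_j)$, apply Proposition \ref{prop:exp_mix} with $p=0$, $m=1$ and $f=g=K_{1,r_j}/K_0$, combined with the second moment bound $\Exp(m_j^2)=O(\log\log j)$ ($d=2$) or $O(1)$ ($d\ge 3$) from Proposition \ref{prop:mbound}, to obtain $\Var\bigl(\sum_{j=1}^n m_j\bigr)=O(n\log\log n)$ or $O(n)$, which again is $o(n\log n)$. Chebyshev's inequality then yields $(\sum_j m_j-n\xibar)/\sqrt{n\log n}\toprob 0$.

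It therefore remains to prove that $\tilde\tau_n/(\sigma_d\sqrt{d\,n\log n})\Rightarrow\scrN(0,1)$. Conditional on $\uveceta$, the variables $(\tilde\xi_j)_{j\ge 1}$ are independent with mean zero and variance $a_j^2$, so the conditional variance of $\tilde\tau_n$ equals $A_n^2=\sum_{j=1}^n a_j^2$. Taking the trace in Lemma \ref{mainlem2}, or equivalently invoking \eqref{eq:mainlem2b} directly, gives $A_n^2/(n\log n)\toprob d\sigma_d^2$, precisely matching the normalisation. The Lindeberg condition for the triangular array $(\tilde\xi_j/A_n)_{j\le n}$ conditional on $\uveceta$ is exactly the content of Lemma \ref{mainlem3}, and no modification is required. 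Arguing as in the paragraph following \eqref{Chebyshev}--\eqref{Lindeberg_condition22} (tightness from Chebyshev, passage to an almost-sure subsequence via Borel--Cantelli, Lindeberg--Feller for triangular arrays, uniqueness of the limit), one concludes $\tilde\tau_n/(\sigma_d\sqrt{d\,n\log n})\Rightarrow\scrN(0,1)$. Combined with the negligibility of the other two terms, this proves the lemma.

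The step I expect to be the most delicate is the control of $\sum_{j=1}^n(m_j-\xibar)$: it is the one genuinely new ingredient compared to the vector statement of Theorem \ref{thm:main3}(ii), where the oscillating factor $\vece\cdot\vecV_{j-1}$ averages to zero and automatically suppresses the additive drift. Here the drift is present and must be cancelled against $n\xibar$, requiring both the sharp tail asymptotics \eqref{PHIBARXILARGETHMRES1} (via Proposition \ref{prop:m1bound}) and the exponential mixing estimate of Proposition \ref{prop:exp_mix} to control its fluctuations; every other ingredient of the proof transfers without change from the vector case.
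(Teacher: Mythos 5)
Your proposal is correct and implements exactly what the paper means by "a simple variant of the proof of Theorem \ref{thm:main3}~(ii)": the same truncate--centre--Lindeberg machinery, with the one genuine difference (which you correctly isolate) that the drift $\sum_j m_j$ no longer vanishes exponentially via the oscillating factor $\vece\cdot\vecV_{j-1}$ and must instead be centred by $n\xibar$, its bias controlled by Proposition \ref{prop:m1bound} and its fluctuations by Propositions \ref{prop:exp_mix} and \ref{prop:mbound}. The remaining steps (conditional variance via \eqref{eq:mainlem2b}, Lindeberg via Lemma \ref{mainlem3}, Helly--Prokhorov/Borel--Cantelli/Lindeberg--Feller) transfer verbatim, and the normalisation $\sigma_d\sqrt{d\,n\log n}$ matches $A_n^2/(n\log n)\toprob d\sigma_d^2$.
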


\begin{proof}
This is a simple variant of the proof of Theorem \ref{thm:main3} (ii).
\end{proof}

\begin{lem}
\label{lem:displacement_bound} 
For all $n\in\NN$ and $u\geq 2$,
\begin{equation}
\label{displacement_bound}
\prob{\|\vecQ_n\| > u} =  O\bigg(\frac{n \log u}{u^2}\bigg). 
\end{equation} 
\end{lem}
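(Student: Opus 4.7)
\textbf{Proof plan for Lemma \ref{lem:displacement_bound}.} The strategy is truncation of $\xi_j$ at level $u$, together with a one-big-jump bound and a second-moment Chebyshev estimate on the truncated sum. Set $\xi_j^{(u)} := \xi_j \ind{\xi_j \leq u}$ and $\vecQ_n^{(u)} := \sum_{j=1}^n \xi_j^{(u)} \vecV_{j-1}$. The tail asymptotics \eqref{PHIBARXILARGETHMRES1} give $\Prob(\xi_j > u) = O(u^{-2})$ uniformly in $j$, so by the union bound $\Prob(\vecQ_n \neq \vecQ_n^{(u)}) \leq n\,\Prob(\xi_1 > u) = O(n/u^{2})$. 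It therefore suffices to show $\Prob(\|\vecQ_n^{(u)}\| > u) = O(n \log u / u^{2})$.

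Decompose $\vecQ_n^{(u)} = M_n + C_n$ into its $\uveceta$-conditional mean and centered parts: with $m_j^{(u)} := \Exp(\xi_j^{(u)} \mid \uveceta)$,
\begin{equation*}
M_n := \sum_{j=1}^n m_j^{(u)} \vecV_{j-1},
\qquad
C_n := \sum_{j=1}^n \bigl(\xi_j^{(u)} - m_j^{(u)}\bigr) \vecV_{j-1}.
\end{equation*}
Conditional on $\uveceta$, the summands of $C_n$ are independent with zero mean, so $\Exp(\|C_n\|^2 \mid \uveceta) \leq \sum_j \Exp({\xi_j^{(u)}}^2 \mid \uveceta)$. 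Taking unconditional expectation and using $\Psi_0(x) \sim \Theta_d x^{-3}$ from \eqref{PHIBARXILARGETHMRES1} gives $\Exp\|C_n\|^2 \leq n \int_0^u x^2 \Psi_0(x)\, dx = O(n \log u)$. By Chebyshev, $\Prob(\|C_n\| > u/2) = O(n \log u / u^{2})$, which matches the target order.

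The main step is the bound on $M_n$. Since $m_j^{(u)} \leq \min(\mu_j, u)$, the tail estimates in Propositions \ref{prop:tail2}--\ref{prop:tail3} yield $\Exp({m_j^{(u)}}^2) = O(\log\log u)$ in all dimensions (of order $\log\log u$ in $d=2$, of order $1$ in $d\ge 3$). Expanding in a fixed orthonormal basis $\vece_1,\ldots,\vece_d$ of $\RR^d$,
\begin{equation*}
\Exp\|M_n\|^2 = \sum_{k=1}^d \sum_{i,j=1}^n \Exp\bigl[m_i^{(u)} m_j^{(u)} (\vece_k \cdot \vecV_{i-1})(\vece_k \cdot \vecV_{j-1})\bigr].
\end{equation*}
Since $m_j^{(u)}$ depends on $(\veceta_{j-1}, \veceta_j)$ only, I would apply Proposition \ref{prop:exp_mix} with $p=1$, $m=1$, $f(\veceta_{i-1},\veceta_i) := m_i^{(u)}$, $g(\veceta_{j-1},\veceta_j) := m_j^{(u)}$ to bound the covariance of $(\vece_k \cdot \vecV_{i-1})m_i^{(u)}$ and $(\vece_k \cdot \vecV_{j-1})m_j^{(u)}$ by $O\bigl(\omega^{|i-j|}\log\log u\bigr)$, and Proposition \ref{prop:exp_exp} to bound each individual expectation $\Exp[(\vece_k \cdot \vecV_{i-1})m_i^{(u)}]$ by $O\bigl(\omega^{i}\sqrt{\log\log u}\bigr)$. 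The resulting term is $O\bigl((\omega^{|i-j|} + \omega^{i+j-2})\log\log u\bigr)$; summing over $i,j$ (geometric totals $O(n)$ and $O(1)$ respectively) and over $k$ yields $\Exp\|M_n\|^2 = O(n \log\log u)$, and Chebyshev gives $\Prob(\|M_n\| > u/2) = O(n \log\log u / u^{2})$, completing the argument.

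The main obstacle is the analysis of $M_n$: the $m_j^{(u)}$'s are heavy-tailed (inheriting the heavy tail of $\mu_j$ until truncation cuts in at level $u$) and are correlated with $\vecV_{j-1}$, so the naive bound $\|M_n\| \leq \sum |m_j^{(u)}|$ gives only a useless $O(n)$ estimate. The exponential-mixing machinery of Sections \ref{sec:spectral}--\ref{sec:exponential} is tailored precisely to such situations; the only new input required beyond the bounds already collected in Section \ref{sec:moments} is the moment estimate $\Exp({m_j^{(u)}}^2) = O(\log\log u)$, which is comfortably absorbed by the $\log u$ tolerance of the target bound.
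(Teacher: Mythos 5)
Your proof is correct and rests on the same pillars as the paper's: tail asymptotics of $\Psi_0$ from \eqref{PHIBARXILARGETHMRES1}, the tail bounds on $\mu_j$ from Propositions \ref{prop:tail2}--\ref{prop:tail3}, the exponential mixing machinery of Propositions \ref{prop:exp_mix}--\ref{prop:exp_exp}, and Chebyshev. The only difference is organizational: you truncate $\xi_j$ at level $u$ once at the outset and then center by $m_j^{(u)}$, whereas the paper centers by the untruncated $\mu_j$ first and then separately truncates $\alpha_j$ and $\mu_j$ at level $u$ inside each of the two resulting sums; your single-truncation bookkeeping is marginally cleaner but the content is the same.
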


\begin{proof}
We begin by observing that
\begin{equation}\label{split_in_two_again}
\prob{\|\vecQ_n\| >u} \le \prob{\big\|\sum_{j=1}^n \xi^*_j \vecV_{j-1} \big\| >\frac{u}{2}}+\prob{\big\|\sum_{j=1}^n \mu_j \vecV_{j-1} \big\| >\frac{u}{2}},
\end{equation}
where $\xi^*_j=\xi_j-\mu_j$ and $\mu_j$ is the conditional expectation of the (untruncated) $\xi_j$ defined in \eqref{muj}. Recall also the definition of the corresponding conditional variance $\alpha_j$ in \eqref{alphaj}.
Now, 
\begin{equation}
\begin{split}
\prob{\big\|\sum_{j=1}^n \xi^*_j \vecV_{j-1} \big\| >\frac{u}{2}} 
& \leq \prob{\big\|\sum_{j=1}^n \xi^*_j \vecV_{j-1} \ind{\alpha_j\leq u} \big\| >\frac{u}{4}} + \prob{\big\|\sum_{j=1}^n \xi^*_j \vecV_{j-1} \ind{\alpha_j> u} \big\| >\frac{u}{4}} \\
& \leq \frac{16}{u^2} \sum_{j=1}^n  \expect{\alpha_j^2 \ind{\alpha_j\leq u}} + \sum_{j=1}^n \prob{\alpha_j> u} \\
& = O\bigg(\frac{n\log u}{u^2}\bigg) + O\bigg(\frac{n}{u^2}\bigg) ,
\end{split}
\end{equation}
where we have used Chebyshev's inequality and, in the last bound, Propositions \ref{betatail2} ($d=2$) and \ref{betatail3} ($d\geq 3$).

The second term in \eqref{split_in_two_again} is bounded similarly: We have
\begin{equation}\label{aftereight}
\begin{split}
\prob{\big\|\sum_{j=1}^n \mu_j \vecV_{j-1} \big\| >\frac{u}{2}} 
& \leq \prob{\big\|\sum_{j=1}^n \mu_j \vecV_{j-1} \ind{\mu_j\leq u} \big\| >\frac{u}{4}} + \prob{\big\|\sum_{j=1}^n \mu_j \vecV_{j-1} \ind{\mu_j> u} \big\| >\frac{u}{4}} \\
& \leq \frac{16}{u^2}   \expect{\big\|\sum_{j=1}^n \mu_j \vecV_{j-1} \ind{\mu_j\leq u} \big\|^2} + \sum_{j=1}^n \prob{\mu_j> u} .
\end{split}
\end{equation}
To control the first term on the right hand side of \eqref{aftereight}, it is sufficient to bound
\begin{equation}
\Exp\big[\big(\sum_{j=1}^n \zeta_j \big)^2\big] ,\qquad  \zeta_j:=(\vece\cdot \vecV_{j-1}) \mu_j  \ind{\mu_j\leq u} ,
\end{equation}
for arbitrary $\vece\in\US$. We follow the same steps as in the proof of Lemma \ref{mainlem1}.
Let us first show that 
\begin{equation}\label{expzeta}
\Exp(\zeta_j) = 
\begin{cases}
O( \omega^j \sqrt{\log\log u}\,) & (d=2)\\
O(\omega^j ) & (d\geq 3).
\end{cases}
\end{equation}
To this end choose in Proposition \ref{prop:exp_exp} $m=1$ and 
\begin{equation}
g(\vecw,\vecz)=\frac{K_{1}(\vecw,\vecz)}{K_0(\vecw,\vecz)} \ind{K_{1}(\vecw,\vecz) \leq u K_0(\vecw,\vecz)}.
\end{equation}
Propositions \ref{prop:tail2} and \ref{prop:tail3} yield
\begin{equation}
\Exp(g(\veceta_0,\veceta_1)^2)=\Exp(\mu_j^2 \ind{\mu_j\leq u})
=
\begin{cases}
O(\log\log u) & (d=2) \\
O(1) & (d\geq 3),
\end{cases}
\end{equation}
and hence \eqref{expzeta}.

The above implies
\begin{equation}
\var{\zeta_j^2} = \expect{\zeta_j^2} + 
\begin{cases}
O(\omega^{2j} \log\log u) & (d=2)\\
O(\omega^{2j}) & (d\geq 3),
\end{cases}
\end{equation}
where
\begin{equation}
\expect{\zeta_j^2} \leq \Exp(\mu_j^2 \ind{\mu_j\leq u}) = 
\begin{cases}
O(\log\log u)  &(d=2)\\
O(1) & (d\geq 3).
\end{cases}.
\end{equation}
Due to Proposition \ref{prop:exp_mix}, we also have
\begin{equation}
\cov{\zeta_i} {\zeta_j} \le \sqrt{\expect{\zeta_i^2}}\sqrt{\expect{\zeta_j^2}} \omega^{|i-j|}.
\end{equation}
We conclude
\begin{equation}
\label{sumXtight22}
\Exp\big[\big( \sum_{j=1}^n \zeta_j\big)^2\big]
= 
\begin{cases}
O(n\log\log u)  &(d=2)\\
O(n) & (d\geq 3).
\end{cases}
\end{equation}
The second term on the right hand side of \eqref{aftereight} is controlled by the tail estimates in Propositions \ref{prop:tail2} and \ref{prop:tail3}.  
The overall result is 
\begin{equation}
\prob{\big\|\sum_{j=1}^n \mu_j \vecV_{j-1} \big\| >\frac{u}{2}} =
\begin{cases}
\displaystyle
O\bigg(\frac{n\log\log u}{u^2}\bigg) + O\bigg(\frac{n}{u^2\log u}\bigg) & (d=2) \\[15pt]
\displaystyle
O\bigg(\frac{n}{u^2}\bigg) + O\bigg(\frac{n}{u^{1+\frac{d}{2}}}\bigg) & (d\geq 3) ,
\end{cases}
\end{equation}
which completes the proof of the lemma.
\end{proof}

\begin{lem}
\label{lem:fluctuation_bounds}
For any $\vareps>0$ and $\delta>0$,
\begin{equation}
\label{collision_no_fluct_bound}
\lim_{t\to\infty}
\prob{|\nu_t-n_t| > \delta t^{1/2 + \vareps}}
=0, 
\end{equation}
\begin{equation}
\label{max_flight_bound}
\lim_{n\to\infty}
\prob{\max_{1\le j \le n} \xi_j > \delta n^{1/2 + \vareps}}
=0,
\end{equation}
\begin{equation}
\label{max_displacement_bound}
\lim_{n\to\infty}
\prob{\max_{1\le m \le n} \| \vecQ_m \| > \delta n^{5/6 + \vareps}}
=0.
\end{equation}
\end{lem}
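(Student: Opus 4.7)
The three parts call on different tools. For \eqref{collision_no_fluct_bound} I would use the duality $\{\nu_t \geq m\} = \{\tau_m \leq t\}$ and the deterministic relation $t - n_t\xibar \in [0,\xibar)$. With $M := \lceil \delta t^{1/2+\vareps}\rceil$, a union bound yields
\begin{equation*}
\Prob\bigl(|\nu_t - n_t| > \delta t^{1/2+\vareps}\bigr) \leq \Prob\bigl(\tau_{n_t+M+1} \leq t\bigr) + \Prob\bigl(\tau_{n_t-M} > t\bigr),
\end{equation*}
and on either event $\tau_m - m\xibar$ must deviate from zero by at least $M\xibar - O(1)$ for some $m \asymp t$. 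Since $M\xibar \asymp t^{1/2+\vareps}$ dwarfs the CLT scale $\sqrt{t\log t}$ predicted by Lemma \ref{lem:flight_time_clt_discrete_time}, tightness of the rescaled $\tau_n$ forces both probabilities to zero.

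For \eqref{max_flight_bound} the key observation is that, under the running assumption that $(\xi_1,\veceta_1)$ has density $\Psi_0$, the chain \eqref{Markov1} is stationary: every $\veceta_{j-1}$ is uniform on $\UB$ and hence every $\xi_j$ shares the marginal density $\Psi_0(x)$. The tail asymptotics \eqref{PHIBARXILARGETHMRES1} give $\Prob(\xi_j > u) = O(u^{-2})$ uniformly in $j$, and a union bound yields
\begin{equation*}
\Prob\bigl(\max_{j\leq n}\xi_j > \delta n^{1/2+\vareps}\bigr) \leq n\cdot O\bigl(n^{-1-2\vareps}\bigr) = O\bigl(n^{-2\vareps}\bigr) \to 0.
\end{equation*}

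The third bound I would attack by a blocking argument. Set $L := \lceil n^{1/3}\rceil$, $K := \lceil n/L\rceil$. Any $m \in [1,n]$ decomposes as $m = kL + j$ with $0 \leq j < L$, and since $\|\vecV_{i-1}\| = 1$,
\begin{equation*}
\|\vecQ_m\| \leq \|\vecQ_{kL}\| + \sum_{i=kL+1}^m \xi_i \leq \max_{0\leq k'\leq K}\|\vecQ_{k'L}\| + L\max_{1\leq i\leq n}\xi_i.
\end{equation*}
For the first summand, Lemma \ref{lem:displacement_bound} with $u = (\delta/2)n^{5/6+\vareps}$ combined with a union bound over the $K$ endpoints gives
\begin{equation*}
\Prob\bigl(\max_{k'\leq K}\|\vecQ_{k'L}\| > \tfrac{\delta}{2}n^{5/6+\vareps}\bigr) \ll K\cdot\frac{n\log n}{n^{5/3+2\vareps}} = \frac{n^{1/3-2\vareps}\log n}{L} = O\bigl(n^{-2\vareps}\log n\bigr).
\end{equation*}
For the second, \eqref{max_flight_bound} applied with exponent $\vareps/2$ yields $L\max_i\xi_i \leq n^{1/3+o(1)}\cdot n^{1/2+\vareps/2} = n^{5/6+\vareps/2+o(1)}$, which is below $\tfrac{\delta}{2}n^{5/6+\vareps}$ for $n$ large.

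The only delicate point is the choice of block length: the first summand demands $L \gg n^{1/3-2\vareps}$ while the second demands $L \ll n^{1/3+\vareps}$, so $L \asymp n^{1/3}$ is the unique balance point dictated by the superdiffusive tail $n\log u/u^2$ of Lemma \ref{lem:displacement_bound}. This balance in turn pins down the exponent $5/6$ appearing in \eqref{max_displacement_bound}.
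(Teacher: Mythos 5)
Your proof is correct and follows essentially the same route as the paper's: duality $\{\nu_t\geq N\}=\{\tau_N\leq t\}$ plus Lemma \ref{lem:flight_time_clt_discrete_time} for \eqref{collision_no_fluct_bound}, a union bound over the stationary marginals for \eqref{max_flight_bound}, and the identical length-$n^{1/3}$ blocking combined with Lemma \ref{lem:displacement_bound} and \eqref{max_flight_bound} for \eqref{max_displacement_bound}. The only cosmetic difference is in \eqref{max_flight_bound}, where you bound $\Prob(\xi_j>u)$ by the $O(u^{-2})$ tail from \eqref{PHIBARXILARGETHMRES1} directly, whereas the paper routes through Markov's inequality with the fractional moment $\Exp(\xi_1^{2-\vareps})$; both give $O(n^{-2\vareps})$.
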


\begin{proof}[Proof of \eqref{collision_no_fluct_bound}]
Note that, for any $N\in\ZZ_{\geq 0}$, $t\geq 0$,
\begin{equation}
\nu_t \geq N \quad \Leftrightarrow \quad \tau_N\leq t,
\end{equation}
and therefore, with $N(t):=\lfloor n_t+\delta t^{1/2 + \vareps}\rfloor$,
\begin{equation}
\prob{\nu_t- n_t > \delta t^{1/2 + \vareps}}
\leq
\prob{\nu_t \geq N(t)}
=
\prob{\tau_{N(t)} \leq t} .
\end{equation}
On the other hand, it follows from Lemma \ref{lem:flight_time_clt_discrete_time} that 
\begin{equation}
\lim_{t\to\infty}
\prob{\tau_{N(t)} < t} 
= 0.
\end{equation}
Similarly, for $M(t):=\lfloor n_t- \delta t^{1/2 + \vareps}\rfloor$, we have
\begin{equation}
\prob{\nu_t- n_t < - \delta t^{1/2 + \vareps}}
=
\prob{\tau_{M(t)} > t},
\end{equation}
and Lemma \ref{lem:flight_time_clt_discrete_time} implies
\begin{equation}
\lim_{t\to\infty}
\prob{\tau_{M(t)} > t}
=
0.
\end{equation}
\end{proof}

\begin{proof}[Proof of \eqref{max_flight_bound}] 
We use the simplest union bound and Markov's inequality: 
\begin{equation}
\prob{\max_{1\le j \le n} \xi_j > \delta n^{1/2 + \vareps}}
\le 
n \prob{\xi_1 > \delta n^{1/2 + \vareps}}
\le 
n\frac{\expect{\xi_1^{2-\vareps}}}{\delta^{2-\vareps}n^{(1/2+\vareps)(2-\vareps)}}.
\end{equation}
This sequence converges to 0 for $\vareps<\frac32$.
\end{proof}

\begin{proof}[Proof of \eqref{max_displacement_bound}]
Note first that 
\begin{equation}
\max_{1\le m\le n}\|\vecQ_m\| 
\le 
\max_{1\le m\le n^{2/3}}\|\vecQ_{m\lfloor n^{1/3}\rfloor}\| + n^{1/3} \max_{1\le m\le n} \xi_m.
\end{equation}
Hence
\begin{multline}
\label{split_in_two}
\prob{\max_{1\le m \le n} \|\vecQ_m\| > \delta n^{5/6 + \vareps}} \\
\le 
\prob{\max_{1\le m\le n^{2/3}}\|\vecQ_{m\lfloor n^{1/3}\rfloor}\| > \frac{\delta}{2} n^{5/6 + \vareps}}
+
\prob{\max_{1\le m\le n} \xi_m > \frac{\delta}{2} n^{1/2 + \vareps}}.
\end{multline}
The second term on the right hand side of \eqref{split_in_two} converges to zero, due to \eqref{max_flight_bound}. From \eqref{displacement_bound} it follows that 
\begin{equation}
\prob{\max_{1\le m\le n^{2/3}}\|\vecQ_{m\lfloor n^{1/3}\rfloor}\| > \delta n^{5/6 + \vareps}}
\ll
\frac {\log n}{\delta^2 n^{5/3+2\vareps}} \sum_{m=1}^{\lfloor n^{2/3} \rfloor} m n^{1/3}
\ll
\frac {\log n}{\delta^2 n^{2\vareps}} \to 0. 
\end{equation}

This completes the proof of Lemma \ref{lem:fluctuation_bounds}.
\end{proof}

\begin{proof}[Proof of Proposition \ref{prop:dt_ct_are_close}]
Since 
\begin{equation}
\|\vecX_t-\vecQ_{n_t}\| \le \|\vecQ_{\nu_t}-\vecQ_{n_t}\| + \xi_{\nu_t+1}, 
\end{equation}
we have 
\begin{multline}
\label{split_in_three}
\prob{\|\vecX_t-\vecQ_{n_t}\| > \delta t^{5/12 + \vareps}} 
\le \prob{|\nu_t-n_t|> t^{1/2 + \vareps}} \\
+ \prob{\max_{|m|\le t^{1/2 + \vareps}} \xi_{n_t+m} > \frac{\delta}{2} t^{5/12 + \vareps}} 
+ \prob{\max_{|m|\le t^{1/2 + \vareps}} \|\vecQ_{n_t+m}-\vecQ_{n_t}\| > \frac{\delta}{2} t^{5/12 + \vareps}},
\end{multline}
and therefore, by stationarity of the Markov process \eqref{transprob} (recall that here we may assume without loss of generality that $\vecv_0$ is uniformly distributed in $\US$), 
\begin{multline}
\label{split_in_three2}
\prob{\|\vecX_t-\vecQ_{n_t}\| > \delta t^{5/12 + \vareps}} 
\leq \prob{|\nu_t-n_t|> t^{1/2 + \vareps}}\\
+
\prob{\max_{1\le m \le 2t^{1/2 + \vareps}} \xi_{m} > \frac{\delta}{2} t^{5/12 + \vareps}} 
+
\prob{\max_{1\le m \le 2t^{1/2 + \vareps}} \|\vecQ_{m}\| > \frac{\delta}{4} t^{5/12 + \vareps}} .
\end{multline}
The three terms on the right hand side of \eqref{split_in_three2} are controlled by Lemma \ref{lem:fluctuation_bounds}. 
This completes the proof of Proposition \ref{prop:dt_ct_are_close}.
\end{proof}

\section{Convergence of finite-dimensional distributions}\label{plussec:finite}

The convergence of finite-dimensional distribution follows from analogous arguments as in the one-dimensional case (cf.~Section \ref{sec:outline}). We include a sketch of the main steps. 

We will assume for the rest of this paper that $(\xi_1,\veceta_1)$ has density $\Psi_0(x, \vecz)$; by the arguments of Section \ref{sec:general} this is without loss of generality. 

\begin{prop}\label{plusprop:main3}
Let $d\geq 2$,  $\vecv_0\in\US$ and assume that the marginal distribution of $\veceta_1$ is absolutely continuous. Then, for every fixed $k$-tuple $(t_1,\ldots,t_k)\in(0,1]^k$ as $n\to\infty$,
\begin{equation}\label{pluseq:main4}
\big(\vecY_n(t_1),\ldots,\vecY_n(t_k) \big) \Rightarrow \big(\vecW(t_1),\ldots,\vecW(t_k)\big).
\end{equation}
\end{prop}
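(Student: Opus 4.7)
The plan is to reduce the finite-dimensional statement to a single one-dimensional central limit theorem, and then replay the Lindeberg strategy of Section~\ref{sec:outline} with partial-sum variants of Lemmas~\ref{mainlem0}--\ref{mainlem3}. Fix $0 < t_1 < t_2 < \ldots < t_k \leq 1$ (with $t_0:=0$). By the Cram\'er--Wold device, it suffices to show that for any vectors $\vecd_1,\ldots,\vecd_k\in\RR^d$,
\begin{equation}
\sum_{j=1}^k \vecd_j \cdot \big(\vecY_n(t_j)-\vecY_n(t_{j-1})\big) \;\Rightarrow\; \scrN\!\left(0,\; \sum_{j=1}^k \|\vecd_j\|^2\,(t_j-t_{j-1})\right),
\end{equation}
since the right-hand side has the correct covariance structure of the independent increments of $\vecW$. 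Writing $n_j:=\lfloor nt_j\rfloor$, the contribution of the interpolation remainder $\{nt_j\}\,\xi_{n_j+1}\vecV_{n_j}$ is negligible: the tail estimate $\Prob(\xi_i>u)\ll u^{-2}$ from \eqref{PHIBARXILARGETHMRES1} and a union bound show that $\max_{i\le n}\xi_i=o(\sqrt{n\log n})$ in probability. Hence it is enough to prove convergence of
\begin{equation}
S_n \;:=\; \frac{1}{\sigma_d\sqrt{n\log n}} \sum_{i=1}^{n_k} \vecc(i)\cdot \xi_i \vecV_{i-1}, \qquad \vecc(i):=\vecd_j \text{ for } n_{j-1}<i\leq n_j.
\end{equation}

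Next I would run the truncation/centering machinery of Section~\ref{sec:outline} with the weights $\vecc(i)$ in place of a single unit vector. Lemma~\ref{mainlem0} (Borel--Cantelli) immediately gives $\sup_n\|\vecQ_n-\vecQ_n'\|<\infty$ a.s., so truncation is free. The centering step follows from the proof of Lemma~\ref{mainlem1}: because $\|\vecc(i)\|$ is uniformly bounded, Propositions~\ref{prop:exp_mix} and~\ref{prop:exp_exp} yield the same $O(n\log\log n)$ (resp.\ $O(n)$) bound on the second moment of $\sum_i (\vecc(i)\cdot\vecV_{i-1})\,m_i$. Hence it suffices to establish the CLT for the centered conditional sum $\widetilde S_n := (\sigma_d\sqrt{n\log n})^{-1}\sum_i \vecc(i)\cdot\widetilde\xi_i\vecV_{i-1}$.

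The crucial step is the block version of Lemma~\ref{mainlem2}: I claim that, for each $j$,
\begin{equation}
\frac{1}{n\log n}\sum_{i=n_{j-1}+1}^{n_j} \big(\vecd_j\cdot\vecV_{i-1}\big)^2 a_i^2 \;\toprob\; \sigma_d^2\,\|\vecd_j\|^2\,(t_j-t_{j-1}).
\end{equation}
The proof follows the same two moment estimates used for Lemma~\ref{mainlem2}: Proposition~\ref{prop:aasy} gives $\Exp(a_i^2)=\tfrac{\Theta_d}{2}\log i+O(\log\log i)$, and \eqref{eq:exp2} in Proposition~\ref{prop:exp_exp} shows that $\Exp[(\vecd_j\cdot\vecV_{i-1})^2 a_i^2]=d^{-1}\|\vecd_j\|^2\Exp(a_i^2)+O(\omega^i)$; summing over the block $n_{j-1}<i\le n_j$ yields the claim since $\sigma_d^2=\Theta_d/(2d)$. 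The variance bound $O(n^2(\log n)^\gamma)=o((n\log n)^2)$ from the exponential mixing (Proposition~\ref{prop:exp_mix}) carries over verbatim. Since the weight $\vecc(i)$ is constant inside each block, there are no cross-terms, and adding over $j$ gives
\begin{equation}
\Var(\widetilde S_n\mid\uveceta) \;\toprob\; \sum_{j=1}^k \|\vecd_j\|^2\,(t_j-t_{j-1}).
\end{equation}
Lemma~\ref{mainlem3} then supplies the Lindeberg condition, because $|\vecc(i)\cdot\vecV_{i-1}\widetilde\xi_i|\leq \|\vecd_j\|\,|\widetilde\xi_i|$ reduces the truncated-tail estimate back to the one treated in Section~\ref{sec:proof}.

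The main obstacle is not conceptual but bookkeeping: one must verify that all the moment and mixing estimates in Sections~\ref{sec:moments}--\ref{sec:exponential} (in particular the variance computation underlying Lemma~\ref{mainlem2}) survive the restriction to a block $n_{j-1}<i\leq n_j$ with the logarithmic prefactor unchanged. This is where the normalisation $\sqrt{n\log n}$ rather than $\sqrt{n\log(n(t_j-t_{j-1}))}$ enters, reflecting the superdiffusive scaling, and is the reason the limiting variance picks up a linear factor $(t_j-t_{j-1})$ rather than a logarithmic correction. Given these block estimates, exactly the subsequence argument at the end of Section~\ref{sec:outline} (Chebyshev bound for tightness, Helly--Prokhorov plus Borel--Cantelli, then the Lindeberg CLT for triangular arrays) shows $\widetilde S_n\Rightarrow\scrN(0,\sum_j\|\vecd_j\|^2(t_j-t_{j-1}))$, which together with the truncation/centering/interpolation estimates above completes the proof of Proposition~\ref{plusprop:main3}.
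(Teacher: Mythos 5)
Your proposal is correct and follows essentially the same route as the paper: truncate and center $\xi_j$, compute the conditional block-diagonal covariance from Lemma~\ref{mainlem2} and the moment/mixing estimates of Sections~\ref{sec:moments}--\ref{sec:exponential}, verify the Lindeberg condition as in Lemma~\ref{mainlem3}, and conclude by the subsequence argument of Section~\ref{sec:outline}. The only cosmetic difference is that you invoke the Cram\'er--Wold device to reduce to a scalar linear combination $\sum_j\vecd_j\cdot(\vecY_n(t_j)-\vecY_n(t_{j-1}))$, whereas the paper assembles the $kd$-dimensional vector $\vecR_n=\sum_j\xi_j\vecU_{j-1}$ from blockwise indicators and applies the multivariate Lindeberg theorem directly; the two are equivalent here.
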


\begin{proof}
We may assume $t_0:=0<t_1<t_2<\ldots<t_k\leq 1$. The weak convergence \eqref{pluseq:main4} is equivalent to 
\begin{multline}\label{pluseq:main4b}
\big(\vecY_n(t_1)-\vecY_n(t_0),\ldots,\vecY_n(t_k)-\vecY_n(t_{k-1}) \big) \\ \Rightarrow \big(\vecW(t_1)-\vecW(t_0),\ldots,\vecW(t_k)-\vecW(t_{k-1})\big).
\end{multline}
Define the $kd$-dimensional vector
\begin{equation}
\vecU_{j,n} := \begin{pmatrix} 
\ind{j \leq  \lfloor t_1 n\rfloor} \vecV_j \\
\ind{\lfloor t_1 n\rfloor < j \leq  \lfloor t_2 n\rfloor} \vecV_j \\
\vdots \\
\ind{\lfloor t_{k-1} n\rfloor < j \leq  \lfloor t_k n\rfloor} \vecV_j
\end{pmatrix}
\end{equation}
and
\begin{equation}
\vecR_n := \sum_{j=1}^n \xi_j \vecU_j = 
\begin{pmatrix} \vecQ_n(t_1)-\vecQ_n(t_0) \\ \vdots \\ \vecQ_n(t_k)-\vecQ_n(t_{k-1}) \end{pmatrix}, \qquad n\in\NN .
\end{equation}
We thus need to show that
\begin{equation}\label{pluseq:main4c}
\frac{\vecR_n}{\sigma_d \sqrt{n\log n}} \Rightarrow \begin{pmatrix} \vecW_n(t_1)-\vecW_n(t_0) \\ \vdots \\ \vecW_n(t_k)-\vecW_n(t_{k-1}) \end{pmatrix}.
\end{equation}

We truncate $\vecR_n$ by defining the random variable
\begin{equation}
\vecR_n' := \sum_{j=1}^n \xi_j' \vecU_{j-1}  
\end{equation}
with $\xi_j'$ as in \eqref{xiprime}.
%The following lemma tells us that it is sufficient to prove Proposition \ref{plusprop:main3} for $\vecR_n'$ instead of $\vecR_n$.

\begin{lem}\label{plusmainlem0}
We have
\begin{equation}
\sup_{n\in\NN}\|\vecR_n-\vecR_n'\| <\infty
\end{equation}
almost surely.
\end{lem}

This statement is an immediate consequence of Lemma \ref{mainlem0}, where the bound is established for each component. %The argument uses the Borel-Cantelli Lemma, which shows that $\xi_j'\neq\xi_j$ occurs for finitely many $j$ almost surely.
To prove the central limit theorem for $\vecR_n'$, we consider $\txi_j = \xi_j' -m_j$
with the conditional expectation $m_j$ as in \eqref{conExp}, and let
\begin{equation}\label{plus2sums}
\widetilde\vecR_n := \sum_{j=1}^n \txi_j \vecU_{j-1} .
\end{equation}

%The following lemma shows that $\vecR_n'$ and $\widetilde\vecR_n$ are close relative to $\sqrt{n\log n}$.

\begin{lem}\label{plusmainlem1}
The sequence of random variables 
\begin{equation}
\frac{\vecR_n'-\widetilde\vecR_n}{\sqrt{n \log\log n}} 
\end{equation}
is tight if $d=2$, and  
\begin{equation}
\frac{\vecR_n'-\widetilde\vecR_n}{\sqrt{n}} 
\end{equation}
is tight if $d\geq 3$.
\end{lem}

This lemma follows directly from Lemma \ref{mainlem1}. %The latter exploits an upper bound on the diffusive order of fluctuations of ergodic averages for Markov chains, under appropriate spectral conditions.
Lemmas \ref{plusmainlem0} and \ref{plusmainlem1} imply that it is sufficient to prove Proposition \ref{plusprop:main3} for $\widetilde\vecR_n$ in place of $\vecR_n$. Let us turn to the covariance and recall the definition \eqref{cona} of $a_j$.

\begin{lem}\label{plusmainlem2}
For $n\to\infty$,
\begin{multline}\label{pluseq:mainlem2}
\frac{\Exp\big( \widetilde\vecR_n \otimes\widetilde\vecR_n\bigm| \uveceta \big)}{n\log n}  = \frac{\sum_{j=1}^n  a_j^2 \vecU_{j-1} \otimes\vecU_{j-1}}{n\log n}\\
\toprob  \sigma_d^2 \,  
\begin{pmatrix} 
t_1 I_d & & & \\
& (t_2-t_1) I_d & & \\
& & \ddots & \\
& & & (t_k-t_{k-1}) I_d
\end{pmatrix} .
\end{multline}
\end{lem}

This follows from Lemma \ref{mainlem2} by observing that the different $d$-dimensional components of $\widetilde\vecR_n$ are independent when conditioned on $\uveceta$ and $\vecv_0$. %The proof of Lemma \ref{mainlem2} establishes a weak law of large numbers for the sequence of conditional variances. This essentially relies on Chebyshev inequalities applied carefully to the truncated variables.
Note that the variance $A_n$ defined in \eqref{conA} satisfies $A_n^2 =\Exp( \|\widetilde\vecR_n\|^2)$.

%The next lemma \cite[Lemma 4.4]{super} verifies the Lindeberg conditions for random $\uveceta$. 
%%This is the main probabilistic ingredient in the proof of the central limit theorem and relies on delicate asymptotic estimates.
%
%\begin{lem}\label{plusmainlem3}
%For any fixed $\vareps>0$,
%\begin{equation}
%\label{plusLindeberg_condition}
%A_n^{-2}
%\sum_{j=1}^n
%\condexpect{\txi_j^2 \ind{\txi_j^2 > \vareps^2 A_n^2}}{\uveceta}
%\toprob 0
%\end{equation}
%as $n\to\infty$.
%\end{lem}

Given these lemmas, let us now conclude the proof of \eqref{pluseq:main4c}. The sequence of random vectors
\begin{equation}\label{pluseq:main4d}
\vecZ_n:=\frac{\widetilde\vecR_n}{\sigma_d \sqrt{n\log n}} 
\end{equation}
is tight in $\RR^{kd}$ because each component is tight in $\RR^d$; the latter was proved at the end of Section \ref{sec:outline}. By the Helly-Prokhorov theorem, there is an infinite subset $S_1\subset \NN$ so that $\vecZ_n$ converges in distribution along $n\in S_1$ to some limit $\vecZ$. Assume for a contradiction that $\vecZ$ is {\em not} the right hand side of \eqref{pluseq:main4c}. The Borel-Cantelli lemma implies that there is an infinite subset $S_2\subset S_1$, so that in the statements of Lemmas \ref{plusmainlem2} and \ref{mainlem3} we have almost-sure convergence along $n\in S_2$:
\begin{equation}\label{pluseq:mainlem222}
\Exp\big( \vecZ_n \otimes\vecZ_n\bigm| \uveceta \big)  \toas    \begin{pmatrix} 
t_1 I_d & & & \\
& (t_2-t_1) I_d & & \\
& & \ddots & \\
& & & (t_k-t_{k-1}) I_d
\end{pmatrix} ,
\end{equation}
\begin{equation}\label{pluseq:mainlem2bbb}
\frac{A_n^2}{n\log n}\toas  d\, \sigma_d^2 ,
\end{equation}
and
\begin{equation}
\label{plusLindeberg_condition22}
A_n^{-2}
\sum_{j=1}^n
\condexpect{\txi_j^2 \ind{\txi_j^2 > \vareps^2 A_n^2}}{\uveceta} 
\toas 0 .
\end{equation}
The hypotheses of the Lindeberg central limit theorem are thus satisfied, and we conclude the proof as for the one-dimensional distribution (see end of Section \ref{sec:outline}) with Lemmas \ref{mainlem0} and \ref{mainlem1} replaced by Lemmas \ref{plusmainlem0} and \ref{plusmainlem1}.
%
%and we infer that $\vecZ_n$ converges to the right hand side of \eqref{pluseq:main4c} for $n\to\infty$ along $S_2$. (We use the Lindeberg theorem for {\em triangular arrays} of independent random variables, since we have verified the Lindeberg conditions only along a subsequence.) This, however, contradicts our hypothesis, and hence the right hand side of \eqref{pluseq:main4c} is indeed the unique limit distribution of any weakly converging subsequence. This in turn implies that every sequence converges weakly, and therefore completes the proof of \eqref{pluseq:main4c}. In view of Lemmas \ref{plusmainlem0} and \ref{plusmainlem1}, this implies Proposition \ref{plusprop:main3}.
\end{proof}

\section{Tightness}\label{plussec:tightness}

We will now establish tightness for the sequence of processes $(\vecY_n)_{n=1}^\infty$. This is the last remaining input in the proof of Theorem \ref{plusthm:main2}. We assume for simplicity that the scattering map satisfies the hypotheses of Section \ref{sec:two_a} (as opposed to the milder condition in Theorem \ref{thm:main3}). 
Define
\begin{equation}
\xi_{j,n} := \xi_j \ind{\xi_j \leq   r_n},\qquad r_n=\sqrt{n (\log n)^\gamma},
\end{equation}
\begin{equation}
\txi_{j,n} = \xi_{j,n} -m_{j,n},  
\end{equation}
with the conditional expectation
\begin{equation}\label{pluscexp}
m_{j,n} := \Exp\big( \xi_{j,n} \bigm| \uveceta\big) = \frac{K_{1,r_n}(\veceta_{j-1},\veceta_j)}{K_0(\veceta_{j-1},\veceta_j)} .
\end{equation}
Furthermore, let
\begin{equation}
\vecQ_n^* := \sum_{j=1}^n \xi_{j,n} \vecV_{j-1} ,
\end{equation}
\begin{equation}
a_{j,n}^2:= \Var\big(\xi_{j,n} \bigm| \uveceta\big)= \frac{K_{2,r_n}(\veceta_{j-1},\veceta_j)}{K_0(\veceta_{j-1},\veceta_j)} - m_{j,n}^2 ,
\end{equation}
and
\begin{equation}
\scrA_n^2 :=\sum_{j=1}^n a_{j,n}^2 = \Exp\big( \|\vecQ_n^*\|^2 \bigm| \uveceta \big).
\end{equation}

We split the process $\vecY_n$ defined in \eqref{eq:main4b} into four parts,
\begin{equation}
\vecY_n= \widehat\vecY_n + \widetilde\vecY_n + \overline\vecY_n + \check\vecY_n,
\end{equation}
where
\begin{equation}
\widehat\vecY_n(t)
 = \frac{1}{\sigma_d \sqrt{n\log n}}\sum_{j=1}^{\lfloor nt \rfloor} \xi_j \ind{\xi_j >   r_n}\vecV_{j-1},
\end{equation}
\begin{equation}
\widetilde\vecY_n(t)
 = \frac{1}{\sigma_d \sqrt{n\log n}}\sum_{j=1}^{\lfloor nt \rfloor} \txi_{j,n} \vecV_{j-1},
\end{equation}
\begin{equation}
\overline\vecY_n(t)
 = \frac{1}{\sigma_d \sqrt{n\log n}}  \sum_{j=1}^{\lfloor nt \rfloor} m_{j,n} \vecV_{j-1} ,
\end{equation}
and
\begin{equation}
\check\vecY_n(t)
 = \frac{1}{\sigma_d \sqrt{n\log n}}   \{ nt\}\, \xi_{\lfloor nt\rfloor +1}\vecV_{\lfloor nt\rfloor}.
\end{equation}

We begin by showing that $\widehat\vecY_n$, $\overline\vecY_n$ and  $\check\vecY_n$ are uniformly small with large probability. Consider first $\widehat\vecY_n$ and $\check\vecY_n$.

\begin{prop}\label{pluslemY0}
We have 
\begin{equation}
\sup_{n\in\NN} \sup_{t\in[0,1]} \sqrt{n\log n} \; \| \widehat\vecY_n(t)\| <\infty 
\end{equation}
almost surely.
\end{prop}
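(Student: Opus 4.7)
The plan is to reduce the claim to a scalar Borel--Cantelli argument essentially identical to the one used in the proof of Lemma \ref{mainlem0}. Since every $\vecV_{j-1}$ is a unit vector, the triangle inequality gives, uniformly in $t\in[0,1]$,
\begin{equation}
\sigma_d\sqrt{n\log n}\,\|\widehat\vecY_n(t)\| \leq \sum_{j=1}^{\lfloor nt\rfloor}\xi_j\ind{\xi_j>r_n} \leq \sum_{j=1}^n \xi_j\ind{\xi_j>r_n},
\end{equation}
so it suffices to bound the rightmost sum almost surely uniformly in $n$.

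Next, I would apply the tail asymptotics \eqref{PHIBARXILARGETHMRES1}. By stationarity of the Markov chain \eqref{Markov1} (with $\veceta_0$ uniformly distributed on $\UB$), each $\xi_j$ has unconditional density $\Psi_0(x)$; thus $\Prob(\xi_j>r_j) \ll r_j^{-2} = j^{-1}(\log j)^{-\gamma}$, which is summable because $\gamma>1$. By the Borel--Cantelli lemma, the event $\{\xi_j>r_j\}$ occurs for only finitely many $j$ almost surely, so there exists a random $J=J(\omega)<\infty$ a.s.\ such that $\xi_j \leq r_j$ for every $j>J$.

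Finally, I would exploit the monotonicity of $n\mapsto r_n = \sqrt{n(\log n)^\gamma}$ (non-decreasing for $n\geq 2$): whenever $J < j \leq n$, we have $\xi_j \leq r_j \leq r_n$, so $\ind{\xi_j>r_n}=0$. Hence
\begin{equation}
\sum_{j=1}^n \xi_j\ind{\xi_j>r_n} \leq \sum_{j=1}^{\min(n,J)} \xi_j \leq \sum_{j=1}^{J}\xi_j,
\end{equation}
and the right-hand side is finite almost surely and independent of $n$ and $t$. This proves the proposition.

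There is no substantive obstacle here; the argument is a direct enhancement of Lemma \ref{mainlem0}, and the only new ingredient compared to that proof is the monotonicity of $n\mapsto r_n$, which lets a single random threshold $J$ serve as a uniform cut-off.
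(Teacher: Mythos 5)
Your proof is correct and matches the paper's intended argument, which simply states that the proof is ``identical to that of Lemma \ref{mainlem0}'' via Borel--Cantelli; you have in fact supplied the one detail the paper leaves implicit, namely that the monotonicity of $n\mapsto r_n$ converts the $n$-dependent truncation level in $\widehat\vecY_n$ into the $j$-dependent one appearing in Lemma \ref{mainlem0}.
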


The proof of Proposition \ref{pluslemY0} is identical to that of Lemma \ref{mainlem0}. The key ingredient is the Borel-Cantelli Lemma, cf.~the comment after Lemma \ref{plusmainlem0}.

\begin{prop}\label{pluslemY1}
There is $C<\infty$ such that, for any $\beta>0$ and any $n\geq 2$, 
\begin{equation}
\Prob \bigg(\sup_{t\in [0,1]} \|\check\vecY_n(t)\|  \geq \beta  \bigg) \leq \frac{C}{\beta^2 \log n}.
\end{equation}
\end{prop}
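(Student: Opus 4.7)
The plan is to reduce the supremum to a simple maximum of the $\xi_j$'s, and then apply a union bound together with the heavy-tail estimate \eqref{PHIBARXILARGETHMRES1}. The first step is to observe that for every $t\in[0,1]$,
\begin{equation}
\|\check\vecY_n(t)\| = \frac{\{nt\}\,\xi_{\lfloor nt\rfloor+1}}{\sigma_d\sqrt{n\log n}} \leq \frac{\xi_{\lfloor nt\rfloor+1}}{\sigma_d\sqrt{n\log n}},
\end{equation}
since $\|\vecV_{\lfloor nt\rfloor}\|=1$ and $\{nt\}<1$. As $t$ ranges over $[0,1]$, the index $\lfloor nt\rfloor+1$ takes values in $\{1,2,\dots,n\}$ (with $\check\vecY_n(1)=0$ coming from $\{n\cdot 1\}=0$). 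Hence
\begin{equation}
\sup_{t\in[0,1]} \|\check\vecY_n(t)\| \leq \frac{1}{\sigma_d\sqrt{n\log n}} \max_{1\leq j\leq n} \xi_j .
\end{equation}

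Next I would use a plain union bound together with the marginal tail of $\xi_j$. Since $(\xi_1,\veceta_1)$ has density $\Psi_0(x,\vecz)$ and the chain is stationary under this joint distribution (with $\veceta_{j-1}$ uniform on $\UB$), for every $j\geq 1$ the marginal density of $\xi_j$ equals $\Psi_0(x)$. The asymptotic \eqref{PHIBARXILARGETHMRES1} gives $\Psi_0(x) = \Theta_d x^{-3} + O(x^{-3-2/d}\cdot\text{polylog})$, and $\Psi_0$ is bounded on $[0,1]$, so there exists a constant $C_0$ with
\begin{equation}
\Prob(\xi_1 \geq u) = \int_u^\infty \Psi_0(x)\,dx \leq \frac{C_0}{u^2} \qquad (u\geq 1).
\end{equation}

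Finally, a union bound gives, for $u=\beta\sigma_d\sqrt{n\log n}\geq 1$,
\begin{equation}
\Prob\Big(\max_{1\leq j\leq n}\xi_j \geq u\Big) \leq n\,\Prob(\xi_1\geq u) \leq \frac{C_0\, n}{u^2} = \frac{C_0}{\sigma_d^2\,\beta^2 \log n},
\end{equation}
which is the desired bound with $C=C_0/\sigma_d^2$ (the case $\beta\sigma_d\sqrt{n\log n}<1$ is trivial by choosing $C$ large enough that the right-hand side exceeds $1$). There is no real obstacle here — the point is simply that the contribution $\check\vecY_n$ is controlled by a single random variable $\xi_{\lfloor nt\rfloor+1}$, and the heavy-tail bound with exponent $-2$ combined with the $\sqrt{n\log n}$ normalisation produces precisely the logarithmic savings claimed.
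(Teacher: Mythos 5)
Your proof is correct and follows essentially the same route as the paper: bound $\sup_t\|\check\vecY_n(t)\|$ by $\max_j\xi_j/(\sigma_d\sqrt{n\log n})$, apply a union bound, and invoke the $\Prob(\xi_1\ge u)=O(u^{-2})$ tail from \cite[Theorem 1.14]{Marklof:2011di} (our \eqref{PHIBARXILARGETHMRES1}). The only cosmetic difference is that the paper lets the index run to $n+1$ rather than $n$, which does not affect the bound.
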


\begin{proof}
We have
\begin{equation}
\begin{split}
\Prob \bigg(\sup_{t\in [0,1]} \|\check\vecY_n(t)\|  \geq \beta  \bigg)
& \leq \Prob \bigg(\max_{1\leq j \leq n+1} |\xi_j|  \geq \beta\sigma_d \sqrt{n\log n}  \bigg) \\
& \leq  \sum_{j=1}^{n+1} \Prob \bigg(|\xi_j|  \geq \beta\sigma_d \sqrt{n\log n}  \bigg) 
\end{split}
\end{equation}
where, by the asymptotic tail for the free path length distribution \cite[Theorem 1.14]{Marklof:2011di}, we have for all $j\geq 1$ 
\begin{equation}
\Prob \bigg(|\xi_j|  \geq \beta\sigma_d \sqrt{n\log n}  \bigg) =O\bigg( \frac{1}{\beta^2 n \log n} \bigg).
\end{equation}
\end{proof}

The estimation of $\overline\vecY_n(t)$ relies on the following maximal inequality for martingales with stationary increments, cf.~Gordin and Lifsic \cite{Gordin78}. Let $(\scrV,\mu)$ be a probability space. We denote by $\L_0^2(\scrV,\mu)$ the orthogonal complement of the constant functions in $\L^2(\scrV,\mu)$.

\begin{prop}\label{plusGordin}
Let $\uvecalf=(\vecalf_n)_{n=0}^\infty$ be a Markov chain on the state space $\scrV$, and $\mu$ a probability measure which is stationary and ergodic for $\uvecalf$. Let $\scrP$ be the transition operator on $\L^2(\scrV,\mu)$ defined by $\scrP f(\vecz) = \Exp\big(f(\vecalf_n) \bigm| \vecalf_{n-1}=\vecz \big)$.
Then, for any $f\in\Ran(\scrI-\scrP)$, $n\in\NN$, $\kappa>0$, 
\begin{equation}\label{pluseq:Gordin}
\Prob\bigg(\max_{1\leq m \leq n} \bigg| \sum_{j=1}^m f(\vecalf_j) \bigg| \geq \kappa \sqrt{n} \bigg) \leq \frac{9}{\kappa^2} \bigg( \| g \|^2 +\frac1n\, \|\scrP g \|^2 \bigg) ,
\end{equation}
where $g$ is the unique function in $\L_0^2(\scrV,\mu)$ such that $f=(\scrI-\scrP)g$.
\end{prop}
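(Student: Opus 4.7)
The plan is to invoke the classical Gordin--Lifsic martingale-coboundary decomposition. Writing $g\in\L_0^2(\scrV,\mu)$ for the unique preimage of $f$ under $\scrI-\scrP$, I would set
\[
M_j := g(\vecalf_j) - \scrP g(\vecalf_{j-1}),\qquad S_m := \sum_{j=1}^m M_j,
\]
and observe the telescoping identity
\[
\sum_{j=1}^m f(\vecalf_j) = \sum_{j=1}^m\bigl[g(\vecalf_j) - \scrP g(\vecalf_j)\bigr] = S_m + \scrP g(\vecalf_0) - \scrP g(\vecalf_m),
\]
which splits the cumulative sum into a genuine martingale piece and a controlled coboundary piece.

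My first task would be to verify that $(M_j)_j$ is a square-integrable martingale difference sequence with respect to the natural filtration $\scrF_j:=\sigma(\vecalf_0,\ldots,\vecalf_j)$: the Markov property together with the definition of $\scrP$ yields $\Exp(g(\vecalf_j)\mid\scrF_{j-1})=\scrP g(\vecalf_{j-1})$, so $\Exp(M_j\mid\scrF_{j-1})=0$. Orthogonality of the martingale increments and stationarity of $\mu$ then give
\[
\Exp(S_n^2) = n\Exp(M_1^2) = n\bigl(\|g\|^2 - \|\scrP g\|^2\bigr) \leq n\|g\|^2.
\]

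With the decomposition in hand, I would split the event $\bigl\{\max_{m\leq n}|\sum_{j=1}^m f(\vecalf_j)|\geq\kappa\sqrt n\bigr\}$ into the two sub-events $\bigl\{\max_{m\leq n}|S_m|\geq\alpha\kappa\sqrt n\bigr\}$ and $\bigl\{\max_{m\leq n}|\scrP g(\vecalf_0)-\scrP g(\vecalf_m)|\geq(1-\alpha)\kappa\sqrt n\bigr\}$ for a balancing parameter $\alpha\in(0,1)$. The martingale piece is directly controlled by Doob's $\L^2$-maximal inequality, yielding a bound proportional to $\|g\|^2/(\alpha\kappa)^2$. The coboundary piece is treated via the pointwise estimate $|\scrP g(\vecalf_0)-\scrP g(\vecalf_m)|\leq 2\max_{0\leq j\leq n}|\scrP g(\vecalf_j)|$, combined with a union bound and Chebyshev's inequality, exploiting the fact that $\vecalf_j\sim\mu$ for every $j$ by stationarity.

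The main obstacle I expect is bookkeeping the constants so as to land on exactly the stated factor $9/\kappa^2$ with the precise split between the $\|g\|^2$ and $\|\scrP g\|^2$ contributions. A natural choice of $\alpha$ (for instance $\alpha=2/3$) produces the factor $9$ on the $\|g\|^2$ term from the Doob step; the coboundary term then enters at the order encoded in the second summand of the right-hand side of \eqref{pluseq:Gordin} through the combined union bound--Chebyshev estimate on $\scrP g(\vecalf_j)$.
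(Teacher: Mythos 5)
Your strategy --- martingale-coboundary decomposition, Doob's $\L^2$-maximal inequality for the martingale part, and union bound plus Chebyshev for the coboundary --- is exactly the paper's. The only cosmetic difference is that the paper splits the event three ways at level $\kappa\sqrt n/3$ each, handling $\scrP g(\vecalf_0)$ and $\max_m|\scrP g(\vecalf_m)|$ separately, whereas you absorb both into $2\max_{0\leq j\leq n}\big|\scrP g(\vecalf_j)\big|$ and pay a factor $2$ there.

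One numerical slip to flag: the Doob step gives $\|g\|^2/(\alpha\kappa)^2$, so landing on the factor $9$ requires $\alpha=1/3$, not $\alpha=2/3$. With $\alpha=1/3$ the bookkeeping then closes exactly: Doob yields $\frac{9}{\kappa^2}\big(\|g\|^2-\|\scrP g\|^2\big)$ upon using $\Exp(S_n^2)=n\big(\|g\|^2-\|\scrP g\|^2\big)$ as you noted, while the union bound over $j=0,\ldots,n$ with Chebyshev (threshold $\tfrac23\kappa\sqrt n$, after the factor $2$ effectively $\kappa\sqrt n/3$) yields $(n+1)\cdot\frac{9}{\kappa^2 n}\|\scrP g\|^2$; the $-\frac{9}{\kappa^2}\|\scrP g\|^2$ cancels and one obtains precisely $\frac{9}{\kappa^2}\big(\|g\|^2+\frac1n\|\scrP g\|^2\big)$. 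In particular the $\frac1n\|\scrP g\|^2$ summand is not simply the coboundary contribution in isolation: it is what survives after this cancellation against the variance improvement in Doob's bound, so one cannot discard the $-\|\scrP g\|^2$ term in $\Exp(S_n^2)$ as your inequality $\Exp(S_n^2)\leq n\|g\|^2$ would suggest.
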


\begin{proof}
We have
\begin{equation}
\sum_{j=1}^m f(\vecalf_j) = \sum_{j=1}^m \big(g(\vecalf_j)-\scrP g(\vecalf_j)\big) = \scrM_m + \scrP g(\vecalf_0) - \scrP g(\vecalf_m) .  
\end{equation}
where
\begin{equation}
\scrM_m:=\sum_{j=1}^m \big(g(\vecalf_j)-\scrP g(\vecalf_{j-1})\big)
\end{equation}
is a martingale with stationary and ergodic increments. The left hand side of \eqref{pluseq:Gordin} is estimated by the sum of the following three terms. The first is bounded by Doob's inequality for non-negative sub-martingales,
\begin{equation}\label{pluseq:Gordin1}
\Prob\bigg(\max_{1\leq m \leq n} \big| \scrM_m \big| \geq \frac{\kappa \sqrt{n}}{3} \bigg) \leq \frac{9}{\kappa^2 n} \Exp\big( \big| \scrM_n \big|^2 \big) =\frac{9}{\kappa^2} \big(\|g\|^2-\|\scrP g\|^2 \big) .
\end{equation}
The second follows from Chebyshev's inequality,
\begin{equation}\label{pluseq:Gordin2}
\Prob\bigg(\big| \scrP g(\vecalf_0) \big| \geq \frac{\kappa \sqrt{n}}{3} \bigg) \leq \frac{9}{\kappa^2 n}\|\scrP g\|^2  ,
\end{equation}
and the third from the union bound and Chebyshev's inequality,
\begin{equation}\label{pluseq:Gordin3}
\Prob\bigg(\max_{1\leq m \leq n} \big| \scrP g(\vecalf_m) \big| \geq \frac{\kappa \sqrt{n}}{3} \bigg) \leq
\sum_{m=1}^n\Prob\bigg(\big| \scrP g(\vecalf_m) \big| \geq \frac{\kappa \sqrt{n}}{3} \bigg) \leq  \frac{9}{\kappa^2}\|\scrP g\|^2  .
\end{equation}
\end{proof}

\begin{prop}\label{pluslemY2}
There is $C<\infty$ such that, for any $\beta>0$ and any $n\geq 2$,
\begin{equation}
\Prob \bigg(\sup_{t\in [0,1]} \|\overline\vecY_n(t)\|  \geq \beta  \bigg) 
\leq 
\begin{cases}
\frac{C\log\log n}{\beta^2 \log n}  & (d=2) \\[5pt]
\frac{C}{\beta^2 \log n}& (d\geq 3).
\end{cases}
\end{equation}
\end{prop}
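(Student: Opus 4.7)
Since $\overline\vecY_n$ is piecewise constant on the intervals $[m/n,(m+1)/n)$, we have $\sup_{t\in[0,1]}\|\overline\vecY_n(t)\|=\max_{0\le m\le n}\|\overline\vecY_n(m/n)\|$. Projecting on the $d$ coordinate axes and using a union bound reduces the problem to proving, for each fixed $\vece\in\US$, a bound of the required form on
\[
\Prob\Bigl(\max_{1\leq m\leq n}|\overline S_m|\geq c_d\beta\sigma_d\sqrt{n\log n}\Bigr),\qquad \overline S_m:=\sum_{j=1}^m m_{j,n}(\vece\cdot\vecV_{j-1}),
\]
where $c_d>0$ is a dimensional constant. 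Introducing the conditional mean $H(\vecw):=\int_\UB K_{1,r_n}(\vecw,\vecz)\,d\vecz=\Exp(m_{j,n}\mid\veceta_{j-1}=\vecw)$, I decompose $\overline S_m=\scrM_m+\scrD_m$ with
\[
\scrM_m:=\sum_{j=1}^m\bigl(m_{j,n}-H(\veceta_{j-1})\bigr)(\vece\cdot\vecV_{j-1}),\qquad
\scrD_m:=\sum_{j=1}^m H(\veceta_{j-1})(\vece\cdot\vecV_{j-1}).
\]

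The sequence $\scrM_m$ is a martingale with respect to $\scrF_m:=\sigma(\veceta_1,\ldots,\veceta_m,\vecv_0)$, since $\vecV_{j-1}$ is $\scrF_{j-1}$-measurable and $\Exp(m_{j,n}\mid\scrF_{j-1})=H(\veceta_{j-1})$. Its step variance is bounded by $\Exp(m_{j,n}^2)$, which by the $r_n$-truncated analogue of Proposition~\ref{prop:mbound} is $O(\log\log n)$ for $d=2$ and $O(1)$ for $d\geq 3$. Doob's $L^2$ maximal inequality followed by Chebyshev then yield the required bound on $\Prob(\max_m|\scrM_m|\geq (c_d\beta/2)\sigma_d\sqrt{n\log n})$.

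The main step is the control of $\scrD_m$ via Gordin's inequality (Proposition~\ref{plusGordin}). Writing $\vece\cdot\vecV_{j-1}=\vece_1\cdot\vecy_{j-1}$ with $\vecy_j:=S(\veceta_j)^{T}\vecy_{j-1}$ and $\vecy_0:=R(\vecv_0)^{T}\vece$, the pair $\vecalf_j:=(\veceta_j,\vecy_j)$ is a Markov chain on $\scrV:=\UB\times\US$ that preserves the product of normalized Lebesgue measure and the uniform measure on $\US$ (by the rotation invariance of the latter under $S(\veceta_j)^T$). The summand of $\scrD_m$ takes the form $h(\vecalf_{j-1})$ with $h(\vecw,\vecy):=H(\vecw)(\vece_1\cdot\vecy)$, and I solve the Poisson equation $h=(I-\scrP)g$ using the ansatz $g(\vecw,\vecy):=\boldsymbol G(\vecw)\cdot\vecy$, which reduces it to $(I-PU)\boldsymbol G=H\vece_1$ on $\L^2(\UB,\RR^d)$. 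By Proposition~\ref{prop2} applied to the standard representation $\rho_1$, the operator $PU$ has spectral radius strictly less than one, so this equation is uniquely solvable and $\|\boldsymbol G\|_{\L^2}\leq C\|H\|_{\L^2}$. Averaging $|\boldsymbol G(\vecw)\cdot\vecy|^2$ over uniform $\vecy\in\US$ gives $\|g\|_{\L^2(\scrV)}^2=d^{-1}\|\boldsymbol G\|_{\L^2}^2$, while Jensen's inequality applied to $H=\Exp(m_{j,n}\mid\veceta_{j-1})$ yields $\|H\|_{\L^2(\UB)}^2\leq\Exp(m_{j,n}^2)$. Plugging these into Proposition~\ref{plusGordin} with $\kappa\asymp\beta\sqrt{\log n}$ delivers the required tail bound on $\max_m|\scrD_m|$.

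The principal technical obstacle is that Proposition~\ref{plusGordin} is phrased for a chain started from its stationary distribution, whereas in our setting $\vecy_0=R(\vecv_0)^{T}\vece$ is deterministic. I intend to handle this by showing that after a bounded number of steps the joint law of $(\veceta_j,\vecy_j)$ is absolutely continuous with respect to the stationary measure with uniformly bounded Radon--Nikodym derivative on a positive-measure subset of $\scrV$, a consequence of the non-degeneracy of the scattering map $S$ under the hypotheses of Section~\ref{sec:two_a}. The transient initial segment is absorbed into the constant $C$, after which Gordin's estimate applies with constants independent of $\vecv_0$.
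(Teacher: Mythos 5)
Your decomposition $\overline S_m=\scrM_m+\scrD_m$ takes a genuinely different route from the paper's proof. The paper applies Proposition~\ref{plusGordin} \emph{directly} to the full observable $f(\vecalf_j)=(\vece\cdot\vecV_j)\,\ell_{r_n}(\vecalf_j)$ for the Markov chain $\vecalf_j=(\vecV_{j-1},\vecV_j,\vecV_{j+1})$ on $(\US)^3$ (with no martingale pre-processing), establishing the spectral gap by a Doeblin argument on the $m$th iterated kernel. You instead subtract the conditional mean $H(\veceta_{j-1})$ to obtain a genuine martingale $\scrM_m$ (handled by Doob), and apply Gordin only to the remainder $\scrD_m$ as a functional of the smaller chain $(\veceta_j,\vecy_j)$ on $\UB\times\US$. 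The reduction of the Poisson equation via the ansatz $g(\vecw,\vecy)=\boldsymbol G(\vecw)\cdot\vecy$ to $(I-PU)\boldsymbol G=H\vece_1$, and the appeal to Proposition~\ref{prop2} for the standard representation, is correct and rather clean; the martingale piece is also correctly bounded. What your approach buys is reuse of the already-established operator estimate $\|(PU)^n\|\ll\omega^n$ instead of redoing a Doeblin estimate on $(\US)^3$; what it costs is that you introduce a chain whose $\vecy$-component carries the reference frame, which is where the trouble sits.

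The real gap is precisely the one you flagged in the last paragraph, and your proposed fix does not close it. Proposition~\ref{plusGordin} is an equality/inequality in $\L^2(\scrV,\mu)$ for the \emph{stationary} chain; to transfer it to a chain started from a point mass at $(\veceta_0,\vecy_0)$ one needs the law of the whole trajectory (or at least of $\vecalf_k$ for some fixed $k$, with a crude bound on the first $k$ increments) to be dominated by the stationary law with a \emph{uniformly bounded} Radon--Nikodym derivative on all of $\scrV$. Bounded density on merely a positive-measure subset gives nothing: on the complementary event you have no control, and there is no reason that complementary event is small uniformly in $\vecv_0$. You also do not actually verify even the weaker claim. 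The paper sidesteps all of this: it notes (cf.\ the remark before Lemma~\ref{lem:flight_time_clt_discrete_time}) that $\|\overline\vecY_n(t)\|$ is distributionally independent of $\vecv_0$, because $\vecV_i\cdot\vecV_j$ depends on $\vecv_0$ only through $R(\vecv_0)^{-1}R(\vecv_0)=I$, so one may take $\vecv_0$ uniform on $\US$ without loss of generality \emph{before} projecting; for the paper's chain $(\vecV_{j-1},\vecV_j,\vecV_{j+1})$ this makes $\vecalf_1$ exactly $\mu$-distributed. If you want to keep your chain $(\veceta_j,\vecy_j)$, you need the analogous statement for it: with $\vecv_0$ uniform, $\vecy_0=R(\vecv_0)^{\mathsf T}\vece_k$ is uniform on $\US$ and independent of $\veceta_0$. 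This is not a triviality for $d\geq 3$ with the specific section $R$ used in the paper, and it must be checked (and you must take the WLOG $\vecv_0$-uniform step before, not after, the coordinate union bound, since the individual projections $\sum_j m_{j,n}(\vece_k\cdot\vecV_{j-1})$ are \emph{not} $\vecv_0$-invariant). Alternatively, replace your chain with the paper's chain on $(\US)^3$, which makes the stationarity immediate; you can keep your martingale/remainder split if you prefer.
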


\begin{proof}
The plan is to apply Proposition \ref{plusGordin} to a Markov chain defined on the state space of three consecutive velocities,
\begin{equation}
\scrV:= \{ (\vecv_{n-1},\vecv_n,\vecv_{n+1}) \in (\US)^3 : \varphi(\vecv_{n-1},\vecv_n)>B_\theta,\;  \varphi(\vecv_n,\vecv_{n+1})>B_\theta \} ,
\end{equation}
where $\varphi(\vecu_1,\vecu_2)\in[0,\pi]$ denotes the angle between the two vectors $\vecu_1,\vecu_2$, and $B_\theta$ as in \eqref{Btheta}. For $(\vecv_{n-1},\vecv_n,\vecv_{n+1})\in\scrV$ and $\xi>0$, let
\begin{equation}
p_0(\vecv_{n-1},\vecv_n,\xi,\vecv_{n+1}) = \Psi_0\big(\vecw(\vecv_{n-1},\vecv_n),\xi, \vecw'(\vecv_n,\vecv_{n+1})\big)\,\sigma(\vecv_n,\vecv_{n+1})
\end{equation}
where  the functions $\vecw$, $\vecw'$ are defined via
\begin{equation}
 \begin{pmatrix} 0 \\ \vecw(\vecv_{n-1},\vecv_n) \end{pmatrix}= 
R(\vecv_n)^{-1} \vecs(\vecv_{n-1},\vecv_n) ,
\end{equation}
\begin{equation}
\begin{pmatrix} 0 \\ \vecw'(\vecv_n,\vecv_{n+1}) \end{pmatrix} 
= R(\vecv_n)^{-1} \vecb(\vecv_n,\vecv_{n+1}) ;
\end{equation}
here $\vecs(\vecv_{n-1},\vecv_n)$, $\vecb(\vecv_n,\vecv_{n+1})\in\RR^d$ are the exit and impact parameters, respectively, expressed as functions of incoming and outgoing velocities. Furthermore $\sigma(\vecv_0,\vecv)$ is the differential cross section, i.e.\ the differential of the map $\vecv \mapsto \vecb(\vecv_0,\vecv)$. The function $p_0(\vecv_{n-1},\vecv_n,\xi,\vecv_{n+1})$ is precisely the transition kernel that governs the Boltzmann-Grad limit of the periodic Lorentz gas in the velocity representation used in \cite{Marklof:2011ho}. We integrate out the flight time and obtain
\begin{equation}
\int_0^\infty p_0(\vecv_{n-1},\vecv_n,\xi,\vecv_{n+1})\, d\xi = L_0(\vecv_{n-1},\vecv_n,\vecv_{n+1})\,\sigma(\vecv_n,\vecv_{n+1}) 
\end{equation}
with
\begin{equation}
L_0(\vecv_{n-1},\vecv_n,\vecv_{n+1}):=K_0\big(\vecw(\vecv_{n-1},\vecv_n), \vecw'(\vecv_n,\vecv_{n+1}) \big),
\end{equation}
with the kernel $K_0(\vecw,\vecz)$ defined in \eqref{K0def}.
Thus
\begin{equation}
n\mapsto\vecalf_n = (\vecV_{n-1},\vecV_n,\vecV_{n+1})
\end{equation}
(where $\vecV_j$ is the random variable defined in \eqref{vac}) defines a Markov chain on the state space $\scrV$ with stationary measure
\begin{equation}
d\mu((\vecv_{n-1},\vecv_n,\vecv_{n+1})) := {\overline s}^{-1} \sigmabar^{-2}\; \sigma(\vecv_{n-1},\vecv_n)\,\sigma(\vecv_n,\vecv_{n+1})\; d\vecv_{n-1}\,d\vecv_n\,d\vecv_{n+1} 
\end{equation}
where ${\overline s}$ is the volume of $\US$ and $\sigmabar$ the volume of $\UB$.
Explicitly, for $\scrA\subset\scrV$,
\begin{equation}
\prob{ \vecalf_n \in\scrA  \bigm| \vecalf_{n-1}=\vecz } 
=\int_\scrA  \scrK( \vecz,\vecz' )\, d\mu(\vecz')
\end{equation}
with kernel
\begin{multline}
\scrK\big( (\vecv_{n-2},\vecv_{n-1},\vecv_n),(\vecv_{n-1}',\vecv_n',\vecv_{n+1}')
\big) \\
=
{\overline s}\, \sigmabar^2 \; \frac{\delta(\vecv_{n-1},\vecv_{n-1}')\, \delta(\vecv_n,\vecv_n') \, L_0(\vecv_{n-1}',\vecv_n',\vecv_{n+1}')}{\sigma(\vecv_{n-1},\vecv_n)} .
\end{multline}
Note that the kernel $\scrK(\vecz,\vecz')$ is defined with respect to the measure $\mu$ and {\em not} with respect to the Lebesgue measure.
  
The transition operator $\scrP$ for this Markov chain is defined by
\begin{equation}
\begin{split}
\scrP f(\vecz) & = \Exp\big(f(\vecalf_n) \bigm| \vecalf_{n-1}=\vecz \big) \\
& = \int \scrK(\vecz,\vecz') f(\vecz') \,d\mu(\vecz') .
\end{split}
\end{equation}
The operator $\scrP$ has eigenvalue $1$ (corresponding to constant eigenfunctions). To establish that there is a spectral gap, we calculate the kernel $\scrK^{(m)}(\vecz,\vecz')$ of the $m$th power,
\begin{equation}
\begin{split}
\scrP^m f(\vecz) & = \Exp\big(f(\vecalf_n) \bigm| \vecalf_{n-1}=\vecz \big) \\
& = \int \scrK^{(m)}(\vecz,\vecz') f(\vecz') \,d\mu(\vecz') .
\end{split}
\end{equation}
The second power reads
\begin{multline}
\scrK^{(2)}\big( (\vecv_{n-2},\vecv_{n-1},\vecv_n),(\vecv_{n}',\vecv_{n+1}',\vecv_{n+2}')
\big) \\
=  
{\overline s}\, \sigmabar^2 L_0(\vecv_{n-1},\vecv_n,\vecv_{n+1}') \, 
L_0(\vecv_{n}',\vecv_{n+1}',\vecv_{n+2}') \,\delta(\vecv_n,\vecv_n') .
\end{multline}
As to the third power,
\begin{multline}
\scrK^{(3)}\big( (\vecv_{n-2},\vecv_{n-1},\vecv_n),(\vecv_{n+1}',\vecv_{n+2}',\vecv_{n+3}')
\big) \\
=  
{\overline s}\, \sigmabar^2 L_0(\vecv_{n-1},\vecv_n,\vecv_{n+1}') \, 
L_0(\vecv_{n},\vecv_{n+1}',\vecv_{n+2}')\,
L_0(\vecv_{n+1}',\vecv_{n+2}',\vecv_{n+3}')\,
\sigma(\vecv_n,\vecv_{n+1}') .
\end{multline}
In view of the lower bound on $K_0(\vecw,\vecz)$ (Lemmas \ref{six-one}, \ref{K0upperlower}), we have
\begin{equation}
\scrK^{(3)}\big( (\vecv_{n-2},\vecv_{n-1},\vecv_n),(\vecv_{n+1}',\vecv_{n+2}',\vecv_{n+3}')
\big) \geq \frac{{\overline s}\, \sigmabar^2 }{(2^d \sigmabar\zeta(d))^3} 
\;\sigma(\vecv_n,\vecv_{n+1}').
\end{equation}
The fourth power reads 
\begin{multline}
\scrK^{(4)}\big( (\vecv_{n-2},\vecv_{n-1},\vecv_n),(\vecv_{n+2}',\vecv_{n+3}',\vecv_{n+4}')
\big) \\
=  
{\overline s}\, \sigmabar^2 \int_\US L_0(\vecv_{n-1},\vecv_n,\tilde\vecv_{n+1}) \, 
L_0(\vecv_{n},\tilde\vecv_{n+1},\vecv_{n+2}')\,
L_0(\tilde\vecv_{n+1},\vecv_{n+2}',\vecv_{n+3}')\\
\times L_0(\vecv_{n+2}',\vecv_{n+3}',\vecv_{n+4}') \,
\sigma(\vecv_n,\tilde\vecv_{n+1})\,\sigma(\tilde\vecv_{n+1},\vecv_{n+2}') \, d\tilde \vecv_{n+1} .
\end{multline}
The lower bound on $K_0(\vecw,\vecz)$ now yields
\begin{multline}
\scrK^{(4)}\big( (\vecv_{n-2},\vecv_{n-1},\vecv_n),(\vecv_{n+2}',\vecv_{n+3}',\vecv_{n+4}')
\big) \\ \geq \frac{{\overline s}\, \sigmabar^2 }{(2^d \sigmabar\zeta(d))^4} 
\int_\US 
\sigma(\vecv_n,\tilde\vecv_{n+1})\,\sigma(\tilde\vecv_{n+1},\vecv_{n+2}') \, d\tilde \vecv_{n+1} .
\end{multline}
Similarly, for the $m$th power ($m\geq 5$),
\begin{multline}
\scrK^{(m)}\big( (\vecv_{n-2},\vecv_{n-1},\vecv_n),(\vecv_{n+m-2}',\vecv_{n+m-1}',\vecv_{n+m}')
\big) \\ \geq \frac{{\overline s}\, \sigmabar^2 }{(2^d \sigmabar\zeta(d))^m} 
\int_{(\US)^{m-3}} 
\sigma(\vecv_n,\tilde\vecv_{n+1})\, \sigma(\tilde\vecv_{n+1},\tilde\vecv_{n+2})\cdots \\
\cdots \sigma(\tilde\vecv_{n+m-4},\tilde\vecv_{n+m-3}) \,\sigma(\tilde\vecv_{n+m-3},\vecv_{n+m-2}') \, d\tilde \vecv_{n+1}\cdots d\tilde \vecv_{n+m-3}.
\end{multline}
For the class of scattering maps considered in Section \ref{sec:two_a}, there exists a finite $m$ such that the integral on the right-hand side has a uniform positive lower bound for all $\vecv_n,\vecv_{n+m-2}'\in\US$. Hence
\begin{equation}
\inf_{\vecz,\vecz'\in\scrV} \scrK^{(m)}(\vecz,\vecz') >0.
\end{equation}
This, by the classic Doeblin argument (recall Section \ref{sec:spectral}), implies that $\scrP^m$, and therefore $\scrP$, has a spectral gap.

Let
\begin{equation}
L_{1,r}(\vecv_{n-1},\vecv_n,\vecv_{n+1}):=K_{1,r} \big(\vecw(\vecv_{n-1},\vecv_n), \vecw'(\vecv_n,\vecv_{n+1}) \big),
\end{equation}
with $K_{1,r}(\vecw,\vecz)$ as in \eqref{K1rdef},
and define the function $\ell_r:\scrV\to \RR_{\geq 0}$ by
\begin{equation}
\ell_r((\vecv_{n-1},\vecv_n,\vecv_{n+1})):=\frac{L_{1,r}(\vecv_{n-1},\vecv_n,\vecv_{n+1})}{L_0(\vecv_{n-1},\vecv_n,\vecv_{n+1})}.
\end{equation}
We then recover the random variable \eqref{pluscexp} via
\begin{equation}
m_{j,n} = \ell_{r_n}(\vecalf_n) = \ell_{r_n}((\vecV_{n-1},\vecV_n,\vecV_{n+1})).
\end{equation}
To conclude the proof, apply Proposition \ref{plusGordin} with the ($n$-dependent) choice of $f\in\L^2(\scrV,d\mu)$,
\begin{equation}
f((\vecv_{n-1},\vecv_n,\vecv_{n+1}))=\vece\cdot \vecv_n\, \ell_{r_n}((\vecv_{n-1},\vecv_n,\vecv_{n+1}))
\end{equation}
with an arbitrary fixed $\vece\in\US$. Since $\ell_{r_n}((R\vecv_{n-1},R\vecv_n,R\vecv_{n+1}))=\ell_{r_n}((\vecv_{n-1},\vecv_n,\vecv_{n+1}))$ and $d\mu((R\vecv_{n-1},R\vecv_n,R\vecv_{n+1}))=d\mu((\vecv_{n-1},\vecv_n,\vecv_{n+1}))$ for all $R\in\SO(d)$, we have
\begin{equation}
\int_\scrV f(\vecz)\,d\mu(\vecz)=0 ,
\end{equation}
i.e., $f\in\L_0^2(\scrV,\mu)$. Thanks to the spectral gap of $\scrP$, there is a constant $M<\infty$, such that for all $f\in\L_0^2(\scrV,\mu)$,
\begin{equation}
\| (\scrI-\scrP)^{-1} f \|^2 \leq M \| f \|^2
\end{equation}
Finally, the estimate for $\Exp(m_n^2)$ in the proof of Lemma \ref{mainlem1} yields $\| f \|^2 = O(\log\log n)$ if $d=2$ and $\| f \|^2 = O(1)$ if $d\geq 3$.
Proposition \ref{pluslemY2} thus follows from Proposition \ref{plusGordin} with $\kappa=\beta\sqrt{\log n}$.
\end{proof}

Propositions \ref{pluslemY0}, \ref{pluslemY1} and \ref{pluslemY2} establish that the tightness for $(\vecY_n)_{n=1}^\infty$ is implied by the tightness for $(\widetilde\vecY_n)_{n=1}^\infty$. To prove the latter, we use the following classical characterization of tightness for a random curve $\vecX$.

\begin{thm}\label{plusthm:Bilingsley} \cite[Theorem 8.3]{Billingsley68}
The sequence $(\lambda_n)_{n=1}^\infty$ of probability measures in $\C_0([0,1])$ is tight if for every $\epsilon>0$, $\beta>0$ there exist $\delta<1$, $n_0<\infty$ such that for all $n\geq n_0$ we have
\begin{equation}
\frac{1}{\delta} \sup_{t\in[0,1]} \lambda_n \bigg( \sup_{s\in[t,t+\delta]\cap[0,1]} \big\| \vecX(s)-\vecX(t) \big\| \geq \beta \bigg) <\epsilon.
\end{equation}
\end{thm}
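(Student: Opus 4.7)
The plan is to deduce tightness from the Arzelà--Ascoli characterization of relatively compact subsets of $\C_0([0,1])$ combined with Prokhorov's theorem. Since every sample curve starts at the origin, pointwise boundedness at $t=0$ is automatic, and Arzelà--Ascoli identifies the relatively compact sets $K\subset\C_0([0,1])$ with the uniformly equicontinuous families. By Prokhorov, $(\lambda_n)$ is tight if and only if for every $\epsilon>0$ there is a compact $K$ with $\lambda_n(K)\geq 1-\epsilon$ for all $n$. Combining the two reduces the problem to stochastic equicontinuity: given $\epsilon,\beta>0$, I need to produce $\delta>0$ and $n_0<\infty$ so that
\begin{equation*}
\lambda_n\bigl(\omega_\vecX(\delta)>\beta\bigr)<\epsilon \qquad\text{for all } n\geq n_0,
\end{equation*}
where $\omega_\vecX(\delta):=\sup_{|s-t|<\delta}\|\vecX(s)-\vecX(t)\|$ is the modulus of continuity.

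First I would partition $[0,1]$ into $N=\lceil 1/\delta\rceil$ consecutive closed subintervals $I_k:=[k\delta,(k+1)\delta]\cap[0,1]$. Any two points $s,t\in[0,1]$ with $|s-t|<\delta$ lie in at most two adjacent $I_k$, so the triangle inequality yields the deterministic pointwise bound
\begin{equation*}
\omega_\vecX(\delta)\leq 3\max_{0\leq k<N}\sup_{u\in I_k}\|\vecX(u)-\vecX(k\delta)\|.
\end{equation*}
A union bound over the $N$ subintervals, followed by the hypothesis applied with starting point $t=k\delta$, then gives
\begin{equation*}
\lambda_n(\omega_\vecX(\delta)>3\beta)\leq N\sup_{t\in[0,1]}\lambda_n\Bigl(\sup_{s\in[t,t+\delta]\cap[0,1]}\|\vecX(s)-\vecX(t)\|>\beta\Bigr)<N\delta\epsilon\leq(1+\delta)\epsilon.
\end{equation*}
Rescaling $\beta\mapsto\beta/3$ and $\epsilon\mapsto\epsilon/2$ when invoking the hypothesis converts this into the desired bound $\lambda_n(\omega_\vecX(\delta)>\beta)<\epsilon$ for all $n\geq n_0$.

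The main obstacle is nothing more than bookkeeping with constants: the triangle inequality costs a factor $3$ in $\beta$ and the union bound costs a factor $(1+\delta)<2$ in $\epsilon$, both of which must be absorbed before invoking the hypothesis. Beyond this, the argument is standard and proceeds verbatim for $\RR^d$-valued processes as for the real-valued case treated in \cite{Billingsley68}; no probabilistic subtlety beyond the union bound is needed, since all stochastic information is already packaged into the single-interval increment hypothesis. The factor $\delta^{-1}$ appearing in the hypothesis is precisely what absorbs the $N\asymp\delta^{-1}$ terms produced by the union bound, which is why the criterion is stated in this form.
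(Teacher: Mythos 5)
The paper states this as a citation to Billingsley (Theorem 8.3 of the 1968 edition) and gives no proof of its own; your reconstruction is the standard textbook argument there, reducing to the modulus-of-continuity criterion (Billingsley Thm.\ 8.2, i.e.\ Arzel\`a--Ascoli plus Prokhorov) via the partition of $[0,1]$ into $\lceil 1/\delta\rceil$ subintervals, a factor-of-$3$ triangle inequality, and a union bound absorbed by the prefactor $\delta^{-1}$. The proof is correct and essentially coincides with the source cited by the paper.
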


We will also exploit the following maximal inequality for sums of independent random variables.

\begin{lem} \label{pluslem:max}
\cite[Lemma, p.~69]{Billingsley68}
Let $\vecxi_1,\ldots,\vecxi_m$ be independent random variables in $\RR^d$ with mean zero and finite variances $\sigma_i^2=\Exp(\|\vecxi_i\|^2)$. Put $\vecS_m= \vecxi_1+\ldots+\vecxi_m$ and $s_m^2=\sigma_1^2+\ldots+\sigma_m^2$. Then, for any $\lambda\in\RR$,
\begin{equation}\label{pluseq:max}
\Prob\bigg( \max_{i\leq m} \|\vecS_i \| \geq \lambda s_m\bigg) \leq 2 \Prob\big( \|\vecS_m \|\geq (\lambda-\sqrt2) s_m \big).
\end{equation}
\end{lem}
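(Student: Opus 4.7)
The plan is to use a classical Ottaviani-type stopping-time argument. Introduce the first entry time
\begin{equation}
T := \min\{ i \leq m : \|\vecS_i\| \geq \lambda s_m \},
\end{equation}
with the convention $T := m+1$ when no such index exists. The event on the left-hand side of \eqref{pluseq:max} then decomposes as the disjoint union $\bigcup_{i=1}^m \{T = i\}$, so that
\begin{equation}
\Prob\bigg( \max_{i\leq m} \|\vecS_i\| \geq \lambda s_m \bigg) = \sum_{i=1}^m \Prob(T = i).
\end{equation}

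Next I would exploit that on $\{T=i\}$ we have $\|\vecS_i\| \geq \lambda s_m$, so by the reverse triangle inequality $\|\vecS_m\| \geq \|\vecS_i\|-\|\vecS_m-\vecS_i\|$, whenever additionally $\|\vecS_m - \vecS_i\| \leq \sqrt2\, s_m$ one obtains $\|\vecS_m\| \geq (\lambda - \sqrt2) s_m$. This yields
\begin{equation}
\Prob\big( \|\vecS_m\| \geq (\lambda-\sqrt2) s_m \big) \geq \sum_{i=1}^m \Prob\big( T = i,\; \|\vecS_m - \vecS_i\| \leq \sqrt2\, s_m \big).
\end{equation}

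The crucial structural observation is that $\{T = i\}$ is measurable with respect to $\sigma(\vecxi_1,\ldots,\vecxi_i)$, whereas $\vecS_m - \vecS_i = \vecxi_{i+1} + \cdots + \vecxi_m$ is independent of this $\sigma$-algebra, so the joint probabilities factorize. It then remains to bound $\Prob(\|\vecS_m - \vecS_i\| \leq \sqrt2\, s_m)$ from below, uniformly in $i$. For this I would invoke Chebyshev's inequality together with the orthogonality identity $\Exp(\|\vecS_m - \vecS_i\|^2) = s_m^2 - s_i^2 \leq s_m^2$ (a consequence of independence and mean zero), which yields $\Prob(\|\vecS_m - \vecS_i\| > \sqrt2\, s_m) \leq 1/2$, so each factor is at least $1/2$.

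Combining these observations,
\begin{equation}
\Prob\big(\|\vecS_m\| \geq (\lambda-\sqrt2) s_m\big) \geq \tfrac12 \sum_{i=1}^m \Prob(T=i) = \tfrac12 \Prob\big( \max_{i \leq m} \|\vecS_i\| \geq \lambda s_m \big),
\end{equation}
which rearranges to the claim. There is no real obstacle; the only point worth noting is that although Billingsley's statement is scalar, the reverse triangle inequality and the orthogonality identity above both carry over verbatim to $\RR^d$ equipped with the Euclidean norm, so the argument goes through without modification.
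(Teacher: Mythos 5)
Your argument is correct: the paper does not prove this lemma itself but cites Billingsley, and your Ottaviani stopping-time argument (first-passage decomposition, reverse triangle inequality, independence of $\{T=i\}$ from $\vecS_m-\vecS_i$, and Chebyshev via $\Exp\|\vecS_m-\vecS_i\|^2 = s_m^2-s_i^2$) is precisely the proof in that reference, carried over verbatim to $\RR^d$ as you note. Everything checks out, including the degenerate cases $i=m$ and $\lambda<\sqrt2$, which hold trivially.
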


The following proposition verifies the hypothesis of Theorem \ref{plusthm:Bilingsley}, and thus proves that the sequence of probability measures corresponding to $(\widetilde\vecY_n)_{n=1}^\infty$ is tight.

\begin{prop}
For every $\epsilon>0$, $\beta>0$ there exist $\delta<1$, $n_0<\infty$ such that for all $n\geq n_0$ we have
\begin{equation}
\frac{1}{\delta} \sup_{t\in[0,1]} \Prob \bigg( \sup_{s\in[t,t+\delta]\cap[0,1]} \big\| \widetilde\vecY_n(s)-\widetilde\vecY_n(t) \big\| \geq \beta \bigg) <\epsilon.
\end{equation}
\end{prop}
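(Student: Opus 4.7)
The plan is to bound the probability of large increments by applying Doob's maximal inequality to a conditional martingale, and controlling its fourth moment. Fix $t\in[0,1]$, set $N_0:=\lfloor nt\rfloor$ and $m:=\lceil n\delta\rceil$, and put
\begin{equation}
\vecS_k:=\sum_{j=N_0+1}^{N_0+k}\txi_{j,n}\vecV_{j-1},\qquad 0\le k\le m.
\end{equation}
Since $s\mapsto\widetilde\vecY_n(s)$ is piecewise constant with jumps at multiples of $1/n$, the supremum in the statement equals $\max_{0\le k\le m}\|\vecS_k\|/(\sigma_d\sqrt{n\log n})$. Moreover $\vecS_k$ is a martingale with respect to $\scrG_k:=\sigma(\uveceta,\xi_1,\dots,\xi_{N_0+k})$, because $\Exp(\txi_{j,n}\mid\scrG_{j-1})=\Exp(\txi_{j,n}\mid\uveceta)=0$, so $\|\vecS_k\|^4$ is a submartingale (Jensen applied to the convex function $\vecx\mapsto\|\vecx\|^4$), and Doob's inequality gives
\begin{equation}
\Prob\bigl(\max_{0\le k\le m}\|\vecS_k\|\ge\beta\sigma_d\sqrt{n\log n}\bigr)\le\frac{\Exp\|\vecS_m\|^4}{\beta^4\sigma_d^4\,n^2(\log n)^2}.
\end{equation}

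Next I would estimate $\Exp\|\vecS_m\|^4$ by conditioning on $\uveceta$. Conditional independence of the $\txi_{j,n}$ together with $\|\vecV_{j-1}\|=1$ yield, after expanding $\|\vecS_m\|^4=\sum_{i,j,k,l}\txi_i\txi_j\txi_k\txi_l(\vecV_{i-1}\cdot\vecV_{j-1})(\vecV_{k-1}\cdot\vecV_{l-1})$ and keeping only paired terms,
\begin{equation}
\Exp(\|\vecS_m\|^4\mid\uveceta)\le\sum_{j=N_0+1}^{N_0+m}\Exp(\txi_{j,n}^4\mid\uveceta)+3\,s_m^4,\qquad s_m^2:=\sum_j a_{j,n}^2.
\end{equation}
The tail asymptotic \eqref{PHIBARXILARGETHMRES1}, together with Propositions~\ref{prop:tail2}--\ref{prop:tail3} for $\Exp(m_{j,n}^4)$, gives $\Exp(\txi_{j,n}^4)=O(r_n^2)$, so the first sum contributes $O(mr_n^2)=O(\delta n^2(\log n)^\gamma)$. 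For the second, the calculations behind Proposition~\ref{prop:aasy} applied to the $n$-independent truncation $r_n$ yield $\Exp(a_{j,n}^2)=O(\log n)$, whence $(\Exp s_m^2)^2=O(m^2(\log n)^2)$; and the Cauchy--Schwarz bound $K_{2,r_n}^2/K_0\le K_{4,r_n}$ used in the proof of Proposition~\ref{prop:amom} gives $\Exp(a_{j,n}^4)\le\Exp(b_{j,n}^4)=O(r_n^2)$, which combined with exponential mixing (Proposition~\ref{prop:exp_mix} with $p=0$) produces $\Var(s_m^2)=O(mr_n^2)$. Summing: $\Exp\|\vecS_m\|^4=O\bigl(\delta^2 n^2(\log n)^2+\delta n^2(\log n)^\gamma\bigr)$.

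Dividing by $\delta$ and by $\beta^4\sigma_d^4\,n^2(\log n)^2$ yields the uniform bound
\begin{equation}
\frac{1}{\delta}\Prob\bigl(\sup_{s\in[t,t+\delta]\cap[0,1]}\|\widetilde\vecY_n(s)-\widetilde\vecY_n(t)\|\ge\beta\bigr)\le C\beta^{-4}\bigl(\delta+(\log n)^{\gamma-2}\bigr),
\end{equation}
with $C$ independent of $t$. Since $\gamma<2$, for any prescribed $\epsilon,\beta>0$ I would first choose $\delta$ so small that $C\delta/\beta^4<\epsilon/2$ and then $n_0$ so large that $C(\log n)^{\gamma-2}/\beta^4<\epsilon/2$ for all $n\ge n_0$; the supremum over $t\in[0,1]$ is absorbed by the $t$-uniformity of the estimate.

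The main obstacle is obtaining the sharp variance bound $\Var(s_m^2)=O(mr_n^2)$ rather than $O(mr_n^2\log n)$. A crude estimate using $K_{2,r_n}\le r_n K_{1,r_n}$ would give $\Exp(b_{j,n}^4)=O(r_n^2\log n)$, which, combined with $r_n^2=n(\log n)^\gamma$, produces an error term of order $\delta(\log n)^{\gamma-1}$ that blows up with $n$ for $\gamma>1$. Avoiding this requires the Cauchy--Schwarz inequality for the integrated kernels exactly as in the proof of Proposition~\ref{prop:amom}, coupled with the asymptotic $\Psi_0(x)\sim\Theta_d x^{-3}$ from \eqref{PHIBARXILARGETHMRES1}, in order to integrate $x^4\Psi_0(x)$ up to $r_n$ and recover the correct power $r_n^2$.
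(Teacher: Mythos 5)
Your argument is correct, but it takes a genuinely different route from the paper's, so a comparison is in order.

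You reduce the increment probability to a fourth-moment maximal inequality: you observe that the partial sums $\vecS_k=\sum_{j=N_0+1}^{N_0+k}\txi_{j,n}\vecV_{j-1}$ form a martingale with respect to $\scrG_k=\sigma(\uveceta,\xi_1,\ldots,\xi_{N_0+k})$, apply Doob's inequality to the submartingale $\|\vecS_k\|^4$, and then expand $\Exp(\|\vecS_m\|^4\mid\uveceta)$ using conditional independence to get a Rosenthal-type bound $\sum_j\Exp(\txi_{j,n}^4\mid\uveceta)+3s_m^4$. To make this usable you need the sharp single-term fourth-moment bound $\Exp(\txi_{j,n}^4)=O(r_n^2)$ (obtained, as you correctly note, via the Cauchy--Schwarz step $K_{2,r}^2/K_0\leq K_{4,r}$ from Proposition~\ref{prop:amom} applied with the uniform truncation $r_n$, not the cruder $K_{2,r}\le rK_{1,r}$ which would give a parasitic $\log n$), together with a bound $\Var(s_m^2)=O(mr_n^2)$ that requires the exponential mixing of the $\veceta$-chain (Proposition~\ref{prop:exp_mix} with $p=0$, $m=1$). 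The final estimate $\frac{1}{\delta}\Prob(\ldots)=O(\delta+(\log n)^{\gamma-2})$ is then uniform in $t$, and $\gamma<2$ does the rest. The paper instead uses the second-moment maximal inequality for sums of conditionally independent vectors from \cite[Lemma, p.~69]{Billingsley68} (Lemma~\ref{pluslem:max}), which controls $\max_k\|\vecS_k\|$ by the final sum at the cost of shifting the threshold by $\sqrt{2}\,\scrA$; it then handles the shifted threshold by splitting off the event $\{\scrA_{\lfloor n(t+\delta)\rfloor}^2-\scrA_{\lfloor nt\rfloor}^2>K^2\delta n\log n\}$, which is controlled by the convergence in probability $\scrA_n^2/(n\log n)\toprob d\sigma_d^2$, and estimates the tail of the remaining sum using the already-proved central limit theorem (Theorem~\ref{thm:main3}~(ii)). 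The trade-off is clear: the paper leverages the CLT (which gives Gaussian tail decay, hence an $O(\epsilon\delta)$ bound for the sum after subtracting $O(\sqrt{\delta})$ from the threshold) plus a second-moment-only maximal inequality, at the price of invoking two already-established limit results; your version is more self-contained, never referencing the limit law, at the price of needing the fourth-moment and mixing machinery at full strength. Both arguments close correctly, and your self-diagnosed ``main obstacle'' is indeed the real pressure point---the Cauchy--Schwarz/$K_4$ bound is essential to avoid the extra $\log n$ that would kill the argument for $\gamma>1$.
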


\begin{proof}
We need to show that, for every $\epsilon>0$, $\beta>0$ there exist $\delta<1$, $n_0<\infty$ such that for all $n\geq n_0$ we have
\begin{equation}
\frac{1}{\delta}\Prob \bigg( \max_{nt< m \leq n(t+\delta)} \bigg\| \sum_{j=\lfloor nt \rfloor+1}^m \txi_{j,n} \vecV_{j-1} \bigg\| \geq \beta \sqrt{n\log n}\bigg) <\epsilon . 
\end{equation}
To prove this fact, note first that
\begin{multline}
\Prob \bigg( \max_{nt< m \leq n(t+\delta)} \bigg\| \sum_{j=\lfloor nt \rfloor+1}^m \txi_{j,n} \vecV_{j-1} \bigg\| \geq \beta \sqrt{n\log n}\bigg) \\
= \Exp\bigg( \Prob \bigg( \max_{nt< m \leq n(t+\delta)} \bigg\| \sum_{j=\lfloor nt \rfloor+1}^m \txi_{j,n} \vecV_{j-1} \bigg\| \geq \beta \sqrt{n\log n}\biggm| \uveceta \bigg) \bigg) .
\end{multline}
The maximal inequality \eqref{pluseq:max} yields
\begin{multline}
\Prob \bigg( \max_{nt< m \leq n(t+\delta)} \bigg\| \sum_{j=\lfloor nt \rfloor+1}^m \txi_{j,n} \vecV_{j-1} \bigg\| \geq \beta \sqrt{n\log n}\biggm| \uveceta \bigg)\\
\leq 2 \Prob \bigg(  \bigg\| \sum_{j=\lfloor nt \rfloor+1}^{\lfloor n(t+\delta)\rfloor} \txi_{j,n} \vecV_{j-1} \bigg\| \geq \beta  \sqrt{n\log n} - \sqrt{2(\scrA_{\lfloor n(t+\delta)\rfloor}^2-\scrA_{\lfloor nt\rfloor}^2)} \biggm| \uveceta \bigg),
\end{multline}
which in turn implies, after taking expectation values, 
\begin{multline}
\Prob \bigg( \max_{nt< m \leq n(t+\delta)} \bigg\| \sum_{j=\lfloor nt \rfloor+1}^m \txi_{j,n} \vecV_{j-1} \bigg\| \geq \beta \sqrt{n\log n}\bigg)\\
\leq 2 \Prob \bigg(  \bigg\| \sum_{j=\lfloor nt \rfloor+1}^{\lfloor n(t+\delta)\rfloor} \txi_{j,n} \vecV_{j-1} \bigg\| \geq \beta \sqrt{n\log n} - \sqrt{2(\scrA_{\lfloor n(t+\delta)\rfloor}^2-\scrA_{\lfloor nt\rfloor}^2)} \bigg).
\end{multline}
The latter is bounded above by
\begin{multline}\label{plustoto}
\leq 2 \Prob \bigg(  \bigg\| \sum_{j=\lfloor nt \rfloor+1}^{\lfloor n(t+\delta)\rfloor} \txi_{j,n} \vecV_{j-1} \bigg\| \geq \big(\beta-K\sqrt\delta \big) \sqrt{n\log n}  \bigg) \\
+ 2 \Prob\big( \scrA_{\lfloor n(t+\delta)\rfloor}^2-\scrA_{\lfloor nt\rfloor}^2 > K^2\delta n\log n \big) ,
\end{multline}
for any constant $K\geq 0$. By adapting the proof of Lemma \ref{mainlem2} (replacing $j(\log j)^\gamma$ by $n(\log n)^\gamma$), one can prove that the analogue of \eqref{eq:mainlem2b} reads
\begin{equation}\label{pluseq:mainlem2bb2}
\frac{\scrA_n^2}{n\log n}\toprob  d\,\sigma_d^2 .
\end{equation}
This implies that for any constant $K>\sigma_d\sqrt d$ we have
\begin{equation}
\lim_{n\to\infty} \Prob\big( \scrA_{\lfloor n(t+\delta)\rfloor}^2-\scrA_{\lfloor nt\rfloor}^2 > K^2\delta\, n\log n \big) = 0
\end{equation}
uniformly in $t$, $\delta$, which takes care of the second term in \eqref{plustoto}. The first term in \eqref{plustoto} is estimated by Theorem \ref{thm:main3} (ii): Given $\beta,\epsilon$, for any sufficiently small $\delta$ there is $n_0$ such that for all $t\in[0,1]$ and $n\geq n_0$
\begin{multline}\label{plustoto2}
\Prob \bigg(  \bigg\| \sum_{j=\lfloor nt \rfloor+1}^{\lfloor n(t+\delta)\rfloor} \txi_{j,n} \vecV_{j-1} \bigg\| \geq \big(\beta-K\sqrt\delta \big) \sqrt{n\log n}  \bigg) \\
\leq \frac{1}{(2\pi)^{d/2}} \int_{\|\vecx\|>\frac{\beta-K\sqrt\delta}{\sigma_d \sqrt\delta}} \e^{-\frac12 \|\vecx\|^2} d\vecx +\epsilon\,\delta = O(\epsilon\,\delta) .
\end{multline}
Note that here we have applied Theorem \ref{thm:main3} (ii) to the truncated $\txi_{j,n}$ rather than $\xi_j$, which is justified by the analogue of Lemmas \ref{mainlem0}, \ref{mainlem1} with $j(\log j)^\gamma$ replaced by $n(\log n)^\gamma$. This completes the proof.
\end{proof}

\section{Theorem \ref{plusthm:main2} implies Theorem \ref{plusthm:main1}}\label{plussec:221}

We now turn to the continuous time process. The Boltzmann-Grad limit $r\to 0$ is covered by \cite[Theorem 1.2]{Marklof:2011ho}, which tells us that for arbitrary fixed $T$,
\begin{equation}\label{pluseq:main2345678}
\vecX_{T,r}\Rightarrow \vecX_T 
\end{equation}
where
\begin{equation}
\vecX_T(t)=\frac{\vecX(t T)}{\Sigma_d\sqrt{T\log T}} =  \frac{\vecQ_{\nu_{tT}} + (t-\tau_{\nu_{tT}})\vecV_{\nu_{tT}}} {\Sigma_d\sqrt{T\log T}} ;
\end{equation}
recall \eqref{set1}--\eqref{set2}.

The convergence of finite-dimensional distributions of $\vecX_T$ to $\vecW$ follows (within the framework of Section \ref{plussec:finite}) from the same estimates as in Section \ref{sec:from}. What remains is to show tightness in $\C([0,1])$ for the family of processes $(\vecX_T)_{T\geq 1}$. By a simple scaling argument, it is sufficient to prove tightness for the sequence $(\vecX_n)_{n\in\NN}$. 

Define the continuous, strictly increasing (random) functions $T,\Theta : \RR_{\geq 0}\to\RR_{\geq 0}$ by
\begin{equation}
T(\theta):=\tau_{\lfloor\theta\rfloor} + \{\theta\} \xi_{\lfloor\theta\rfloor+1},\qquad
\Theta(t):=\nu_t+\frac{t-\tau_{\nu_t}}{\xi_{\nu_t+1}}.
\end{equation}
Note that the functions $T$ and $\Theta$ are the inverse of one another. That is, $T(\Theta(t))=t$ and $\Theta(T(\theta))=\theta$. The key point is that we can now relate the curves $t\to\vecX_n(t)$ and $\theta\to\vecQ_n(\theta)$ by this time change: We have $\vecX(t)=\vecQ_1(\Theta(t))$
and therefore
\begin{equation}
\vecX(n t)=\vecQ_1(\Theta(nt)) =\vecQ_n(n^{-1}\Theta(nt)) =\vecQ_n(\Theta_n(t)),
\end{equation}
where $\Theta_n(t):=n^{-1} \Theta(n t)$.
This yields for the normalized processes
\begin{equation}
\vecX_n(t)=\xibar^{1/2} \, \vecY_n(\Theta_n(t)) ,
\end{equation}
where $\xibar=1/\sigmabar$ is the mean free path length.

Given $b>0$, consider the random process $Z_n : [0,b] \to \RR$ defined by
\begin{equation}
Z_n(\theta):=\frac{T(n\theta)-n\theta\xibar}{\sigma_d\sqrt{d\,  n\log n}} .
\end{equation}
The following lemma says that $Z_n$ converges to one-dimensional Brownian motion $W$.
%; cf.~also \cite[Lemma 11.3]{super}.

\begin{lem}\label{pluslem:Zn}
For $n\to\infty$,
\begin{equation}
Z_n \Rightarrow W.
\end{equation}
\end{lem}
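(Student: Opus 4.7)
The plan is to treat $Z_n$ as the one-dimensional analogue of the piecewise-linear process $\vecY_n$ studied in Sections \ref{plussec:finite}--\ref{plussec:tightness} and to repeat the same three-step programme: reduction to an integer-indexed sum, convergence of finite-dimensional distributions, and tightness in $\C_0([0,b])$. First note that for integer $\theta=k$ we have $T(k)=\tau_k=\sum_{j=1}^k\xi_j$, while for non-integer $n\theta$ the term $\{n\theta\}\xi_{\lfloor n\theta\rfloor+1}$ simply provides a linear interpolation between the values at consecutive integers. Thus $Z_n$ is obtained by interpolating the scalar sums $n^{-1/2}(\log n)^{-1/2}\sigma_d^{-1}d^{-1/2}\sum_{j=1}^k(\xi_j-\xibar)$ exactly as $\vecY_n$ was obtained by interpolating $\sigma_d^{-1}(n\log n)^{-1/2}\vecQ_k$ in \eqref{pewi}.

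For the finite-dimensional distributions I would mimic Section \ref{plussec:finite}, replacing the vector-valued increments $\xi_j\vecV_{j-1}$ by the scalar-valued increments $\xi_j-\xibar$ and replacing the indicator vectors $\vecU_{j,n}$ by their scalar analogues $\mathbbm{1}_{\{n t_{i-1}<j\le n t_i\}}$. Split $\xi_j-\xibar=(\xi_{j,n}-m_{j,n})+(m_{j,n}-\Exp m_{j,n})+(\Exp m_{j,n}-\xibar)+(\xi_j-\xi_{j,n})$ using the same truncation $\xi_{j,n}=\xi_j\mathbbm{1}_{\{\xi_j\le r_n\}}$ (one may equivalently use $r_j$ as in Section \ref{sec:outline}) and conditional mean $m_{j,n}$. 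Proposition \ref{prop:m1bound} shows $\Exp m_{j,n}-\xibar=O(r_n^{-1})$, so the aggregated deterministic correction contributes $O(n/r_n)=o(\sqrt{n\log n})$. The analogues of Lemmas \ref{plusmainlem0}--\ref{plusmainlem2} for this scalar process follow verbatim from their Section \ref{plussec:finite} counterparts, since (a) Lemma \ref{mainlem0} is already applied componentwise, (b) Proposition \ref{prop:aasy} gives $\sum_{j=1}^{\lfloor nt_i\rfloor}\Exp a_{j,n}^2\sim \frac{\Theta_d}{2}nt_i\log n=d\sigma_d^2 nt_i\log n$ and, together with the exponential mixing of Section \ref{sec:exponential}, yields the diagonal asymptotic covariance $(t_i-t_{i-1})\cdot d\sigma_d^2$ for the successive increments, and (c) the Lindeberg condition \eqref{Lindeberg_condition} transfers to the scalar increments without change. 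The Lindeberg CLT for triangular arrays then delivers independent Gaussian increments, which is exactly the f.d.d. of a standard Brownian motion after the chosen normalisation.

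For tightness I would follow Section \ref{plussec:tightness} step by step. Decompose
\begin{equation}
Z_n=\widehat Z_n+\widetilde Z_n+\overline Z_n+\check Z_n,
\end{equation}
where $\widehat Z_n$ collects the untruncated tails, $\widetilde Z_n$ is the sum of the conditionally independent centred truncations $\txi_{j,n}$, $\overline Z_n$ is the sum of $m_{j,n}-\Exp m_{j,n}$, and $\check Z_n$ is the fractional interpolation term. Propositions \ref{pluslemY0} and \ref{pluslemY1} transfer directly to $\widehat Z_n$ and $\check Z_n$ (only the free-path-length tail \eqref{PHIBARXILARGETHMRES1} is used). The key estimate $\overline Z_n\to 0$ uniformly on $[0,b]$ follows from the Gordin--Lifsic maximal inequality of Proposition \ref{plusGordin} applied to the same three-velocity Markov chain $\vecalf_n$ on $\scrV$, using the scalar test function $f((\vecv_{n-1},\vecv_n,\vecv_{n+1}))=\ell_{r_n}((\vecv_{n-1},\vecv_n,\vecv_{n+1}))-\Exp\ell_{r_n}$, for which $\Exp f^2=O(\log\log n)$ in $d=2$ and $O(1)$ otherwise; this is exactly the same computation as in Proposition \ref{pluslemY2} with the scalar coefficient $1$ in place of $\vece\cdot\vecv_n$. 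Finally, tightness of $\widetilde Z_n$ follows from Theorem \ref{plusthm:Bilingsley} combined with the Lévy-type maximal inequality Lemma \ref{pluslem:max} applied conditionally on $\uveceta$, using the scalar analogue of \eqref{pluseq:mainlem2bb2}.

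The only step that requires any thought is the control of $\overline Z_n$, which in the vector case in Section \ref{plussec:tightness} was the main obstacle because one needed a spectral gap for an auxiliary Markov chain together with a matching moment bound on $m_{j,n}$. However, all of the required estimates have already been established in that section, and the test function here is simpler (scalar, rotation invariant), so no new obstacle arises. Assembling the four pieces yields tightness of $(Z_n)$ in $\C_0([0,b])$, which together with the finite-dimensional convergence established above proves $Z_n\Rightarrow W$.
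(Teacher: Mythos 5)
Your proposal is correct and takes essentially the same approach as the paper, which merely states that the proof ``is a simpler variant of the already established weak convergence $\vecY_n\Rightarrow\vecW$'' without supplying details. You spell out the one place where the scalar case genuinely differs from the vector case — the $m_{j,n}$-term is no longer killed by the directional averaging $\Exp\big[(\vece\cdot\vecV_{j-1})\,m_{j,n}\big]\approx 0$, so you center it explicitly and absorb the deterministic residual $\Exp m_{j,n}-\xibar=O(r_n^{-1})$ via Proposition \ref{prop:m1bound} — which is exactly the intended adaptation.
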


The proof of this lemma is a simpler variant of the already established weak convergence $\vecY_n\Rightarrow\vecW$.

The lemma implies in particular that the process $\theta\mapsto T_n(\theta):=n^{-1} T(n\theta)$ converges weakly to the deterministic function $\theta\mapsto \xibar\theta$. Since $T_n(\Theta_n(t))=t$ and $\Theta_n(T_n(\theta))=\theta$, this implies that $t\mapsto \Theta_n(t)$ converges weakly to $t\mapsto \sigmabar t$.

The modulus of continuity of a curve $\vecX\in\C_0([0,b])$ is defined as
\begin{equation}
\omega_\vecX^{[0,b]}(\delta) := \sup_{\substack{0\leq s,t\leq b\\ |t-s|\leq \delta}}\big\| \vecX(s)-\vecX(t) \| .
\end{equation}
The tightness for $(\vecX_n)_{n\in\NN}$ is implied by the following lemma.

\begin{lem}
For every $\beta>0$, $\epsilon>0$ there exist $\delta<1$ and $n_0<\infty$ such that for all $n\geq n_0$
\begin{equation}
\Prob\big( \omega_{\vecX_n}^{[0,1]}(\delta) >\beta \big) <\epsilon.
\end{equation}
\end{lem}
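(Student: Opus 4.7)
The plan is to deduce the claim from the time-change identity $\vecX_n(t) = \xibar^{1/2}\vecY_n(\Theta_n(t))$ (up to a deterministic shift that vanishes as $n\to\infty$), together with the fact that the random reparametrization $\Theta_n$ concentrates on the deterministic linear map $t\mapsto\sigmabar t$. The first step is to upgrade Lemma \ref{pluslem:Zn} to uniform-in-probability convergence: since $Z_n\Rightarrow W$ on every compact interval, the process $T_n(\theta)=n^{-1}T(n\theta)$ converges uniformly in probability on $[0,\sigmabar+1]$ to $\theta\mapsto\xibar\theta$, with fluctuations of order $\sqrt{(\log n)/n}$. Inverting, and using the fact that $\theta\mapsto\xibar\theta$ is a continuous strictly increasing bijection, $\Theta_n$ converges uniformly in probability on $[0,1]$ to $t\mapsto\sigmabar t$. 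Hence, for any $\eta>0$ there exists $n_1(\eta)$ with $\Prob(E_n(\eta)^c)<\epsilon/2$ for $n\geq n_1$, where $E_n(\eta):=\{\sup_{t\in[0,1]}|\Theta_n(t)-\sigmabar t|\leq\eta\}$.

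The second step is purely deterministic. On $E_n(\eta)$, the monotone function $\Theta_n$ maps $[0,1]$ into $[0,\sigmabar+\eta]$ and satisfies $|\Theta_n(t)-\Theta_n(s)|\leq \sigmabar\delta+2\eta$ whenever $|t-s|\leq\delta$. Combined with $\vecX_n(t)-\vecX_n(s)=\xibar^{1/2}(\vecY_n(\Theta_n(t))-\vecY_n(\Theta_n(s)))$, this yields
\begin{equation*}
\omega_{\vecX_n}^{[0,1]}(\delta)\,\ind{E_n(\eta)} \;\leq\; \xibar^{1/2}\,\omega_{\vecY_n}^{[0,\sigmabar+\eta]}(\sigmabar\delta+2\eta).
\end{equation*}
The tightness argument of Section \ref{plussec:tightness} applies verbatim with $[0,1]$ replaced by any fixed $[0,K]$ (the partial-sum estimates on $\sum \txi_{j,n}\vecV_{j-1}$ rescale trivially), so $(\vecY_n)$ is tight in $\C_0([0,\sigmabar+\eta])$. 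Thus there are $\delta'>0$ and $n_2$ with $\Prob(\omega_{\vecY_n}^{[0,\sigmabar+\eta]}(\delta')>\xibar^{-1/2}\beta)<\epsilon/2$ for $n\geq n_2$.

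Given $\beta,\epsilon>0$ I would first fix $\eta$ small, then choose $\delta<1$ so that $\sigmabar\delta+2\eta\leq\delta'$, and set $n_0:=\max(n_1,n_2)$. The bound $\Prob(\omega_{\vecX_n}^{[0,1]}(\delta)>\beta)<\epsilon$ then follows by decomposing over $E_n(\eta)$ and its complement and combining the two preceding steps. The main obstacle I anticipate is the upgrade of Lemma \ref{pluslem:Zn} from weak convergence to uniform-in-probability control of the inverse map $\Theta_n$; however, this is a functional law of large numbers for the inverse of a renewal-like process, and it is automatic once Lemma \ref{pluslem:Zn} is in hand, thanks to the monotonicity of $T_n$ and the continuity of the deterministic limit.
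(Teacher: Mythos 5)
Your proposal is correct and follows essentially the same strategy as the paper: both reduce the modulus of continuity of $\vecX_n$ via the time-change identity $\vecX_n(t)=\xibar^{1/2}\vecY_n(\Theta_n(t))$ to (a) control of the random reparametrization $\Theta_n$ and (b) tightness of $(\vecY_n)$ on a fixed compact interval larger than $[0,1]$. The only cosmetic difference is that the paper phrases the control of $\Theta_n$ as a composition of moduli of continuity together with tightness of $(\Theta_n)$ and $\Theta_n(1+\delta)\toprob(1+\delta)\sigmabar$, whereas you package the same information as uniform-in-probability convergence of $\Theta_n$ to the deterministic linear map $t\mapsto\sigmabar t$ on the event $E_n(\eta)$ and deduce the modulus bound $\sigmabar\delta+2\eta$ from that; both reductions are equivalent and both then invoke the (implicitly extended) tightness of $(\vecY_n)$ on $[0,K]$ for fixed $K>1$.
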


\begin{proof}
Notice that
\begin{equation}
\Prob\big( \omega_{\vecX_n}^{[0,1]}(\delta)  >\beta \big)
\leq \Prob\big( \omega_{\xibar^{1/2} \,\vecY_n}^{[0,2\sigmabar]}\circ\omega_{\Theta_n}^{[0,1]}(\delta)  >\beta \big) + \Prob\big( \Theta_n(1+\delta)>2\sigmabar \big) .
\end{equation}
For $n\to\infty$, we have  $\Theta_n(1+\delta) \toprob (1+\delta)\sigmabar <2\sigmabar$ (see the remark after Lemma \ref{pluslem:Zn}), and therefore 
\begin{equation}
\lim_{n\to\infty}\Prob\big( \Theta_n(1+\delta)>2\sigmabar \big) = 0.
\end{equation}
The claim now follows from the tightness of $(\vecY_n)_{n=1}^\infty$ established in Theorem \ref{plusthm:main2}, and from the tightness of $(\Theta_n)_{n=1}^\infty$ which follows from the remark after Lemma \ref{pluslem:Zn}.
\end{proof}

This concludes the proof of Theorem \ref{plusthm:main1}.

\bibliographystyle{amsplain}
%\bibliography{../../jmbibliography}

\providecommand{\bysame}{\leavevmode\hbox to3em{\hrulefill}\thinspace}
\providecommand{\MR}{\relax\ifhmode\unskip\space\fi MR }
% \MRhref is called by the amsart/book/proc definition of \MR.
\providecommand{\MRhref}[2]{%
  \href{http://www.ams.org/mathscinet-getitem?mr=#1}{#2}
}
\providecommand{\href}[2]{#2}

\end{document}